\title{Is it a ``Good'' Encoding of Mixed Choice? (Technical Report) %
\thanks{This work was supported by the DFG (German Research Foundation), grant {NE-1505/2-1}.}}
\author{Kirstin Peters \and Uwe Nestmann}
\institute{TU Berlin, Germany}
\begin{document}

\maketitle

\begin{abstract}
	This technical report contains the proofs to the lemmata and theorems of \cite{petersNestmann12} as well as some additional material. As main contributions \cite{petersNestmann12} presents an encoding of mixed choice in the context of the \piCal-calculus and a criterion to measure whether the degree of distribution in process networks is preserved.
\end{abstract}

\section{Technical Preliminaries}
\label{sec:techPre}

\subsection{The \piCal-Calculus}
\label{sec:piCalculus}

Our source language is the monadic $ \pi $-calculus as described for instance in \cite{sangiorgiWalker01}. As already demonstrated in \cite{palamidessi03} the most interesting operator for a comparison of the expressive power between the full \piCal-calculus and its asynchronous variant is mixed choice, i.e., choice between input and output capabilities. Thus we denote the full $ \pi $-calculus also by \piMix. Let $ \names $ denote a countably infinite set of names with $ \tau \notin \names $ and $ \coNames $ the set of co-names, i.e., $ \coNames = \Set{ \Out{n} \mid n \in \names} $. We use lower case letters $ a, a', a_1, \ldots, x, y, \ldots $ to range over names.

\begin{definition}[\piMix]
\label{def:piMix}
  The set of process terms of the \emph{synchronous \piCal-calculus (with mixed choice)}, denoted by $ \piMixProc $, is given by
	\begin{align*}
		P & \;\mathop{::=}\;
                \RestrictedTerm{n}{P}
                \sep P_1 \mid P_2
                \sep \Match{a}{b}P
                \sep \ReplicateInput{y}{x}.P
                \sep \sum_{i \in \indexSet} \guard_i.P_i 
	\end{align*}
	where $ \guard \;\mathop{::=}\; \Input{y}{x} \; \mid \; \Output{y}{z} \; \mid \; \tau $ for some names $ n, a, b, x, y, z \in \names $ and a finite index set $ \indexSet $.
\end{definition}
\noindent
The interpretation of the defined process terms is as usual. Moreover we consider two subcalculi of \piMix. The process terms $ \piSepProc $ of \piSep|the \emph{\piCal-calculus with separate choice}|are obtained by restricting the choice primitive, such that in each choice either no input guarded or no output guarded alternatives appear.

\begin{definition}[\piSep]
  The set of process terms of the \emph{\piCal-calculus (with separate choice)}, denoted by $ \piSepProc $, is given by
	\begin{align*}
		P & \;\mathop{::=}\;
                \RestrictedTerm{n}{P}
                \sep P_1 \mid P_2
                \sep \Match{a}{b}P
                \sep \ReplicateInput{y}{x}.P
                \sep \sum_{i \in \indexSet} \guard^I_i.P_i
                \sep \sum_{i \in \indexSet} \guard^O_i.P_i
	\end{align*}
	where $ \guard^I \;\mathop{::=}\; \Input{y}{x} \; \mid \; \tau $ and $ \guard^O \;\mathop{::=}\; \Output{y}{z} \; \mid \; \tau $ for some names $ n, a, b, x, y, z \in \names $ and a finite index set $ \indexSet $.
\end{definition}

Finally, the process terms $ \piAsynProc $ of the \emph{asynchronous \piCal-calculus} \piAsyn \cite{boudol92,hondaTokoro91} are obtained by restricting each sum to be of length zero or one and requiring, that outputs can only guard the empty sum.
 
\begin{definition}[\piAsyn]
  The set of process terms of the \emph{asynchronous \piCal-calculus}, denoted by $ \piAsynProc $, is given by
	\begin{align*}
		P & \;\mathop{::=}\;
				\nullTerm
                \sep \RestrictedTerm{n}{P}
                \sep P_1 \mid P_2
                \sep \Match{a}{b}P
                \sep \Output{y}{\tilde{z}}.P
                \sep \Input{y}{\tilde{x}}.P
                \sep \tau.P
                \sep \ReplicateInput{y}{\tilde{x}}.P
	\end{align*}
	for some names $ n, a, b, y \in \names $ and some finite sequences of names $ \tilde{x}, \tilde{z} \subset \names $.
\end{definition}

Note that we augment all three variants of the \piCal-Calculus with matching, because we need it at least in \piAsyn to encode mixed choice. Of course, the presence of match influences the expressive power of \piAsyn. However, we do not know, whether the use of match in the encoding of mixed choice can be circumvented, although there are reasons indicating that this is indeed not possible. We left the consideration of this problem to further research.

We use capital letters $ P , P' , P_1, \ldots, Q, R, \ldots $ to range over processes. Let $ \FreeNames{P} $, $ \BoundNames{P} $, and $ \Names{P} $ denotes the sets of free names, bound names and all names occurring in P, respectively. Their definitions are completely standard. Given an input prefix $ \Input{y}{x} $ or an output prefix $ \Output{y}{x} $ we call $ y $ the subject and $ x $ the object of the action. Moreover we denote the subject of an action also as link or channel name, while we denote the object as value or parameter. Note that in case the object does not matter we omit it, i.e., we abbreviate an input guarded term $ \Input{y}{x}.P $ or an output guarded term $ \Output{y}{x}.P $ such that $ x \not \in \FreeNames{P} $ by $ \In{y}.P $ or $ \Out{y}.P $, respectively. Moreover we denote the empty sum with $ \nullTerm $ and often omit it in continuations. As usual we sometimes notate a sum $ \sum_{i \in \Set{ i_1, \ldots, i_n }} \guard_i.P_i $ by $ \guard_{i_1}.P_{i_1} + \ldots + \guard_{i_n}.P_{i_n} $.

We use $ \sigma, \sigma', \sigma_1, \ldots $ to range over substitutions. A substitution is a mapping $ \Set{ \Subst{x_1}{y_1}, \ldots, \Subst{x_n}{y_n} } $ from names to names. The application of a substitution on a term $ \Set{ \Subst{x_1}{y_1}, \ldots, \Subst{x_n}{y_n} }\left( P \right) $ is defined as the result of simultaneously replacing all free occurrences of $ y_i $ by $ x_i $ for $ i \in \Set{ 1, \ldots, n } $, possibly applying alpha-conversion to avoid capture or name clashes. For all names $ \names \setminus \Set{ y_1, \ldots, y_n } $ the substitution behaves as the identity mapping. Let $ \id $ denote identity, i.e. $ \id $ is the empty substitution. We naturally extend substitutions to co-names, i.e. $ \forall \overline{n} \in \coNames \logdot \sigma\left( \overline{n} \right) = \overline{\sigma\left( n \right)} $ for all substitutions $ \sigma $.

Moreover, let $ \tilde{x} $ denote a sequence of names. For simplicity, we occasionally treat $ \tilde{x} $ as a set. So $ \tilde{x} \subset M $ denotes a sequence of names of the set $ M $, $ |\tilde{x}| $ is the length of the sequence $ \tilde{x} $, and $ y \in \tilde{x} $ denotes that $ y $ is one of the names within the sequence $ \tilde{x} $. Accordingly, we use the notion of sequence to abbreviate multiple restrictions, i.e., $ \RestrictedTerm{\tilde{x}}{P} \deff \RestrictedTerm{x_1}{\ldots \RestrictedTerm{x_n}{P}} $ for a sequence of names $ \tilde{x} = x_1, \ldots, x_n $. Moreover we naturally extend substitutions to sequences of names, i.e., $ \sigma\left( \tilde{x} \right) \deff \sigma\left( x_1 \right), \ldots, \sigma\left( x_n \right) $ and $ \Set{ \Subst{\tilde{x}}{\tilde{y}} }P \deff \Set{ \Subst{x_1}{y_1}, \ldots, \Subst{x_n}{y_n} }P $ for two sequences of names $ \tilde{x} = x_1, \ldots, x_n $ and $ \tilde{y} = y_1, \ldots, y_n $.

\begin{figure}[htp]
	\begin{align*}
		\begin{array}{|c|}
			\hline
			\\
			\quad P \equiv Q \begin{aligned}[t]
					& \text{ if } Q \text{ can be obtained from } P \text{ by renaming one or more of the bound names in P,} \quad\\
					& \text{ silently avoiding name clashes }
				\end{aligned}\\
			\\
			\quad P \mid \nullTerm \equiv P \hspace*{2em} P \mid Q \equiv Q \mid P \hspace*{2em} P \mid \left( Q \mid R \right) \equiv \left( P \mid Q \right) \mid R \hspace*{2em} \Match{a}{a}P \equiv P \quad\\
			\\
			\RestrictedTerm{n}{\nullTerm} \equiv \nullTerm \hspace*{2em} \RestrictedTerm{n}{\RestrictedTerm{m}{P}} \equiv \RestrictedTerm{m}{\RestrictedTerm{n}{P}} \hspace*{2em} P \mid \RestrictedTerm{n}{Q} \equiv \RestrictedTerm{n}{\left( P \mid Q \right)} \text{ if } n \notin \FreeNames{P}\\
			\\
			\hline
		\end{array}
	\end{align*}
	\caption{Structural congruence.} \label{fig:SC}
\end{figure}

The \emph{reduction semantics} of \piMix, \piSep, and \piAsyn are jointly given by the transition rules in Figure \ref{fig:concurrentReductionSemantics}, where \emph{structural congruence}, denoted by $ \equiv $, is given by the rules in Figure \ref{fig:SC}. Note that the rule $ \textsc{Com}_{\indexAsyn} $ for communication in \piAsyn is a simplified version of the rule $ \textsc{Com}_{\indexMix, \indexSep} $ for communication in \piMix or \piSep. The differences between these two rules result from the differences in the syntax, i.e. the lack of choice and the fact that only input can be used as guard in \piAsyn. The same can be observed for the rules $ \textsc{Tau}_{\indexAsyn} $ and $ \textsc{Rep}_{\indexAsyn} $. As usual, we use $ \equivAlpha $ if we refer to alpha-conversion (the first rule of Figure \ref{fig:SC}) only.

\begin{figure}[htp]
	\begin{align*}
		\begin{array}{|c|}
			\hline
			\\
			\quad \textsc{Tau}_{\indexMix, \indexSep} \quad \ldots + \tau.P + \ldots \step P \hspace*{3em} \textsc{Tau}_{\indexAsyn} \quad \tau.P \step P \quad\\
			\\
			\quad \textsc{Com}_{\indexMix, \indexSep} \quad \left( \ldots + \Input{y}{x}.P + \ldots \right) \mid \left( \ldots + \Output{y}{z}.Q + \ldots \right) \step \Set{ \Subst{z}{x} }P \mid Q \quad\\
			\\
			\quad \textsc{Com}_{\indexAsyn} \quad \Input{y}{\tilde{x}}.P \mid \Output{y}{\tilde{z}} \step \Set{ \Subst{\tilde{z}}{\tilde{x}} }P \quad \text{ if } |\tilde{x}| = |\tilde{z}| \quad\\
			\\
			\quad \textsc{Rep}_{\indexMix, \indexSep} \quad \ReplicateInput{y}{x}.P \mid \left( \ldots + \Output{y}{z}.Q + \ldots \right) \step \Set{ \Subst{z}{x} }P \mid Q \mid \ReplicateInput{y}{x}.P \quad\\
			\\
			\quad \textsc{Rep}_{\indexAsyn} \quad \ReplicateInput{y}{\tilde{x}}.P \mid \Output{y}{\tilde{z}} \step \Set{ \Subst{\tilde{z}}{\tilde{x}} }P \mid \ReplicateInput{y}{\tilde{x}}.P \quad \text{ if } |\tilde{x}| = |\tilde{z}| \quad\\
			\\
			\quad \textsc{Par} \quad \dfrac{P \step P'}{P \mid Q \step P' \mid Q} \hspace*{3em} \textsc{Res} \quad \dfrac{P \step P'}{\RestrictedTerm{x}{P} \step \RestrictedTerm{x}{P'}} \hspace*{3em} \textsc{Cong} \quad \dfrac{P \equiv P' \quad P' \step Q' \quad Q' \equiv Q}{P \step Q} \quad\\
			\\
			\hline
		\end{array}
	\end{align*}
	\caption{Concurrent reduction semantics.} \label{fig:concurrentReductionSemantics}
\end{figure}

Let $ P \step $ ($ P \not\step $) denote existence (non-existence) of a step from $ P $, i.e. there is (no) $ P' \in \proc $ such that $ P \step P' $. Moreover, let $ \steps $ be the reflexive and transitive closure of $ \step $ and let $ \inftySteps $ define an infinite sequence of steps.

In Section \ref{sec:qualityCriteria}, we present several criteria to measure the quality of an encoding. The first of these criteria relies on the notion of a context. A \emph{context} $ \Context{}{}{\hole_1, \ldots, \hole_n} $ is a \piCal-term, i.e., a \piAsyn-term in case of Definition \ref{def:compositionality}, with~$ n $ so-called holes. Plugging the \piAsyn-terms $ P_1, \ldots, P_n $ in this order into the holes $ \hole_1, \ldots, \hole_n $ of the context, respectively, yields a term denoted $ \Context{}{}{P_1, \ldots, P_n} $. Therefore, we consider a context as a function from terms into terms, e.g., the context $ \Context{}{}{\hole_1, \ldots, \hole_n} \in \piAsynProc \times \ldots \times \piAsynProc \to \piAsynProc $ maps $ n $ \piAsyn-terms onto a \piAsyn-term. Sometimes, we refer to $ P_1, \ldots, P_n $ as the parameters of the context $ \context $. Note that a context may bind some free names of $ P_1, \ldots, P_n $. The \emph{arity} of a context is the number of its holes.

As usual we will use equivalence relations to compare \piCal-terms by means of their behaviour. Moreover, as explained in Section \ref{sec:qualityCriteria}, we use an equivalence to abstract from junk, i.e., remains of encoded terms that are no longer of any use. Since we use a reduction semantics, a standard equivalence to compare \piMix-terms is barbed congruence, denoted by $ \barbCong $. Its definition relies on the notion of an \emph{observable} or \emph{barb} (we refer to \cite{sangiorgiWalker01} for a detailed explanation).

\begin{definition}[Observable] \label{def:barb}
	Let $ P \in \piMixProc $. Then $ P $ has an \emph{input observable} $ y $, denoted by $ P\Barb{\In{y}} $, if $ P $ can perform an input on $ y $, i.e.,
	\begin{align*}
		\exists P', P'' \in \piMixProc \logdot \exists \tilde{x} \subset \names \logdot \exists z \in \names \logdot P \equiv \RestrictedTerm{\tilde{x}}{\left( P' \mid \Input{y}{z}.P'' \right)} \wedge y \notin \tilde{x}
	\end{align*}
	and $ P $ has an \emph{output observable} $ y $, denoted by $ P\Barb{\Out{y}} $, if $ P $ can perform an output on $ y $, i.e.,
	\begin{align*}
		\exists P', P'' \in \piMixProc \logdot \exists \tilde{x} \subset \names \logdot \exists z \in \names \logdot P \equiv \RestrictedTerm{\tilde{x}}{\left( P' \mid \Output{y}{z}.P'' \right)} \wedge y \notin \tilde{x}.
	\end{align*}
\end{definition}

\subsection{Abbreviations}
\label{sec:abbreviations}

To shorten the presentation and ease the readability of the rather lengthy encoding function in the next section, we use some abbreviations on \piAsyn-terms. 
First note that we defined only monadic versions of the calculi \piMix, \piSep, and \piAsyn, where over any link exactly one value is transmitted. However, within the presented encoding functions, we treat the target language \piAsyn as if it allows for polyadic communication. More precisely, we allow asynchronous links to carry any number of values from zero to five, of course under the requirement that within each \piAsyn-term no link name is used twice with different multiplicities. Note that these polyadic actions can be simply translated into monadic actions by a standard encoding as given in \cite{sangiorgiWalker01}. Thus, we silently use the polyadic version of \piAsyn in the following.
Second, as already done in \cite{nestmann00}, we use the following abbreviations to define boolean values and a conditional construct.

\begin{definition}[Tests on Booleans]
\label{def:testBoolean}
	Let $ \bool \deff \Set{ \true, \false } $ be the set of \emph{boolean values}, where $ \true $ denotes \emph{true} and $ \false $ denotes \emph{false}.
	
	Let $ \sumLock, t, f \in \names $ and $ P, Q \in \piAsynProc $. Then a \emph{boolean instantiation} of $ \sumLock $, i.e., the allocation of a boolean value to a link $ \sumLock $, and a \emph{test-statement} on a boolean instantiation are defined by
	\begin{align*}
		\Output{\sumLock}{\true} & \deff \Input{\sumLock}{t, f}.\Out{t}\\
		\Output{\sumLock}{\false} & \deff \Input{\sumLock}{t, f}.\Out{f}\\
		\Test{\sumLock}{P}{Q} & \deff \RestrictedTerm{t, f}{\left( \Output{\sumLock}{t, f} \mid \In{t}.P \mid \In{f}.Q \right)}
	\end{align*}
	for some $ t, f \notin \FreeNames{P} \cup \FreeNames{Q} $.
\end{definition}

Finally, we define forwarders, i.e., a simple process to forward each received message along some specified set of links.

\begin{definition}[Forwarder]
\label{def:forwarder}
	Let $ \indexSet $ be a finite index set and for all $ i \in \indexSet $ let $ y $ and $ y_i $ be channel names with multiplicity $ n \in \nat $, then a \emph{forwarder} is given by:
	\begin{align*}
		\Forward{y}{\Set{ y_i \mid i \in \indexSet }} & \deff \ReplicateInput{y}{x_1, \ldots, x_n}.\left( \prod_{i \in \indexSet} \Output{y_i}{x_1, \ldots, x_n} \right)
	\end{align*}
	In case of a singleton set we omit the brackets, i.e., $ \Forward{y}{y'} \deff \Forward{y}{\Set{ y' }} $.
\end{definition}

\subsection{Quality Criteria for Encodings}
\label{sec:qualityCriteria}

Within this paper we consider two encodings, (1) an encoding from \piSep into \piAsyn presented in \cite{nestmann00}, denoted by $ \encodingSepAsyn $, and (2) a new encoding from \piMix into \piAsyn, denoted by $ \encodingMixAsyn $. To measure the quality of such an encoding, Gorla \cite{gorla10} suggested five criteria well suited for language comparison. Accordingly, we consider an encoding to be ``good'', if it satisfies Gorla's five criteria.

As in \cite{gorla10}, an encoding is a mapping from a source into a target language; in our case, \piMix and \piSep are source languages and \piAsyn is the target language. To distinguish terms on these languages or definitions for the respective encodings, we use $ \indexMix $, $ \indexSep $, and $ \indexAsyn $ as super- and subscripts. Thereby, the superscript usually refers to the source and the subscript the target language. Moreover, we use $ S, S', S_1, \ldots $ to range over terms of the source languages and $ T, T', T_1, \ldots $ to range over terms of the target language.

The five conditions are divided into two structural and three semantic criteria. The structural criteria include (1) \emph{compositionality} and (2) \emph{name invariance}. The semantic criteria include (3) \emph{operational correspondence}, (4) \emph{divergence reflection} and (5) \emph{success sensitiveness}. Note that for the definition of name invariance and operational correspondence a behavioural equivalence $ \asymp $ on the target language is assumed. Its purpose is to describe the abstract behaviour of a target process, where abstract basically means with respect to the behaviour of the source term.

Intuitively, an encoding is compositional if the translation of an operator depends only on the translation of its parameters. To mediate between the translations of the parameters the encoding defines a unique context for each operator, whose arity is the arity of the operator. Moreover, the context can be parametrised on the free names of the corresponding source term.

\begin{definition}[Criterion 1: Compositionality]
\label{def:compositionality}
	The encoding $ \arbitraryEncoding $ is \emph{compositional} if, for every k-ary operator $ \mathbf{op} $ of $ \piMix $ and for every subset of names $ N $, there exists a k-ary context $ \Context{N}{\mathbf{op}}{\hole_1, \ldots , \hole_k} $ such that, for all $ S_1, \ldots, S_k $ with $ \FreeNames{S_1} \cup \ldots \cup \FreeNames{S_k} = N $, it holds that
	\begin{align*}
		\ArbitraryEncoding{\mathbf{op}\left( S_1, \ldots, S_k \right)} = \Context{N}{\mathbf{op}}{\ArbitraryEncoding{S_1}, \ldots , \ArbitraryEncoding{S_k}}.
	\end{align*}
\end{definition}

\noindent
If the context is again the original operator, i.e., if an operator is translated by encoding its parameters and apply the renaming policy, as in $ \ArbitraryEncoding{\RestrictedTerm{x}{P}} = \RestrictedTerm{\ArbitraryRenamingPolicy{x}}{\ArbitraryEncoding{P}} $, we call this encoding \emph{\clean}. Note that Gorla requires that the parallel composition operator ``$ \mid $'' is binary and unique in the source as well as in the target language. Thus, compositionality prevents from introducing a global coordinator or to use global knowledge, i.e., knowledge about surrounding source terms or the structure of the parameters.

The second structural criterion states that the encoding should not depend on specific names used in the source term. This is important, since sometimes it is necessary to translate a source term name into a sequences of names or reserve some names for the encoding function. To ensure that there are no conflicts between these reserved names and the source term names, the encoding is equipped with a renaming policy $ \arbitraryRenamingPolicy $, i.e., a substitution from names into sequences of names\footnote{To keep distinct names distinct Gorla assumes that $ \forall n, m \in \names \logdot n \neq m $ implies $ \ArbitraryRenamingPolicy{n} \cap \ArbitraryRenamingPolicy{m} = \emptyset $, where $ \ArbitraryRenamingPolicy{x} $ is simply considered as set here.}. Since we translate source term names only into single names, the renaming policies introduced by $ \encodingSepAsyn $ and $ \encodingMixAsyn $ are injective substitutions from names into names. Based on such a renaming policy an encoding is independent of specific names if it preserves all substitutions $ \sigma $ on source terms by a substitution $ \sigma' $ on target terms such that $ \sigma' $ respects the changes made by the renaming policy.

\begin{definition}[Criterion 2: Name Invariance]
\label{def:nameInvariance}
	The encoding $ \arbitraryEncoding $ is \emph{name invariant} if, for every $ S $ and $ \sigma $, it holds that
	\begin{align*}
		\ArbitraryEncoding{\sigma\left( S \right)} \begin{cases} \equivAlpha \sigma'\left( \ArbitraryEncoding{S} \right) & \quad \text{if } \sigma \text{ is injective}\\ \asymp \sigma'\left( \ArbitraryEncoding{S} \right) & \quad \text{otherwise} \end{cases}
	\end{align*}
	where $ \sigma' $ is such that $ \ArbitraryRenamingPolicy{\sigma\left( n \right)} = \sigma'\left( \ArbitraryRenamingPolicy{n} \right) $ for every $ n \in \names $.
\end{definition}

The first semantic criterion and usually the most elaborate one to prove is operational correspondence, which consists of a soundness and a completeness condition. \emph{Completeness} requires that every computation of a source term can be \simulated by its translation, i.e., the translation does not reduce the computations of the source term. Note that encodings often translate single source term steps into a sequence of target term steps. We call such a sequence an \emph{\simulation}\!\! of the corresponding source term step. \emph{Soundness} requires that every computation of a target term corresponds to some computation of the corresponding source term, i.e., the translation does not introduce new computations.

\begin{definition}[Criterion 3: Operational Correspondence]
\label{def:operationalCorrespondence}
	Let $ \arbitraryEncoding $ be an arbitrary encoding.  Then, two operational criteria are defined as follows. 
        \begin{center}
          \begin{tabular}{ll}
            \emph{Completeness}: & For all $ S \steps S' $, it holds that $ \ArbitraryEncoding{S} \steps \asymp \ArbitraryEncoding{S'} $.\\
            \emph{Soundness}: & For all $ \ArbitraryEncoding{S} \steps T $, there exists an $ S' $ such that\\
            & $ S \steps S' $ and $ T \steps \asymp \ArbitraryEncoding{S'} $.
          \end{tabular}
        \end{center}
\end{definition}

\noindent
Note that the definition of operational correspondence relies on the
equivalence $ \asymp $ to get rid of junk possibly left over within
computations of target terms (compare to Section \ref{sec:transBarbBisim} for a discussion of that equivalence). Sometimes, we refer to the completeness criterion of operational correspondence as \emph{operational completeness} and, accordingly, for the soundness criterion as \emph{operational soundness}.

The next criterion concerns the role of infinite computations in encodings.

\begin{definition}[Criterion 4: Divergence Reflection]
\label{def:divergenceReflection}
  The encoding $ \arbitraryEncoding $ \emph{reflects divergence} if, for every $ S $, $ \ArbitraryEncoding{S} \step^{\omega} $ implies $ S \step^{\omega} $.
\end{definition}

\noindent
The last criterion links the behaviour of source terms to the behaviour of their encodings.
With Gorla \cite{gorla10}, we assume a \emph{success} operator $ \success $ as part of the syntax of both the source and the target language, i.e., of 
\piMix, \piSep, and \piAsyn. Since $ \success $ can not be further reduced, the operational semantics is left unchanged in all three cases. Moreover, note that $ \Names{\success} = \FreeNames{\success} = \BoundNames{\success} = \emptyset $, so also interplay of $\success$ with the rules of structural congruence is smooth and does not require explicit treatment. The test for reachability of success is standard.

\begin{definition}[Success]
\label{def:success}
	A process $ P \in \proc $ \emph{may lead to success}, denoted as $ P \reachSuccess $, if (and only if) it is reducible to a process containing a top-level unguarded occurrence of $ \success $, i.e. $ \exists P', P'' \in \proc \logdot P \steps P' \wedge P' \equiv P'' \mid \success $.
\end{definition}

\noindent
Note that we choose may-testing here. Finally, an encoding preserves the abstract behaviour of the source term if it and its encoding answer the tests for success in exactly the same way.

\begin{definition}[Criterion 5: Success Sensitiveness]
\label{def:succesSensitiveness}
  The encoding $ \arbitraryEncoding $ is \emph{success sensitive} if, for every $ S $, $ S \reachSuccess $ if and only if $ \ArbitraryEncoding{S} \reachSuccess $.
\end{definition}

\noindent
This criterion only links the behaviours of source terms and their literal translations, but not of their continuations. To do so, Gorla relates success sensitiveness and operational correspondence by requiring that the equivalence on the target language never relates two processes 
with different success behaviours.

\begin{definition}[Success Respecting]
\label{def:successRespecting}
	$ \asymp \; \subseteq \piAsynProc \times \piAsynProc $ is \emph{success respecting} if, for every $ P $ and $ Q $ with $ P \reachSuccess $ and $ Q \not\reachSuccess $, it holds that $ P \not\asymp Q $.
\end{definition}

\section{Correctness of the Encodings} \label{sec:CorrectnessEncodings}

Let us first present the full representations of the encodings $ \encodingSepAsyn $ in Figure \ref{fig:encodingSepAsyn} and $ \encodingMixAsyn $ in Figure \ref{fig:encodingMixAsyn}. Note that in \cite{nestmann00,nestmannPierce00} slightly different version of \piSep and \piAsyn are used, namely $ \tau $ is no prefix and there are neither a match operator nor a success operator in the syntax of \piSep and \piAsyn. We choose the respective encodings to be \clean \ except for source terms guarded by $ \tau $. Since $ \tau $ guarded terms can reduce without a communication partner, we implement their translation by a simple test-statement on their sum lock in both encodings.

\begin{figure}[ht]
	\begin{align*}
		\EncodingSepAsyn{\RestrictedTerm{x}{P}} & \deff \RestrictedTerm{\RenamingPolicySepAsyn{x}}{\EncodingSepAsyn{P}}\\
		\EncodingSepAsyn{P \mid Q} & \deff \EncodingSepAsyn{P} \mid \EncodingSepAsyn{Q}\\
		\EncodingSepAsyn{\Match{a}{b}P} & \deff \Match{\RenamingPolicySepAsyn{a}}{\RenamingPolicySepAsyn{b}}\EncodingSepAsyn{P}\\
		\EncodingSepAsyn{\sum_{i \in \indexSet} \guard_i.P_i} & \deff \RestrictedTerm{\sumLock}{\left( \Output{\sumLock}{\true} \mid \prod_{i \in \indexSet} \EncodingSepAsyn{\guard_i.P_i} \right)}\\
		\EncodingSepAsyn{\tau.P} & \deff \Test{\sumLock}{\left( \Output{\sumLock}{\false} \mid \EncodingSepAsyn{P} \right)}{\Output{\sumLock}{\false}}\\
		\EncodingSepAsyn{\Output{y}{z}.P} & \deff \RestrictedTerm{\senderLock}{\left( \Output{\RenamingPolicySepAsyn{y}}{\sumLock, \senderLock, \RenamingPolicySepAsyn{z}} \mid \In{\senderLock}.\EncodingSepAsyn{P} \right)}\\
		\EncodingSepAsyn{\Input{y}{x}.P} & \deff \RestrictedTerm{\receiverLock}{\big( \Out{\receiverLock} \mid \ReplicateIn{\receiverLock}.\Input{\RenamingPolicySepAsyn{y}}{\sumLock', \senderLock, \RenamingPolicySepAsyn{x}}.\BigTest{\sumLock}{\BigTest{\sumLock'}{\Output{\sumLock}{\false} \mid \Output{\sumLock'}{\false} \mid \Out{\senderLock} \mid \EncodingSepAsyn{P}}{\Output{\sumLock}{\true} \mid \Output{\sumLock'}{\false} \mid \Out{\receiverLock}}}{\Output{\sumLock}{\false} \mid \Output{\RenamingPolicySepAsyn{y}}{\sumLock', \senderLock, \RenamingPolicySepAsyn{x}} \big)}}\\
		\EncodingSepAsyn{\ReplicateInput{y}{x}.P} & \deff \ReplicateInput{\RenamingPolicySepAsyn{y}}{\sumLock, \senderLock, \RenamingPolicySepAsyn{x}}.\Test{\sumLock}{\Output{\sumLock}{\false} \mid \Out{\senderLock} \mid \EncodingSepAsyn{P}}{\Output{\sumLock}{\false}}\\
		\EncodingSepAsyn{\success} & \deff \success
	\end{align*}
	\begin{center}
		Here $ \renamingPolicySepAsyn $ is some arbitrary injective substitution such that $ \forall n \in \names \logdot \RenamingPolicySepAsyn{n} \cap \Set{ \sumLock, l', \receiverLock, \senderLock } = \emptyset $.
	\end{center}
	\caption{Encoding from \piSep into \piAsyn.} \label{fig:encodingSepAsyn}
\end{figure}

\begin{figure}[htp]
	\begin{align*}
		\EncodingMixAsyn{\RestrictedTerm{x}{P}} & \deff \RestrictedTerm{\RenamingPolicyMixAsyn{x}}{\EncodingMixAsyn{P}}\\
		\EncodingMixAsyn{P \mid Q} & \deff \RestrictedTerm{\matchingCoordinatorOut, \matchingCoordinatorIn, \coordinatorUpOut, \coordinatorUpIn, \coordinatorMatchingOut, \coordinatorMatchingIn}{\big(\\
				& \hspace*{1em} \RestrictedTerm{\parallelChannelOut, \parallelChannelIn}{\big( \EncodingMixAsyn{P} \mid \Forward{\parallelChannelIn}{\Set{ \matchingCoordinatorIn, \coordinatorUpIn }} \mid \Forward{\parallelChannelOut}{\Set{ \matchingCoordinatorOut, \coordinatorUpOut }} \big)}\\
				& \hspace*{1em} \mid \RestrictedTerm{\parallelChannelOut, \parallelChannelIn}{\big( \begin{aligned}[t]
						& \EncodingMixAsyn{Q}\\
						& \mid \Output{\coordinatorMatchingOut}{\matchingCoordinatorIn} \mid \ReplicateInput{\coordinatorMatchingOut}{m_i}.\Input{\parallelChannelOut}{y, l_s, s, z}.\big(\\
						& \hspace*{2em} \RestrictedTerm{\matchingUpIn}{( \begin{aligned}[t]
								& \ReplicateInput{\matchingCoordinatorIn}{y', l_r, r}.\left( \Match{y'}{y}\Output{r}{l_r, l_s, l_s, s, z} \mid \Output{\matchingUpIn}{y', l_r, r} \right)\\
								& \mid \RestrictedTerm{\matchingCoordinatorIn}{\left( \Forward{\matchingUpIn}{\matchingCoordinatorIn} \mid \Output{\coordinatorMatchingOut}{\matchingCoordinatorIn} \right)} )
							\end{aligned}\\
						& \hspace*{2em} \mid \Output{\coordinatorUpOut}{y, l_s, s, z}} \big)\\
						& \mid \Output{\coordinatorMatchingIn}{\matchingCoordinatorOut} \mid \ReplicateInput{\coordinatorMatchingIn}{m_o}.\Input{\parallelChannelIn}{y, l_r, r}.\big(\\
						& \hspace*{2em} \RestrictedTerm{\matchingUpOut}{( \begin{aligned}[t]
								& \ReplicateInput{\matchingCoordinatorOut}{y', l_s, s, z}.\left( \Match{y'}{y}\Output{r}{l_s, l_r, l_s, s, z} \mid \Output{\matchingUpOut}{y', l_s, s, z} \right)\\
								& \mid \RestrictedTerm{\matchingCoordinatorOut}{\left( \Forward{\matchingUpOut}{\matchingCoordinatorOut} \mid \Output{\coordinatorMatchingIn}{\matchingCoordinatorOut} \right)} )
							\end{aligned}\\
						& \hspace*{2em} \mid \Output{\coordinatorUpIn}{y, l_r, r}} \big) \big)
					\end{aligned}}\\
				& \hspace*{1em} \mid \Forward{\coordinatorUpOut}{\parallelChannelOut} \mid \Forward{\coordinatorUpIn}{\parallelChannelIn} \big)}\\
		\EncodingMixAsyn{\Match{a}{b}P} & \deff \Match{\RenamingPolicyMixAsyn{a}}{\RenamingPolicyMixAsyn{b}}\EncodingMixAsyn{P}\\
		\EncodingMixAsyn{\sum_{i \in \indexSet} \guard_i.P_i} & \deff \RestrictedTerm{\sumLock}{\left( \Output{l}{\true} \mid \prod_{i \in \indexSet} \EncodingMixAsyn{\guard_i.P_i} \right)}\\
		\EncodingMixAsyn{\tau.P} & \deff \Test{\sumLock}{\Output{\sumLock}{\false} \mid \EncodingMixAsyn{P}}{\Output{\sumLock}{\false}}\\
		\EncodingMixAsyn{\Output{y}{z}.P} & \deff \RestrictedTerm{\senderLock}{\left( \Output{\parallelChannelOut}{\RenamingPolicyMixAsyn{y}, \sumLock, \senderLock, \RenamingPolicyMixAsyn{z}} \mid \In{\senderLock}.\EncodingMixAsyn{P} \right)}\\
		\EncodingMixAsyn{\Input{y}{x}.P} & \deff \RestrictedTerm{\receiverLock}{\big( \Output{\parallelChannelIn}{\RenamingPolicyMixAsyn{y}, \sumLock, \receiverLock} \mid \ReplicateInput{\receiverLock}{l_1, l_2, -, s, \RenamingPolicyMixAsyn{x}}.\\
			& \hspace*{3em} \Test{l_1}{\left( \Test{l_2}{\Output{l_1}{\false} \mid \Output{l_2}{\false} \mid \Out{\senderLock} \mid \EncodingMixAsyn{P}}{\Output{l_1}{\true} \mid \Output{l_2}{\false}} \right)}{\Output{l_1}{\false} \big)}}\\
		\EncodingMixAsyn{\ReplicateInput{y}{x}.P} & \deff \RestrictedTerm{\sumLock, \receiverLock, \coordinatorRepA, \coordinatorRepB, \matchingReceiverOut, \matchingReceiverIn}{\big(\\
				& \hspace*{1em} \Output{\parallelChannelIn}{\RenamingPolicySepAsyn{y}, \sumLock, \receiverLock} \mid \ReplicateInput{\receiverLock}{-, -, l_s, s, \RenamingPolicySepAsyn{x}}.\Test{l_s}{\Output{l_s}{\false} \mid \Out{s} \mid \Output{\coordinatorRepA}{\RenamingPolicyMixAsyn{x}}}{\Output{l_s}{\false}}\\
				& \hspace*{1em} \mid \Output{\matchingReceiverIn}{\RenamingPolicySepAsyn{y}, \sumLock, \receiverLock} \mid \Output{\sumLock}{\true} \mid \Output{\coordinatorRepB}{\matchingReceiverOut, \matchingReceiverIn}\\
				& \hspace*{1em} \mid \ReplicateInput{\coordinatorRepA}{\RenamingPolicyMixAsyn{x}}.\Input{\coordinatorRepB}{\matchingReceiverOut, \matchingReceiverIn}.\RestrictedTerm{\matchingCoordinatorOut, \matchingCoordinatorIn, \coordinatorUpOut, \coordinatorUpIn, \matchingReceiverUpOut, \matchingReceiverUpIn, \coordinatorMatchingOut, \coordinatorMatchingIn, \matchingUpOut, \matchingUpIn}{\big(\\
				& \hspace*{2em} \begin{aligned}[t]
						& \Forward{\matchingReceiverOut}{\Set{ \matchingCoordinatorOut, \matchingReceiverUpOut }} \mid \Forward{\matchingReceiverIn}{\Set{ \matchingCoordinatorIn, \matchingReceiverUpIn }}\\
						& \mid \RestrictedTerm{\parallelChannelOut, \parallelChannelIn}{\big( \begin{aligned}[t]
								& \EncodingMixAsyn{P}\\
								& \mid \Output{\coordinatorMatchingOut}{\matchingCoordinatorIn} \mid \ReplicateInput{\coordinatorMatchingOut}{m_i}.\Input{\parallelChannelOut}{y, l_s, s, z}.\big(\\
								& \hspace*{1em} \begin{aligned}[t]
										& \RestrictedTerm{\matchingUpIn}{\big( \begin{aligned}[t]
												& \ReplicateInput{m_i}{y', l_r, r}.\left( \Match{y}{y'}\Output{r}{l_r, l_s, l_s, s, z} \mid \Output{\matchingUpIn}{y', l_r, r} \right)\\
												& \mid \RestrictedTerm{\matchingCoordinatorIn}{\left( \Forward{\matchingUpIn}{\matchingCoordinatorIn} \mid \Output{\coordinatorMatchingOut}{\matchingCoordinatorIn} \right)} \big)
											\end{aligned}\\
										& \mid \Output{\coordinatorUpOut}{y, l_s, s, z} \big) \big)
									\end{aligned}}\\
								& \mid \Output{\coordinatorMatchingIn}{\matchingCoordinatorOut} \mid \ReplicateInput{\coordinatorMatchingIn}{m_o}.\Input{\parallelChannelIn}{y, l_r, r}.\big(\\
								& \hspace*{1em} \begin{aligned}[t]
										& \RestrictedTerm{\matchingUpOut}{\big( \begin{aligned}[t]
												& \ReplicateInput{m_o}{y', l_s, s, z}.\left( \Match{y}{y'}\Output{r}{l_s, l_r, l_s, s, z} \mid \Output{\matchingUpOut}{y', l_s, s, z} \right)\\
												& \mid \RestrictedTerm{\matchingCoordinatorOut}{\left( \Forward{\matchingUpOut}{\matchingCoordinatorOut} \mid \Output{\coordinatorMatchingIn}{\matchingCoordinatorOut} \right)} \big)
											\end{aligned}\\
										& \mid \Output{\coordinatorUpIn}{y, l_r, r} \big)
									\end{aligned}}
							\end{aligned}\\
						& \mid \RestrictedTerm{\matchingReceiverOut, \matchingReceiverIn}{\left( \Output{\coordinatorRepB}{\matchingReceiverOut, \matchingReceiverIn} \mid \Forward{\coordinatorUpOut}{\Set{ \parallelChannelOut, \matchingReceiverOut}} \mid \Forward{\matchingReceiverUpOut}{\matchingReceiverOut} \mid \Forward{\coordinatorUpIn}{\Set{ \parallelChannelIn, \matchingReceiverIn }} \mid \Forward{\matchingReceiverUpIn}{\matchingReceiverIn} \right) \big) \big)}
					\end{aligned}}}}\\
		\EncodingSepAsyn{\success} & \deff \success	
	\end{align*}
	\begin{center}
		Here $ \renamingPolicyMixAsyn $ is some arbitrary injective substitution such that $ \forall n \in \names \logdot \RenamingPolicyMixAsyn{n} \cap N = \emptyset $, where $ N $ is the set of reserved names, i.e., $ N = \Set{ \parallelChannelOut, \parallelChannelIn, \coordinatorUpOut, \coordinatorUpIn, \matchingCoordinatorOut, \matchingCoordinatorIn, \matchingUpOut, \matchingUpIn, \coordinatorMatchingOut, \coordinatorMatchingIn, \sumLock, \sumLock_s, \sumLock_r, \sumLock_1, \sumLock_2, \senderLock, \receiverLock, \coordinatorRepA, \coordinatorRepB, \matchingReceiverIn, \matchingReceiverOut, \matchingReceiverUpIn, \matchingReceiverUpOut, y, y', z, t, f } $.
	\end{center}
	\caption{Encoding $ \encodingMixAsyn $ from \piMix into \piAsyn.} \label{fig:encodingMixAsyn}
\end{figure}

In the following we will argue for the correctness of these encodings with respect to the criteria of Gorla presented in Section \ref{sec:qualityCriteria}.

\subsection{Structural Criteria} \label{sec:structuralCriteria}

The first two criteria to prove are the structural criteria; compositionality and name invariance. An encoding is compositional if it defines a fixed context for each operator including holes for the translation of its parameters. By Definition \ref{def:compositionality} of compositionality the context is allowed to depend on the free names of the parameters. However, both presented encodings, $ \encodingSepAsyn $ and $ \encodingMixAsyn $, do not use that feature, i.e., the contexts do not depend on any names. By Figure \ref{fig:encodingSepAsyn} and Figure \ref{fig:encodingMixAsyn} both encodings are obviously compositional.

Let us have a closer look at the contexts. In the encodings of restriction, matching and success the context is used only to translate source term names according to the renaming policy. Apart from that the encodings are \clean. The encoding of the sum operator inserts a positive instantiation of a fresh sum lock and splits up the encodings of the summands in parallel because there is no sum operator in the target language. Therefore of course we have to consider the sum operator as binary operator with an index set and a set of its summands as parameters, or as unary operator with the set or list\footnote{Usually an unordered set of summands suffice to describe a sum since usually we consider sums as being reflexive and symmetric, i.e., $ A + A = A $ and $ A + B = B + A $. If for some reasons we have to abandon reflexivity and/or symmetry, e.g. in case of a randomised version of the calculi, an ordered list might be the better choice to describe a sum.} of its summands as parameter. We left the question whether there is an encoding from \piMix into \piAsyn with the binary sum operator as an open question to further research. The encodings of input and output guarded terms and the encoding of terms guarded by $ \tau $ introduce rather small contexts. However, in case of $ \encodingMixAsyn $ the contexts introduced to translate the binary parallel operator and replicated input are rather complicated and huge. Remember that we claim in Section \ref{sec:qualityCriteria} that the parallel operator is binary. Comparing its encoding with the encoding of the sum operator we observe that this claim may be crucial because it forbids the introduction of a global coordinator for all parallel terms as the sum lock is for all the summands of a sum.

Name invariance follows by the fact that names are translated into single names again and that conflicts between names used by the encoding functions and translated source term names are ruled out by the renaming policy.

\begin{lemma} \label{lem:nameInvarianceSepAsyn}
	The encoding $ \encodingSepAsyn $ is name invariant.
\end{lemma}

\begin{proof}
	By Definition \ref{def:nameInvariance} it suffice to show, that:
	\begin{align*}
		\forall S \in \piSepProc \logdot \forall \sigma \subseteq \names \logdot \exists \sigma' \subseteq \names \logdot \EncodingSepAsyn{\sigma\left( S \right)} \equivAlpha \sigma'\left( \EncodingSepAsyn{S} \right) \wedge \forall z \in \names \logdot \RenamingPolicySepAsyn{\sigma\left( z \right)} = \sigma'\left( \RenamingPolicySepAsyn{z} \right)
	\end{align*}
	Without loss of generality let $ \sigma = \Set{ \Subst{y_1}{x_1}, \ldots \Subst{y_n}{x_n} } $ for some $ n \in \nat $. We choose
	\begin{align*}
		\sigma' \deff \Set{ \Subst{\RenamingPolicySepAsyn{y_1}}{\RenamingPolicySepAsyn{x_1}}, \ldots, \Subst{\RenamingPolicySepAsyn{y_n}}{\RenamingPolicySepAsyn{x_n}} }.
	\end{align*}
	So $ \forall z \in \names \logdot \RenamingPolicySepAsyn{\sigma\left( z \right)} = \sigma'\left( \RenamingPolicySepAsyn{z} \right) $. We proceed with an induction over the structure of $ S $.
	\begin{description}
		\item[Base Case:] Since $ \Names{\success} = \emptyset = \Names{\nullTerm} $ and $ \FreeNames{\RestrictedTerm{\sumLock}{\Output{\sumLock}{\true}}} = \emptyset $, we have $ \EncodingSepAsyn{\sigma\left( \success \right)} = \EncodingSepAsyn{\success} = \success = \sigma'\left( \success \right) = \sigma'\left( \EncodingSepAsyn{\success} \right) $ and $ \EncodingSepAsyn{\sigma\left( \nullTerm \right)} = \EncodingSepAsyn{\nullTerm} = \RestrictedTerm{\sumLock}{\Output{\sumLock}{\true}} = \sigma'\left( \RestrictedTerm{\sumLock}{\Output{\sumLock}{\true}} \right) = \sigma'\left( \EncodingSepAsyn{\nullTerm} \right) $.
		\item[Induction Hypothesis:] $ \forall S \in \piSepProc \logdot \forall \sigma \subseteq \names \logdot \exists \sigma' \subseteq \names \logdot \EncodingSepAsyn{\sigma\left( S \right)} \equivAlpha \sigma'\left( \EncodingSepAsyn{S} \right) $
		\item[Induction Step:] Let $ x' \in \names $ be such that $ x' \notin \Names{\sigma} \cup \FreeNames{P} $. Then $ \RenamingPolicyMixAsyn{x'} \notin \Names{\sigma} $. Moreover, since $ \forall n \in \names \logdot \RenamingPolicyMixAsyn{n} \notin \Set{ \sumLock, \sumLock', \receiverLock, \senderLock } $, we have $ \Names{\sigma'} \cap \Set{ \sumLock, \sumLock', \receiverLock, \senderLock } = \emptyset $. Note that:
			\begin{align*}
				\EncodingSepAsyn{\sigma\left( \Set{ \Subst{x'}{x} }\left( P \right) \right)} & = \EncodingSepAsyn{\Set{ \Subst{x'}{x}, \Subst{y_i}{x_i} \mid \Subst{y_i}{x_i} \in \sigma \wedge x_i \neq x }\left( P \right)}\\
				& \equivAlpha \Set{ \Subst{\RenamingPolicySepAsyn{x'}}{\RenamingPolicySepAsyn{x}}, \Subst{\RenamingPolicySepAsyn{y_i}}{\RenamingPolicySepAsyn{x_i}} \mid \Subst{y_i}{x_i} \in \sigma \wedge x_i \neq x }\left( \EncodingSepAsyn{P} \right) & \text{ by IH}\\
				& = \Set{ \Subst{\RenamingPolicySepAsyn{x'}}{\RenamingPolicySepAsyn{x}}, \Subst{\RenamingPolicySepAsyn{y_i}}{\RenamingPolicySepAsyn{x_i}} \mid \Subst{\RenamingPolicySepAsyn{y_i}}{\RenamingPolicySepAsyn{x_i}} \in \sigma' \wedge \RenamingPolicySepAsyn{x_i} \neq \RenamingPolicySepAsyn{x} }\left( \EncodingSepAsyn{P} \right)\\
				& = \sigma'\left( \Set{ \Subst{\RenamingPolicySepAsyn{x'}}{\RenamingPolicySepAsyn{x}} }\left( \EncodingSepAsyn{P} \right) \right)
			\end{align*}
			We proceed by a case split.
			\begin{description}
				\item[Case of $ S = \RestrictedTerm{x}{P} $:] Then
					\begin{align*}
						\EncodingSepAsyn{\sigma\left( S \right)} & = \EncodingSepAsyn{ \sigma\left( \RestrictedTerm{x}{P} \right)} \equivAlpha \EncodingSepAsyn{ \sigma\left( \RestrictedTerm{x'}{\Set{ \Subst{x'}{x} }\left( P \right)} \right) } = \EncodingSepAsyn{ \RestrictedTerm{x'}{\sigma\left( \Set{ \Subst{x'}{x} }\left( P \right) \right)} }\\
						& = \RestrictedTerm{\RenamingPolicySepAsyn{x'}}{\EncodingSepAsyn{ \sigma\left( \Set{ \Subst{x'}{x} }\left( P \right) \right)} } \equivAlpha \RestrictedTerm{\RenamingPolicySepAsyn{x'}}{\sigma'\left( \Set{ \Subst{\RenamingPolicySepAsyn{x'}}{\RenamingPolicySepAsyn{x}} }\left( \EncodingSepAsyn{P} \right) \right)}\\
						& = \sigma'\left( \RestrictedTerm{\RenamingPolicySepAsyn{x'}}{\Set{ \Subst{\RenamingPolicySepAsyn{x'}}{\RenamingPolicySepAsyn{x}} }\left( \EncodingSepAsyn{P} \right)} \right) \equivAlpha \sigma'\left( \RestrictedTerm{\RenamingPolicySepAsyn{x}}{\EncodingSepAsyn{P}} \right) = \sigma'\left( \EncodingSepAsyn{S} \right).
					\end{align*}
				\item[Case of $ S = P \mid Q $:] Then
					\begin{align*}
						\EncodingSepAsyn{\sigma\left( S \right)} & = \EncodingSepAsyn{\sigma\left( P \mid Q \right)} = \EncodingSepAsyn{\sigma\left( P \right) \mid \sigma\left( Q \right)} = \EncodingSepAsyn{\sigma\left( P \right)} \mid \EncodingSepAsyn{\sigma\left( Q \right)} \equivAlpha \sigma'\left( \EncodingSepAsyn{P} \right) \mid \sigma'\left( \EncodingSepAsyn{Q} \right)\\
						& = \sigma'\left( \EncodingSepAsyn{P} \mid \EncodingSepAsyn{Q} \right) = \sigma'\left( \EncodingSepAsyn{P \mid Q} \right).
					\end{align*}
				\item[Case of $ S = \Match{a}{b}P $:] Then
					\begin{align*}
						\EncodingSepAsyn{\sigma\left( S \right)} & = \EncodingSepAsyn{\sigma\left( \Match{a}{b}P \right)} = \EncodingSepAsyn{\Match{\sigma\left( a \right)}{\sigma\left( b \right)}\sigma\left( P \right)} = \Match{\RenamingPolicySepAsyn{\sigma\left( a \right)}}{\RenamingPolicySepAsyn{\sigma\left( b \right)}}\EncodingSepAsyn{\sigma\left( P \right)}\\
						& \equivAlpha \Match{\sigma'\left( \RenamingPolicySepAsyn{a} \right)}{\sigma'\left( \RenamingPolicySepAsyn{b} \right)}\sigma'\left( \EncodingSepAsyn{P} \right) = \sigma'\left( \Match{\RenamingPolicySepAsyn{a}}{\RenamingPolicySepAsyn{b}}\EncodingSepAsyn{P} \right) = \sigma'\left( \EncodingSepAsyn{S} \right).
					\end{align*}
				\item[Case of $ S = \sum_{i \in \indexSet} \guard_i.P_i $:] Then
					\begin{align*}
						\EncodingSepAsyn{\sigma\left( S \right)} & = \EncodingSepAsyn{\sigma\left( \sum_{i \in \indexSet} \guard_i.P_i \right)} = \EncodingSepAsyn{\sum_{i \in \indexSet} \sigma\left( \guard_i.P_i \right)} = \RestrictedTerm{\sumLock}{\left( \Output{\sumLock}{\true} \mid \prod_{i \in \indexSet} \EncodingSepAsyn{\sigma\left( \guard_i.P_i \right)} \right)}\\
						& \equivAlpha \RestrictedTerm{\sumLock}{\left( \Output{\sumLock}{\true} \mid \prod_{i \in \indexSet} \sigma'\left( \EncodingSepAsyn{\guard_i.P_i} \right) \right)} = \sigma'\left( \RestrictedTerm{\sumLock}{\left( \Output{\sumLock}{\true} \mid \prod_{i \in \indexSet} \EncodingSepAsyn{\guard_i.P_i} \right)} \right) = \sigma'\left( \EncodingSepAsyn{S} \right).
					\end{align*}
				\item[Case of $ S = \tau.P $:] Then
					\begin{align*}
						\EncodingSepAsyn{\sigma\left( S \right)} & = \EncodingSepAsyn{\sigma\left( \tau.P \right)} = \EncodingSepAsyn{\tau.\sigma\left( P \right)} = \Test{\sumLock}{\Output{\sumLock}{\false} \mid \EncodingSepAsyn{\sigma\left( P \right)}}{\Output{\sumLock}{\false}}\\
						& \equivAlpha \Test{\sumLock}{\Output{\sumLock}{\false} \mid \sigma'\left( \EncodingSepAsyn{P} \right)}{\Output{\sumLock}{\false}} = \sigma'\left( \Test{\sumLock}{\Output{\sumLock}{\false} \mid \EncodingSepAsyn{P}}{\Output{\sumLock}{\false}} \right)\\
						& = \sigma'\left( \EncodingSepAsyn{S} \right).
					\end{align*}
				\item[Case of $ S = \Output{y}{z}.P $:] Then
					\begin{align*}
						\EncodingSepAsyn{\sigma\left( S \right)} & = \EncodingSepAsyn{\sigma\left( \Output{y}{z}.P \right)} = \EncodingSepAsyn{\Output{\sigma\left( y \right)}{\sigma\left( z \right)}.\sigma\left( P \right)}\\
						& = \RestrictedTerm{\senderLock}{\left( \Output{\RenamingPolicySepAsyn{\sigma\left( y \right)}}{\sumLock, \senderLock, \RenamingPolicySepAsyn{\sigma\left( x \right)}} \mid \Test{\senderLock}{\EncodingSepAsyn{\sigma\left( P \right)}}{\nullTerm} \right)}\\
						& \equivAlpha \RestrictedTerm{\senderLock}{\left( \Output{\sigma'\left( \RenamingPolicySepAsyn{y} \right)}{\sumLock, \senderLock, \sigma'\left( \RenamingPolicySepAsyn{x} \right)} \mid \Test{\senderLock}{\sigma'\left( \EncodingSepAsyn{P} \right)}{\nullTerm} \right)}\\
						& = \sigma'\left( \RestrictedTerm{\senderLock}{\left( \Output{\RenamingPolicySepAsyn{y}}{\sumLock, \senderLock, \RenamingPolicySepAsyn{x}} \mid \Test{\senderLock}{\EncodingSepAsyn{P}}{\nullTerm} \right)} \right) = \sigma'\left( \EncodingSepAsyn{S} \right).
					\end{align*}
				\item[Case of $ S = \Input{y}{x}.P $:] Then
					\begin{align*}
						\EncodingSepAsyn{\sigma\left( S \right)} & = \EncodingSepAsyn{\sigma\left( \Input{y}{x}.P \right)} \equivAlpha \EncodingSepAsyn{\sigma\left( \Input{y}{x'}.\Set{ \Subst{x'}{x} }\left( P \right) \right)} = \EncodingSepAsyn{\Input{\sigma\left( y \right)}{x'}.\sigma\left( \Set{ \Subst{x'}{x} }\left( P \right) \right)}\\
						& = \RestrictedTerm{\receiverLock}{\big( \Out{\receiverLock} \mid \ReplicateIn{\receiverLock}.\Input{\RenamingPolicySepAsyn{\sigma\left( y \right)}}{\sumLock', \senderLock, \RenamingPolicySepAsyn{x'}}.\\
						& \hspace*{3em} \BigTest{\sumLock}{\Test{\sumLock'}{\Output{\sumLock}{\false} \mid \Output{\sumLock'}{\false} \mid \Output{\senderLock}{\true} \mid \EncodingSepAsyn{\sigma\left( \Set{ \Subst{x'}{x} }\left( P \right) \right)}}{\Output{\sumLock}{\true} \mid \Output{\sumLock'}{\false} \mid \Output{\senderLock}{\false} \mid \Out{\receiverLock}}}{\Output{\sumLock}{\false} \mid \Output{\RenamingPolicySepAsyn{\sigma\left( y \right)}}{\sumLock', \senderLock, \RenamingPolicySepAsyn{x'}} \big)}}\\
						& \equivAlpha \RestrictedTerm{\receiverLock}{\big( \Out{\receiverLock} \mid \ReplicateIn{\receiverLock}.\Input{\sigma'\left( \RenamingPolicySepAsyn{y} \right)}{\sumLock', \senderLock, \RenamingPolicySepAsyn{x'}}.\\
						& \hspace*{3em} \BigTest{\sumLock}{\BigTest{\sumLock'}{\Output{\sumLock}{\false} \mid \Output{\sumLock'}{\false} \mid \Output{\senderLock}{\true} \mid \sigma'\left( \Set{ \Subst{\RenamingPolicySepAsyn{x'}}{\RenamingPolicySepAsyn{x}} }\left( \EncodingSepAsyn{P} \right) \right)}{\Output{\sumLock}{\true} \mid \Output{\sumLock'}{\false} \mid \Output{\senderLock}{\false} \mid \Out{\receiverLock}}}{\Output{\sumLock}{\false} \mid \Output{\sigma'\left( \RenamingPolicySepAsyn{y} \right)}{\sumLock', \senderLock, \RenamingPolicySepAsyn{x'}} \big)}}\\
						& = \sigma'\big( \RestrictedTerm{\receiverLock}{\big( \Out{\receiverLock} \mid \ReplicateIn{\receiverLock}.\Input{\RenamingPolicySepAsyn{y}}{\sumLock', \senderLock,\RenamingPolicySepAsyn{x'}}.\\
						& \hspace*{3em} \BigTest{\sumLock}{\BigTest{\sumLock'}{\Output{\sumLock}{\false} \mid \Output{\sumLock'}{\false} \mid \Output{\senderLock}{\true} \mid \Set{ \Subst{\RenamingPolicySepAsyn{x'}}{\RenamingPolicySepAsyn{x}} }\left( \EncodingSepAsyn{P} \right)}{\Output{\sumLock}{\true} \mid \Output{\sumLock'}{\false} \mid \Output{\senderLock}{\false} \mid \Out{\receiverLock}}}{\Output{\sumLock}{\false} \mid \Output{\RenamingPolicySepAsyn{y}}{\sumLock', \senderLock, \RenamingPolicySepAsyn{x'}} \big) \big)}}\\
						& \equivAlpha \sigma'\big( \RestrictedTerm{\receiverLock}{\big( \Out{\receiverLock} \mid \ReplicateIn{\receiverLock}.\Input{\RenamingPolicySepAsyn{y}}{\sumLock', \senderLock,\RenamingPolicySepAsyn{x}}.\\
						& \hspace*{3em} \BigTest{\sumLock}{\Test{\sumLock'}{\Output{\sumLock}{\false} \mid \Output{\sumLock'}{\false} \mid \Output{\senderLock}{\true} \mid \EncodingSepAsyn{P}}{\Output{\sumLock}{\true} \mid \Output{\sumLock'}{\false} \mid \Output{\senderLock}{\false} \mid \Out{\receiverLock}}}{\Output{\sumLock}{\false} \mid \Output{\RenamingPolicySepAsyn{y}}{\sumLock', \senderLock, \RenamingPolicySepAsyn{x}} \big) \big)}}\\
						& = \sigma'\left( \EncodingSepAsyn{S} \right).
					\end{align*}
				\item[Case of $ S = \ReplicateInput{y}{x}.P $:] Then
					\begin{align*}
						\EncodingSepAsyn{\sigma\left( S \right)} & = \EncodingSepAsyn{\sigma\left( \ReplicateInput{y}{x}.P \right)} \equivAlpha \EncodingSepAsyn{\sigma\left( \ReplicateInput{y}{x'}.\Set{ \Subst{x'}{x} }\left( P \right) \right)} = \EncodingSepAsyn{\ReplicateInput{\sigma\left( y \right)}{x'}.\sigma\left( \Set{ \Subst{x'}{x} }\left( P \right) \right)}\\
						& = \ReplicateInput{\RenamingPolicySepAsyn{\sigma\left( y \right)}}{\sumLock, \senderLock, \RenamingPolicySepAsyn{x'}}.\Test{\sumLock}{\Output{\sumLock}{\false} \mid \Output{\senderLock}{\false} \mid \EncodingSepAsyn{\sigma\left( \Set{ \Subst{x'}{x} }\left( P \right) \right)}}{\Output{\sumLock}{\false} \mid \Output{\senderLock}{\false}}\\
						& \equivAlpha \ReplicateInput{\sigma'\left( \RenamingPolicySepAsyn{y} \right)}{\sumLock, \senderLock, \RenamingPolicySepAsyn{x'}}.\Test{\sumLock}{\Output{\sumLock}{\false} \mid \Output{\senderLock}{\false} \mid \sigma'\left( \Set{ \Subst{\RenamingPolicySepAsyn{x'}}{\RenamingPolicySepAsyn{x}} }\left( \EncodingSepAsyn{P} \right) \right)}{\Output{\sumLock}{\false} \mid \Output{\senderLock}{\false}}\\
						& = \sigma'\left( \ReplicateInput{\RenamingPolicySepAsyn{y}}{\sumLock, \senderLock, \RenamingPolicySepAsyn{x'}}.\Test{\sumLock}{\Output{\sumLock}{\false} \mid \Output{\senderLock}{\false} \mid \Set{ \Subst{\RenamingPolicySepAsyn{x'}}{\RenamingPolicySepAsyn{x}} }\left( \EncodingSepAsyn{P} \right)}{\Output{\sumLock}{\false} \mid \Output{\senderLock}{\false}} \right)\\
						& \equivAlpha \sigma'\left( \ReplicateInput{\RenamingPolicySepAsyn{y}}{\sumLock, \senderLock, \RenamingPolicySepAsyn{x}}.\Test{\sumLock}{\Output{\sumLock}{\false} \mid \Output{\senderLock}{\false} \mid \EncodingSepAsyn{P}}{\Output{\sumLock}{\false} \mid \Output{\senderLock}{\false}} \right)\\
						& = \sigma'\left( \EncodingSepAsyn{S} \right).
					\end{align*}
			\end{description}
	\end{description}
	\qed
\end{proof}

\begin{lemma}
	The encoding $ \encodingMixAsyn $ is name invariant. \label{lem:nameInvarianceMixAsyn}
\end{lemma}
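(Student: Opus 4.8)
The plan is to reuse the template of the proof of Lemma~\ref{lem:nameInvarianceSepAsyn}, which rests on two facts that carry over to $\encodingMixAsyn$ verbatim: the renaming policy $\renamingPolicyMixAsyn$ is an injective substitution from single names to single names, and its range is disjoint from the set $N$ of reserved names listed beneath Figure~\ref{fig:encodingMixAsyn}, i.e. $\RenamingPolicyMixAsyn{n} \cap N = \emptyset$ for every $n$. As in Lemma~\ref{lem:nameInvarianceSepAsyn}, it suffices to establish the stronger $\equivAlpha$-statement for \emph{every} $\sigma$, since $\equivAlpha$ refines $\asymp$ and hence covers the non-injective case of Definition~\ref{def:nameInvariance} as well. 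Writing $\sigma = \Set{\Subst{y_1}{x_1}, \ldots, \Subst{y_n}{x_n}}$, I would set $\sigma' \deff \Set{\Subst{\RenamingPolicyMixAsyn{y_1}}{\RenamingPolicyMixAsyn{x_1}}, \ldots, \Subst{\RenamingPolicyMixAsyn{y_n}}{\RenamingPolicyMixAsyn{x_n}}}$, which is well defined because injectivity of $\renamingPolicyMixAsyn$ keeps the $\RenamingPolicyMixAsyn{x_i}$ distinct, and which satisfies $\RenamingPolicyMixAsyn{\sigma(z)} = \sigma'(\RenamingPolicyMixAsyn{z})$ for all $z$ by construction.

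The proof then proceeds by structural induction on $S \in \piMixProc$. The base cases $\success$ and the empty sum $\nullTerm$ are immediate, since $\EncodingMixAsyn{\success} = \success$ and $\EncodingMixAsyn{\nullTerm} = \RestrictedTerm{\sumLock}{\Output{\sumLock}{\true}}$ contain no free source-term names, so $\sigma'$ acts as the identity on them. For the inductive step the guiding observation is uniform across all operators: each context in Figure~\ref{fig:encodingMixAsyn} is built only from names in $N$, all of which are bound by the restrictions of the context, whereas $\sigma'$ only renames names in the range of $\renamingPolicyMixAsyn$. By the disjointness $\RenamingPolicyMixAsyn{n} \cap N = \emptyset$, the substitution $\sigma'$ leaves every reserved name untouched and never needs to avoid capture against a bound reserved name. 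Hence $\sigma'$ simply slides through each context and lands on the encodings of the parameters, where the induction hypothesis applies.

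Concretely, the cases of restriction, match, sum, $\tau$-prefix and output are handled exactly as their counterparts in Lemma~\ref{lem:nameInvarianceSepAsyn}: push $\sigma$ through the operator, expand the encoding, commute $\sigma'$ with the (small) context up to $\equivAlpha$, and invoke the induction hypothesis on the subterm(s). For input and replicated input I would first $\alpha$-convert the bound object $x$ to a fresh $x' \notin \Names{\sigma} \cup \FreeNames{P}$, so that $\sigma$ does not interfere with the translated binder $\RenamingPolicyMixAsyn{x'}$, using the auxiliary equation $\EncodingMixAsyn{\sigma(\Set{\Subst{x'}{x}}(P))} \equivAlpha \sigma'(\Set{\Subst{\RenamingPolicyMixAsyn{x'}}{\RenamingPolicyMixAsyn{x}}}(\EncodingMixAsyn{P}))$ established exactly as in the separate-choice proof; the nested tests and forwarders in these clauses involve only reserved names and therefore commute with $\sigma'$.

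The main obstacle is purely the bulk of the parallel-composition and replicated-input clauses, whose contexts are large and deeply nested. In contrast to $\encodingSepAsyn$, where $\EncodingSepAsyn{P \mid Q} = \EncodingSepAsyn{P} \mid \EncodingSepAsyn{Q}$ is \clean, here the parallel context wraps the two holes in an outer restriction over the coordinator names $\matchingCoordinatorOut, \matchingCoordinatorIn, \coordinatorUpOut, \coordinatorUpIn, \coordinatorMatchingOut, \coordinatorMatchingIn$ together with the per-branch locks $\parallelChannelOut, \parallelChannelIn$ and the various forwarders. I expect the verification to reduce, in every such clause, to checking that the full list of restricted names of the context lies in $N$ and that the only occurrences of names from the range of $\renamingPolicyMixAsyn$ are inside the plugged-in subterm encodings; once that is confirmed, the same ``slide $\sigma'$ through a context over reserved, bound names, then apply the induction hypothesis'' argument closes the case. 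No idea beyond the disjointness $\RenamingPolicyMixAsyn{n} \cap N = \emptyset$ is required --- only careful and lengthy bookkeeping over the large contexts.
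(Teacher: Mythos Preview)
Your proposal is correct and follows essentially the same approach as the paper's proof: define $\sigma'$ by pushing $\renamingPolicyMixAsyn$ through $\sigma$, establish the auxiliary equation for bound objects via a fresh $x'$, and proceed by structural induction, using the disjointness $\Names{\sigma'} \cap N = \emptyset$ to commute $\sigma'$ through each encoding context. The paper simply writes out the parallel-composition and replicated-input cases in full, confirming your expectation that these cases require only lengthy bookkeeping and no new idea.
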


\begin{proof}
	By Definition \ref{def:nameInvariance} it suffice to show, that:
	\begin{align*}
		\forall S \in \piMixProc \logdot \forall \sigma \subseteq \names \logdot \exists \sigma' \subseteq \names \logdot \EncodingMixAsyn{\sigma\left( S \right)} \equivAlpha \sigma'\left( \EncodingMixAsyn{S} \right) \wedge \forall z \in \names \logdot \RenamingPolicyMixAsyn{\sigma\left( z \right)} = \sigma'\left( \RenamingPolicyMixAsyn{z} \right)
	\end{align*}
	Without loss of generality let $ \sigma = \Set{ \Subst{y_1}{x_1}, \ldots \Subst{y_n}{x_n} } $ for some $ n \in \nat $. We choose
	\begin{align*}
		\sigma' \deff \Set{ \Subst{\RenamingPolicyMixAsyn{y_1}}{\RenamingPolicyMixAsyn{x_1}}, \ldots, \Subst{\RenamingPolicyMixAsyn{y_n}}{\RenamingPolicyMixAsyn{x_n}} }.
	\end{align*}
	So $ \forall z \in \names \logdot \RenamingPolicyMixAsyn{\sigma\left( z \right)} = \sigma'\left( \RenamingPolicyMixAsyn{z} \right) $. We proceed with an induction over the structure of $ S $.
	\begin{description}
		\item[Base Case:] Since $ \Names{\success} = \emptyset = \Names{\nullTerm} $ and $ \FreeNames{\RestrictedTerm{\sumLock}{\Output{\sumLock}{\true}}} = \emptyset $, we have $ \EncodingMixAsyn{\sigma\left( \success \right)} = \EncodingMixAsyn{\success} = \success = \sigma'\left( \success \right) = \sigma'\left( \EncodingMixAsyn{\success} \right) $ and $ \EncodingMixAsyn{\sigma\left( \nullTerm \right)} = \EncodingMixAsyn{\nullTerm} = \RestrictedTerm{\sumLock}{\Output{\sumLock}{\true}} = \sigma'\left( \RestrictedTerm{\sumLock}{\Output{\sumLock}{\true}} \right) = \sigma'\left( \EncodingMixAsyn{\nullTerm} \right) $.
		\item[Induction Hypothesis:] $ \forall S \in \piMixProc \logdot \forall \sigma \subseteq \names \logdot \exists \sigma' \subseteq \names \logdot \EncodingMixAsyn{\sigma\left( S \right)} \equivAlpha \sigma'\left( \EncodingMixAsyn{S} \right) $
		\item[Induction Step:] Let $ x' \in \names $ be such that $ x' \notin \Names{\sigma} \cup \FreeNames{P} $. Then $ \RenamingPolicyMixAsyn{x'} \notin \Names{\sigma} $. Moreover, since $ \forall z \in \names \logdot \RenamingPolicyMixAsyn{z} \notin N $, where
			\begin{align*}
				N = \Set{ \parallelChannelOut, \parallelChannelIn, \coordinatorUpOut, \coordinatorUpIn, \matchingCoordinatorOut, \matchingCoordinatorIn, \matchingUpOut, \matchingUpIn, \coordinatorMatchingOut, \coordinatorMatchingIn, \sumLock, \sumLock_s, \sumLock_r, \sumLock_1, \sumLock_2, \senderLock, \receiverLock, \coordinatorRepA, \coordinatorRepB, \matchingReceiverIn, \matchingReceiverOut, \matchingReceiverUpIn, \matchingReceiverUpOut, y, y', z, t, f },
			\end{align*}
			we have $ \Names{\sigma'} \cap N = \emptyset $. Note that:
			\begin{align*}
				\EncodingMixAsyn{\sigma\left( \Set{ \Subst{x'}{x} }\left( P \right) \right)} & = \EncodingMixAsyn{\Set{ \Subst{x'}{x}, \Subst{y_i}{x_i} \mid \Subst{y_i}{x_i} \in \sigma \wedge x_i \neq x }\left( P \right)}\\
				& \equivAlpha \Set{ \Subst{\RenamingPolicyMixAsyn{x'}}{\RenamingPolicyMixAsyn{x}}, \Subst{\RenamingPolicyMixAsyn{y_i}}{\RenamingPolicyMixAsyn{x_i}} \mid \Subst{y_i}{x_i} \in \sigma \wedge x_i \neq x }\left( \EncodingMixAsyn{P} \right) & \text{ by IH}\\
				& = \Set{ \Subst{\RenamingPolicyMixAsyn{x'}}{\RenamingPolicyMixAsyn{x}}, \Subst{\RenamingPolicyMixAsyn{y_i}}{\RenamingPolicyMixAsyn{x_i}} \mid \Subst{\RenamingPolicyMixAsyn{y_i}}{\RenamingPolicyMixAsyn{x_i}} \in \sigma' \wedge \RenamingPolicyMixAsyn{x_i} \neq \RenamingPolicyMixAsyn{x} }\left( \EncodingMixAsyn{P} \right)\\
				& = \sigma'\left( \Set{ \Subst{\RenamingPolicyMixAsyn{x'}}{\RenamingPolicyMixAsyn{x}} }\left( \EncodingMixAsyn{P} \right) \right)
			\end{align*}
			We proceed by a case split.
			\begin{description}
				\item[Case of $ S = \RestrictedTerm{x}{P} $:] Then
					\begin{align*}
						\EncodingMixAsyn{\sigma\left( S \right)} & = \EncodingMixAsyn{ \sigma\left( \RestrictedTerm{x}{P} \right)} \equivAlpha \EncodingMixAsyn{ \sigma\left( \RestrictedTerm{x'}{\Set{ \Subst{x'}{x} }\left( P \right)} \right) } = \EncodingMixAsyn{ \RestrictedTerm{x'}{\sigma\left( \Set{ \Subst{x'}{x} }\left( P \right) \right)} }\\
						& = \RestrictedTerm{\RenamingPolicyMixAsyn{x'}}{\EncodingMixAsyn{ \sigma\left( \Set{ \Subst{x'}{x} }\left( P \right) \right)} } \equivAlpha \RestrictedTerm{\RenamingPolicyMixAsyn{x'}}{\sigma'\left( \Set{ \Subst{\RenamingPolicyMixAsyn{x'}}{\RenamingPolicyMixAsyn{x}} }\left( \EncodingMixAsyn{P} \right) \right)}\\
						& = \sigma'\left( \RestrictedTerm{\RenamingPolicyMixAsyn{x'}}{\Set{ \Subst{\RenamingPolicyMixAsyn{x'}}{\RenamingPolicyMixAsyn{x}} }\left( \EncodingMixAsyn{P} \right)} \right) \equivAlpha \sigma'\left( \RestrictedTerm{\RenamingPolicyMixAsyn{x}}{\EncodingMixAsyn{P}} \right) = \sigma'\left( \EncodingMixAsyn{S} \right).
					\end{align*}
				\item[Case of $ S = P \mid Q $:] Then
					\begin{align*}
						\EncodingMixAsyn{\sigma\left( S \right)} & = \EncodingMixAsyn{\sigma\left( P \mid Q \right)} = \EncodingMixAsyn{\sigma\left( P \right) \mid \sigma\left( Q \right)}\\
						& = \begin{aligned}[t]
								& \RestrictedTerm{\coordinatorLock, \matchingCoordinatorOut, \matchingCoordinatorIn, \coordinatorUpOut, \coordinatorUpIn, \coordinatorMatchingOut, \coordinatorMatchingIn, \matchingUpOut, \matchingUpIn}{\big(\\
								& \hspace*{1em} \Out{\coordinatorLock} \mid \RestrictedTerm{\parallelChannelOut, \parallelChannelIn}{\big( \EncodingMixAsyn{\sigma\left( P \right)} \mid \processLeftOutputRequests \mid \processLeftInputRequests \big)}\\
								& \hspace*{1em} \mid \RestrictedTerm{\parallelChannelOut, \parallelChannelIn}{\big( \begin{aligned}[t]
										& \EncodingMixAsyn{\sigma\left( Q \right)} \mid \processRightOutputRequests \mid \processRightInputRequests \big)
									\end{aligned}}\\
								& \hspace*{1em} \mid \pushRequests \big)}
							\end{aligned}\\
						& \equivAlpha \begin{aligned}[t]
								& \RestrictedTerm{\coordinatorLock, \matchingCoordinatorOut, \matchingCoordinatorIn, \coordinatorUpOut, \coordinatorUpIn, \coordinatorMatchingOut, \coordinatorMatchingIn, \matchingUpOut, \matchingUpIn}{\big(\\
								& \hspace*{1em} \Out{\coordinatorLock} \mid \RestrictedTerm{\parallelChannelOut, \parallelChannelIn}{\big( \sigma\left( \EncodingMixAsyn{P} \right) \mid \processLeftOutputRequests \mid \processLeftInputRequests \big)}\\
								& \hspace*{1em} \mid \RestrictedTerm{\parallelChannelOut, \parallelChannelIn}{\big( \begin{aligned}[t]
										& \sigma'\left( \EncodingMixAsyn{Q} \right) \mid \processRightOutputRequests \mid \processRightInputRequests \big)
									\end{aligned}}\\
								& \hspace*{1em} \mid \pushRequests \big)}
							\end{aligned}\\
						& = \sigma'\big( \begin{aligned}[t]
								& \RestrictedTerm{\coordinatorLock, \matchingCoordinatorOut, \matchingCoordinatorIn, \coordinatorUpOut, \coordinatorUpIn, \coordinatorMatchingOut, \coordinatorMatchingIn, \matchingUpOut, \matchingUpIn}{\big(\\
								& \hspace*{1em} \Out{\coordinatorLock} \mid \RestrictedTerm{\parallelChannelOut, \parallelChannelIn}{\big( \EncodingMixAsyn{P} \mid \processLeftOutputRequests \mid \processLeftInputRequests \big)}\\
								& \hspace*{1em} \mid \RestrictedTerm{\parallelChannelOut, \parallelChannelIn}{\big( \begin{aligned}[t]
										& \EncodingMixAsyn{Q} \mid \processRightOutputRequests \mid \processRightInputRequests \big)
									\end{aligned}}\\
								& \hspace*{1em} \mid \pushRequests \big)} \big)
							\end{aligned}\\
						& = \sigma'\left( \EncodingMixAsyn{S} \right).
					\end{align*}
				\item[Case of $ S = \Match{a}{b}P $:] Then
					\begin{align*}
						\EncodingMixAsyn{\sigma\left( S \right)} & = \EncodingMixAsyn{\sigma\left( \Match{a}{b}P \right)} = \EncodingMixAsyn{\Match{\sigma\left( a \right)}{\sigma\left( b \right)}\sigma\left( P \right)} = \Match{\RenamingPolicyMixAsyn{\sigma\left( a \right)}}{\RenamingPolicyMixAsyn{\sigma\left( b \right)}}\EncodingMixAsyn{\sigma\left( P \right)}\\
						& \equivAlpha \Match{\sigma'\left( \RenamingPolicyMixAsyn{a} \right)}{\sigma'\left( \RenamingPolicyMixAsyn{b} \right)}\sigma'\left( \EncodingMixAsyn{P} \right) = \sigma'\left( \Match{\RenamingPolicyMixAsyn{a}}{\RenamingPolicyMixAsyn{b}}\EncodingMixAsyn{P} \right) = \sigma'\left( \EncodingMixAsyn{S} \right).
					\end{align*}
				\item[Case of $ S = \sum_{i \in \indexSet} \guard_i.P_i $:] Then
					\begin{align*}
						\EncodingMixAsyn{\sigma\left( S \right)} & = \EncodingMixAsyn{\sigma\left( \sum_{i \in \indexSet} \guard_i.P_i \right)} = \EncodingMixAsyn{\sum_{i \in \indexSet} \sigma\left( \guard_i.P_i \right)} = \RestrictedTerm{\sumLock}{\left( \Output{\sumLock}{\true} \mid \prod_{i \in \indexSet} \EncodingMixAsyn{\sigma\left( \guard_i.P_i \right)} \right)}\\
						& \equivAlpha \RestrictedTerm{\sumLock}{\left( \Output{\sumLock}{\true} \mid \prod_{i \in \indexSet} \sigma'\left( \EncodingMixAsyn{\guard_i.P_i} \right) \right)} = \sigma'\left( \RestrictedTerm{\sumLock}{\left( \Output{\sumLock}{\true} \mid \prod_{i \in \indexSet} \EncodingMixAsyn{\guard_i.P_i} \right)} \right) = \sigma'\left( \EncodingMixAsyn{S} \right).
					\end{align*}
				\item[Case of $ S = \tau.P $:] Then
					\begin{align*}
						\EncodingMixAsyn{\sigma\left( S \right)} & = \EncodingMixAsyn{\sigma\left( \tau.P \right)} = \EncodingMixAsyn{\tau.\sigma\left( P \right)} = \Test{\sumLock}{\Output{\sumLock}{\false} \mid \EncodingMixAsyn{\sigma\left( P \right)}}{\Output{\sumLock}{\false}}\\
						& \equivAlpha \Test{\sumLock}{\Output{\sumLock}{\false} \mid \sigma'\left( \EncodingMixAsyn{P} \right)}{\Output{\sumLock}{\false}} = \sigma'\left( \Test{\sumLock}{\Output{\sumLock}{\false} \mid \EncodingMixAsyn{P}}{\Output{\sumLock}{\false}} \right)\\
						& = \sigma'\left( \EncodingMixAsyn{S} \right).
					\end{align*}
				\item[Case of $ S = \Output{y}{z}.P $:] Then
					\begin{align*}
						\EncodingMixAsyn{\sigma\left( S \right)} & = \EncodingMixAsyn{\sigma\left( \Output{y}{z}.P \right)} = \EncodingMixAsyn{\Output{\sigma\left( y \right)}{\sigma\left( z \right)}.\sigma\left( P \right)}\\
						& = \RestrictedTerm{\senderLock}{\left( \Output{\parallelChannelOut}{\RenamingPolicyMixAsyn{\sigma\left( y \right)}, \sumLock, \senderLock, \RenamingPolicyMixAsyn{\sigma\left( z \right)}} \mid \In{\senderLock}.\EncodingMixAsyn{\sigma\left( P \right)} \right)}\\
						& \equivAlpha \RestrictedTerm{\senderLock}{\left( \Output{\parallelChannelOut}{\sigma'\left( \RenamingPolicyMixAsyn{y} \right), \sumLock, \senderLock, \sigma'\left( \RenamingPolicyMixAsyn{z} \right)} \mid \In{\senderLock}.\sigma'\left( \EncodingMixAsyn{P} \right) \right)}\\
						& = \sigma'\left( \RestrictedTerm{\senderLock}{\left( \Output{\parallelChannelOut}{\RenamingPolicyMixAsyn{y}, \sumLock, \senderLock, \RenamingPolicyMixAsyn{z}} \mid \In{\senderLock}.\EncodingMixAsyn{P} \right)} \right) = \sigma'\left( \EncodingMixAsyn{S} \right).
					\end{align*}
				\item[Case of $ S = \Input{y}{x}.P $:] Then
					\begin{align*}
						\EncodingMixAsyn{\sigma\left( S \right)} & = \EncodingMixAsyn{\sigma\left( \Input{y}{x}.P \right)} \equivAlpha \EncodingMixAsyn{\sigma\left( \Input{y}{x'}.\Set{ \Subst{x'}{x} }\left( P \right) \right)} = \EncodingMixAsyn{\Input{\sigma\left( y \right)}{x'}.\sigma\left( \Set{ \Subst{x'}{x} }\left( P \right) \right)}\\
						& = \RestrictedTerm{\receiverLock}{\big( \Output{\parallelChannelIn}{\RenamingPolicyMixAsyn{\sigma\left( y \right)}, \sumLock, \receiverLock} \mid \ReplicateInput{\receiverLock}{l_1, l_2, -, s, \RenamingPolicyMixAsyn{x'}}.\\
						& \hspace*{3em} \BigTest{l_1}{\Test{l_2}{\Output{l_1}{\false} \mid \Output{l_2}{\false} \mid \Out{\senderLock} \mid \EncodingMixAsyn{\sigma\left( \Set{ \Subst{x'}{x} }\left( P \right) \right)}}{\Output{l_1}{\true} \mid \Output{l_2}{\false}}}{\Output{l_1}{\false} \big)}}\\
						& \equivAlpha \RestrictedTerm{\receiverLock}{\big( \Output{\parallelChannelIn}{\sigma'\left( \RenamingPolicyMixAsyn{y} \right), \sumLock, \receiverLock} \mid \ReplicateInput{\receiverLock}{l_1, l_2, -, s, \RenamingPolicyMixAsyn{x'}}.\\
						& \hspace*{3em} \BigTest{l_1}{\Test{l_2}{\Output{l_1}{\false} \mid \Output{l_2}{\false} \mid \Out{\senderLock} \mid \sigma'\left( \Set{ \Subst{\RenamingPolicyMixAsyn{x'}}{\RenamingPolicyMixAsyn{x}} }\left( \EncodingMixAsyn{P} \right) \right)}{\Output{l_1}{\true} \mid \Output{l_2}{\false}}}{\Output{l_1}{\false} \big)}}\\
						& = \sigma'\big( \RestrictedTerm{\receiverLock}{\big( \Output{\parallelChannelIn}{\RenamingPolicyMixAsyn{y}, \sumLock, \receiverLock} \mid \ReplicateInput{\receiverLock}{l_1, l_2, -, s, \RenamingPolicyMixAsyn{x'}}.\\
						& \hspace*{3em} \BigTest{l_1}{\Test{l_2}{\Output{l_1}{\false} \mid \Output{l_2}{\false} \mid \Out{\senderLock} \mid \Set{ \Subst{\RenamingPolicyMixAsyn{x'}}{\RenamingPolicyMixAsyn{x}} }\left( \EncodingMixAsyn{P} \right)}{\Output{l_1}{\true} \mid \Output{l_2}{\false}}}{\Output{l_1}{\false} \big) \big)}}\\
						& \equivAlpha \sigma'\big( \RestrictedTerm{\receiverLock}{\big( \Output{\parallelChannelIn}{\RenamingPolicyMixAsyn{y}, \sumLock, \receiverLock} \mid \ReplicateInput{\receiverLock}{l_1, l_2, -, s, \RenamingPolicyMixAsyn{x}}.\\
						& \hspace*{3em} \Test{l_1}{\left( \Test{l_2}{\Output{l_1}{\false} \mid \Output{l_2}{\false} \mid \Out{\senderLock} \mid \EncodingMixAsyn{P}}{\Output{l_1}{\true} \mid \Output{l_2}{\false}} \right)}{\Output{l_1}{\false} \big) \big)}}\\
						& = \sigma'\left( \EncodingMixAsyn{S} \right).
					\end{align*}
				\item[Case of $ S = \ReplicateInput{y}{x}.P $:] Then
					{\allowdisplaybreaks
					\begin{align*}
						\EncodingMixAsyn{\sigma\left( S \right)} & = \EncodingMixAsyn{\sigma\left( \ReplicateInput{y}{x}.P \right)} \equivAlpha \EncodingMixAsyn{\sigma\left( \ReplicateInput{y}{x'}.\Set{ \Subst{x'}{x} }\left( P \right) \right)} = \EncodingMixAsyn{\ReplicateInput{\sigma\left( y \right)}{x'}.\sigma\left( \Set{ \Subst{x'}{x} }\left( P \right) \right)}\\
						& = \RestrictedTerm{\sumLock, \receiverLock, \coordinatorRepA, \coordinatorRepB, \matchingReceiverOut, \matchingReceiverIn}{\big(\\
							& \hspace*{3em} \Output{\parallelChannelIn}{\RenamingPolicyMixAsyn{\sigma\left( y \right)}, \sumLock, \receiverLock} \mid \ReplicateInput{\receiverLock}{-, -, l_s, s, \RenamingPolicyMixAsyn{x'}}.\Test{l_s}{\Output{l_s}{\false} \mid \Out{s} \mid \Output{\coordinatorRepA}{\RenamingPolicyMixAsyn{x'}}}{\Output{l_s}{\false}}\\
							& \hspace*{3em} \mid \Output{\matchingReceiverIn}{\RenamingPolicyMixAsyn{\sigma\left( y \right)}, \sumLock, \receiverLock} \mid \Output{\sumLock}{\true} \mid \Output{\coordinatorRepB}{\matchingReceiverOut, \matchingReceiverIn}\\
							& \hspace*{3em} \mid \ReplicateInput{\coordinatorRepA}{\RenamingPolicyMixAsyn{x'}}.\Input{\coordinatorRepB}{\matchingReceiverOut, \matchingReceiverIn}.\RestrictedTerm{\matchingCoordinatorOut, \matchingCoordinatorIn, \coordinatorUpOut, \coordinatorUpIn, \matchingReceiverUpOut, \matchingReceiverUpIn, \coordinatorMatchingOut, \coordinatorMatchingIn, \matchingUpOut, \matchingUpIn}{\big(\\
							& \hspace*{6em} \begin{aligned}[t]
									& \pushRequestsIn\\
									& \mid \RestrictedTerm{\parallelChannelOut, \parallelChannelIn}{\Big( \EncodingMixAsyn{\sigma\left( \Set{ \Subst{x'}{x} }\left( P \right) \right)}\\
									& \hspace*{3em} \mid \processRightOutputRequests \mid \processRightInputRequests \Big)\\
									& \mid \RestrictedTerm{\matchingReceiverOut, \matchingReceiverIn}{\left( \Output{\coordinatorRepB}{\matchingReceiverOut, \matchingReceiverIn} \mid \pushRequestsOut \right) \big) \big)}
								\end{aligned}}}}\\
						& \equivAlpha \RestrictedTerm{\sumLock, \receiverLock, \coordinatorRepA, \coordinatorRepB, \matchingReceiverOut, \matchingReceiverIn}{\big(\\
							& \hspace*{3em} \Output{\parallelChannelIn}{\sigma'\left( \RenamingPolicyMixAsyn{y} \right), \sumLock, \receiverLock} \mid \ReplicateInput{\receiverLock}{-, -, l_s, s, \RenamingPolicyMixAsyn{x'}}.\Test{l_s}{\Output{l_s}{\false} \mid \Out{s} \mid \Output{\coordinatorRepA}{\RenamingPolicyMixAsyn{x'}}}{\Output{l_s}{\false}}\\
							& \hspace*{3em} \mid \Output{\matchingReceiverIn}{\sigma'\left( \RenamingPolicyMixAsyn{y} \right), \sumLock, \receiverLock} \mid \Output{\sumLock}{\true} \mid \Output{\coordinatorRepB}{\matchingReceiverOut, \matchingReceiverIn}\\
							& \hspace*{3em} \mid \ReplicateInput{\coordinatorRepA}{\RenamingPolicyMixAsyn{x'}}.\Input{\coordinatorRepB}{\matchingReceiverOut, \matchingReceiverIn}.\RestrictedTerm{\matchingCoordinatorOut, \matchingCoordinatorIn, \coordinatorUpOut, \coordinatorUpIn, \matchingReceiverUpOut, \matchingReceiverUpIn, \coordinatorMatchingOut, \coordinatorMatchingIn, \matchingUpOut, \matchingUpIn}{\big(\\
							& \hspace*{6em} \begin{aligned}[t]
									& \pushRequestsIn\\
									& \mid \RestrictedTerm{\parallelChannelOut, \parallelChannelIn}{\Big( \sigma'\left( \Set{ \Subst{\RenamingPolicyMixAsyn{x'}}{\RenamingPolicyMixAsyn{x}} }\left( \EncodingMixAsyn{P} \right) \right)\\
									& \hspace*{3em} \mid \processRightOutputRequests \mid \processRightInputRequests \Big)\\
									& \mid \RestrictedTerm{\matchingReceiverOut, \matchingReceiverIn}{\left( \Output{\coordinatorRepB}{\matchingReceiverOut, \matchingReceiverIn} \mid \pushRequestsOut \right) \big) \big)}
								\end{aligned}}}}\\
						\hspace*{1em} & = \sigma'\Big( \RestrictedTerm{\sumLock, \receiverLock, \coordinatorRepA, \coordinatorRepB, \matchingReceiverOut, \matchingReceiverIn}{\big(\\
							& \hspace*{3em} \Output{\parallelChannelIn}{\RenamingPolicyMixAsyn{y}, \sumLock, \receiverLock} \mid \ReplicateInput{\receiverLock}{-, -, l_s, s, \RenamingPolicyMixAsyn{x'}}.\Test{l_s}{\Output{l_s}{\false} \mid \Out{s} \mid \Output{\coordinatorRepA}{\RenamingPolicyMixAsyn{x'}}}{\Output{l_s}{\false}}\\
							& \hspace*{3em} \mid \Output{\matchingReceiverIn}{\RenamingPolicyMixAsyn{y}, \sumLock, \receiverLock} \mid \Output{\sumLock}{\true} \mid \Output{\coordinatorRepB}{\matchingReceiverOut, \matchingReceiverIn}\\
							& \hspace*{3em} \mid \ReplicateInput{\coordinatorRepA}{\RenamingPolicyMixAsyn{x'}}.\Input{\coordinatorRepB}{\matchingReceiverOut, \matchingReceiverIn}.\RestrictedTerm{\matchingCoordinatorOut, \matchingCoordinatorIn, \coordinatorUpOut, \coordinatorUpIn, \matchingReceiverUpOut, \matchingReceiverUpIn, \coordinatorMatchingOut, \coordinatorMatchingIn, \matchingUpOut, \matchingUpIn}{\big(\\
							& \hspace*{6em} \begin{aligned}[t]
									& \pushRequestsIn\\
									& \mid \RestrictedTerm{\parallelChannelOut, \parallelChannelIn}{\Big( \Set{ \Subst{\RenamingPolicyMixAsyn{x'}}{\RenamingPolicyMixAsyn{x}} }\left( \EncodingMixAsyn{P} \right)\\
									& \hspace*{3em} \mid \processRightOutputRequests \mid \processRightInputRequests \Big)\\
									& \mid \RestrictedTerm{\matchingReceiverOut, \matchingReceiverIn}{\left( \Output{\coordinatorRepB}{\matchingReceiverOut, \matchingReceiverIn} \mid \pushRequestsOut \right) \big) \big) \Big)}
								\end{aligned}}}}\\
						& \equivAlpha \sigma'\Big( \RestrictedTerm{\sumLock, \receiverLock, \coordinatorRepA, \coordinatorRepB, \matchingReceiverOut, \matchingReceiverIn}{\big(\\
							& \hspace*{3em} \Output{\parallelChannelIn}{\RenamingPolicyMixAsyn{y}, \sumLock, \receiverLock} \mid \ReplicateInput{\receiverLock}{-, -, l_s, s, \RenamingPolicyMixAsyn{x}}.\Test{l_s}{\Output{l_s}{\false} \mid \Out{s} \mid \Output{\coordinatorRepA}{\RenamingPolicyMixAsyn{x}}}{\Output{l_s}{\false}}\\
							& \hspace*{3em} \mid \Output{\matchingReceiverIn}{\RenamingPolicyMixAsyn{y}, \sumLock, \receiverLock} \mid \Output{\sumLock}{\true} \mid \Output{\coordinatorRepB}{\matchingReceiverOut, \matchingReceiverIn}\\
							& \hspace*{3em} \mid \ReplicateInput{\coordinatorRepA}{\RenamingPolicyMixAsyn{x}}.\Input{\coordinatorRepB}{\matchingReceiverOut, \matchingReceiverIn}.\RestrictedTerm{\matchingCoordinatorOut, \matchingCoordinatorIn, \coordinatorUpOut, \coordinatorUpIn, \matchingReceiverUpOut, \matchingReceiverUpIn, \coordinatorMatchingOut, \coordinatorMatchingIn, \matchingUpOut, \matchingUpIn}{\big(\\
							& \hspace*{6em} \begin{aligned}[t]
									& \pushRequestsIn\\
									& \mid \RestrictedTerm{\parallelChannelOut, \parallelChannelIn}{\Big( \EncodingMixAsyn{P} \mid \processRightOutputRequests \mid \processRightInputRequests \Big)\\
									& \mid \RestrictedTerm{\matchingReceiverOut, \matchingReceiverIn}{\left( \Output{\coordinatorRepB}{\matchingReceiverOut, \matchingReceiverIn} \mid \pushRequestsOut \right) \big) \big) \Big)}
								\end{aligned}}}}\\
						& = \sigma'\left( \EncodingMixAsyn{S} \right).
					\end{align*}}
			\end{description}
	\end{description}
	\qed
\end{proof}

Analysing these proofs we observe (1) that $ \sigma' $ depends only on $ \sigma $ and the respective renaming policy, and (2) that we can prove the first case of name invariance (compare to Definition \ref{def:nameInvariance}) for all kinds of substitutions $ \sigma $, i.e., it suffice to consider equivalence modulo alpha conversion.

\begin{corollary}[Encoding substitutions] \label{col:encodingSubstitutions}
	For all substitutions $ \sigma = \Set{ \Subst{y_1}{x_1}, \ldots, \Subst{y_n}{x_n} } $ it holds that
	\begin{align*}
		\forall S \in \piSepProc \logdot \EncodingSepAsyn{\sigma\left( S \right)} \equivAlpha \RenamingPolicySepAsyn{\sigma}\left( \EncodingSepAsyn{S} \right) \quad \text{ and } \quad \forall S \in \piMixProc \logdot \EncodingMixAsyn{\sigma\left( S \right)} \equivAlpha \RenamingPolicyMixAsyn{\sigma}\left( \EncodingMixAsyn{S} \right),
	\end{align*}
	where $ \RenamingPolicySepAsyn{\sigma} = \Set{ \Subst{\RenamingPolicySepAsyn{y_1}}{\RenamingPolicySepAsyn{x_1}}, \ldots, \Subst{\RenamingPolicySepAsyn{y_n}}{\RenamingPolicySepAsyn{x_n}} } $ and $ \RenamingPolicyMixAsyn{\sigma} = \Set{ \Subst{\RenamingPolicyMixAsyn{y_1}}{\RenamingPolicyMixAsyn{x_1}}, \ldots, \Subst{\RenamingPolicyMixAsyn{y_n}}{\RenamingPolicyMixAsyn{x_n}} } $.
\end{corollary}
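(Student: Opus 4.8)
The plan is to read this corollary off the two name-invariance lemmas. The corollary is precisely the first (alpha-equivalence) conjunct of Definition~\ref{def:nameInvariance}, but asserted for \emph{arbitrary} $\sigma$ and with the witness $\sigma'$ named explicitly: $\sigma' = \RenamingPolicySepAsyn{\sigma}$ for $\encodingSepAsyn$ and $\sigma' = \RenamingPolicyMixAsyn{\sigma}$ for $\encodingMixAsyn$. I would therefore revisit the inductions in the proofs of Lemma~\ref{lem:nameInvarianceSepAsyn} and Lemma~\ref{lem:nameInvarianceMixAsyn} and verify that injectivity of $\sigma$ was nowhere used, so that those derivations already establish the stronger statement. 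In fact, the induction hypotheses stated there quantify over \emph{all} $\sigma \subseteq \names$, not merely the injective ones, which is a strong hint that no case silently appeals to injectivity.

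Running through the structural induction on $S$ once more, I would isolate the three facts the argument actually consumes: first, the definition of the witness $\sigma' \deff \Set{ \Subst{\RenamingPolicySepAsyn{y_1}}{\RenamingPolicySepAsyn{x_1}}, \ldots, \Subst{\RenamingPolicySepAsyn{y_n}}{\RenamingPolicySepAsyn{x_n}} }$, which is exactly $\RenamingPolicySepAsyn{\sigma}$ (and analogously for the \piMix case); second, the commutation identity $\forall z \in \names \logdot \RenamingPolicySepAsyn{\sigma\left( z \right)} = \sigma'\left( \RenamingPolicySepAsyn{z} \right)$, an immediate consequence of that definition that makes no reference to injectivity; and third, the disjointness of the image of $\sigma'$ from the names reserved by the encoding (the private locks $\sumLock, \senderLock, \receiverLock$ and, for $\encodingMixAsyn$, the further names in $N$), which ensures that the context names inserted by the encoding are left untouched by, and hence commute with, $\sigma'$. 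None of these involves injectivity of $\sigma$; moreover, since the renaming policy is itself injective, $\sigma'$ remains a well-formed substitution even when $\sigma$ collapses distinct names. The non-binding cases (parallel, match, sum, $\tau$, output) then go through verbatim.

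The point that deserves closer inspection, and the only mild obstacle, is the three binding cases $\RestrictedTerm{x}{P}$, $\Input{y}{x}.P$, and $\ReplicateInput{y}{x}.P$, where the proofs first alpha-convert the bound $x$ to a fresh $x' \notin \Names{\sigma} \cup \FreeNames{P}$ and rely on the auxiliary calculation
\begin{align*}
	\EncodingSepAsyn{\sigma\left( \Set{ \Subst{x'}{x} }\left( P \right) \right)} = \sigma'\left( \Set{ \Subst{\RenamingPolicySepAsyn{x'}}{\RenamingPolicySepAsyn{x}} }\left( \EncodingSepAsyn{P} \right) \right)
\end{align*}
(and its \piMix analogue). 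I would check that this calculation uses only the freshness of $x'$ together with the induction hypothesis: freshness is what prevents $\sigma$ from capturing and is entirely orthogonal to whether $\sigma$ is injective, while the induction hypothesis is, as noted above, already available for all substitutions. Consequently every binding case still yields the required chain of equalities and alpha-equivalences, and collecting the cases gives $\EncodingSepAsyn{\sigma\left( S \right)} \equivAlpha \RenamingPolicySepAsyn{\sigma}\left( \EncodingSepAsyn{S} \right)$ for all $S \in \piSepProc$ and $\EncodingMixAsyn{\sigma\left( S \right)} \equivAlpha \RenamingPolicyMixAsyn{\sigma}\left( \EncodingMixAsyn{S} \right)$ for all $S \in \piMixProc$, as claimed.
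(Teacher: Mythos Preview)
Your proposal is correct and follows essentially the same approach as the paper. The paper treats this corollary as an immediate observation from the proofs of Lemma~\ref{lem:nameInvarianceSepAsyn} and Lemma~\ref{lem:nameInvarianceMixAsyn}, noting simply that (1) the witness $\sigma'$ depends only on $\sigma$ and the renaming policy, and (2) the alpha-equivalence conclusion holds for all substitutions, not just injective ones; you arrive at the same conclusion but supply the explicit case-by-case verification that injectivity is nowhere used, which the paper leaves implicit.
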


\subsection{Basic Properties}

In the following we prove correctness with respect to the three semantical criteria. We observe, that in order to do so we do not have to prove conditions on arbitrary \piAsyn-terms but on encoded source terms and their derivatives. To simplify the argumentation we will denote such terms as \emph{target terms}.

\begin{definition}[Target Terms] \label{def:targetTerm}
	Let $ T \in \piAsynProc $. Then $ T $ is a \emph{target term}, denoted by $ T \in \targetTermsMixAsyn $ (or $ T \in \targetTermsSepAsyn $), if $ \exists S \in \piMixProc \logdot T \equiv \EncodingMixAsyn{S} \vee \EncodingMixAsyn{S} \steps T $ (or $ \exists S \in \piSepProc \logdot T \equiv \EncodingSepAsyn{S} \vee \EncodingSepAsyn{S} \steps T $).
\end{definition}

\paragraph*{Requests.} Note that the encoding $ \encodingSepAsyn $ translates source term observables by adding an instantiation of a sum lock (except from observables due to replicated inputs) to keep track of the information, whether this observable is still active, i.e., whether the corresponding in- or output can still be used to \simulate a source term step. Besides that additional information $ \encodingSepAsyn $ does not change the observables. In contrast, the encoding $ \encodingMixAsyn $ translates source term observables into requests, which are again augmented by sum locks. Requests are outputs with either three or four parameters. Input requests, i.e., requests that originate from the translation of an input guarded term or replicated input, are outputs with three parameters. Output requests, i.e., requests that originate from the translation of an output guarded term, are outputs with four parameters. Note that we can indeed consider any output of three or four parameters as request, because the encoding function does not use these multiplicities for other purposes (compare to Figure \ref{fig:encodingMixAsyn}).

\begin{definition}[Request] \label{def:request}
	An \emph{input request} is an unguarded output with three parameters, i.e., an output of kind $ \Output{y}{x_1, x_2, x_3} $ for some $ y, x_1, x_2, x_3 \in \names $, and an \emph{output request} is an unguarded output with four parameters, i.e., an output of kind $ \Output{y}{x_1, x_2, x_3, x_4} $ for some $ y, x_1, x_2, x_3, x_4 \in \names $. We refer to guarded variants of those outputs as \emph{guarded requests} and to $ y $ as \emph{request channel}.
\end{definition}
\noindent
Note that the channels introduced by the encoding function are somehow well typed in the sense, that each name once used as link with multiplicity $ n $ will never appear as link with a multiplicity different from $ n $. Because of that, it make sense to denote the channel $ y $ here as request channel, because whenever it is used as link name a request is transferred above that link. Moreover note, that the first parameter $ x_1 $ of a request is always the translation of the respective source term channel and the second parameter $ x_2 $ always refers to the sum lock that is connected to that requests, i.e., that covers the information about the liveness of the corresponding observable. Note that in case of an input request that originate from an replicated input the second parameter refers to a fake sum lock, which is never checked. In case of an input request the third parameter $ x_3 $ refers to the corresponding receiver lock and in case of an output request the third parameter $ x_3 $ refers to the corresponding sender lock and the fourth parameter $ x_4 $ is the translation of the send value.

An interesting fact is, that requests are preserved by the encoding function, i.e., each derivative of a target term has all the requests of its predecessor. Note that we consider here two requests that only differ by their link name but not their values as the same request. Requests are pushed upwards along and from right to left within the parallel structure of the term but they are never completely consumed. If the message refers to an inactive observable the respective sum lock is instantiated false to ensure that such a request can no longer be used to \simulate a source term step. The corresponding output messages of the encoding, i.e., the requests, remain as junk (compare to Lemma \ref{lem:junkRequestsOnFalseSumLocks}).

\begin{lemma}[$ \encodingMixAsyn $ preserves requests] \label{lem:encodingMixAsynPreserveRequests}
	\begin{align*}
		& \forall T_1, T_2 \in \targetTermsMixAsyn \logdot \forall \parallelChannelIn, y, \sumLock, \receiverLock \in \names \logdot \left( \exists T_1' \in \piAsynProc \logdot \exists \tilde{x} \subset \names \logdot T_1 \equiv \RestrictedTerm{\tilde{x}}{\left( T_1' \mid \Output{\parallelChannelIn}{y, \sumLock, \receiverLock} \right)} \right) \wedge T_1 \steps T_2\\
		& \hspace*{1em} \text{ implies } \left( \exists T_2' \in \piAsynProc \logdot \exists \tilde{x} \subset \names \logdot \exists \parallelChannelIn' \in \names \logdot T_2 \equiv \RestrictedTerm{\tilde{x}}{\left( T_2' \mid \Output{\parallelChannelIn'}{y, \sumLock, \receiverLock} \right)} \right)
	\end{align*}
	and
	\begin{align*}
		& \forall T_1, T_2 \in \targetTermsMixAsyn \logdot \forall \parallelChannelOut, y, \sumLock, \senderLock, z \in \names \logdot \left( \exists T_1' \in \piAsynProc \logdot \exists \tilde{x} \subset \names \logdot T_1 \equiv \RestrictedTerm{\tilde{x}}{\left( T_1' \mid \Output{\parallelChannelOut}{y, \sumLock, \senderLock, z} \right)} \right) \wedge T_1 \steps T_2\\
		& \hspace*{1em} \text{ implies } \left( \exists T_2' \in \piAsynProc \logdot \exists \tilde{x} \subset \names \logdot \exists \parallelChannelOut' \in \names \logdot T_2 \equiv \RestrictedTerm{\tilde{x}}{\left( T_2' \mid \Output{\parallelChannelOut'}{y, \sumLock, \senderLock, z} \right)} \right)
	\end{align*}
\end{lemma}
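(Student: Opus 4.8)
The plan is to prove both implications by induction on the number of reduction steps in $ T_1 \steps T_2 $, having first reduced everything to a single-step statement. The two parts are completely analogous---the only difference being whether the preserved request carries three or four parameters, and correspondingly which request channels and consumers are involved---so I would carry out the argument in detail for input requests and then remark that the output-request case follows by the same reasoning, reading four-parameter outputs and the channels $ \parallelChannelOut $, $ \coordinatorUpOut $, $ \matchingCoordinatorOut $ in place of their input counterparts.

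For the induction, the base case of zero steps is immediate: here $ T_2 \equiv T_1 $, so the displayed request is already present with $ \parallelChannelIn' = \parallelChannelIn $ and $ T_2' = T_1' $. For the inductive step it suffices to establish the following single-step claim: if $ T \in \targetTermsMixAsyn $ contains, up to $ \equiv $ and under some restrictions, an unguarded input request with payload $ (y, \sumLock, \receiverLock) $ on some request channel, and $ T \step T' $, then $ T' $ again contains an unguarded input request with the same payload on some (possibly different) request channel. Since target terms are closed under reduction by Definition \ref{def:targetTerm}, chaining this claim along $ T_1 \step \cdots \step T_2 $ and applying the induction hypothesis yields the lemma; note that the conclusion deliberately permits the request channel to change, which is exactly what allows requests to be pushed from one channel to another along the computation.

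To prove the single-step claim I would write $ T \equiv \RestrictedTerm{\tilde{x}}{\left( R \mid \Output{c}{y, \sumLock, \receiverLock} \right)} $ with $ c $ a request channel, and case-split on whether the reduced redex involves the displayed output. If it does not, then by the rules \textsc{Par}, \textsc{Res} and \textsc{Cong} this output is untouched and persists verbatim in $ T' $, so the request is preserved on the same channel. If it does, then the output is consumed by a communication with a matching input on $ c $ inside $ R $. The decisive observation is that, across all derivatives of encoded terms, there are only finitely many syntactic shapes of three-parameter inputs on request channels, and each of them re-emits the received payload: the forwarders $ \Forward{\parallelChannelIn}{\Set{\matchingCoordinatorIn, \coordinatorUpIn}} $, $ \Forward{\coordinatorUpIn}{\parallelChannelIn} $, and their analogues inside the encoding of replicated input re-emit $ (y, \sumLock, \receiverLock) $ on each of their target channels by Definition \ref{def:forwarder}, while the coordinator inputs of the form $ \Input{\parallelChannelIn}{y, l_r, r}.\left( \cdots \mid \Output{\coordinatorUpIn}{y, l_r, r} \right) $ (guarded by a replicated input on $ \coordinatorMatchingIn $) re-emit the payload on $ \coordinatorUpIn $. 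In every case a request with payload $ (y, \sumLock, \receiverLock) $ reappears in $ T' $, which proves the claim.

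I expect the main obstacle to be the completeness of this case analysis rather than any individual verification. One must be certain that the enumeration of inputs capable of consuming a three-parameter (resp. four-parameter) request is exhaustive over \emph{all} target-term derivatives listed in Figure \ref{fig:encodingMixAsyn}, not merely over the literal encoding, and that no reduction---in particular no $ \textsc{Com} $ or $ \textsc{Rep} $ step---ever consumes a request without reproducing its payload. Making this rigorous relies on the observation stated before the lemma that request channels are always used with a consistent multiplicity, so that the arity of the message statically determines which inputs are candidate communication partners; with that in hand, inspecting the encoding shows that every input on a request channel is either a forwarder or a coordinator input, both of which reproduce the payload, while the only other reductions (the test-statements on two-parameter sum locks) never touch three- or four-parameter messages at all.
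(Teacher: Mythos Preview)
Your proposal is correct and matches the paper's approach: induction reduces the claim to a single step, after which one inspects every input of arity three or four occurring in the encoding and checks that each re-emits its payload on some request channel. The paper's proof dismisses forwarders generically and then focuses on the two remaining non-forwarder request inputs in \processRightOutputRequests{} (and symmetrically in \processRightInputRequests{}), namely $\Input{\parallelChannelOut}{y,l_s,s,z}$ re-emitting on $\coordinatorUpOut$ and $\ReplicateInput{\matchingCoordinatorIn}{y',l_r,r}$ re-emitting on $\matchingUpIn$; the second of these is missing from your explicit enumeration, but it fits your general claim without difficulty and is exactly the kind of case your exhaustiveness caveat anticipates.
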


\begin{proof}
	First note, that due to Figure \ref{fig:encodingMixAsyn} the translations of source term names are used as values only. So any in- or output of a target term is generated by the encoding function on special names reserved for the encoding. In case $ T_1 = T_2 $, i.e., in case the sequence $ T_1 \steps T_2 $ is empty, the lemma holds trivially. Let us consider the case of a single step, i.e., $ T_1 \step T_2 $. The Lemma then follows by induction over the number of steps in the sequence $ T_1 \steps T_2 $.
	
	Analysing the encoding function in Figure \ref{fig:encodingMixAsyn} we observe, that any input with three or four parameters is due to the encoding of the parallel operator or a replicated input. In case of a forwarder the lemma again trivially holds, because each forwarder immediately restores each consumed message. The only remain inputs in the encoding of a parallel operator or a replicated input are due to the processing of right requests, i.e., due to \processRightOutputRequests \ and \processRightInputRequests.
	\begin{align*}
		\processRightOutputRequests & \deff \Output{\coordinatorMatchingOut}{\matchingCoordinatorIn} \mid \ReplicateInput{\coordinatorMatchingOut}{\matchingCoordinatorIn}.\Input{\parallelChannelOut}{y, \sumLock_s, \senderLock, z}.\big(\\
		& \hspace*{2em} \RestrictedTerm{\matchingUpIn}{\big( \begin{aligned}[t]
				& \ReplicateInput{\matchingCoordinatorIn}{y', \sumLock_r, \receiverLock}.\left( \Match{y'}{y}\Output{\receiverLock}{\sumLock_r, \sumLock_s, \sumLock_s, \senderLock, z} \mid \Output{\matchingUpIn}{y', \sumLock_r, \receiverLock} \right)\\
				& \mid \RestrictedTerm{\matchingCoordinatorIn}{\left( \Forward{\matchingUpIn}{\matchingCoordinatorIn} \mid \Output{\coordinatorMatchingOut}{\matchingCoordinatorIn} \right)} \big)
			\end{aligned}\\
		& \hspace*{2em} \mid \Output{\coordinatorUpOut}{y, \sumLock_s, \senderLock, z}} \big)
	\end{align*}
	In case of \processRightOutputRequests \ there are two inputs on request channels, namely $ \Input{\parallelChannelOut}{y, \sumLock_s, \senderLock, z} $ and $ \ReplicateInput{\matchingCoordinatorIn}{y', \sumLock_r, \receiverLock} $. In the first case, whenever a request is consumed by $ \Input{\parallelChannelOut}{y, \sumLock_s, \senderLock, z} $ it is immediately restored by $ \Output{\coordinatorUpOut}{y, \sumLock_s, \senderLock, z} $. In the second case any consumed request is immediately restored by $ \Output{\matchingUpIn}{y', \sumLock_r, \receiverLock} $. So the Lemma holds. The argumentation for \processRightInputRequests \ is similar.
	\qed
\end{proof}

A closer look at this proof and the encoding function in Figure \ref{fig:encodingMixAsyn} reveals, that (1) any initial request is due to the encoding of a guarded term or a replicated input and (2) any other request is a copy of an existing request. Because of that, as long as we are only interested in the values a request may carry and do not concern the link over it is currently transmitted, then we can conclude that any request originate to the encoding of a guarded term or a replicated input.
\begin{corollary} \label{col:originRequests}
	Any request originates to the encoding of a guarded term or a replicated input.
\end{corollary}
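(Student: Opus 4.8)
The plan is to establish the statement by formalising the two observations noted just above it: (1) every \emph{initial} request, i.e.\ every request occurring in a literal encoding $ \EncodingMixAsyn{S} $, stems from the encoding of a guarded term or a replicated input, and (2) no reduction step ever assembles a request carrying genuinely new values; it can only relocate or copy requests that are already present. Throughout I identify two requests that agree on their carried values but differ in their request channel, exactly as in Lemma~\ref{lem:encodingMixAsynPreserveRequests}. By Definition~\ref{def:targetTerm} a target term $ T \in \targetTermsMixAsyn $ satisfies $ T \equiv \EncodingMixAsyn{S} $ or $ \EncodingMixAsyn{S} \steps T $ for some $ S \in \piMixProc $, so I would argue by induction on the length of the reduction sequence from $ \EncodingMixAsyn{S} $ to $ T $, using the structural analysis for the base case and the proof of Lemma~\ref{lem:encodingMixAsynPreserveRequests} for the step.

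For the base case ($ T \equiv \EncodingMixAsyn{S} $) I would run a structural induction on $ S $ and inspect Figure~\ref{fig:encodingMixAsyn} to collect every \emph{unguarded} output of three or four parameters, i.e.\ every initial request in the sense of Definition~\ref{def:request}. The recursive clauses for restriction, matching, parallel composition, and sum contribute no such output of their own: the outputs they introduce either have a different arity (for instance the one-parameter $ \Output{\coordinatorMatchingOut}{\matchingCoordinatorIn} $ and $ \Output{\coordinatorMatchingIn}{\matchingCoordinatorOut} $), or occur underneath a guard (inside the replicated inputs and forwarders of the parallel-operator context). Hence the only unguarded three- or four-parameter outputs are $ \Output{\parallelChannelOut}{\RenamingPolicyMixAsyn{y}, \sumLock, \senderLock, \RenamingPolicyMixAsyn{z}} $ from an output guarded term, $ \Output{\parallelChannelIn}{\RenamingPolicyMixAsyn{y}, \sumLock, \receiverLock} $ from an input guarded term, and $ \Output{\parallelChannelIn}{\RenamingPolicyMixAsyn{y}, \sumLock, \receiverLock} $ together with $ \Output{\matchingReceiverIn}{\RenamingPolicyMixAsyn{y}, \sumLock, \receiverLock} $ from a replicated input. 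Each is by construction a request of the claimed origin, which settles the base case.

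For the inductive step I would reuse the case analysis already performed for Lemma~\ref{lem:encodingMixAsynPreserveRequests}. Assume every request of $ T_1 $ originates from the encoding of a guarded term or a replicated input, and $ T_1 \step T_2 $. A step that does not communicate on a request channel leaves the requests untouched. A step on a request channel is, by the analysis of Figure~\ref{fig:encodingMixAsyn}, either a forwarder communication, which restores each consumed request verbatim, or one of the communications in \processRightOutputRequests \ and \processRightInputRequests. In the latter every input that consumes a request, namely $ \Input{\parallelChannelOut}{\ldots} $, $ \Input{\parallelChannelIn}{\ldots} $, $ \ReplicateInput{\matchingCoordinatorIn}{\ldots} $, $ \ReplicateInput{\matchingCoordinatorOut}{\ldots} $, is immediately followed by an output re-emitting a request with exactly the same values (via $ \Output{\coordinatorUpOut}{\ldots} $, $ \Output{\coordinatorUpIn}{\ldots} $, $ \Output{\matchingUpIn}{\ldots} $, $ \Output{\matchingUpOut}{\ldots} $), possibly on a different channel. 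Thus no step produces a request whose values are not already those of a request in $ T_1 $; up to the identification of requests by their carried values, $ T_2 $ has exactly the requests of $ T_1 $, so by the induction hypothesis they all originate from the encoding of a guarded term or a replicated input.

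I expect the main obstacle to be the bookkeeping in the inductive step rather than any conceptual difficulty: one must verify, for \emph{each} communication enabled inside the parallel-operator and replicated-input contexts, that a newly created three- or four-parameter output is genuinely a value-copy of a pre-existing request and not a freshly assembled message. The delicate points are the matching communications $ \ReplicateInput{\matchingCoordinatorIn}{\ldots} $ and $ \ReplicateInput{\matchingCoordinatorOut}{\ldots} $, where a request is consumed, tested against a partner via $ \Match{}{} $, and restored, and the five-parameter outputs on receiver locks, which must be recognised as \emph{not} being requests in the sense of Definition~\ref{def:request} and hence as irrelevant here. Since all of this is essentially the content verified while proving Lemma~\ref{lem:encodingMixAsynPreserveRequests}, the corollary follows by combining that proof with the structural base case above.
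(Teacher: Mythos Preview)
Your proposal is correct and follows exactly the approach the paper takes: the paper states the corollary immediately after Lemma~\ref{lem:encodingMixAsynPreserveRequests} with only the preceding sentence as justification, namely the two observations that (1) any initial request is due to the encoding of a guarded term or a replicated input and (2) any other request is a copy of an existing request. Your write-up is simply a more detailed elaboration of these same two points, using the same structural inspection of Figure~\ref{fig:encodingMixAsyn} for the base case and the same reuse of the case analysis from the proof of Lemma~\ref{lem:encodingMixAsynPreserveRequests} for the inductive step.
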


\paragraph*{Sum locks.} Sum locks|for both considered encodings|are channels carrying a boolean value. They are used by the encoding functions to ensure that at most one summand of each sum is chosen for communication. Note that any channel used to transport a boolean value is a sum lock. However, since by Definition \ref{def:testBoolean} at page \pageref{def:testBoolean} booleans and test-statements are just abbreviations, we use some simple type informations to unambiguous identify sum locks in both encodings. So, to be precise, instantiations on sum locks are inputs carrying two values, that are links with multiplicity zero. So sum locks are the only channels of multiplicity two, that carry only values of multiplicity zero.

\begin{definition}[Sum lock] \label{def:sumLock}
	Let $ T \in \targetTermsMixAsyn $ (or $ T \in \targetTermsSepAsyn $). A \emph{sum lock} of $ T $ is a name $ \sumLock $ that is used in $ T $ as link with multiplicity two carrying two links with multiplicity zero.
	
	Let $ \sumLock $ be a sum lock. Then we refer to unguarded occurrences of $ \Output{\sumLock}{\true} $ as \emph{positive instantiation} and accordingly to unguarded occurrences of $ \Output{\sumLock}{\false} $ as \emph{negative instantiation} of $ \sumLock $. An \emph{instantiation} of a sum lock $ \sumLock $ is either a positive or negative instantiation of $ \sumLock $.
\end{definition}

Note that in most of the following definitions and proofs we hide the definition of booleans as well as of the corresponding test-statement. To show that sum locks meet our intuition we prove that in each target term $ T $ there is at most one instantiation of each sum lock.

\begin{lemma} \label{lem:instantiationSumLocks}
	For each target term each sum lock is instantiated at most once, i.e.,
	\begin{align*}
		\forall T \in \targetTermsMixAsyn \left( \text{or } T \in \targetTermsSepAsyn \right) \logdot \forall \sumLock \in \names \logdot \forall T' \in \piAsynProc \logdot \forall \tilde{x} \subset \names \logdot \forall b_1, b_2 \in \bool \logdot T \not\equiv \RestrictedTerm{\tilde{x}}{\left( T' \mid \Output{\sumLock}{b_1} \mid \Output{\sumLock}{b_2} \right)}.
	\end{align*}
\end{lemma}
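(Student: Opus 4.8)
The plan is to unfold the definition of a target term (Definition~\ref{def:targetTerm}) and to proceed in two stages: first establish the invariant for a literal encoding $\EncodingMixAsyn{S}$ (respectively $\EncodingSepAsyn{S}$), and then show that it is preserved along reductions. Since the property is stated modulo structural congruence it is automatically closed under $\equiv$, so it suffices to treat the base case $T \equiv \EncodingMixAsyn{S}$ and then argue by induction on the length of the reduction sequence $\EncodingMixAsyn{S} \steps T$, proving that a single step preserves the invariant.

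For the base case I would do a structural induction on $S$. Inspecting Figure~\ref{fig:encodingMixAsyn} (and Figure~\ref{fig:encodingSepAsyn}), an unguarded instantiation $\Output{\sumLock}{\true}$ or $\Output{\sumLock}{\false}$ of a sum lock is introduced only by the encoding of the sum operator and by the encoding of replicated input, and in both cases exactly one positive instantiation $\Output{\sumLock}{\true}$ is placed under a fresh restriction $\RestrictedTerm{\sumLock}{\cdots}$. Every other occurrence of a sum lock is either a transmitted value (inside a request) or a guarded instantiation inside a received-message continuation or a test branch; none of these is unguarded. The contexts for restriction, matching and parallel composition contribute no sum-lock instantiation of their own (the many reserved channels of the parallel context are not sum locks in the sense of Definition~\ref{def:sumLock}), so by the induction hypothesis each sum lock of $\EncodingMixAsyn{S}$ carries exactly one, hence at most one, instantiation. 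Because every such lock is freshly bound, two instantiations of the same name can only be exhibited as in the statement if they lie under a common binder, which by the above never happens for distinct sums.

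For the inductive step I assume $T$ satisfies the invariant and $T \step T'$, and argue that $T'$ does too. Using rule \textsc{Cong} I bring the redex to the top level and perform a case analysis on the channel of the underlying communication. The decisive point is to unfold the abbreviations of Definition~\ref{def:testBoolean}: an instantiation $\Output{\sumLock}{b}$ is in fact an \emph{input} on $\sumLock$, whereas a test $\Test{\sumLock}{P}{Q}$ contributes the matching \emph{output} $\Output{\sumLock}{t, f}$, which is itself not an instantiation. Hence the only step that removes an instantiation is a communication on a sum lock $\sumLock$ between a test and the instantiation; it consumes the unique instantiation and releases only $\Out{t}$ (or $\Out{f}$) on a fresh lock, so it can only decrease the number of instantiations of $\sumLock$ and the invariant is trivially preserved. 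The only step that can create an instantiation is the subsequent communication $\Out{t} \mid \In{t}.P \step P$ that unleashes a test continuation $P$; by inspection of every test statement and every $\tau$-continuation in Figures~\ref{fig:encodingMixAsyn} and~\ref{fig:encodingSepAsyn}, each branch re-instantiates each sum lock it has read at most once and never instantiates a lock it did not read. Since the instantiation of $\sumLock$ was already consumed in the preceding communication, there were zero instantiations of $\sumLock$ immediately before $P$ is released, so afterwards there is at most one. Every remaining reduction (communications on request channels, coordinator channels and forwarders, and the \textsc{Tau}/\textsc{Rep} rules) only moves sum locks around as values and therefore leaves the number of unguarded instantiations of every sum lock unchanged.

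The main obstacle is the exhaustive yet routine inspection of the large encoding $\encodingMixAsyn$: one must check, for the base case, that no guard or coordinator context contributes a stray unguarded instantiation, and, for the inductive step, that every one of the (nested) test continuations re-instantiates each consumed lock exactly once per branch while leaving untested locks alone. The conceptual observation that makes this inspection trustworthy is the one noted above: after unfolding Definition~\ref{def:testBoolean}, instantiations are inputs and tests are outputs, so the ``consume, then re-instantiate at most once'' pattern of each test is exactly what keeps the count bounded by one. The argument for $\targetTermsSepAsyn$ is identical, reading Figure~\ref{fig:encodingSepAsyn} in place of Figure~\ref{fig:encodingMixAsyn}.
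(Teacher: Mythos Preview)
Your approach is essentially the same as the paper's: establish the invariant for literal encodings by inspecting Figures~\ref{fig:encodingSepAsyn} and~\ref{fig:encodingMixAsyn}, then show it is preserved by every reduction step via the observation that each test consumes one instantiation and each of its outcomes restores exactly one instantiation of every lock it consumed. Your unfolding of Definition~\ref{def:testBoolean} to separate the ``consume'' step from the ``release a branch'' step is more explicit than the paper, which treats the reduction of a test as a single unit.

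One small imprecision worth tightening: your claim that ``each branch \ldots never instantiates a lock it did not read'' is not literally true for the \emph{inner} test of the nested test-statement in the encoding of an input-guarded term. There the inner test reads only $\sumLock_2$, yet each of its branches also emits an instantiation of $\sumLock_1$. This does not break the invariant, because $\sumLock_1$ was consumed by the \emph{outer} test and has not been re-instantiated in the interim, so the count for $\sumLock_1$ is zero when the inner branch fires and becomes one afterwards; but your inductive step as written attributes the zero-count of a lock solely to ``the preceding communication'', which for $\sumLock_1$ is the outer, not the immediately preceding, step. The fix is to phrase the invariant inspection per nested test as a whole (both locks are consumed before either is restored, and every final outcome restores each exactly once), or to track explicitly that between the outer and inner reductions the outer lock sits at count zero. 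The paper's proof glosses over the same point by speaking of ``each test-statement and each of its possible outcomes'' without singling out intermediate states.
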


\begin{proof}
	By Figure \ref{fig:encodingSepAsyn} and Figure \ref{fig:encodingMixAsyn} this condition holds for all encoded source terms, i.e., for all target terms $ \EncodingSepAsyn{S} $ for some $ S \in \piSepProc $ or $ S \in \piMixProc $, because for each sum there is exactly one positive instantiation of each sum lock and, since all sum locks appear restricted, the sum locks of different sums are different. All remainig instantiations of sum locks are guarded by a test-statement. To prove the condition for arbitrary target terms we take a closer look on these test-statements. We observe that for both encodings for each test-statement and for each of its possible outcomes the reduction of a test-statement unguards exactly one instantiation of a each sum lock that has to be consumed to reduce the respective test-statement. So for each new unguarded instantiation of a sum lock a former instantiation of the same lock was consumed.
	
	 \cite{nestmann00} proves that the encoding $ \encodingSepAsyn $ does not introduce deadlock, i.e., whenever a test-statement consumes a sum lock a new instantiation of the same lock is eventually unguarded. Moreover, it shows that a complete ordering of the sum locks as implemented in $ \encodingMixAsyn $ suffice to ensure that even in the case of source terms from $ \piMixProc $ the test-statements can not cause a deadlock. So again for each consumed instantiation of a sum lock a new instantiation of the same lock is eventually unguarded.
	\qed
\end{proof}

Note that, analysing the encoding functions obviously any instantiation of a sum lock is a positive or negative instantiation. The prove of Lemma \ref{lem:instantiationSumLocks} also shows that (1) all instantiations of sum locks in encoded source terms are positive|negative instantiations are only due to reduction steps, (2) all sum locks are initially instantiated, and (3) for each consumed instantiation of a sum lock eventually a new instantiation is unguarded.
\begin{corollary} \label{col:initialSumLocksArePositive}
	Any sum lock is initially instantiated positive, i.e.,
	\begin{align*}
		\forall S \in \piSepProc \logdot \forall \sumLock \in \names \logdot \forall b \in \bool \logdot \forall T \in \piAsynProc \logdot \forall \tilde{x} \subset \names \logdot \EncodingSepAsyn{S} \equiv \RestrictedTerm{\tilde{x}}{\left( T \mid \Output{\sumLock}{b} \right)} \text{ implies } b = \true
	\end{align*}
	and
	\begin{align*}
		\forall S \in \piMixProc \logdot \forall \sumLock \in \names \logdot \forall b \in \bool \logdot \forall T \in \piAsynProc \logdot \forall \tilde{x} \subset \names \logdot \EncodingMixAsyn{S} \equiv \RestrictedTerm{\tilde{x}}{\left( T \mid \Output{\sumLock}{b} \right)} \text{ implies } b = \true.
	\end{align*}
\end{corollary}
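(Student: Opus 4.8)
The plan is to prove both statements simultaneously by structural induction on the source term $S$, since the arguments for $\encodingSepAsyn$ and $\encodingMixAsyn$ mirror each other and reduce to the same book-keeping over the respective clauses of Figures \ref{fig:encodingSepAsyn} and \ref{fig:encodingMixAsyn}. It is convenient to strengthen the claim to the invariant that in $\EncodingSepAsyn{S}$ (respectively $\EncodingMixAsyn{S}$) every \emph{unguarded} instantiation of a sum lock is positive, where an occurrence is called unguarded if it can be lifted to the top level by structural congruence, i.e.\ if on the path to it there is nothing but parallel composition, restriction, and matches whose two names coincide. A witness $\EncodingSepAsyn{S} \equiv \RestrictedTerm{\tilde{x}}{(T \mid \Output{\sumLock}{b})}$ is precisely such an unguarded instantiation, so establishing the invariant yields $b = \true$.

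The central observation, which I would extract once from the two figures together with Definition \ref{def:testBoolean}, is purely syntactic: scanning every clause of both encoding functions, the only sum-lock \emph{instantiations} occurring outside the scope of a prefix are the positive $\Output{\sumLock}{\true}$ inserted by the sum operator (and, for $\encodingMixAsyn$, the additional $\Output{\sumLock}{\true}$ in the clause for replicated input). Every negative instantiation $\Output{\cdot}{\false}$, as well as the secondary positive $\Output{l_1}{\true}$ in the input clause of $\encodingMixAsyn$, occurs only inside a branch of a test-statement or underneath an input, output, $\tau$, or replicated-input prefix. Here one must keep in mind that the output $\Output{\sumLock}{t,f}$ produced by unfolding a test-statement $\Test{\sumLock}{P}{Q}$ is the read side, not an instantiation; and by Definition \ref{def:testBoolean} this unfolding places the continuations $P$ and $Q$ behind inputs on the fresh names $t,f$, so the instantiations inside them are genuinely guarded as long as no reduction has released them.

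With this observation the induction is routine. In the base cases $\EncodingSepAsyn{\success} = \success$ carries no sum lock, while the empty sum $\EncodingSepAsyn{\nullTerm} = \RestrictedTerm{\sumLock}{\Output{\sumLock}{\true}}$ contributes only a positive instantiation. For restriction, parallel composition, and matching the surrounding context introduces no sum-lock instantiation and removes no guard, so each unguarded instantiation of the whole term is already unguarded in a translated subterm and hence positive by the induction hypothesis; the only point worth a remark is that the rule $\Match{a}{a}P \equiv P$ can expose the translation of a matched subterm, but whatever it exposes was already an unguarded instantiation there and is therefore positive. For a nonempty sum the freshly inserted instantiation is the positive $\Output{\sumLock}{\true}$, and every summand $\EncodingSepAsyn{\guard_i.P_i}$ keeps its own instantiations behind its prefix or inside a test-statement branch (in the clauses for $\tau$, input, and output the continuation, hence all its instantiations, sits under a prefix or test, the only unguarded top-level activity being requests or the output $\Out{\receiverLock}$, neither of which is an instantiation). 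The replicated-input clause is handled analogously, its single unguarded $\Output{\sumLock}{\true}$ being positive.

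I expect the only real care to be in pinning down the notion of ``unguarded'' and its stability under $\equiv$: one must verify that scope extrusion and the match rule cannot lift a genuinely prefix-guarded subterm to the top level, so that the syntactic scan of the figures transfers to an arbitrary $\equiv$-representative of $\EncodingSepAsyn{S}$. This is standard, and for the match rule it is harmless as noted. Everything else is a mechanical case check against the two figures and in fact coincides with part~(1) of the analysis already carried out in the proof of Lemma \ref{lem:instantiationSumLocks}, so the argument could alternatively be phrased as a direct corollary of that proof rather than a fresh induction.
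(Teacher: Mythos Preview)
Your proposal is correct and essentially matches the paper's approach: the paper states the corollary as a direct consequence of observation~(1) in the proof of Lemma~\ref{lem:instantiationSumLocks}, namely that by inspection of the encoding clauses all unguarded sum-lock instantiations in an encoded source term are the positive ones from the sum (and replicated-input) clause, with every negative instantiation sitting behind a test-statement. You spell this inspection out as a structural induction and even note yourself that it could be read off the proof of Lemma~\ref{lem:instantiationSumLocks}, so there is no substantive difference.
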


\begin{corollary} \label{col:instantiationsSumLocksAreRestored}
	Let $ T $ be a target term, i.e., $ T \in \targetTermsSepAsyn $ or $ T \in \targetTermsMixAsyn $, and let $ L \subset \names $ be the set of all sum locks of $ T $. Then
	\begin{align*}
		\forall \sumLock \in L \logdot \exists T' \in \piAsynProc \logdot \exists \tilde{x} \subset \names \logdot \exists b \in \bool \logdot T \steps \RestrictedTerm{\tilde{x}}{\left( T' \mid \Output{\sumLock}{b} \right)}.
	\end{align*}
\end{corollary}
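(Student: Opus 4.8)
The plan is to read this corollary as the lower-bound companion to Lemma~\ref{lem:instantiationSumLocks}: that lemma shows that no sum lock is instantiated more than once, whereas here we show that every sum lock of $ T $ can be driven to a state carrying at least one instantiation, so that together each sum lock of a target term is pinned to exactly one reachable instantiation. I would fix a target term $ T $ (say $ T \in \targetTermsMixAsyn $; the case $ T \in \targetTermsSepAsyn $ is analogous) and a sum lock $ \sumLock \in L $, and then distinguish according to the current status of $ \sumLock $, using the facts collected after Lemma~\ref{lem:instantiationSumLocks} and in Corollary~\ref{col:initialSumLocksArePositive}: every sum lock is initially instantiated positively, and for every instantiation consumed by a test-statement a fresh instantiation is eventually unguarded.

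If $ \sumLock $ already occurs unguarded in $ T $, i.e.\ $ T \equiv \RestrictedTerm{\tilde{x}}{(T' \mid \Output{\sumLock}{b})} $ for some $ b \in \bool $, then the empty sequence works since $ \steps $ is reflexive. Otherwise, by Lemma~\ref{lem:instantiationSumLocks} the unique instantiation of $ \sumLock $ is not free in $ T $ but has been absorbed by a test-statement. By Definition~\ref{def:testBoolean} passing such a test and releasing its continuation needs only internal steps, and by inspection of Figure~\ref{fig:encodingMixAsyn} every branch of every test-statement re-emits an instantiation $ \Output{\sumLock}{b} $ of exactly the lock it consumed (for instance the branches restoring $ l_1 $ as $ \true $ or $ \false $ in the encoding of input). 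Scheduling these internal steps then yields the desired $ T \steps \RestrictedTerm{\tilde{x}}{(T' \mid \Output{\sumLock}{b})} $.

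The main obstacle, and the only part that goes beyond bookkeeping, is arguing that this restoring reduction is genuinely \emph{enabled} from $ T $ rather than blocked. Both encodings use \emph{nested} tests: in the encoding of input the branch that restores $ l_1 $ can be reached only after the inner test on $ l_2 $ has itself been resolved, so releasing one lock may first require acquiring another. Guaranteeing that such a chain of tests can always be completed is precisely the deadlock-freedom property invoked in the proof of Lemma~\ref{lem:instantiationSumLocks}: for $ \encodingSepAsyn $ it is taken from \cite{nestmann00}, and for $ \encodingMixAsyn $ it rests on the complete ordering of sum locks, which lets one always resolve the outstanding tests in a consistent order so that some held lock is released first. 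I would therefore turn the ``eventually'' of the third fact into a concrete finite sequence by invoking this ordering. The one point to state with care is that $ L $ must be read as the sum locks genuinely active in $ T $: a sum lock is born together with its positive instantiation in the encoding of its sum, so if that sum still sits guarded under a replicated input that has not yet been triggered, exposing its instantiation first requires triggering the guard; restricting attention to active locks, or else observing that the relevant trigger is itself reachable, avoids over-claiming here.
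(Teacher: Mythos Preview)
Your proposal is correct and follows essentially the same approach as the paper. In fact the paper does not give a separate proof for this corollary at all: it is stated immediately after the sentence recording observations (2) ``all sum locks are initially instantiated'' and (3) ``for each consumed instantiation of a sum lock eventually a new instantiation is unguarded'' extracted from the proof of Lemma~\ref{lem:instantiationSumLocks}, and those two observations---backed by the deadlock-freedom results you cite (\cite{nestmann00} for $\encodingSepAsyn$, the total ordering on sum locks for $\encodingMixAsyn$)---are exactly the content of your argument. Your case split on whether $\sumLock$ is currently unguarded, and your explicit identification of the nested-test issue as the place where deadlock-freedom is really needed, simply spell out what the paper leaves implicit; the closing caveat about guarded sum locks is a legitimate technical point the paper does not address.
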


In the proof of Lemma \ref{lem:instantiationSumLocks} we observe that new instantiations of sum locks are unguarded by test-statements. So reducing a test-statement is the only possibility to change an instantiation of sum lock. A closer look reveals that positive instantiations can be changed into negative but never the other way around.

\begin{lemma} \label{lem:changeInstantiationSumLock}
	A negative instantiation of a sum lock can not be changed into a positive instantiation, i.e.,
	\begin{align*}
		& \forall T_1, T_2 \in \targetTermsMixAsyn \left( \text{or } T_1, T_2 \in \targetTermsSepAsyn \right) \logdot \forall \sumLock \in \names \logdot \forall T_1', T_2' \in \piAsynProc \logdot \forall \tilde{x}_1, \tilde{x}_2 \subset \names \logdot \forall b \in \bool \logdot\\
		& \hspace*{1em} T_1 \equiv \RestrictedTerm{\tilde{x}_1}{\left( T_1' \mid \Output{\sumLock}{\false} \right)} \wedge T_1 \steps T_2 \wedge T_2 \equiv \RestrictedTerm{\tilde{x}_2}{\left( T_2' \mid \Output{\sumLock}{b} \right)} \text{ implies } b = \false
	\end{align*}
\end{lemma}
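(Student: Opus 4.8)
The plan is to fix $\sumLock$ and track its single instantiation along the reduction sequence, arguing that it can never flip from negative to positive. I would proceed by induction on the number of steps in $T_1 \steps T_2$. In the base case there are zero steps, so $T_1 \equiv T_2$; by Lemma \ref{lem:instantiationSumLocks} the term has at most one instantiation of $\sumLock$, hence the instantiation $\Output{\sumLock}{b}$ exhibited for $T_2$ is the very same one as the $\Output{\sumLock}{\false}$ exhibited for $T_1$, giving $b = \false$. For the inductive step I would isolate the first reduction $T_1 \step T$ and then reason about the remainder $T \steps T_2$; the crux is thus to understand how one step can affect an instantiation of $\sumLock$.

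For the single-step analysis I would first recall, as in the proof of Lemma \ref{lem:instantiationSumLocks}, that by Definition \ref{def:testBoolean} a negative instantiation $\Output{\sumLock}{\false}$ is the input process $\Input{\sumLock}{t,f}.\Out{f}$, while the only outputs on a sum-lock channel are the outputs $\Output{\sumLock}{t,f}$ emitted by test-statements. Hence the unique negative instantiation of $T_1$ can be consumed only by the first communication of a test on $\sumLock$, and any reduction not involving it leaves $\Output{\sumLock}{\false}$ untouched, so $\sumLock$ stays negative. It therefore remains only to inspect what a test on $\sumLock$ does once it fires.

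The key combinatorial observation, obtained by surveying every test-statement in Figure \ref{fig:encodingSepAsyn} and Figure \ref{fig:encodingMixAsyn}, is that a test consuming a negative instantiation of $\sumLock$ emits the internal trigger $\Out{f}$ and thereby selects its else-branch, and in every such else-branch $\sumLock$ is re-emitted negatively (e.g. the branches $\Output{\sumLock}{\false}$ of the encodings of $\tau$ and of replicated input, and $\Output{\sumLock}{\false} \mid \Output{\RenamingPolicySepAsyn{y}}{\ldots}$ for the outer test of the input encoding). A positive re-emission $\Output{\sumLock}{\true}$ appears in the encodings only inside a then-branch, that is, in a continuation reached via the trigger $\Out{t}$, and $\Out{t}$ can be produced only by consuming a positive instantiation of $\sumLock$. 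Consequently no reduction turns a negative instantiation of $\sumLock$ into a positive one.

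I expect the main obstacle to be that the test-statement of Definition \ref{def:testBoolean} is an abbreviation reducing via two successive communications, so there are intermediate configurations in which $\sumLock$ carries no instantiation at all --- after its instantiation has been consumed but before the selected branch has re-emitted it --- and along $T_1 \steps T_2$ the lock may be consumed and later restored with arbitrarily many unrelated steps interleaved. A bookkeeping that only follows ``the current instantiation'' is therefore insufficient, and the naive induction hypothesis does not apply once $T$ momentarily lacks an instantiation. I would repair this by carrying through the induction the strengthened invariant that, in every term reachable from $T_1$, (i) every instantiation of $\sumLock$ is negative and (ii) there is no unguarded occurrence of the internal true-trigger $\Out{t}$ of a test on $\sumLock$. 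Clause (ii) is exactly what forbids ever unguarding a then-branch of such a test, and hence any positive re-emission; one then checks that a single reduction preserves (i) and (ii), since producing $\Out{t}$ would require consuming a positive instantiation (excluded by (i)) and the only fresh instantiations of $\sumLock$ come from negative else-branches. As $T_1$ satisfies the invariant, so does $T_2$, and clause (i) delivers $b = \false$.
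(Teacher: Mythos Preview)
Your approach is correct and is essentially the same as the paper's: both arguments inspect every test-statement in Figures~\ref{fig:encodingSepAsyn} and~\ref{fig:encodingMixAsyn} and observe that a positive re-emission $\Output{\sumLock}{\true}$ occurs only within the then-branch of a test on $\sumLock$ itself, so it can only be unguarded after consuming a positive instantiation. The paper's proof is a brief sketch making exactly this observation; you have added the explicit induction on the length of $T_1 \steps T_2$ and, more importantly, the strengthened invariant (i)+(ii) to cope with the intermediate configurations in which $\sumLock$ is temporarily uninstantiated, a point the paper glosses over.

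One small refinement worth making explicit when you write it up: the positive re-emission $\Output{\sumLock}{\true}$ in the nested test of the input encoding actually sits in the \emph{else}-branch of the inner test (on $\sumLock'$ or $l_2$), not in a then-branch directly; what makes your argument go through is that this inner test is itself the then-branch of the outer test on $\sumLock$, so unguarding it still requires the trigger $\Out{t}$ of a test on $\sumLock$, which clause~(ii) forbids. Your phrase ``appears only inside a then-branch'' should thus be read transitively (guarded, possibly indirectly, by the then-continuation of a test on $\sumLock$), and it would help to say so explicitly.
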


\begin{proof}
	Revisiting the argumentation in the proof of Lemma \ref{lem:instantiationSumLocks} we observe that the reduction of a test-statement is the only way to change the instantiation of a sum lock. A closer look at the test-statements in Figure \ref{fig:encodingMixAsyn} and Figure \ref{fig:encodingSepAsyn} reveals that for both encoding functions and for each test-statements the then-case can change a positive instantiation of sum lock into negative instantiation but the else-cases always simply restore all consumed instances. Note that in case of the nested test-statement of the encoding of an input guarded term a positive instantiation of the first tested lock is changed only if the second tested lock is again positive instantiated. In this case both instantiations are changed into negative instantiations. Moreover note that all the other (single) test-statements change any consumed positive instantiation into a negative one. Because of that, it is possible to change a positive instantiation into a negative one but not the other way around.
	\qed
\end{proof}

\paragraph*{Sender and Receiver Locks.} The third parameter of an output request refers to a sender lock, while the third parameter of an input request refers to a receiver lock. In both encodings, sender and receiver locks are used to guard the encoded continuation of output or input guarded source terms or replicated inputs. Sender locks are links with multiplicity zero in both encodings, in opposite receiver locks are links with multiplicity zero in $ \encodingSepAsyn $ and five in $ \encodingMixAsyn $. Note that in $ \encodingMixAsyn $ only sender locks appear as third parameter of an output request. Since by Lemma \ref{lem:encodingMixAsynPreserveRequests} the encoding function $ \encodingMixAsyn $ preserves requests, they unambiguous define sender locks. In the encoding $ \encodingSepAsyn $ also receiver locks are links that never transport any values (compare to Figure \ref{fig:encodingSepAsyn}). Moreover, the reserved names $ t $ and $ f $|necessary to implement booleans|are used as links without parameters in both encodings. To distinguish them in $ \encodingSepAsyn $ note, that sender locks are used as input links, while receiver locks are used as replicated inputs only, and that in each case of a test-statement there is an unrestricted instantiation of a sum lock, whereas all instantiations of sum locks within an encoded source term appear restricted.

\begin{definition}[Sender Lock] \label{def:senderLock}
	Let $ T \in \targetTermsSepAsyn $. Then any name $ \senderLock $ is a \emph{sender lock} of $ T $ if
	\begin{align*}
		\exists T', T'' \in \piAsynProc \logdot \exists \tilde{x} \subset \names \logdot T \steps \RestrictedTerm{\tilde{x}}{\left( T' \mid \In{\senderLock}.T'' \right)} \quad \text{and} \quad \forall \sumLock \in \FreeNames{T''} \logdot \forall b \in \bool \logdot \forall T''' \in \piAsynProc \logdot T'' \not\equiv \left( T''' \mid \Output{\sumLock}{b} \right).
	\end{align*}
	Let $ T \in \targetTermsMixAsyn $. Then any name $ \senderLock $ is a \emph{sender lock} of $ T $ if
	\begin{align*}
		\exists T' \in \piAsynProc \logdot \exists \tilde{x} \subset \names \logdot \exists \parallelChannelOut, y, \sumLock, z \in \names \logdot T \equiv \RestrictedTerm{\tilde{x}}{\left( T' \mid \Output{\parallelChannelOut}{y, \sumLock, \senderLock, z} \right)}.
	\end{align*}
	An \emph{instantiation} of a sender lock is an output on a sender lock.
\end{definition}

Beside the blocking of the encoded continuation, in $ \encodingMixAsyn $ the receiver lock is used by the encoding of the parallel operator and its pendant in the encoding of a replicated input to transmit the order of the sum locks back to the encoding of an input guarded source term. Remember, that the encoding of an input guarded source term tests these sum locks to \simulate a communication step of the source term and that the ordering of sum locks is necessary to avoid deadlock. In case of $ \encodingMixAsyn $, receiver locks are again unambiguous identified by input requests. However they are also the only links in $ \encodingMixAsyn $ carrying five parameters. In $ \encodingSepAsyn $ receiver locks are the only links of multiplicity zero, that are used as replicated inputs.

\begin{definition}[Receiver lock] \label{def:receiverLock}
	Let $ T \in \targetTermsSepAsyn $. Then any name $ \receiverLock $ is a \emph{receiver lock} of $ T $ if
	\begin{align*}
		\exists T', T'' \in \piAsynProc \logdot \exists \tilde{x} \subset \names \logdot T \equiv \RestrictedTerm{\tilde{x}}{\left( T' \mid \ReplicateIn{\receiverLock}.T'' \right)}.
	\end{align*}
	Let $ T \in \targetTermsMixAsyn $. Then any name $ \receiverLock $ is a \emph{receiver lock} of $ T $ if
	\begin{align*}
		\exists T' \in \piAsynProc \logdot \exists \tilde{x} \subset \names \logdot \exists \parallelChannelIn, y, \sumLock \in \names \logdot T \equiv \RestrictedTerm{\tilde{x}}{\left( T' \mid \Output{\parallelChannelIn}{y, \sumLock, \receiverLock} \right)}.
	\end{align*}
	An \emph{instantiation} of a receiver lock is an output on a receiver lock.
\end{definition}

Note that|for both encodings|for each input or output guarded source term exactly one receiver or sender lock is generated. Similarly, for each sum a unique sum lock is generated. However, since a sum may contain several input and/or output guarded summands, the encodings of different input or output guarded terms may share the same sum lock. But each encoded guarded term is connected to exactly one sum lock. The encoding $ \encodingMixAsyn $ is obviously more complex than the encoding $ \encodingSepAsyn $. But on the other hand the existence of requests outlines the connection between sum locks and sender or receiver locks|and with it to the corresponding encodings of guarded terms|more clearly.

\begin{definition} \label{def:connectionSumLockSenderReceiver}
	Let $ T \in \targetTermsMixAsyn $ and let $ \sumLock, \receiverLock, \senderLock \in \names $. If $ T $ contains an unguarded input request with $ \sumLock $ as second and $ \receiverLock $ as third parameter, i.e., if
	\begin{align*}
		\exists T' \in \piAsynProc \logdot \exists \tilde{x} \subset \names \logdot \exists \parallelChannelIn, y \in \names \logdot T \equiv \RestrictedTerm{\tilde{x}}{\left( T' \mid \Output{\parallelChannelIn}{y, \sumLock, \receiverLock} \right)},
	\end{align*}
	then we call the sum lock $ \sumLock $ and the receiver lock $ \receiverLock $ \emph{connected}. Sometime we also say that $ \sumLock $ is the sum lock of the receiver lock $ \receiverLock $.
	
	Accordingly, if $ T $ contains an unguarded output request with $ \sumLock $ as second and $ \senderLock $ as third parameter, i.e., if
	\begin{align*}
		\exists T' \in \piAsynProc \logdot \exists \tilde{x} \subset \names \logdot \exists \parallelChannelOut, y, z \in \names \logdot T \equiv \RestrictedTerm{\tilde{x}}{\left( T' \mid \Output{\parallelChannelOut}{y, \sumLock, \senderLock, z} \right)},
	\end{align*}
	then we call the sum lock $ \sumLock $ and the sender lock $ \senderLock $ \emph{connected}. Sometime we also say that $ \sumLock $ is the sum lock of the sender lock $ \senderLock $.
\end{definition}

Of course, the connection of sum locks to sender or receiver locks is unambiguous.

\begin{lemma} \label{lem:sumLockOfReceiverOrSender}
	Let $ T \in \targetTermsMixAsyn $ be a target term. Then each receiver lock $ \receiverLock $ and each sender lock $ \senderLock $ of $ T $ is connected to exactly one sum lock $ \sumLock $ of $ T $, i.e.,
	\begin{align*}
		& \forall T_1, T_2 \in \targetTermsMixAsyn \logdot \forall \receiverLock, \parallelChannelIn_1, \parallelChannelIn_2, y_1, y_2, \sumLock_1, \sumLock_2 \in \names \logdot \forall T_1', T_2' \in \piAsynProc \logdot \forall \tilde{x}_1, \tilde{x}_2 \subset \names \logdot\\
		& \hspace*{1em} T_1 \equiv \RestrictedTerm{\tilde{x}_1}{\left( T_1' \mid \Output{\parallelChannelIn_1}{y_1, \sumLock_1, \receiverLock} \right)} \wedge T_1 \steps T_2 \wedge T_2 \equiv \RestrictedTerm{\tilde{x}_2}{\left( T_2' \mid \Output{\parallelChannelIn_2}{y_2, \sumLock_2, \receiverLock} \right)} \text{ implies } y_1 = y_2 \wedge \sumLock_1 = \sumLock_2
	\end{align*}
	and
	\begin{align*}
		& \forall T_1, T_2 \in \targetTermsMixAsyn \logdot \forall \senderLock, \parallelChannelOut_1, \parallelChannelOut_2, y_1, y_2, \sumLock_1, \sumLock_2, z_1, z_2 \in \names \logdot \forall T_1', T_2' \in \piAsynProc \logdot \forall \tilde{x}_1, \tilde{x}_2 \subset \names \logdot\\
		& \hspace*{1em} T_1 \equiv \RestrictedTerm{\tilde{x}_1}{\left( T_1' \mid \Output{\parallelChannelOut_1}{y_1, \sumLock_1, \senderLock, z_1} \right)} \wedge T_1 \steps T_2 \wedge T_2 \equiv \RestrictedTerm{\tilde{x}_2}{\left( T_2' \mid \Output{\parallelChannelOut_2}{y_2, \sumLock_2, \senderLock, z_2} \right)}\\
		& \hspace*{1em} \text{implies } y_1 = y_2 \wedge \sumLock_1 = \sumLock_2 \wedge z_1 = z_2.
	\end{align*}
\end{lemma}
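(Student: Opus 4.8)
The plan is to reduce both statements to a single-term \emph{value invariant} for receiver and sender locks and then combine it with the request-preservation result of Lemma~\ref{lem:encodingMixAsynPreserveRequests}. Concretely, I would first isolate the following invariant: in every target term $T \in \targetTermsMixAsyn$, a receiver lock determines the first two parameters of every input request carrying it, and a sender lock determines the first, second and fourth parameters of every output request carrying it. Formally, if $T$ contains two unguarded input requests with common third parameter $\receiverLock$, say $\Output{\parallelChannelIn_1}{y_1, \sumLock_1, \receiverLock}$ and $\Output{\parallelChannelIn_2}{y_2, \sumLock_2, \receiverLock}$, then $y_1 = y_2$ and $\sumLock_1 = \sumLock_2$, and analogously for sender locks and their associated triples.

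To establish this invariant I would induct on the length of a reduction sequence $\EncodingMixAsyn{S} \steps T$ witnessing $T \in \targetTermsMixAsyn$, handling the base case $T \equiv \EncodingMixAsyn{S}$ by structural induction on $S$. The guiding observation is the origin of requests: by Corollary~\ref{col:originRequests} every request originates from the encoding of a guarded term or a replicated input, and inspecting Figure~\ref{fig:encodingMixAsyn} a receiver lock $\receiverLock$ is always introduced by a restriction $\RestrictedTerm{\receiverLock}{\ldots}$ in $\EncodingMixAsyn{\Input{y}{x}.P}$ or in $\EncodingMixAsyn{\ReplicateInput{y}{x}.P}$, where at the moment of creation it occurs in exactly one request whose first parameter is the translated source channel $\RenamingPolicyMixAsyn{y}$ and whose second parameter is the restricted sum lock of the enclosing sum. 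Thus each receiver lock has a unique \emph{origin request} carrying a fixed value-pair $(y, \sumLock)$. Since receiver locks are bound, distinct textual occurrences---in particular the fresh copies produced whenever a replicated input unfolds and alpha-renames its restricted names---use pairwise distinct receiver locks, so a given $\receiverLock$ has a single binder and hence a single origin. For the inductive step, the analysis in the proof of Lemma~\ref{lem:encodingMixAsynPreserveRequests} shows that every reduction either consumes a request, or copies it via a forwarder or the request-processing code while reproducing all of its values and altering only the request channel; no reduction creates a further origin request for an already present receiver lock. Hence every input request carrying $\receiverLock$ keeps first parameter $y$ and second parameter $\sumLock$, which is the invariant; the sender-lock case is identical, with the fourth parameter $z$ (the translated value) likewise fixed at creation and preserved by copying.

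With the invariant in hand, the lemma follows quickly. Given $T_1 \steps T_2$ with an input request $\Output{\parallelChannelIn_1}{y_1, \sumLock_1, \receiverLock}$ in $T_1$, the first part of Lemma~\ref{lem:encodingMixAsynPreserveRequests} yields an input request $\Output{\parallelChannelIn'}{y_1, \sumLock_1, \receiverLock}$ in $T_2$ with unchanged values. Since $T_2$ also contains $\Output{\parallelChannelIn_2}{y_2, \sumLock_2, \receiverLock}$ and is itself a target term, the value invariant applied to the two requests of $T_2$ sharing the receiver lock $\receiverLock$ forces $y_1 = y_2$ and $\sumLock_1 = \sumLock_2$. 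The sender-lock statement is obtained in the same way from the second part of Lemma~\ref{lem:encodingMixAsynPreserveRequests} together with the sender-lock invariant, additionally yielding $z_1 = z_2$.

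The main obstacle is the invariant, and within it the delicate point is justifying that no reduction ever manufactures a fresh origin request which reuses an already existing receiver or sender lock with different values. This hinges on two facts that must be checked carefully against Figure~\ref{fig:encodingMixAsyn}: that receiver and sender locks only ever arise under restriction, so unfolding a replicated input alpha-renames them to genuinely fresh names and prevents collisions, and that the request-copying machinery identified in the proof of Lemma~\ref{lem:encodingMixAsynPreserveRequests} is exhaustive and strictly value-preserving. Once those are secured, the combination with request preservation is routine.
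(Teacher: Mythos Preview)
Your proposal is correct and follows essentially the same approach as the paper: both arguments rest on (i) each receiver/sender lock being freshly restricted at the site where its origin request is created, so that in any encoded source term no two requests share a third parameter, and (ii) the value-preserving copying of requests established around Lemma~\ref{lem:encodingMixAsynPreserveRequests} and Corollary~\ref{col:originRequests}. The paper's proof is terser---it appeals directly to Corollary~\ref{col:originRequests} to extend the origin observation to arbitrary target terms---whereas you make the induction on reduction length and the within-term invariant explicit before bridging $T_1 \steps T_2$ via request preservation; but the underlying idea is the same.
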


\begin{proof}
	Analysing the encoding function $ \encodingMixAsyn $ in Figure \ref{fig:encodingMixAsyn} we observe that initially, i.e., for each target term $ \EncodingMixAsyn{S} $ for some source term $ S \in \piMixProc $, for each receiver lock, i.e., for each encoding of an input guarded term or a replicated input, and for each sender lock, i.e., for each encoding of an output guarded term, exactly one request is generated. Since the receiver and sender locks are generated under restriction, for each encoded source term there are not two requests with the same receiver or the same sender lock, i.e., no two requests share their third parameter. Because of that the lemma holds for all $ T_1 = T_2 = \EncodingMixAsyn{S} $ for some source term $ S \in \piMixProc $. By Corollary \ref{col:originRequests} then the lemma holds for all target terms $ T_1 $ and $ T_2 $ such that $ T_1 \steps T_2 $.
	\qed
\end{proof}

Note that this lemma does not only shows that the connection between sum locks and sender or receiver locks is unambiguous but moreover that the information carried by requests is persistent. It does not change while requests wander to the structure of the encoded term generated by the parallel operator nesting of the corresponding source term.

Since, in case of $ \encodingMixAsyn $ receiver locks are not only used to guard the encoding of the continuation of an input guarded term or replicated input but also to send the order of the locks back to the corresponding test-statement, they carry all necessary informations to perform the test.

\begin{lemma} \label{lem:parametersReceiverLock}
	Let $ T \in \targetTermsMixAsyn $ and $ \receiverLock \in \names $ be a receiver lock of $ T $. Then the first three parameters of $ \receiverLock $ are sum locks, the fourth parameter is a sender lock $ \senderLock $, and the last parameter is a translated source term name. Moreover, the third sum lock belongs to $ \senderLock $ and among the first to sum locks one belongs to $ \receiverLock $ and the other one is again the sum lock of $ \senderLock $.
\end{lemma}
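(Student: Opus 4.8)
The plan is to prove the statement by a purely syntactic analysis of the encoding function in Figure~\ref{fig:encodingMixAsyn}: I would locate every place at which an output on a receiver lock can be generated, read off its parameters, and then use the request-preservation results to identify those parameters with the claimed sum-lock, sender-lock and name components. First I would recall that, by Definition~\ref{def:receiverLock}, a receiver lock $\receiverLock$ of $T$ is witnessed by an unguarded input request $\Output{\parallelChannelIn}{y, \sumLock, \receiverLock}$, and that such requests (together with the values they carry) are preserved along reductions by Lemma~\ref{lem:encodingMixAsynPreserveRequests} and Lemma~\ref{lem:sumLockOfReceiverOrSender}. The ``parameters of $\receiverLock$'' are exactly the values carried by an output on $\receiverLock$, so the real task is to determine the shape of every instantiation of a receiver lock that can arise in a target term.

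Next I would scan Figure~\ref{fig:encodingMixAsyn} for outputs whose subject is a receiver lock. Since a receiver lock is precisely a name used as a link of multiplicity five (it is, by the remark following Definition~\ref{def:receiverLock}, the only kind of link carrying five parameters), the only such outputs occur in the subterms processing right output requests, namely inside \processRightOutputRequests both in the encoding of the parallel operator and in its pendant within the encoding of replicated input. Each of them has the single uniform shape $\Output{\receiverLock}{\sumLock_r, \sumLock_s, \sumLock_s, \senderLock, z}$ (written $\Output{r}{l_r, l_s, l_s, s, z}$ in the figure, and with $m_i$ in place of $\matchingCoordinatorIn$ in the replicated-input pendant). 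It then remains to identify the five bound names of this output.

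The core of the argument is tracing the provenance of these names. The names $\sumLock_r$ and $\receiverLock$ are bound by the replicated input $\ReplicateInput{\matchingCoordinatorIn}{y', \sumLock_r, \receiverLock}$, which consumes a forwarded copy of an input request; by Corollary~\ref{col:originRequests} and the persistence of request values (Lemma~\ref{lem:sumLockOfReceiverOrSender}) this makes $\sumLock_r$ the sum lock of $\receiverLock$. Symmetrically, $\sumLock_s, \senderLock, z$ are bound by $\Input{\parallelChannelOut}{y, \sumLock_s, \senderLock, z}$, which consumes a forwarded copy of an output request; hence $\sumLock_s$ is a sum lock (the second parameter of that request), $\senderLock$ is the sender lock it carries (its third parameter), and $z$ is the translation of a source-term value (its fourth parameter). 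Reading off $\Output{\receiverLock}{\sumLock_r, \sumLock_s, \sumLock_s, \senderLock, z}$ now yields exactly the claim: the first three parameters are sum locks, the fourth is the sender lock $\senderLock$, the fifth is a translated name; the third sum lock equals $\sumLock_s$, the sum lock of $\senderLock$; and of the first two, $\sumLock_r$ belongs to $\receiverLock$ while $\sumLock_s$ again belongs to $\senderLock$.

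I expect the main obstacle to be precisely this provenance step, because the request values do not flow directly into the output on $\receiverLock$ but first pass through the coordination structure: the forwarders on $\matchingCoordinatorIn$ and $\matchingUpIn$ and the matching replication. One must therefore argue that the values consumed on these internal channels are genuine copies of request values carrying the correct sum lock, which is exactly what Corollary~\ref{col:originRequests} together with the persistence half of Lemma~\ref{lem:sumLockOfReceiverOrSender} provides; the guarding match $\Match{y'}{y}$ only gates whether the output fires and leaves its parameter structure untouched.
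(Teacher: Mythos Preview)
Your syntactic approach is the right one and matches the paper's method, but you have a genuine gap: you miss half of the outputs on receiver locks. You claim that the only five-parameter outputs occur inside \processRightOutputRequests\ (in the parallel-operator encoding and its replicated-input pendant), each with the shape $\Output{\receiverLock}{\sumLock_r, \sumLock_s, \sumLock_s, \senderLock, z}$. But \processRightInputRequests\ also produces an output on a receiver lock, namely $\Output{\receiverLock}{\sumLock_s, \sumLock_r, \sumLock_s, \senderLock, z}$, and this appears both in the parallel-operator encoding and in the replicated-input encoding. So there are four places to check, not two, and the second pair has the first two sum locks swapped.

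This omission matters for the precise content of the lemma. The statement says that \emph{among} the first two sum locks one belongs to $\receiverLock$ and the other to $\senderLock$; it does not fix which is which. That hedged phrasing is exactly because in \processRightOutputRequests\ the output request is consumed first and the order is $(\sumLock_r,\sumLock_s)$, whereas in \processRightInputRequests\ the input request is consumed first and the order is $(\sumLock_s,\sumLock_r)$. Your analysis of provenance via Corollary~\ref{col:originRequests} and Lemma~\ref{lem:sumLockOfReceiverOrSender} is fine and carries over directly to the missing case; you just need to add \processRightInputRequests\ to your case split and note the swapped order.
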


\begin{proof}
	There are four different outputs on receiver locks, i.e., outputs of five parameters, one in each of \processRightOutputRequests \ and \processRightInputRequests \ in the encoding of a parallel operator and in the encoding of a replicated input (compare to Figure \ref{fig:encodingMixAsyn}).
	\begin{align*}
		\processRightOutputRequests & \deff \Output{\coordinatorMatchingOut}{\matchingCoordinatorIn} \mid \ReplicateInput{\coordinatorMatchingOut}{\matchingCoordinatorIn}.\Input{\parallelChannelOut}{y, \sumLock_s, \senderLock, z}.\big(\\
		& \hspace*{2em} \RestrictedTerm{\matchingUpIn}{\big( \begin{aligned}[t]
				& \ReplicateInput{\matchingCoordinatorIn}{y', \sumLock_r, \receiverLock}.\left( \Match{y'}{y}\Output{\receiverLock}{\sumLock_r, \sumLock_s, \sumLock_s, \senderLock, z} \mid \Output{\matchingUpIn}{y', \sumLock_r, \receiverLock} \right)\\
				& \mid \RestrictedTerm{\matchingCoordinatorIn}{\left( \Forward{\matchingUpIn}{\matchingCoordinatorIn} \mid \Output{\coordinatorMatchingOut}{\matchingCoordinatorIn} \right)} \big)
			\end{aligned}\\
		& \hspace*{2em} \mid \Output{\coordinatorUpOut}{y, \sumLock_s, \senderLock, z}} \big)
	\end{align*}
	In \processRightOutputRequests \ the output $ \Output{\receiverLock}{\sumLock_r, \sumLock_s, \sumLock_s, \senderLock, z} $ is guarded by a replicated input $ \ReplicateInput{\matchingCoordinatorIn}{y', \sumLock_r, \receiverLock} $ of three parameters which is in turn guarded by an input $ \Input{\parallelChannelOut}{y, \sumLock_s, \senderLock, z} $ of four parameters. So to unguard the output on the receiver lock two requests|first an output request and then an input request|have to consumed. The values of the parameters of the output $ \Output{\receiverLock}{\sumLock_r, \sumLock_s, \sumLock_s, \senderLock, z} $ are completely determined by these two requests. Because of that the first three parameters are sum locks, the fourth parameter is a sender lock $ \senderLock $, and the last parameter is a translated source term name. Moreover, the first parameter is the sum lock of the receiver lock and the second and third parameter are the sum lock of the sender lock.
	\begin{align*}
		\processRightInputRequests & \deff \Output{\coordinatorMatchingIn}{\matchingCoordinatorOut} \mid \ReplicateInput{\coordinatorMatchingIn}{\matchingCoordinatorOut}.\Input{\parallelChannelIn}{y, \sumLock_r, \receiverLock}.\big(\\
		& \hspace*{2em} \RestrictedTerm{\matchingUpOut}{\big( \begin{aligned}[t]
				& \ReplicateInput{\matchingCoordinatorOut}{y', \sumLock_s, \senderLock, z}.\left( \Match{y'}{y}\Output{\receiverLock}{\sumLock_s, \sumLock_r, \sumLock_s, \senderLock, z} \mid \Output{\matchingUpOut}{y', \sumLock_s, \senderLock, z} \right)\\
				& \mid \RestrictedTerm{\matchingCoordinatorOut}{\left( \Forward{\matchingUpOut}{\matchingCoordinatorOut} \mid \Output{\coordinatorMatchingIn}{\matchingCoordinatorOut} \right)} \big)
			\end{aligned}\\
		& \hspace*{2em} \mid \Output{\coordinatorUpIn}{y, \sumLock_r, \receiverLock}} \big)
	\end{align*}
	The case of \processRightInputRequests is similar but here the input request has to be consumed first, and the first and the third parameter are the sum lock of the sender lock and the second parameter is the sum lock of the receiver lock.
	\qed
\end{proof}

In order to ease the proof of Lemma \ref{lem:nonConflictingStepsMixAsyn} at page \pageref{lem:nonConflictingStepsMixAsyn} we introduce another kind of lock. $ \encodingMixAsyn $ translates source term observables into requests, which are then combined to search for potential communication partners. In order to avoid divergence, requests can not be copied arbitrary often. To ensure that indeed each left request is combined with each possible matching right request, the right requests|in the encoding of a parallel operator as well as in the encoding of a replicated input|are linked within some kind of chain or list, along which the left requests are forwarded. Again to avoid divergence these chains or lists can not be infinitely long, so the links $ \coordinatorMatchingOut $ and $ \coordinatorMatchingIn $ are introduced by the encoding function to extent these chain or list by a new right request as soon as its last place is occupied. We will denote these links as \emph{chain locks}. Similarly, the chain lock $ \coordinatorRepA $ in the encoding of a replicated input is used to establish some kind of chain on encoded source terms|the encoded continuations of that replicated input|instead of right requests. Note that chain locks are the only links in target terms with multiplicity one.

\begin{definition}[Chain Lock] \label{def:chainLock}
	Let $ T \in \targetTermsMixAsyn $. Then any name $ \coordinatorMatchingOut $ is a \emph{chain lock} of $ T $ if $ \exists T' \in \piAsynProc \logdot \exists \tilde{x} \subset \names \logdot \exists \matchingCoordinatorIn \in \names \logdot T \steps \RestrictedTerm{\tilde{x}}{\left( T' \mid \Output{\coordinatorMatchingOut}{\matchingCoordinatorIn} \right)} $.
	
	An \emph{instantiation} of a chain lock is an output on a chain lock.
\end{definition}

Note that the value of a chain lock used to establish a chain of right requests is always a request channel, while the value of a chain lock used to establish a chain of encoded source terms is always a translated source term name, i.e., a value never used as link. Because of that, we can easily distinguish these two kinds of chain locks by a simple type information.

\subsection{Steps of an \Simulation} \label{sec:stepsSimulation}

Before we formally define receiver and sender locks in the last section we argue that they are introduced by the encoding function to guard the encoding of the continuation of a guarded source term. To mimic the behaviour of the source term, the encodings of continuations have to stay guarded until the source term step unguarding them in the source term is \simulated by its encoding. As already mentioned both encodings translate a single source term step into a sequence of target term steps called \simulation. However, in both encodings we can unambiguous allocate the main responsibility for an \simulation to a single step of that \simulation and call all the other steps pre- or postprocessing steps of that \simulation. In the following we will refer to the first kind of target term steps as \emph{\nonAdmin steps}, since they perform the main task of an \simulation and because of that constitute the transition from the \simulation of a source term to the \simulation of its successor. It turns out, that for both encodings the \nonAdmin steps are connected to the test-statements. More precisely, in case of an \simulation of a step on a term guarded by $ \tau $ or a replicated input, it is the consumption of the positive instantiation of a sum lock by the corresponding test-statement that performs the main task of the \simulation. In case of an \simulation of a step on an input guarded source term, it is the consumption of the second positive instantiation in the nested test-statement that we call \nonAdmin step.

\begin{definition}[\NonAdmin Step] \label{def:nonAdminStep}
	Let $ T_1, T_2 \in \targetTermsMixAsyn $ (or $ T_1, T_2 \in \targetTermsSepAsyn $). A step $ T_1 \step T_2 $ is a \emph{\nonAdmin step}, denoted by $ T_1 \nonAdminStep T_2 $, if this step consumes a positive instantiation of a sum lock either within a single test-statement or within the second test of a nested test-statement.
\end{definition}

Note that, since negative instantiations of sum locks refer to encoded in- or outputs, that remains of a former \simulation as junk, we do not consider any consummation of a negative sum lock as \nonAdmin step. According, test-statements consuming a negative instantiation of a sum lock only restore all consumed information. In the following we prove our intuition of \nonAdmin steps by showing that encoded continuations can only be unguarded after a \nonAdmin step.

\begin{lemma} \label{lem:nonAdminStepContinuation}
	Only \nonAdmin steps may lead to the unguarding of the encoding of a continuation of some source term.
\end{lemma}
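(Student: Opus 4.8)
The plan is to prove the lemma by a direct inspection of the two encoding functions in Figure \ref{fig:encodingSepAsyn} and Figure \ref{fig:encodingMixAsyn}, locating every syntactic occurrence of an encoded continuation $\EncodingSepAsyn{P}$ or $\EncodingMixAsyn{P}$ and determining exactly which prefix guards it. Since every target term is a derivative of an encoded source term (Definition \ref{def:targetTerm}), it suffices to track these guards along reductions and to argue that an encoded continuation can pass from guarded to unguarded only if a \nonAdmin step (Definition \ref{def:nonAdminStep}) has occurred on the chain of guards protecting it.

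First I would enumerate the guarding forms. Inspecting both figures, an encoded continuation occurs in exactly one of the following positions: (i) directly inside the then-branch of a test-statement on a sum lock --- this covers $\tau$-guarded terms, replicated inputs in $\encodingSepAsyn$, and, as the innermost then-branch of the nested test, input-guarded terms in both encodings; (ii) under an input on a sender lock, i.e.\ a guard of the form $\In{\senderLock}.(\ldots)$, which is the output case in both encodings; and (iii) for the replicated input of $\encodingMixAsyn$, under the prefix $\ReplicateInput{\coordinatorRepA}{\ldots}.\Input{\coordinatorRepB}{\ldots}$ that carries $\EncodingMixAsyn{P}$. The proof then treats these three families.

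For family (i) the continuation becomes reachable only once the test-statement commits to its then-branch, and by the semantics of $\Test{\sumLock}{\cdot}{\cdot}$ this commitment is precisely the consumption of a positive instantiation $\Output{\sumLock}{\true}$ (for the nested test, the consumption of the second positive instantiation); by Definition \ref{def:nonAdminStep} this is a \nonAdmin step, and the actual unguarding of the continuation follows it. The complementary observation I would make is that the else-branch of every test-statement merely restores the consumed lock as a negative instantiation and contains no encoded continuation, so consuming a negative sum lock never leads to an unguarding. For families (ii) and (iii) I would reduce to family (i) by tracing the origin of the message that releases the guard: a sender lock is released by an unguarded $\Out{\senderLock}$, and a $\coordinatorRepA$-guard by an output $\Output{\coordinatorRepA}{\ldots}$, and inspecting the figures shows that every such output occurs itself only inside the then-branch of a test-statement (the then-case of the encoded input, respectively of the $l_s$-test in the replicated input). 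Hence releasing a sender lock or a $\coordinatorRepA$-guard is again possible only after a \nonAdmin step, and the claim follows.

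The main obstacle will be the bookkeeping for $\encodingMixAsyn$, and in particular the replicated-input clause, whose context is large and introduces many auxiliary channels: I must verify that none of the administrative reductions --- the forwarders, the request-processing components (\processRightInputRequests{} and \processRightOutputRequests{}), and the chain-lock machinery --- ever produces, on its own, an unguarded sender-lock instantiation, a $\coordinatorRepA$-instantiation, or an unguarded encoded continuation, so that the only route to releasing any guard really does pass through a then-branch. Confirming exhaustively that all of these components merely move requests and locks around without ever touching a continuation is the delicate part; once this is established, the three families collapse to the single fact that entering a then-branch consumes a positive sum lock and is therefore a \nonAdmin step.
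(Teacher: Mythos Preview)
Your proposal is correct and follows essentially the same approach as the paper: inspect the encoding functions, locate every occurrence of an encoded continuation, and trace its guard back to the then-branch of a test on a sum lock. The paper's proof is terser---it treats the output case via sender locks and the $\tau$/input cases as directly sitting in a then-branch, invoking Lemma~\ref{lem:parametersReceiverLock} to certify that the relevant locks are sum locks---whereas you additionally single out the replicated-input clause of $\encodingMixAsyn$ (your family~(iii)) and trace the $\coordinatorRepA$-instantiation back to the then-branch of the $l_s$-test, a case the paper leaves implicit; this extra explicitness is harmless and arguably an improvement.
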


\begin{proof}
	Analysing the encoding functions in Figure \ref{fig:encodingSepAsyn} and Figure \ref{fig:encodingMixAsyn} we observe that the encoded continuations of output guarded source terms appear guarded by the respective sender lock. Moreover we observe that all instantiations of sender locks are guarded by test-statements. More precisely, they are guarded by the then-case of a single test-statement (due to the encoding of a replicated input) or the then-case of the second test in a nested test-statement (due to the encoding of an input guarded term). Note that in case of $ \encodingMixAsyn $, by Lemma \ref{lem:parametersReceiverLock} the input on the receiver lock that guards the test-statements unambiguous identifies the following outputs of multiplicity zero as sender locks. So a \nonAdmin step is necessary to unguard them.
	
	In case of a source term guarded by $ \tau $ or an input guarded source term the respective encoded continuations appear directly as required in the respective then-cases of the test-statements. So, in these cases, the lemma holds directly by the Definition of the encoding functions.
	\qed
\end{proof}

Note that to our intuition for each \simulation of a source term step, there is exactly one \nonAdmin step. However, we will need some further information to prove that statement (compare to Lemma \ref{lem:simulationVSNonAdminStep} at page \pageref{lem:simulationVSNonAdminStep}). So let us have a closer look at the remaining pre- and post processing steps. There is one step of an \simulation, namely the unguarding of the encoded continuation of an output guarded source term by communication over a sender lock, that for certainty has to be performed after the \nonAdmin step of the corresponding \simulation. Because of that we can call reductions on sender locks postprocessing steps. There are also steps, as for instance reductions over receiver locks, that for certainty have to be performed before the corresponding \nonAdmin step. So we can call all reductions on receiver locks preprocessing steps. However, there are some steps that may be performed before or after the corresponding \nonAdmin step. Moreover, the fact whether a non \nonAdmin step was performed before or after the corresponding \nonAdmin step is usually not important and often hard to prove. So pre- or postprocessing steps is not a good characterisation for our purposes. Instead we will refer to those steps as \emph{\admin steps}, since they perform administrative tasks, that are necessary to perform an \simulation, but they do not carry the main responsibility for the \simulation, i.e., they do|at least for the general case|not inevitably implement the decision to \simulate a specific source term step.

\begin{definition}[\Admin Step] \label{def:adminStep}
	Let $ T_1, T_2 \in \targetTermsMixAsyn $ (or $ T_1, T_2 \in \targetTermsSepAsyn $). A step $ T_1 \step T_2 $ is a \emph{\admin step}, denoted by $ T_1 \adminStep T_2 $, if it is no \nonAdmin step.
	
	Let $ T_1 \adminSteps T_2 $ denote a sequence of \admin steps, i.e., $ \adminSteps $ is the transitive and reflexive closure of $ \adminStep $.
\end{definition}

In the easiest case, none of the \admin steps influences the decision to \simulate a specific source term steps. That means: Consider a source term that can perform alternative but conflicting steps. So the encoding can perform different but conflicting \simulations. In this case none of the \admin steps should influences which of the \simulations may be completed, i.e., no sequence of \admin steps should be able to rule out the completion of one of these \simulations.

Unfortunately, for both of the presented encodings this turns out to be wrong. It is always a \nonAdmin step that finally decides which of the conflicting \simulations is completed by preventing any other conflicting \simulation from completion. However, in case there are more than two possible conflicting \simulations, a sequence of \admin steps may rule out one alternative while allowing for different still possible \simulations.

\begin{example} \label{exa:intermediateStates}
	Let us consider the source term $ S = \left( \In{a}.S_1 + \In{a}.S_2 \right) \mid \Out{a}.S_3 \mid \In{a}.S_4 $ for some $ S_1, S_2, S_3, S_4 \in \piSepProc $. So $ S \in \piSepProc $ as well as $ S \in \piMixProc $. The encoding of $ S $, regardless whether it is encoded by $ \encodingSepAsyn $ or $ \encodingMixAsyn $, generates three sum locks, one for each sum. Let us assume, the sum lock generated for $ \In{a}.S_1 + \In{a}.S_2 $ is $ \sumLock_1 $, the sum lock for $ \Out{a}.S_3 $ is $ \sumLock_2 $, and $ \sumLock_3 $ is generated for $ \In{a}.S_4 $. $ S $ can perform three conflicting steps leading to $ S_1 \mid S_3 $, $ S_2 \mid S_3 $, or $ S_3 \mid S_4 $, respectively. Each of these steps can be \simulated by both of the encodings. To \simulate the step to $ S_1 \mid S_3 $ for both encodings the positive instantiation of the sum lock $ \sumLock_1 $ is consumed first and to complete the \simulation a positive instantiation of the sum lock $ \sumLock_2 $ has to be consumed. However, it is possible, that the encoded term instead performs another (nested) test-statement and consume both positive instantiations of $ \sumLock_2 $ and $ \sumLock_3 $ to \simulate the step to $ S_3 \mid S_4 $ instead. Because of that, in case of a nested test-statement, we do not consider the first consumption of a positive instantiation of a sum lock as a \nonAdmin step. Even, if at this point the sum lock, that is tested next by this test-statement, is still instantiated positive, it can become negative by an interleaving other test-statement.
	
	So after the consumption of the positive instantiation of $ \sumLock_1 $ in order to \simulate the source term step to $ S_1 \mid S_3 $, there is still the possibility to complete instead the \simulation of the source term step to $ S_3 \mid S_4 $. However, there is no possibility to complete the \simulation of the source term step to $ S_2 \mid S_3 $. Note that, to \simulate the step to $ S_2 \mid S_3 $, a positive instantiation of each of the locks $ \sumLock_1 $ and $ \sumLock_2 $ is necessary. Here, the instantiation of $ \sumLock_1 $ is consumed. The only possibility to restore the positive instantiation is to consume a negative instantiation of $ \sumLock_2 $ (compare to the nested test-statements in the encodings of input guarded terms in Figure \ref{fig:encodingSepAsyn} and Figure \ref{fig:encodingMixAsyn}). By Lemma \ref{lem:changeInstantiationSumLock}, then there is no possibility to change that negative instantiation back into a positive one. So, as soon as $ \sumLock_1 $ is consumed, one of the three possible \simulations is ruled out while there are still two possible \simulations left.
\end{example}

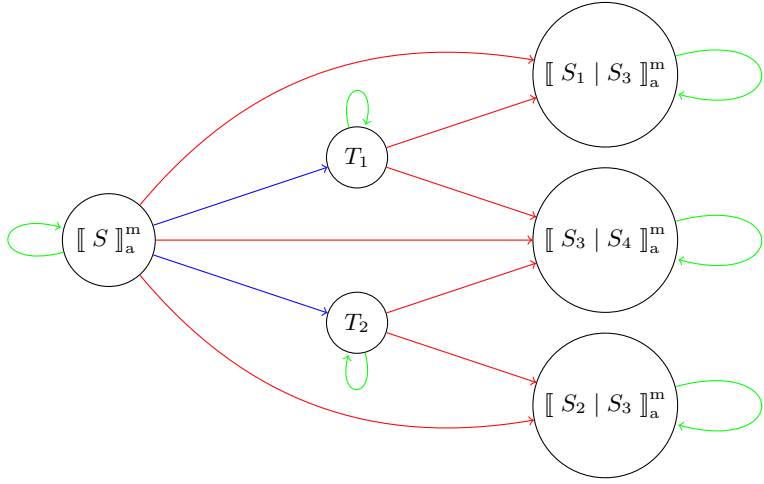
\begin{figure}[ht]
	\centering
	\begin{tikzpicture}[auto,node distance=1.1cm]
		\node[state]	(S)					{$ \EncodingMixAsyn{S} $};
		\node			(d1) [right of=S]	{};
		\node			(d2) [right of=d1]	{};
		\node			(d3) [right of=d2]	{};
		\node[state]	(T1) [above of=d3]	{$ T_1 $};
		\node[state]	(T2) [below of=d3]	{$ T_2 $};
		\node			(d6) [right of=d3]	{};
		\node			(d7) [right of=d6]	{};
		\node[state]	(S2) [right of=d7]	{$ \EncodingMixAsyn{S_3 \mid S_4} $};
		\node			(d8) [above of=S2]	{};
		\node[state]	(S1) [above of=d8]	{$ \EncodingMixAsyn{S_1 \mid S_3} $};
		\node			(d9) [below of=S2]	{};
		\node[state]	(S3) [below of=d9]	{$ \EncodingMixAsyn{S_2 \mid S_3} $};
		
		\path[->,green]	(S)		edge [loop left]	node {} ();
		\path[->,blue]	(S)		edge				node {} (T1);
		\path[->,green]	(T1)	edge [loop above]	node {} ();
		\path[->,red]	(T1)	edge				node {} (S1);
		\path[->,red]	(T1)	edge				node {} (S2);
		\path[->,blue]	(S)		edge				node {} (T2);
		\path[->,green]	(T2)	edge [loop below]	node {} ();
		\path[->,red]	(T2)	edge				node {} (S2);
		\path[->,red]	(T2)	edge				node {} (S3);
		\path[->,red]	(S)		edge [bend left]	node {} (S1);
		\path[->,green]	(S1)	edge [loop right]	node {} ();
		\path[->,red]	(S)		edge				node {} (S2);
		\path[->,green]	(S2)	edge [loop right]	node {} ();
		\path[->,red]	(S)		edge [bend right]	node {} (S3);
		\path[->,green]	(S3)	edge [loop right]	node {} ();
	\end{tikzpicture}
	\caption{Intermediate States.} \label{fig:intermediateStates}
\end{figure}
	
We visualise this phenomenon in Figure \ref{fig:intermediateStates} (for the case of $ \encodingMixAsyn $). Here the red lines denote sequences of steps with (exactly) one \nonAdmin step, the blue lines denote sequences without \nonAdmin steps but with at least one \impure \admin step, and the green lines denote sequences of only \pure \admin steps (compare to Definition \ref{def:pureImpureAdminStep}). Between the encoding of the source term $ S $ and the encodings of its reducts $ S_1 \mid S_3 $, $ S_2 \mid S_3 $, and $ S_3 \mid S_4 $ there are two \emph{intermediate states}, i.e., two states that differ from each encoded source term within that picture. Note that the observability of such an intermediate state depends on the chosen equivalence on target terms.

\begin{definition}[Intermediate State] \label{def:intermediateState}
	A term $ T' $ is an \emph{intermediate state}, if
	\begin{align*}
		\exists T, T_1, T_2, T_3 \logdot T \steps T_1 \wedge T \steps T_2 \wedge T \steps T_3 \wedge T \steps T' \wedge T' \steps T_1 \wedge T' \steps T_2 \wedge T' \not\steps T_3,
	\end{align*}
	i.e. if
	\begin{center}
		\begin{tikzpicture}[auto,node distance=1.2cm]
			\node (T)						{$ T $};
			\node (d1)	[right of=T]		{};
			\node (T')	[above right of=d1]	{$ T' $};
			\node (d2)	[below right of=T']	{};
			\node (T2)	[right of=d2]		{$ T_2 $};
			\node (T1)	[above of=T2]		{$ T_1 $};
			\node (T3)	[below of=T2]		{$ T_3 $};
			
			\def\myArrow(#1)#2(#3){
				\draw[|->,shorten >=-1pt,shorten <=-0.5pt] (#1) #2 (#3);
				\draw[double] (#1) #2 (#3);
			}
			
			\myArrow (T) -- (T');
			\draw[|->,shorten >=-1pt,shorten <=-0.5pt] (T) edge [bend left] (T1);
			\draw (T) edge [double,bend left] (T1);
			\myArrow (T) -- (T2);
			\myArrow (T) -- (T3);
			\myArrow (T') -- (T1);
			\myArrow (T') -- (T2);
		\end{tikzpicture}
	\end{center}
\end{definition}

Remarkably, the existence of intermediate states in $ \encodingSepAsyn $ is independent of structural congruence|if two source terms are structural congruent, then their encodings have the same intermediate states|while this is not true for $ \encodingMixAsyn $. Here the locks are always tested according to a total ordering created along the structure induced by the nesting of parallel operators of the source term. Because of that, this ordering of sum locks differ for source terms, that are structural congruent but differ in the order of their subprocesses, i.e. differ by rule $ P \mid Q \equiv Q \mid P $. So structural congruent source terms can differ in the number and nature of reachable intermediate states; e.\,g.\ $ S' = \Out{a}.S_3 \mid \left( \In{a}.S_1 + \In{a}.S_2 \right) \mid \In{a}.S_4 $, which is structural congruent to $ S $ from Example \ref{exa:intermediateStates}, does not reach any of the above intermediate states. Instead, in $ S' $ the first consumption of a positive instantiation of a sum lock, which is no \nonAdmin step here, completely determines which \simulation can be completed.

To capture that fact we further distinguish \admin steps, into \pure \admin steps|that never rule out the completion of any possible \simulation\!\!|and \impure \admin steps|that due to the consumption of a positive instantiation of a sum lock may possibly rule out the completion of an \simulation. Note that in case of $ \encodingMixAsyn $ requests are copied to ensure that each possible combination of input and output requests is checked exactly once. Because of that steps on requests are \pure \admin steps. In opposite, $ \encodingSepAsyn $ does not translate source term observables into requests. To check for a potential pair of translated communication partners a communication of the translated channel names is performed. Similar to the consumption of positive instantiations of sum locks this might rule out alternative \simulations. Because of that we consider steps on translated source term names as \impure \admin steps of $ \encodingSepAsyn $. Also note, that in $ \encodingMixAsyn $ translated source term names are used as values only; so there are no steps on translated source term names.

\begin{definition}[\Pure and \Impure \Admin Step] \label{def:pureImpureAdminStep}
	Let $ T_1, T_2 \in \targetTermsMixAsyn $ (or $ T_1, T_2 \in \targetTermsSepAsyn $). A step $ T_1 \adminStep T_2 $ is a \emph{\pure \admin step}, denoted by $ T_1 \pureAdminStep T_2 $, if it is neither on a sum lock nor on a translated source term name, else it is an \emph{\impure \admin step}, denoted by $ T_1 \impureAdminStep T_2 $.
	
	Let $ T_1 \pureAdminSteps T_2 $ denote a sequence of \pure \admin steps, i.e., $ \pureAdminSteps $ is the transitive and reflexive closure of $ \pureAdminStep $.
\end{definition}

To show that the definition of \pure \admin steps meets our intuition, we prove some kind of local confluence property for most of the \pure \admin steps. Intuitively, it states that indeed none of the \pure \admin steps can rule out the completion of any \simulation, because they are (in most cases) not conflicting, i.e., does not rule out any other sequence of steps.

\begin{lemma} \label{lem:nonConflictingStepsSepAsyn}
	Within target terms \pure \admin steps are not conflicting, i.e.,
	\begin{align*}
		\forall T, T_1, T_2 \in \targetTermsSepAsyn \logdot T \pureAdminStep T_1 \wedge T \steps T_2 \text{ implies } \exists T' \in \targetTermsSepAsyn \logdot T_1 \steps T' \wedge T_2 \pureAdminStep T'.
	\end{align*}
\end{lemma}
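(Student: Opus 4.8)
The plan is to establish the result as a commutation (diamond) property: I would first prove a single-step version and then lift it to arbitrary reduction sequences by induction on the length of $T \steps T_2$. The starting point is a classification of the pure admin steps of $\encodingSepAsyn$. By Definition~\ref{def:pureImpureAdminStep} such a step is neither on a sum lock nor on a translated source term name, so inspecting Figure~\ref{fig:encodingSepAsyn} it can only be a communication on one of the internal names $t$ or $f$ of a test-statement (see Definition~\ref{def:testBoolean}), on a sender lock $\senderLock$, or on a receiver lock $\receiverLock$. The decisive structural observation is that every such redex uses a channel introduced under a restriction and possesses a \emph{unique} consuming partner: the inputs $\In{t}.P$, $\In{f}.Q$ and $\In{\senderLock}.\EncodingSepAsyn{P}$ each occur exactly once, and the only input on $\receiverLock$ is a replicated one, which reacting with $\Out{\receiverLock}$ is never exhausted. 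Hence a pure admin step competes with no other step for its resources.

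Second I would prove the single-step commutation: for $T \pureAdminStep T_1$ and an arbitrary single step $T \step T_2$, either the two steps coincide (so $T_1 \equiv T_2$) or they form a diamond, i.e. there is a $T'$ with $T_1 \steps T'$ and $T_2 \pureAdminStep T'$. This is a case analysis on whether the redex of $T \step T_2$ overlaps the pure admin redex $R$. Since $R$ lives under a restriction with a unique partner, a step on a different redex can consume neither the output nor the input of $R$; conversely, firing $R$ removes only its own uniquely paired output and input and therefore leaves the other redex intact. Here I would use that at $T$ the pure admin redex is already enabled, so the continuation it unguards is still blocked and cannot host the competing redex, and that by Lemma~\ref{lem:instantiationSumLocks} and Lemma~\ref{lem:changeInstantiationSumLock}, which govern the test-statements, a given sender lock is instantiated at most once, ruling out a second, competing producer of $\Out{\senderLock}$ against the linear input $\In{\senderLock}.\EncodingSepAsyn{P}$. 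In the non-overlapping cases the residual of $R$ in $T_2$ is the same step, up to the relocation caused by applications of structural congruence, which closes the diamond.

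Third I would lift this to the stated form by induction on the number of steps in $T \steps T_2$. For the empty sequence, $T' \deff T_1$ works. For $T \step \hat{T} \steps T_2$ I apply the single-step commutation to $T \pureAdminStep T_1$ and $T \step \hat{T}$: in the diamond case I obtain a $U$ with $T_1 \steps U$ and $\hat{T} \pureAdminStep U$, and the induction hypothesis applied to $\hat{T} \pureAdminStep U$ and the shorter sequence $\hat{T} \steps T_2$ yields a $T'$ with $U \steps T'$ and $T_2 \pureAdminStep T'$; composing gives $T_1 \steps T'$. Closure of $\targetTermsSepAsyn$ under $\steps$ (Definition~\ref{def:targetTerm}) guarantees $T' \in \targetTermsSepAsyn$.

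I expect the main obstacle to be the degenerate coincidence case, where the arbitrary reduction already performs the very pure admin step under consideration: then the residual of the pure admin redex is \emph{empty}, so the single surviving step promised on the bottom edge must be understood as the trivial (zero-step) residual, which is consistent with the conclusion precisely because $T_1 \equiv T_2$ already holds. Making this precise, namely keeping a faithful account of the residual of the pure admin redex along the whole sequence and checking that every genuinely distinct step really leaves it untouched, is where the care lies; the remaining cases are routine but numerous, since each of the four lock kinds must be checked against every communication rule of Figure~\ref{fig:concurrentReductionSemantics}.
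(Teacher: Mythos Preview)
Your proposal is correct and follows essentially the same route as the paper: both arguments classify the pure admin steps of $\encodingSepAsyn$ into communications on receiver locks, sender locks, and the internal $t,f$ of test-statements, and both rest on the same uniqueness observations (restricted channels with a single replicated input for $\receiverLock$, a single linear input and at most one instantiation for $\senderLock$ via the sum-lock lemmata, and fresh restricted $t,f$ per test). The paper argues each case directly (``the step can be performed before or after the sequence and the same $T'$ is reached'') without explicitly separating the single-step diamond from its inductive lifting, whereas you make that two-stage structure explicit; this is a presentational difference, not a mathematical one.

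Your flag on the coincidence case is apt: when $T \steps T_2$ itself performs the very pure admin redex, the literal conclusion $T_2 \pureAdminStep T'$ demands one more step, not zero. The paper's proof sidesteps this by arguing, for sender locks and for $t,f$, that no step of the sequence $T \steps T_2$ can be on that channel at all (so the redex survives in $T_2$), and for receiver locks it relies on the replicated input so that ``it does not matter how many other steps on the same receiver lock appear.'' Your resolution---treating the residual as empty and reading the bottom edge as the reflexive case---is the natural fix if one wants a fully formal diamond lemma; in the paper's informal style the point is simply absorbed into the per-channel arguments.
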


\begin{proof}
	In comparison to $ \encodingMixAsyn $ the encoding $ \encodingSepAsyn $ introduces only a few \pure \admin steps and all of them are not conflicting. First note that, since in \piAsyn there are no sums, two target term steps can only be in conflict if one of it consumes some input or output necessary to perform the other step. So it suffice to concentrate on steps on the same channel. Analysing the encoding function in Figure \ref{fig:encodingSepAsyn} we discover that steps on receiver or sender locks (compare to Definitions \ref{def:receiverLock} and \ref{def:senderLock}) are \pure \admin steps.
	
	In case of receiver lock, since they are generated under restriction, for each receiver lock there is exactly one replicated input and no other input. Because of that it does not matter how many other steps on (the same) receiver lock may appear within the sequence $ T \steps T_2 $ the step $ T \pureAdminStep T_1 $ can be performed before or after that sequence and in both cases the same term $ T' $ is reached.
	
	In case of sender locks, there is exactly one input and no replicated input for each sender lock. Moreover, we observe that initially there is no instantiation of a sender lock, and there is exactly one output (on a translated source term name) which carries the sender lock as value. This output is consumed to unguard a test-statement and that test-statement can then ungard again at most one output which carries the sender lock as value and which can itself unguard another test-statement. The only instantiations of sender locks are due to the then-case of a single test-statement in the encoding of a replicated input or the then-case of the second test-statement in the encoding of an input guarded term, and in both cases only a single instantiation of a sender lock is unguarded. So for each target term and each sender lock there can be at most one instantiation of a sender lock. Moreover note, that the output on the translated source term names does not only carry the sender lock, but also a sum lock. In order to obtain an instantiation of the sender lock this sum lock has to be instantiated positive. But whenever a sender lock is instantiated the encoding also generates a negative instantiation of that sum lock. By Lemma \ref{lem:changeInstantiationSumLock} that negative instantiation can never be turned into a positive one again; so there is no chance to generate a second instantiation on the same sender lock. Because of that, if $ T \pureAdminStep T_1 $ is a step on a sender lock, then none of the steps in $ T \steps T_2 $ is a step on that sender lock. So the lemma holds.
	
	Besides these to steps there is another kind of \pure \admin steps that is not that obvious in Figure \ref{fig:encodingSepAsyn} because that kind of steps is hidden by the abbreviation used to introduce booleans and test-statements (compare to Definition \ref{def:testBoolean}). We observe, that a test-statement is reduced within two steps. The first consumes the instantiation of the sum lock and is thus no \pure \admin step. The second step unguards the corresponding then- or else-case. In both cases that step is a \pure \admin step. However the names $ t $ and $ f $ are restricted for each test-statement, are not used any there else by the encoding function, and thanks to the renaming policy are different from each translated source term name. Because of that, we can again conclude that, if $ T \pureAdminStep T_1 $ is a step on a version of $ t $ or $ f $, then none of the steps in $ T \steps T_2 $ is a step on the same name. So the lemma holds.
	\qed
\end{proof}

\begin{lemma} \label{lem:nonConflictingStepsMixAsyn}
	Within target terms \pure \admin steps, that either are on sender locks or booleans, or do not unguard an instantiation of a chain lock carrying a request channel, are not conflicting, i.e.,
	\begin{align*}
		& \forall T, T_1, T_2 \in \targetTermsMixAsyn \logdot T \pureAdminStep T_1 \wedge T \steps T_2\\
		& \hspace*{3em} \wedge \big( \begin{aligned}[t]
				& T \step T_1 \text{ is a step on a sender lock or booleans}\\
				& \vee \big( \forall T_1' \in \piAsynProc \logdot \forall \tilde{x} \subset \names \logdot \forall \coordinatorMatchingOut, \matchingCoordinatorIn \in \names \logdot T_1 \equiv \RestrictedTerm{\tilde{x}}{\left( T_1' \mid \Output{\coordinatorMatchingOut}{\matchingCoordinatorIn} \right)} \text{, where } \matchingCoordinatorIn \text{ is a request channel}\\
				& \hspace*{1em} \text{ implies } \left( \exists T' \in \piAsynProc \logdot \exists \tilde{x}_1 \subset \names \logdot T \equiv \RestrictedTerm{\tilde{x}_1}{\left( T' \mid \Output{\coordinatorMatchingOut}{\matchingCoordinatorIn} \right)} \right) \big) \big)
			\end{aligned}\\
		& \hspace*{1em} \text{ implies } \exists T_3 \in \targetTermsMixAsyn \logdot T_1 \steps T_3 \wedge T_2 \pureAdminStep T_3.
	\end{align*}
\end{lemma}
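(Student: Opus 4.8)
The plan is to adapt the proof of Lemma~\ref{lem:nonConflictingStepsSepAsyn} to the larger repertoire of \pure \admin steps of $\encodingMixAsyn$. As there, I would first exploit that \piAsyn has no choice operator: two steps of a target term can conflict only if one of them consumes an unguarded input- or output-occurrence that the other one needs, so it suffices to compare the distinguished step $T \pureAdminStep T_1$ with the steps of $T \steps T_2$ channel by channel. I would establish a single-step strong-commutation property (if $T \pureAdminStep T_1$ meets hypothesis (a) or (b) and $T \step T_2'$ is an arbitrary step, then there is a $T_3'$ with $T_1 \steps T_3'$ together with a residual \pure \admin step from $T_2'$ to $T_3'$ of the same shape) and then lift it to the whole sequence $T \steps T_2$ by induction on its length. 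The point of the induction is that the residual of the distinguished step stays available after every step, because the resources it consumes are never destroyed.

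For hypothesis (a) I would argue exactly as in Lemma~\ref{lem:nonConflictingStepsSepAsyn}. A sender lock carries no values and, by Definition~\ref{def:senderLock} together with Lemmata~\ref{lem:sumLockOfReceiverOrSender} and \ref{lem:parametersReceiverLock}, is unambiguously identified; its instantiation $\Out{\senderLock}$ is produced only by a \nonAdmin step and, using Lemma~\ref{lem:changeInstantiationSumLock}, can never be regenerated, while the matching input $\In{\senderLock}$ is unique. Likewise the boolean names $t$ and $f$ are restricted and fresh for each test-statement, so each hosts exactly one output and one input. In both situations the distinguished step is the \emph{only} step available on that channel, so it commutes with $T \steps T_2$.

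For hypothesis (b) I would classify the remaining \pure \admin steps that do \emph{not} newly unguard an instantiation of a chain lock carrying a request channel: steps on receiver locks, on the forwarder channels, on the matching-coordinator channels $\matchingCoordinatorIn$ and $\matchingCoordinatorOut$, the chain-lock steps on $\coordinatorMatchingOut$ and $\coordinatorMatchingIn$ that only unguard a waiting input on $\parallelChannelOut$ or $\parallelChannelIn$, and the chain-lock steps on $\coordinatorRepA$ that only unguard a waiting input on $\coordinatorRepB$. In each of these cases the consuming input is \emph{replicated}, so consuming one output neither removes an input nor disables the concurrent consumption of any further output copy, and by Lemma~\ref{lem:encodingMixAsynPreserveRequests} every consumed request survives (with the same values) somewhere in the term. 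Together with the observation that the two $\parallelChannelOut$/$\parallelChannelIn$ inputs sit in different restriction scopes (a replicated forwarder in one branch, the non-replicated request processor in the other), this shows that no such step can consume a resource on which a step of $T \steps T_2$ depends, so the diamond closes. The role of hypothesis (b) is precisely to excise the genuinely conflicting steps — the consumptions of $\parallelChannelOut$/$\parallelChannelIn$ requests and the $\coordinatorRepB$-handshake inside the encodings of parallel composition and replicated input — which unguard a fresh $\Output{\coordinatorMatchingOut}{m}$ or $\Output{\coordinatorMatchingIn}{m}$ on a request channel $m$ and thereby extend the request chain, as discussed before Definition~\ref{def:chainLock}.

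I expect the main obstacle to be exactly this request/chain bookkeeping. The delicate work lies in verifying residual preservation across the induction — that the distinguished step survives each step of $T \steps T_2$ as a \pure \admin step performing the corresponding forwarding or combination — and in confirming that the excluded steps are \emph{precisely} the chain-extending ones. This needs Lemma~\ref{lem:encodingMixAsynPreserveRequests} to guarantee that copied requests are never lost, a careful reading of the scoping of $\parallelChannelOut$, $\parallelChannelIn$ and of the matching-coordinator channels in Figure~\ref{fig:encodingMixAsyn}, and Corollary~\ref{col:originRequests} to ensure that the copies travelling along a chain introduce no conflict beyond the chain-extension steps themselves.
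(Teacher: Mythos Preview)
Your proposal is correct and follows essentially the same route as the paper: a case split on the channel of the \pure \admin step $T \pureAdminStep T_1$, arguing that for sender locks and booleans the argument of Lemma~\ref{lem:nonConflictingStepsSepAsyn} carries over verbatim, that the remaining non-excluded steps all hit a unique replicated input under restriction (request forwarders, receiver locks, the chain locks $\coordinatorMatchingOut,\coordinatorMatchingIn,\coordinatorRepA$), and that the genuinely conflicting steps---the non-replicated $\Input{\parallelChannelOut}{\ldots}$/$\Input{\parallelChannelIn}{\ldots}$ in \processRightOutputRequests/\processRightInputRequests\ and the $\coordinatorRepB$-handshake---are exactly those that unguard a fresh chain-lock instantiation carrying a request channel, hence are ruled out by hypothesis~(b). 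The paper's write-up dispenses with your explicit induction on $\length{T \steps T_2}$ and simply observes, for each replicated-input case, that ``the step $T \pureAdminStep T_1$ can be performed before or after that sequence and in both cases the same term $T_3$ is reached''; your inductive packaging makes this more precise but is not a different argument.
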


\begin{proof}
	$ \encodingMixAsyn $ relies on much more \pure \admin steps than $ \encodingSepAsyn $. First note that, the encoding of a parallel operator generates unguarded instantiations of chain locks carrying a request channel. Since a step on a sender lock and a step on $ t $ to unguard the then-case of a test-statement unguards an encoded continuation, they unguard instantiations of such chain locks, if the corresponding source term in the continuation contains a parallel operator. Nevertheless, we want to prove the condition also for steps on sender locks and booleans.
	
	Since source term names are translated into values, never used as links, it suffice to consider steps on names introduced by the encoding function. A look at the definition of the corresponding renaming policy in Figure \ref{fig:encodingMixAsyn} suggests the following case split\footnote{Note that in most cases the considered names are restricted, so a simple alpha conversion may change them. Because of that the use of concrete names in the following case split should not imply that we consider steps on these specific names. Instead the names refer to the meaning which is related to them by the encoding function.} on the subject of the step $ T \step T_1 $.
	\begin{description}
		\item[Case of $ \parallelChannelOut, \parallelChannelIn, \coordinatorUpOut, \coordinatorUpIn, \matchingCoordinatorOut, \matchingCoordinatorIn, \matchingUpOut, \matchingUpIn, \matchingReceiverIn, \matchingReceiverOut, \matchingReceiverUpIn, \matchingReceiverUpOut $:] All these names are request channels, i.e., there are introduced by $ \encodingMixAsyn $ to transport requests. Note that the encoding function puts much effort in the direction of requests. Usually there is exactly one way for them, namely: (1) upwards in the structure generated by the nesting of parallel operators in the corresponding source term, (2) within the encoding of a parallel operator or each branch of the encoding of a replicated input from the left side to each right request, which are linked within a chain, and (3) within the encoding of a replicated input from each branch to each next branch, where each branch represents an encoding of the continuation of that replicated input and the branches are again linked within some kind of chain.
			
			Indeed there is only one point, at which the way of a requests is not completely determined. That is the point at which right requests are linked within a chain (compare to the encoding of a parallel operator or a replicated input). The order in which the right requests are consumed determines their order in the chain, so these steps are conflicting. They are steps on the first input on output requests in \processRightOutputRequests \ and on the first input on input requests in \processLeftOutputRequests.
			\begin{align*}
				\processRightOutputRequests & \deff \Output{\coordinatorMatchingOut}{\matchingCoordinatorIn} \mid \ReplicateInput{\coordinatorMatchingOut}{\matchingCoordinatorIn}.\Input{\parallelChannelOut}{y, \sumLock_s, \senderLock, z}.\big(\\
					& \hspace*{2em} \RestrictedTerm{\matchingUpIn}{\big( \begin{aligned}[t]
							& \ReplicateInput{\matchingCoordinatorIn}{y', \sumLock_r, \receiverLock}.\left( \Match{y'}{y}\Output{\receiverLock}{\sumLock_r, \sumLock_s, \sumLock_s, \senderLock, z} \mid \Output{\matchingUpIn}{y', \sumLock_r, \receiverLock} \right)\\
							& \mid \RestrictedTerm{\matchingCoordinatorIn}{\left( \Forward{\matchingUpIn}{\matchingCoordinatorIn} \mid \Output{\coordinatorMatchingOut}{\matchingCoordinatorIn} \right)} \big)
						\end{aligned}\\
					& \hspace*{2em} \mid \Output{\coordinatorUpOut}{y, \sumLock_s, \senderLock, z}} \big)\\
				\processRightInputRequests & \deff \Output{\coordinatorMatchingIn}{\matchingCoordinatorOut} \mid \ReplicateInput{\coordinatorMatchingIn}{\matchingCoordinatorOut}.\Input{\parallelChannelIn}{y, \sumLock_r, \receiverLock}.\big(\\
					& \hspace*{2em} \RestrictedTerm{\matchingUpOut}{\big( \begin{aligned}[t]
							& \ReplicateInput{\matchingCoordinatorOut}{y', \sumLock_s, \senderLock, z}.\left( \Match{y'}{y}\Output{\receiverLock}{\sumLock_s, \sumLock_r, \sumLock_s, \senderLock, z} \mid \Output{\matchingUpOut}{y', \sumLock_s, \senderLock, z} \right)\\
							& \mid \RestrictedTerm{\matchingCoordinatorOut}{\left( \Forward{\matchingUpOut}{\matchingCoordinatorOut} \mid \Output{\coordinatorMatchingIn}{\matchingCoordinatorOut} \right)} \big)
						\end{aligned}\\
					& \hspace*{2em} \mid \Output{\coordinatorUpIn}{y, \sumLock_r, \receiverLock}} \big)
			\end{align*}
			In both cases immediately an instantiation of a chain lock carrying a request channel is unguarded. So the lemma holds, because its precondition is violated.
			
			Let us have a look at the remaining request channels. Within the left side of the encoding of a parallel operator there are two restricted different replicated inputs on requests channels. Note that, the encoding function places all inputs or replicated inputs under restriction, i.e., for all source terms their encoding has no inputs or replicated inputs on free names. Because of that, for the request channels restricted at the left side of the parallel operator encoding there is exactly one replicated input and no other input. Thus it does not matter how many other steps on the same request channel may appear within the sequence $ T \steps T_2 $, the step $ T \pureAdminStep T_1 $ can be performed before or after that sequence and in both cases the same term $ T_3 $ is reached.
			
			Beside the already considered possibly conflicting input, there are two different replicated inputs on request channels and no other input on request channels within \processRightOutputRequests. The link of the first is bounded by a guarding replicated input on a chain lock. The link of the second is restricted. So we can again apply the argumentation of the case before. The same applies to \processRightInputRequests. The two links of replicated inputs in \pushRequests \ and the four links of replicated inputs \pushRequestsOut \ are restricted, apart from that, this case is similar to the cases before. In opposite, in case of \pushRequestsIn \ the links of both replicated inputs are bounded by a guarding input, apart from that, this case is similar to the cases before.
		\item[Case of $ \coordinatorMatchingOut, \coordinatorMatchingIn $:] These names is used as chain locks carrying a request channel (compare to Definition \ref{def:chainLock}). Those chain locks are used by the encoding function to direct the combinations of left and right requests in the encoding of a parallel operator as well as in the encoding of a replicated input. In order to avoid divergence, requests can not be copied arbitrary often. To ensure that indeed each left request is combined which each possible matching right request, the right requests are linked within some kind of chain or list, along which the left requests are forwarded. Again to avoid divergence these chains can not be infinitely long, so $ \coordinatorMatchingOut $ and $ \coordinatorMatchingIn $ allow to extent these chains by a new right request as soon as its last place is occupied. Since these names are generated under restriction, for each of them there is exactly one replicated input and no other input. Because of that it does not matter how many other steps on the same names may appear within the sequence $ T \steps T_2 $, the step $ T \pureAdminStep T_1 $ can be performed before or after that sequence and in both cases the same term $ T_3 $ is reached.
		\item[Case of $ \sumLock, \sumLock_s, \sumLock_r, \sumLock_1, \sumLock_2 $:] Any of these names refer to a sum lock in the encoding given in Figure \ref{fig:encodingMixAsyn}. By Definition \ref{def:pureImpureAdminStep} steps on sum locks are no \pure \admin steps.
		\item[Case of $ \senderLock $:] $ \senderLock $ is used by the encoding function to introduce sender locks. For steps on sender locks it suffice to repeat the argumentation given in the proof of Lemma \ref{lem:nonConflictingStepsSepAsyn} for sender locks.
		\item[Case of $ \receiverLock $:] In case of receiver lock, since they are generated under restriction, for each receiver lock there is exactly one replicated input and no other input. Because of that it does not matter how many other steps on (the same) receiver lock may appear within the sequence $ T \steps T_2 $, the step $ T \pureAdminStep T_1 $ can be performed before or after that sequence and in both cases the same term $ T_3 $ is reached.
		\item[Case of $ \coordinatorRepA $:] $ \coordinatorRepA $ again is a chain lock, this time carrying a translated source term name. Intuitively, it is used by the encoding function for a purpose similar to the usage of chain locks carrying a request channel. Instead of a chain of right requests, $ \coordinatorRepA $ as well as $ \coordinatorRepB $ are used to build up a some kind of chain of the encoded continuations of several reductions on the same encoded replicated input. As already explained our encoding $ \encodingMixAsyn $ relies on the structure with is build by the parallel operators in the corresponding source term. Each reduction of a replicated input changes this structure. To allow for different encoded continuations to communicate among each other or with the encoded replicated input we link the encodings of the continuations. This time we have to add a new member, i.e., an encoded continuation, to the chain whenever the encoded replicated input is used to \simulate a step. Therefore $ \coordinatorRepA $ is instantiated within the then-case of the test-statement in the encoding of a replicated input. Since $ \coordinatorRepA $ is generated under restriction, for each $ \coordinatorRepA $ there is exactly one replicated input and no other input. Because of that, it does not matter how many other steps on the same name may appear within the sequence $ T \steps T_2 $, the step $ T \pureAdminStep T_1 $ can be performed before or after that sequence and in both cases the same term $ T_3 $ is reached.
		\item[Case of $ \coordinatorRepB $:] To link the members in the chain of right requests for each new member a new $ \matchingCoordinatorOut $ or $ \matchingCoordinatorIn $ is restricted and transmitted over $ \coordinatorMatchingOut $ or $ \coordinatorMatchingIn $. The encoded continuations of a replicated input are linked over $ \matchingReceiverIn $ and $ \matchingReceiverOut $, which are again restricted for each encoded continuation. $ \coordinatorRepB $ is used to transmit these new restricted names to the respective next member of the chain. Note that, this kind of link is generated always under restriction. Initially there is exactly one unguarded output and one input, guarded by a replicated input on a chain lock. Reducing this input immediately unguards some instantiations of chain locks, so the lemma holds, because its precondition is violated. Also note that, due to several \simulations on the encoded replicated input, there may be several unguarded inputs on $ \coordinatorRepB $. The order in which these inputs are consumed determines the order of the encoded continuations within the constructed chain. Because of that steps on $ \coordinatorRepB $ can be indeed conflicting.
		\item[Case of $ y, y', z $:] These names are used by the encoding function as values only, but never as links. So there are no (\pure \admin) steps on these names.
		\item[Case of $ t, f $:] These names are used only to implement the test-statements and the instantiation of sum locks. The reduction of a test-statement is performed in two steps. The first consumes the instantiation of a sum lock and is thus not a \pure \admin step. For the second one|necessary to unguard the then- or else-case of a test-statement|it suffice to repeat the argumentation given in the proof of Lemma \ref{lem:nonConflictingStepsSepAsyn} for these kind of steps.
	\end{description}
	\qed
\end{proof}

Note that the proof above provides a detailed explanation of the purposes of the names reserved by the encoding function $ \encodingMixAsyn $. Moreover note that the \pure \admin steps that are conflicting are exactly the steps that introduce additional causal dependencies (compare to \cite{petersSchickeNestmann11} and thus prevent the preservation of the degree of distribution (compare to Section 4 in \cite{petersNestmann12}). In the following we strongly rely on the Lemmata \ref{lem:nonConflictingStepsSepAsyn} and \ref{lem:nonConflictingStepsMixAsyn}, because they basically allow us to ignore all not conflicting steps while considering the reachability of success or translated observables in the next section.

In order to show that the remaining \pure \admin steps do not cause any problems but in fact, as described in the proof of Lemma \ref{lem:nonConflictingStepsMixAsyn}, do only influence the order of right requests in the chain of right request or the order of encoded continuations of a replicated input generated by its encoding, we prove that those steps never cause any deadlock. Note that, \cite{nestmann00} proves that the encoding $ \encodingSepAsyn $ does not introduce deadlocks.

\begin{lemma} \label{lem:pureAdminStepsNoDeadlock}
	\Pure \admin steps do not introduce deadlocks.
\end{lemma}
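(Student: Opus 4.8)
The plan is to reduce the statement to the conflicting pure admin steps isolated in the proof of Lemma \ref{lem:nonConflictingStepsMixAsyn}, since the non-conflicting ones may, by that lemma, always be postponed or advanced without affecting what can still be reached. For $ \encodingSepAsyn $ there is nothing new to do: by Lemma \ref{lem:nonConflictingStepsSepAsyn} every pure admin step of $ \encodingSepAsyn $ is non-conflicting, and the absence of deadlocks for $ \encodingSepAsyn $ is already established in \cite{nestmann00}. Hence the whole argument concerns $ \encodingMixAsyn $, where the proof of Lemma \ref{lem:nonConflictingStepsMixAsyn} has already located the only two sources of conflicting pure admin steps: firstly the first input on a request in \processRightOutputRequests\ or \processRightInputRequests, which upon firing immediately unguards an instantiation of a chain lock on $ \coordinatorMatchingOut $ or $ \coordinatorMatchingIn $ carrying a fresh request channel, and secondly the inputs on $ \coordinatorRepB $ used to append an encoded continuation to the chain built by the encoding of a replicated input.

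First I would make precise what it means for such a step to \emph{introduce a deadlock}: reaching, through one of these conflicting pure admin steps, a configuration in which some request can no longer be forwarded to a matching partner, or some encoded continuation can no longer be appended to its chain, i.e.\ a configuration that permanently rules out a pairing that was still possible beforehand. I would then establish the governing loop invariant of the chain-building machinery: along any reduction sequence, for each chain lock there is always one instantiation available, or one that becomes available as soon as the request currently being processed is consumed. This holds because each iteration of the loop consumes one chain-lock instantiation together with one request and, via the restricted copy $ \RestrictedTerm{\matchingCoordinatorIn}{\left( \Forward{\matchingUpIn}{\matchingCoordinatorIn} \mid \Output{\coordinatorMatchingOut}{\matchingCoordinatorIn} \right)} $ (and symmetrically for the input side and for the $ \coordinatorRepA $/$ \coordinatorRepB $ construction), restores a fresh instantiation of the same chain lock carrying a new restricted link. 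Combined with Lemma \ref{lem:encodingMixAsynPreserveRequests}, which guarantees that no request is ever lost, and with Corollary \ref{col:originRequests}, this yields that whenever the precondition for a chain-linking step is met the step can be taken and leaves the loop ready to process the next request.

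Given the invariant, I would argue that the conflicting steps fix only the \emph{order} of the elements in the chain and never the \emph{set} of pairings that remain checkable: since each chain lock carries a persisting replicated input, and the forwarders $ \Forward{\matchingUpIn}{\matchingCoordinatorIn} $ propagate every left request to every link appended to the chain, every left request will still be combined with every matching right request regardless of the order in which the right requests were linked; the analogous statement holds for the encoded continuations linked through $ \coordinatorRepA $ and $ \coordinatorRepB $. Consequently, performing one of these steps in place of a conflicting alternative can never rule out a still-possible combination, and so introduces no deadlock.

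I expect the main obstacle to be the simultaneous formalisation and proof of this loop invariant for the two distinct chain constructions, and in particular showing that a fresh chain-lock instantiation is always produced before it is needed, so that the producer–consumer loop can never stall with pending but unprocessable requests. The nested restriction and forwarder structure inside \processRightOutputRequests, \processRightInputRequests, and the replicated-input encoding makes this bookkeeping delicate, and the interplay between the two chains (right requests versus encoded continuations) in the replicated-input case will demand the most care.
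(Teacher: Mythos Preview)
Your proposal is correct and follows essentially the same approach as the paper: both reduce to the non-conflicting cases via Lemma~\ref{lem:nonConflictingStepsMixAsyn} and then handle the two remaining conflicting cases (the first request input in \processRightOutputRequests/\processRightInputRequests\ and the $\coordinatorRepB$ inputs) by arguing that these steps only fix the order of chain members while the chain-lock mechanism guarantees another input is re-enabled, so the blocking is temporary. The paper's argument is somewhat more direct and less formally packaged---it simply observes case by case that the unguarded chain-lock instantiation, once reduced (which Lemma~\ref{lem:nonConflictingStepsMixAsyn} ensures eventually happens), unguards another input on the same request channel (respectively another output on $\coordinatorRepB$), so no request or encoded continuation is permanently blocked---whereas you frame the same observation as an explicit producer--consumer loop invariant; but the underlying reasoning is identical.
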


\begin{proof}
	In case of a step on a sender lock or a step which does not unguard an instantiation of a chain lock, this lemma directly follows by Lemma \ref{lem:nonConflictingStepsMixAsyn}.
	
	In case of a step on a request channel which does unguard an instantiation of a chain lock (compare to the first case of the proof of Lemma \ref{lem:nonConflictingStepsMixAsyn}), a step on the unguarded chain lock unguards another input on the same request channel. Moreover the corresponding continuations differ only on a single name free to that continuation. Let us denote the first such continuation by $ A $ and the second by $ B $, and the free link name by $ p $. Then it turns out, that $ B $ is connected to $ A $ by $ p $ and either $ A $ is connected in a similar way to another such continuation or $ A $ is connected to the requests from the left. The only requests travelling along $ p $ are left requests. However as soon as they arrive $ A $ they are copied and transmitted to $ B $. Because of that the order of $ A $ and $ B $ in that chain does not matter (at least as long as we do not consider causal dependencies). Of course, the step on the chain lock unguarding $ B $ is not forced to be performed immediately after the step on the request channel, but by Lemma \ref{lem:nonConflictingStepsMixAsyn} it will eventually happen. So the step on the request channel blocks alternative steps on this request channel for some time but not for ever, i.e., it does not introduce deadlock.
	
	In case of a step on a channel of multiplicity two|let us denote it $ c $|which does unguard an instantiation of a chain lock (compare to case $ \coordinatorRepB $ of the proof of Lemma \ref{lem:nonConflictingStepsMixAsyn}), immediately an other output on $ c $ is unguarded. As in the case before by the communication on the chain lock, the communication over $ c $ links the encodings of the continuations of the respective replicated input, in this case by two request channels. Apart from that, the situation is exactly the same as before. If there is an other encoded continuation of that replicated input|caused by another \simulation\!\!|it will eventually be linked within the chain and all requests that arrive at a previous member of that chain are forwarded to that member. Note that the encoded continuations initially are all equal. So the chains resulting from different orders of two members, that were available at the same time, are equal. We conclude that a step on $ c $ can not introduce deadlock.
	\qed
\end{proof}

\begin{lemma} \label{lem:impureAdminStepsNoDeadlock}
	\Impure \admin steps do not introduce deadlocks.
\end{lemma}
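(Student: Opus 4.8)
The plan is to classify the impure admin steps by Definition~\ref{def:pureImpureAdminStep} and to treat each class separately. An impure admin step is an admin step either on a sum lock or on a translated source term name; since in $\encodingMixAsyn$ translated source term names occur only as values and never as links, steps on translated source term names arise solely in $\encodingSepAsyn$. Among the steps on sum locks, the consumption of a \emph{positive} instantiation within a single test-statement or within the second test of a nested test-statement is by Definition~\ref{def:nonAdminStep} a \nonAdmin step and hence not covered here. What remains are (a) consumptions of \emph{negative} instantiations of sum locks, (b) the consumption of a positive instantiation in the \emph{first} test of a nested test-statement, and (c)---for $\encodingSepAsyn$ only---communications on translated source term names.

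For case~(a) deadlock-freedom is immediate: revisiting the test-statements of Figure~\ref{fig:encodingSepAsyn} and Figure~\ref{fig:encodingMixAsyn} as in the proof of Lemma~\ref{lem:changeInstantiationSumLock}, every else-case merely restores all consumed instantiations (and, in $\encodingSepAsyn$, re-emits the receiver lock and, where appropriate, the output request). Such a step therefore only consumes junk attached to an already dead observable and returns the system to a state with the same set of live instantiations, so it blocks no subsequent step.

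The delicate case is~(b), which is precisely the step highlighted in Example~\ref{exa:intermediateStates}: grabbing the positive instantiation of the first tested lock commits that lock to one test-statement and may rule out a conflicting \simulation. I would first stress the distinction between ruling out a conflicting \simulation---which is intended and is \emph{not} a deadlock---and genuinely stalling the computation. After the first lock is consumed, the statement tests the second lock; if it is still positive the computation continues into the \nonAdmin step (then-case), and if it is negative the else-case restores the first lock positive. The only remaining danger is a circular wait in which several nested test-statements each hold a first lock while waiting for a second lock held by another. Here I would invoke the total ordering of sum locks implemented by $\encodingMixAsyn$---established in the proof of Lemma~\ref{lem:instantiationSumLocks} via \cite{nestmann00} together with the ordering of locks along the parallel nesting (compare Lemma~\ref{lem:parametersReceiverLock}, where the two tested locks are placed as first and second parameter according to this order). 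Since every nested test-statement acquires its two locks in increasing order of this total ordering, no cyclic dependency among held locks can arise. Hence whenever a positive first instantiation is grabbed, it is eventually either consumed by a \nonAdmin step or restored positive, and by Corollary~\ref{col:instantiationsSumLocksAreRestored} each sum lock is re-instantiated; no deadlock results.

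For case~(c) I would reuse the deadlock argument of \cite{nestmann00}: a communication on a translated source term name feeds an output request into the replicated input guarding a receiver's nested test-statement, and a failed test never loses a still-usable request. If the receiver's lock is negative (outer else-case) the request is re-emitted for another receiver, and if the sender's lock is negative (inner else-case) the consumed request refers by Lemma~\ref{lem:changeInstantiationSumLock} to a permanently dead sender and is safely discarded while the receiver lock is restored. Thus every live request remains available. The main obstacle throughout is case~(b): making precise that committing the first lock of a nested test removes only an \emph{alternative} \simulation rather than halting progress, which rests entirely on the total ordering of sum locks ruling out circular waits.
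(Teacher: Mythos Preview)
Your proposal is correct and rests on the same key idea as the paper: circular waits among nested test-statements are impossible because the encoding imposes a total order on sum locks (always testing the ``left'' lock first along the binary tree induced by the parallel nesting), which is exactly the argument the paper gives. Your version is more finely structured---the paper's proof is a single short paragraph treating only $\encodingMixAsyn$ and deferring $\encodingSepAsyn$ entirely to \cite{nestmann00}, whereas you explicitly separate cases (a)--(c) and spell out the $\encodingSepAsyn$ case---but the substance is the same.
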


\begin{proof}
	In $ \encodingMixAsyn $ any \impure \admin step reduces a test-statement by consuming an instantiation of a sum lock (compare to Definition \ref{def:pureImpureAdminStep}). Deadlock occurs, if|due to interleaving of these test-statements|the instantiations of a subset of the sum locks are consumed, such that none of the involved test-statements can be resolved. As already stated in \cite{nestmann00} a total ordering on the sum locks suffice to circumvent any potential deadlock. Note that the encodings of parallel operator and replicated input implement such a total ordering on sum locks. They somehow reuse the structure generated by the parallel operators of the corresponding source term to force the nested test-statements to always test the lock first, which is according to that parallel structure left to the other one. Since the parallel operator is binary, this structure is a binary tree. So testing always the left lock first, indeed implements a total ordering.
	\qed
\end{proof}

Note that, by Lemma \ref{lem:simulationVSNonAdminStep} at page \pageref{lem:simulationVSNonAdminStep}, \nonAdmin steps and the source term steps of the corresponding \simulations coincide. So Lemma \ref{lem:simulationVSNonAdminStep} in combination with the Lemma \ref{lem:pureAdminStepsNoDeadlock} and Lemma \ref{lem:impureAdminStepsNoDeadlock} proves that the encoding $ \encodingMixAsyn $ does not introduce deadlocks.

\subsection{Translated Observables and Choosing a Bisimulation} \label{sec:transBarbBisim}

In order to prove the presented encodings correct with respect to the criteria of Gorla we have to choose an equivalence $ \asymp $ for operational correspondence (compare to Definition \ref{def:operationalCorrespondence}). In \cite{gorla10} Gorla describes $ \asymp $ as follows:
\begin{quotation}
	``$ \asymp $ is a behavioural equivalence needed to describe the abstract behaviour of a process. Usually, $ \asymp $ is a congruence at least with respect to parallel composition; it is often defined in the form of a barbed equivalence or can be derived directly from the reduction semantics.''
\end{quotation}
Moreover, by the criteria in Section \ref{sec:qualityCriteria}, we know that $ \asymp $ should be success respecting (compare to Definition \ref{def:successRespecting}). The main purpose of $ \asymp $ in the definition of operational correspondence is to abstract from junk, i.e., remains left over by former \simulations that do not influence the abstract behaviour of a target term. Usually, two kinds of junks are distinguished inactive junk, i.e., remains that neither can perform further reductions on its own nor interact with the surrounding target term, and active junk, i.e., remains that may by reduced or even interact with the surrounding target term. Of course, proving an encoding to be good requires to prove that its active junk does no harm, i.e., does not influence the abstract behaviour of the target term. However, the presented encodings $ \encodingSepAsyn $ and $ \encodingMixAsyn $ induce the consideration of a second dimension of junk, namely observable and inobservable junk. In most cases developers of encodings make sure that all produced junk is inobservable, i.e., using the standard notions of observables for the target language neither the steps on junk nor the junk itself is observable. Unfortunately, as for the presented encodings, it is not always possible to define the encoding function such that all produced junk is inobservable.

In the \piCal-calculus observables are usually defined to be the unguarded input or output guards of a term, whose channel name is not restricted (compare to Definition \ref{def:barb}). To encode sums both encodings split up the summands into parallel. Of course, while doing so, the information which of these summands originally belongs to the same sum gets lost. To recover it, the encodings introduce sum locks, which cover a boolean value to indicate whether the respective summands of that sum can still be used to complete an \simulation or whether a former \simulation already consume one of the summands and thus no other can be used any more. Thus the encoding functions translate a source term observable into an observable|in case of $ \encodingSepAsyn $|or a request|in case of $ \encodingMixAsyn $|both times augment with the information covered by the sum lock. So source term observables are not translated into single observables again.

\begin{definition}[Translated Observables] \label{def:transBarb}
	Let $ T \in \targetTermsSepAsyn $. Then $ T $ has a \emph{translated input observables} $ y $, denoted by $ T\TransBarbSepAsyn{\In{y}} $, if
	\begin{align*}
		& \exists T', T_1', T_2' \in \piAsynProc \logdot \exists \tilde{x} \subset \names \logdot \exists \sumLock, \sumLock', \senderLock, z \in \names \logdot\\
		& \hspace*{1em} \left( T \equiv \RestrictedTerm{\tilde{x}}{\left( T' \mid \Input{y}{\sumLock', \senderLock, z}.\Test{\sumLock}{T_1'}{T_2'} \mid \Output{\sumLock}{\true} \right)} \vee T \equiv \RestrictedTerm{\tilde{x}}{\left( T' \mid \ReplicateInput{y}{\sumLock', \senderLock, z} \right)} \right) \wedge y \notin \tilde{x}
	\end{align*}
	and $ T $ has a \emph{translated output observables} $ y $, denoted by $ T\TransBarbSepAsyn{\Out{y}} $, if
	\begin{align*}
		\exists T' \in \piAsynProc \logdot \exists \tilde{x} \subset \names \logdot \exists \sumLock, \senderLock, z \in \names \logdot T \equiv \RestrictedTerm{\tilde{x}}{\left( T' \mid \Output{y}{\sumLock, \senderLock, z} \mid \Output{\sumLock}{\true} \right)} \wedge y \notin \tilde{x}.
	\end{align*}
	Let $ T \in \targetTermsMixAsyn $. Then $ T $ has a \emph{translated input observables} $ y $, denoted by $ T\TransBarbMixAsyn{\In{y}} $, if
	\begin{align*}
		\exists T' \in \piAsynProc \logdot \tilde{x} \subset \names \logdot \exists \parallelChannelIn, \sumLock, \receiverLock \in \names \logdot T \equiv \RestrictedTerm{\tilde{x}}{\left( T' \mid \Output{\parallelChannelIn}{y, \sumLock, \receiverLock} \mid \Output{\sumLock}{\true} \right)} \wedge y \notin \tilde{x}
	\end{align*}
	and $ T $ has a \emph{translated output observables} $ y $, denoted by $ T\TransBarbMixAsyn{\Out{y}} $, if
	\begin{align*}
		\exists T' \in \piAsynProc \logdot \tilde{x} \subset \names \logdot \exists \parallelChannelOut, \sumLock, \senderLock, z \in \names \logdot T \equiv \RestrictedTerm{\tilde{x}}{\left( T' \mid \Output{\parallelChannelOut}{y, \sumLock, \senderLock, z} \mid \Output{\sumLock}{\true} \right)} \wedge y \notin \tilde{x}.
	\end{align*}
	Moreover for some input or output observable $ \mu $ we define $ T\WeakTransBarbSepAsyn{\mu} \deff \exists T' \in \piAsyn \logdot T \steps T' \wedge T'\TransBarbSepAsyn{\mu} $ and $ T\WeakTransBarbMixAsyn{\mu} \deff \exists T' \in \piAsyn \logdot T \steps T' \wedge T'\TransBarbMixAsyn{\mu} $.
\end{definition}

Note that for all target terms $ T \in \targetTermsMixAsyn $ any output with three parameters is an input request and any output with four parameters is an output request. So a simple typing suffice to securely identify requests. Since requests already contain a reference to their related sum lock, the identification of the related sum lock instantiation is unambiguous as well. The condition $ y \notin \tilde{x} $ is necessary to rule out translated observables that corresponds to invisible in- or outputs of the source term (compare to Definition \ref{def:barb} at page \pageref{def:barb}). To show that the notion of translated observables indeed captures our intuition we prove that the set of observables reachable for a source term coincides with the set of translated observables reachable for its encoding.

\begin{lemma}
	The set of reachable observables of a source term and of reachable translated observables of its encoding coincide, i.e.,
	\begin{align*}
		\forall S \in \piSepProc \logdot \forall \mu \in \names \cup \coNames \logdot S\WeakBarb{\mu} \text{ iff } \; \EncodingSepAsyn{S}\WeakTransBarbSepAsyn{\mu}
	\end{align*}
	and
	\begin{align*}
		\forall S \in \piMixProc \logdot \forall \mu \in \names \cup \coNames \logdot S\WeakBarb{\mu} \text{ iff } \; \EncodingMixAsyn{S}\WeakTransBarbMixAsyn{\mu}.
	\end{align*}
\end{lemma}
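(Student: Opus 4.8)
The plan is to prove the biconditional separately in each direction, treating $\encodingMixAsyn$ in detail; the case of $\encodingSepAsyn$ is entirely analogous and in fact simpler, since that encoding keeps source observables almost intact rather than turning them into requests (its input observable splits into the ordinary-input and the replicated-input case of Definition~\ref{def:transBarb}, handled the same way). Throughout I identify the target-side observable name with $\RenamingPolicyMixAsyn{\mu}$: the first parameter of every request is the translation of a source channel (cf.\ the remarks after Definition~\ref{def:request}), and since $\renamingPolicyMixAsyn$ is injective and avoids every reserved name, this is a clean bijection on $\names$. Rather than invoke operational correspondence, which is not yet available at this point and whose equivalence $\asymp$ is only being fixed in this very section, I argue directly from the structural machinery on requests and sum locks established above.

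For the direction $S\WeakBarb{\mu} \Rightarrow \EncodingMixAsyn{S}\WeakTransBarbMixAsyn{\mu}$, fix $S'$ with $S \steps S'$ and $S'\Barb{\mu}$. I first establish the \emph{static} claim that $S'\Barb{\mu}$ already yields the strong translated observable $\EncodingMixAsyn{S'}\TransBarbMixAsyn{\mu}$. An observable $\mu$ of $S'$ is witnessed by an unguarded guard with subject $\mu$ — a summand $\guard.Q$ of some top-level sum, or a replicated input — lying outside the scope of any restriction of $\mu$. Reading off Figure~\ref{fig:encodingMixAsyn}, the encoding of that guard contributes an unguarded request whose first parameter is $\RenamingPolicyMixAsyn{\mu}$, together with the positive instantiation $\Output{\sumLock}{\true}$ of the connected sum lock, which is positive in $\EncodingMixAsyn{S'}$ by Corollary~\ref{col:initialSumLocksArePositive}. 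Pulling the reserved restrictions (on the request channel, the sender or receiver lock, the sum lock, and the coordinator names) to the top by scope extrusion — no capture arises, as $\RenamingPolicyMixAsyn{\mu}$ differs from every reserved name — rewrites $\EncodingMixAsyn{S'}$ into exactly the shape demanded by Definition~\ref{def:transBarb}, with $\RenamingPolicyMixAsyn{\mu}$ unrestricted because $\mu \notin \tilde{x}$ in the source and $\renamingPolicyMixAsyn$ is injective. It then remains to bridge $S \steps S'$: I show by induction on the length of this reduction that each source step can be matched by a target computation that preserves the reachable translated observables, so $\EncodingMixAsyn{S} \steps T$ for some $T$ carrying all translated observables of $\EncodingMixAsyn{S'}$; with the static claim this gives $\EncodingMixAsyn{S}\WeakTransBarbMixAsyn{\mu}$.

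For the converse, suppose $\EncodingMixAsyn{S} \steps T$ with $T\TransBarbMixAsyn{\mu}$. Then $T$ contains an unguarded request with first parameter $\RenamingPolicyMixAsyn{\mu}$ and an unguarded positive instantiation $\Output{\sumLock}{\true}$ of its connected sum lock $\sumLock$. By Corollary~\ref{col:originRequests} this request stems from the encoding of a guard (a summand, or a replicated input) whose subject is $\mu$ and whose sum lock is $\sumLock$. The decisive point is that $\sumLock$ is \emph{positive} in $T$: by Lemma~\ref{lem:instantiationSumLocks} it is instantiated at most once, and by Lemma~\ref{lem:changeInstantiationSumLock} a negative instantiation can never revert, so $\sumLock$ must have been positive throughout $\EncodingMixAsyn{S} \steps T$. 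Hence the \simulation that would consume this summand's sum was never completed, i.e.\ no \nonAdmin step firing $\sumLock$ occurred. Projecting the computation $\EncodingMixAsyn{S} \steps T$ back to the source — each \nonAdmin step corresponds to a source step and the \admin steps to none — produces $S \steps S'$ along which the sum carrying this $\mu$-guard is never consumed, so it remains present and unguarded in $S'$; therefore $S'\Barb{\mu}$ and $S\WeakBarb{\mu}$. When the request arises from a replicated input the argument is even shorter, since a replicated input is a permanent barb of the source.

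The main obstacle is the backward direction, which is a soundness-style statement: I must turn the target computation reaching $T$ into a genuine source computation reaching the barb. The delicate feature, illustrated by the intermediate states of Example~\ref{exa:intermediateStates}, is that $T$ may result from partially executed, conflicting \simulations that have already driven \emph{other} sum locks negative; I must guarantee that these neither manufacture the observable $\mu$ out of junk nor destroy it. This is exactly what the positive-sum-lock clause of Definition~\ref{def:transBarb} controls: a request whose sum lock is negative is inactive junk (cf.\ Lemma~\ref{lem:junkRequestsOnFalseSumLocks}) and is excluded, while the monotonicity of Lemma~\ref{lem:changeInstantiationSumLock} certifies that a currently positive lock witnesses a still-available source guard. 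Making the target-to-source projection precise — identifying which \nonAdmin steps were taken and showing the untouched sum survives structurally in $S'$ — is the technically heaviest part, and it is here that the deadlock-freedom lemmata (Lemmata~\ref{lem:pureAdminStepsNoDeadlock} and~\ref{lem:impureAdminStepsNoDeadlock}) do the real work.
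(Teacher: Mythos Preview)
Your proposal is correct in outline but takes a genuinely different route from the paper. The paper's proof is a two-liner: it notes that the \emph{strong} observables of $S$ and the strong translated observables of $\EncodingMixAsyn{S}$ coincide (by Corollary~\ref{col:initialSumLocksArePositive} and inspection of Figure~\ref{fig:encodingMixAsyn}), and then derives the weak version directly from operational correspondence (Lemmata~\ref{lem:operationalCompletenessSepAsyn}, \ref{lem:operationalCompletenessMixAsyn}, \ref{lem:operationalSoundness}). A footnote explicitly observes that this lemma is only illustrative and is never used elsewhere, so the forward reference is harmless.

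Your worry about circularity is therefore unfounded in this paper's logical structure. More importantly, your attempt to avoid operational correspondence does not actually buy independence: the ``projection'' you sketch in the backward direction --- mapping \nonAdmin\ steps in $\EncodingMixAsyn{S} \steps T$ to source steps $S \steps S'$ and arguing the untouched sum survives in $S'$ --- \emph{is} the content of operational soundness together with Lemma~\ref{lem:simulationVSNonAdminStep}. Likewise, your forward-direction induction (``each source step can be matched by a target computation that preserves the reachable translated observables'') is operational completeness. So your proof is not wrong, but it inlines sketches of exactly the results the paper invokes by name; you have not found an elementary shortcut, only restated the same dependencies without citing them. The paper's approach is cleaner precisely because it names the dependency and discharges the circularity concern explicitly.
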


\begin{proof}
	By Corollary \ref{col:initialSumLocksArePositive} stating that initially all sum locks are instantiated positive and Figure \ref{fig:encodingMixAsyn} the set of observables of a source term $ S \in \piMixProc $ and the set of translated observables of $ \EncodingMixAsyn{S} $ coincide, i.e.,
	\begin{align*}
		\forall S \in \piMixProc \logdot \forall \mu \in \names \cup \coNames \logdot S\Barb{\mu} \text{ iff } \; \EncodingMixAsyn{S}\TransBarbMixAsyn{\mu}.
	\end{align*}
	In case of $ \encodingSepAsyn $ we obtain a similar result after reducing all instantiation of receiver locks, since they guard the respective inputs (compare to Figure \ref{fig:encodingSepAsyn}). The lemma then follows by operational correspondence, i.e. by the Lemmata \ref{lem:operationalCompletenessSepAsyn} at page \pageref{lem:operationalCompletenessSepAsyn}, \ref{lem:operationalCompletenessMixAsyn} at page \pageref{lem:operationalCompletenessMixAsyn},  and \ref{lem:operationalSoundness} at page \pageref{lem:operationalSoundness}\footnote{Note that we present this fact just to visualize our intuition. We do not use it within another proof.}.
	\qed
\end{proof}

The problem now is, that the completion of an \simulation changes positive instantiations of sum locks into negative ones and so obviously influences the translated observables, but the corresponding requests|in case of $ \encodingMixAsyn $|or in- and outputs of other summands|in case of $ \encodingSepAsyn $|remain as observable junk. While active junk often aggravates the proof of correctness of an encoding, due to intricate proofs to show that it does no harm, observable junk turns out to be even worse for an encoding, because it prevents for the use of standard equivalences to describe the abstract behaviour of a target term.

Since the target language is the asynchronous \piCal-calculus, it seems natural to choose weak asynchronous bisimilarity $ \weakAsynBisim $ or asynchronous barbed congruence $ \asynBarbCong $. Unfortunately for both choices the presented encodings are not good. Consider for example the source term $ S = \RestrictedTerm{x}{\left( \In{x} + \In{y} \mid \Out{x} \right)} $. It can perform a reduction to $ \nullTerm $. But, all derivatives of its encoding, i.e., all $ T \in \piAsynProc $ with $ \EncodingSepAsyn{S} \steps T $ or $ \EncodingMixAsyn{S} \steps T $, are neither asynchronous bisimilar nor synchronous barbed congruent to the encoding of $ \nullTerm $, i.e., $ T \not\weakAsynBisim \RestrictedTerm{\sumLock}{\left( \Output{\sumLock}{\true} \right)} $ and $ T \not\asynBarbCong \RestrictedTerm{\sumLock}{\left( \Output{\sumLock}{\true} \right)} $, where $ \EncodingSepAsyn{\nullTerm} = \RestrictedTerm{\sumLock}{\left( \Output{\sumLock}{\true} \right)} = \EncodingMixAsyn{\nullTerm} $. Note that this is not due to the encoding of $ \nullTerm $, which is indeed weak asynchronous bisimilar to $ \nullTerm $ again, but to the observable junk, which suffice to distinguish the remains of \simulations from $ \nullTerm $. Because of this, a proof of the correctness of these encodings with respect to $ \weakAsynBisim $ or $ \asynBarbCong $ fails due to operational correspondence (compare to the Definition \ref{def:operationalCorrespondence}). Of course, you might argue that an encoding that can not get rid of observable junk is no good encoding. On the other side, Nestmann in \cite{nestmann00} gives some good reasons to accept $ \encodingSepAsyn $ as a good encoding. Moreover, the translation of observables into something different seems to be a quite natural manner of encoding functions. And indeed rephrasing a standard equivalence to take instead of observables translated observables into account suffice to turn it into an equivalence that describes the abstract behaviour of encoded terms. The same holds if we do not consider observables at all, but e.g. only reachability of success.

Note that to test a sum lock it has to be consumed first. Analysing the encoding function we observe that in each case an instantiation of sum lock is consumed, another instantiation of that lock is restored as soon as the respective test statement is completed. However, since there may lay many steps between the start and the completion of a test statement, instantiations of locks may temporally not be available. Because of that, we will use the notion of $ P\WeakTransBarbMixAsyn{\mu} $ instead of $ P\TransBarbMixAsyn{\mu} $ in the following.

\begin{definition}[Translated Barbed Bisimilarity] \label{def:transBarbBisim}
	Let $ P, Q \in \piAsynProc $. Then $ P $ and $ Q $ are \emph{translated barbed bisimilar} with respect to $ \encodingSepAsyn $, denoted by $ P \transBarbBisimSepAsyn Q $, if
	\begin{enumerate}
		\item $ P\reachSuccess $ iff $ Q\reachSuccess $,
		\item for all $ \mu \in \names \cup \coNames $, $ P\WeakTransBarbSepAsyn{\mu} $ iff $ Q\WeakTransBarbSepAsyn{\mu} $,
		\item for all $ P' \in \piAsynProc $, $ P \step P' $ implies $ Q \steps \transBarbBisimSepAsyn P' $, and
		\item for all $ Q' \in \piAsynProc $, $ Q \step Q' $ implies $ P \steps \transBarbBisimSepAsyn Q' $.
	\end{enumerate}
	And $ P $ and $ Q $ are \emph{translated barbed bisimilar} with respect to $ \encodingMixAsyn $, denoted by $ P \transBarbBisimMixAsynA Q $, if
	\begin{enumerate}
		\item $ P\reachSuccess $ iff $ Q\reachSuccess $,
		\item for all $ \mu \in \names \cup \coNames $, $ P\WeakTransBarbMixAsyn{\mu} $ iff $ Q\WeakTransBarbMixAsyn{\mu} $,
		\item for all $ P' \in \piAsynProc $, $ P \step P' $ implies $ Q \steps \transBarbBisimMixAsynA P' $, and
		\item for all $ Q' \in \piAsynProc $, $ Q \step Q' $ implies $ P \steps \transBarbBisimMixAsynA Q' $.
	\end{enumerate}
\end{definition}

Note that the first condition of each equivalence ensures that it is success respecting as required in Section \ref{sec:qualityCriteria} by Definition \ref{def:successRespecting}. Conditions 2. to 4. than define a version of weak barbed bisimilarity which utilises translated observables instead of standard barbs. Note that we consider the translation of input as well as output observables, although our target language is asynchronous. However, since in case of $ \encodingMixAsyn $ both kinds of source term actions are translated into requests and instantiations of sum locks, i.e., into outputs, the presented kind of barbed bisimilarity does consider barbs on outputs only. So it is an asynchronous variant of barbed bisimulation. Moreover note, that due to the definition above we do not consider any barbs except for translated barbs. However, analysing the encoding function $ \encodingMixAsyn $, we observe, that for all target terms all free in- or outputs are on the request channels $ \parallelChannelIn $ and $ \parallelChannelOut $. So, since we do only consider target terms, requests are indeed the only interesting barbs of $ \encodingMixAsyn $.

Alternatively, we could decide not to consider barbs at all, by omitting the second condition of $ \transBarbBisimSepAsyn $ and $ \transBarbBisimMixAsynA $. We result then in an equivalence that considers only reachability of $ \success $ as abstract behaviour of a term. Note that this intuition goes along very well with the criteria defined by Gorla as they also do only require a similar reachability of $ \success $, because reachability of success is defined independent of a specific source or target language. An advantage of such an equivalence is the fact, that it is independent of the considered encoding function. However, the resulting equivalence is obviously strictly weaker then $ \transBarbBisimSepAsyn $ and $ \transBarbBisimMixAsynA $. Moreover, $ \transBarbBisimSepAsyn $ and $ \transBarbBisimMixAsynA $ much better describe how the encoding function proceeds source terms and \simulate source term steps. So we will use these equivalences in the following.

In case of $ \encodingMixAsyn $ we are faced with an other problem concerning the choice of an appropriate equivalence, although that problem is by far not that crucial as observable junk. As explained above, the encodings of structural congruent source terms can differ in the number and nature of reachable intermediate states (compare to Example \ref{exa:intermediateStates}, Definition \ref{def:intermediateState}, and the following discussion above). Operational Soundness explicitly allows for intermediate states, i.e., target term states that due not map to the encodings of any of the corresponding source terms. However, if $ \asymp $ does distinguish target terms by reachability of intermediate states, we have a problem with the \textsc{Cong} rule of Figure \ref{fig:concurrentReductionSemantics} and operational completeness of Definition \ref{def:operationalCorrespondence}. Let us consider the source terms $ S = \left( \In{a}.S_1 + \In{a}.S_2 \right) \mid \Out{a}.S_3 \mid \In{a}.S_4  $ and $ S' = \Out{a}.S_3 \mid \left( \In{a}.S_1 + \In{a}.S_2 \right) \mid \In{a}.S_4 $ again. The source term $ \In{b}.S \mid \Out{b} $ can reduce to $ S $ but by the \textsc{Cong} rule it can reduce to $ S' $ as well. $ \encodingMixAsyn $ can \simulate the first step modulo $ \transBarbBisimMixAsynA $ but not the second step. Note that the \textsc{Cong} rule is used to shorten the presentation of the reduction semantics, but it is neither necessary nor was it the originally choice. So the most natural way to circumvent this problem is to rephrase the rules of the reduction semantics by avoiding the \textsc{Cong} rule and with it the possibility to arbitrary reorder the subprocesses during reductions. However we can also circumvent this problem by using an equivalence which does not distinguishes terms by the reachability of intermediate states.

\begin{definition}[Translated Barbed Bisimilarity] \label{def:transBarbBisimB}
	Let $ P, Q \in \piAsynProc $. Then $ P $ and $ Q $ are \emph{translated barbed bisimilar} with respect to $ \encodingMixAsyn $, denoted by $ P \transBarbBisimMixAsynB Q $, if
	\begin{enumerate}
		\item $ P\reachSuccess $ iff $ Q\reachSuccess $,
		\item for all $ \mu \in \names \cup \coNames $, $ P\WeakTransBarbMixAsyn{\mu} $ iff $ Q\WeakTransBarbMixAsyn{\mu} $,
		\item for all $ P' \in \piAsynProc $, $ P \steps P' $ implies that there exists some $ P'' \in \piAsynProc $ such that $ Q \steps \transBarbBisimMixAsynB P'' $ and $ P' \steps P'' $, and
		\item for all $ Q' \in \piAsynProc $, $ Q \steps Q' $ implies that there exists some $ Q'' \in \piAsynProc $ such that $ P \steps \transBarbBisimMixAsynB Q'' $ and $ Q' \steps Q'' $.
	\end{enumerate}
\end{definition}

Note that the second version of $ \transBarbBisimMixAsyn $ is strictly weaker then the first version and that we only use it to circumvent the problem with the \textsc{Cong} rule in operational completeness (and the therefore necessary Lemma \ref{lem:preservesSCModuloTransBarbBisimMixAsyn}). Because of this we prove the remaining results using the stricter equivalence; silently omitting the subscript $ 1 $.

Before we use these relations, we prove that they are indeed equivalences.

\begin{lemma} \label{lem:transBarbBisimIsEquivalence}
	All presented translated barbed bisimulations are equivalence relations.
\end{lemma}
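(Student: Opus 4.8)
The plan is to use the standard coinductive reading of the definitions: each of the three relations $\transBarbBisimSepAsyn$, $\transBarbBisimMixAsynA$, and $\transBarbBisimMixAsynB$ is the largest relation on $\piAsynProc \times \piAsynProc$ satisfying its four defining clauses, equivalently the union of all relations $R$ meeting those clauses (call such an $R$ a \emph{bisimulation} for the respective notion). Let $\mathcal{B}$ denote any one of the three bisimilarities. To establish that $\mathcal{B}$ is an equivalence it then suffices to prove the three closure facts: the identity relation is a bisimulation, the inverse $R^{-1}$ of a bisimulation is a bisimulation, and the composition $R_1 \circ R_2$ of two bisimulations is a bisimulation. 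Reflexivity, symmetry, and transitivity of $\mathcal{B}$ follow because $\mathcal{B}$ is the greatest such relation, so $\mathcal{I} \subseteq \mathcal{B}$, $\mathcal{B}^{-1} \subseteq \mathcal{B}$, and $\mathcal{B} \circ \mathcal{B} \subseteq \mathcal{B}$.

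I would first dispatch reflexivity and symmetry, which are routine. For reflexivity, check that $\mathcal{I} = \Set{ (P, P) \mid P \in \piAsynProc }$ is a bisimulation: clauses (1) and (2) hold trivially since a process has exactly the same success behaviour and the same weak translated barbs as itself, and clause (3) (dually (4)) is met by answering a step $P \step P'$ with $P \steps P'$ together with $(P', P') \in \mathcal{I}$ (for $\transBarbBisimMixAsynB$ one takes the joining state $P'' = P'$, which satisfies $P' \steps P'$). Symmetry is immediate from the shape of the clauses: (1) and (2) are biconditionals, hence symmetric, while (3) and (4) are exact duals under swapping the two arguments; therefore $R^{-1}$ satisfies the same clauses whenever $R$ does.

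The substantive part is transitivity, for which I would show that $\mathcal{B} \circ \mathcal{B}$ is a bisimulation. Clauses (1) and (2) transfer along the composition because the biconditional is transitive. For the step clauses of $\transBarbBisimSepAsyn$ and $\transBarbBisimMixAsynA$ I would first isolate the auxiliary fact that single-step matching lifts to weak sequences, namely: if $Q \mathrel{\mathcal B} R$ and $Q \steps Q'$, then there is $R'$ with $R \steps R'$ and $Q' \mathrel{\mathcal B} R'$, proved by induction on the length of $Q \steps Q'$ using clause (3) at each step. Then, given $P \mathrel{\mathcal B} Q \mathrel{\mathcal B} R$ and a step $P \step P'$, clause (3) for the first pair yields $Q \steps Q'$ with $P' \mathrel{\mathcal B} Q'$, and the auxiliary fact applied to the second pair yields $R \steps R'$ with $Q' \mathrel{\mathcal B} R'$; using the already-established symmetry this places $(P', R')$ in $\mathcal{B} \circ \mathcal{B}$, as required.

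The hard part will be transitivity for $\transBarbBisimMixAsynB$ (Definition \ref{def:transBarbBisimB}), whose step clauses already employ $\steps$ on both sides together with a joining state. Here I expect to chase a larger diagram: a reachable state $P'$ must first be driven to the joining state $P''$ that $Q$ matches, then $Q$'s witnessing computation must be matched by $R$ via the weak-sequence form of the clause, and finally the two joining states must be reconciled by a further application of the clauses to the intermediate pair. This diamond-closing bookkeeping is routine in spirit but delicate; I would therefore prove the weak-sequence matching lemma once and reuse it, while clauses (1), (2) and the reflexivity/symmetry arguments for this variant go through exactly as above.
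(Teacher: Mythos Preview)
Your approach is correct and essentially the same as the paper's: reflexivity and symmetry are immediate from the symmetric form of the clauses, and transitivity is obtained by lifting single-step matching to weak sequences via induction on the length of the intermediate computation, exactly your auxiliary lemma. The paper is in fact terser than you are---it spells out only the transitivity argument for $\transBarbBisimMixAsynA$ and dismisses both $\transBarbBisimSepAsyn$ and $\transBarbBisimMixAsynB$ as ``similar''---so your explicit coinductive framing and your flagging of the extra diamond-closing work needed for $\transBarbBisimMixAsynB$ already go beyond what the paper provides.
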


\begin{proof}
	We have to show that $ \transBarbBisimSepAsyn $, $ \transBarbBisimMixAsynA $, and $ \transBarbBisimMixAsynB $ are reflexive, symmetric, and transitive. Reflexivity and transitivity follow directly by definition. For transitivity of $ \transBarbBisimMixAsynA $ assume $ P, Q, R \in \piAsynProc $ such that $ P \transBarbBisimMixAsynA Q $ and $ Q \transBarbBisimMixAsynA R $. By the first condition we have $ P\reachSuccess $ iff $ Q\reachSuccess $ iff $ R\reachSuccess $. And by the second condition for all $ \mu \in \names \cup \coNames $ we have $ P\WeakTransBarbMixAsyn{\mu} $ iff $ Q\WeakTransBarbMixAsyn{\mu} $ iff $ R\WeakTransBarbMixAsyn{\mu} $. So we can conclude that $ P\reachSuccess $ iff $ R\reachSuccess $ and $ P\WeakTransBarbMixAsyn{\mu} $ iff $ R \WeakTransBarbMixAsyn{\mu} $.
	
	By the third condition for all $ P' \in \piAsynProc $ with $ P \step P' $ there is some $ Q' \in \piAsynProc $ such that $ Q \steps Q' $ and $ P' \transBarbBisimMixAsynA Q' $. Without loss of generality let us assume, that the sequence $ Q \steps Q' $ is of length $ n $, i.e., there are $ Q_1, \ldots, Q_n \in \piAsynProc $ such that $ Q_0 \step Q_1 \step \ldots \step Q_n $, $ Q_0 = Q $, and $ Q_n = Q' $. Let $ R = R_0 $ and $ R' = R_n $. Then, by the third condition, for each step in $ Q \steps Q' $, i.e., for each $ Q_{i - 1} \step Q_i $ with $ 0 < i \leq n $, there is some $ R_i \in \piAsynProc $ such that $ R_{i - 1} \steps R_i $ with $ R_i \transBarbBisimMixAsynA Q_i $. So we conclude that for all $ P' \in \piAsynProc $ with $ P \step P' $ there is some $ R' \in \piAsynProc $ such that $ R \steps R' $ and $ P' \transBarbBisimMixAsynA R' $. The argumentation for the last condition is similar.
	
	The argumentation for $ \transBarbBisimSepAsyn $ and $ \transBarbBisimMixAsynB $ is similar.
	\qed
\end{proof}

The observable junk does not only rule out standard equivalences but also congruences with respect to contexts, that allow for interaction with observable junk. In both encodings such an interaction can for instance lead to a positive instantiation of a formerly negative instantiation of a sum lock and so turn observable junk into a translated observable, or it can instantiate a sender lock and so complete \simulations on junk.
\begin{example}
	Let us consider the target terms $ T_1 = \RestrictedTerm{\sumLock}{\left( \EncodingSepAsyn{\Out{y}.\success} \mid \Output{\sumLock}{\false} \right)} $ and $ T_2 = \RestrictedTerm{\sumLock}{\left( \EncodingMixAsyn{\Out{y}.\success} \mid \Output{\sumLock}{\false} \right)} $. By the Lemmata \ref{lem:junkRemainsOfSumsSepAsyn} and \ref{lem:junkRemainsOfSumsMixAsyn} in the next section, we prove that both terms are junk. They be produced as remains of \simulations (or a part of such a remain), e.g. for a source term $ \Out{x} + \Out{y}.\success \mid \In{x} $. Since neither $ T_1 $ nor $ T_2 $ reaches any translated observables or unguarded occurrence of $ \success $, we have $ T_1 \transBarbBisimSepAsyn \EncodingSepAsyn{\nullTerm} $ and $ T_2 \transBarbBisimMixAsyn \EncodingMixAsyn{\nullTerm} $. However, we can distinguish $ T_1 $ from $ \EncodingSepAsyn{\nullTerm} $ by the context $ \Context{}{1}{\hole} = \hole \mid \Input{\RenamingPolicySepAsyn{y}}{-, \senderLock, -}.\Out{\senderLock} $, because $ \Context{}{1}{T_1}\reachSuccess $ but $ \Context{}{1}{\nullTerm}\not\reachSuccess $. So $ \Context{}{1}{T_1} \notTransBarbBisimSepAsyn \Context{}{1}{\nullTerm} $. Accordingly, we can distinguish $ T_2 $ from $ \EncodingMixAsyn{\nullTerm} $ by the context $ \Context{}{2}{\hole} = \hole \mid \Input{\parallelChannelOut}{-, -, \senderLock, -}.\Out{\senderLock} $, where $ \parallelChannelOut $ is the free output request channel of $ \EncodingMixAsyn{\Out{y}.\success} $. Again we have $ \Context{}{2}{T_2}\reachSuccess $ but $ \Context{}{2}{\nullTerm}\not\reachSuccess $, i.e., $ \Context{}{2}{T_2} \notTransBarbBisimMixAsyn \Context{}{2}{\nullTerm} $.
\end{example}
Because of this, in order to prove operational completeness, we have to reduce the number of contexts we consider to obtain a congruence. Intuitively, we consider only contexts that respect the protocol of the encoding function. Thus, we consider only contexts that, if their argument is a target term as for instance the encoding of $ \nullTerm $, result in a target term.

\begin{definition}[Translated Barbed Congruence] \label{def:transBarbCong}
	Two terms $ T_1, T_2 \in \piAsynProc $ are \emph{translated barbed congruent} with respect to $ \encodingSepAsyn $, denoted as $ T_1 \transBarbCongSepAsyn T_2 $, if
	\begin{align*}
		\forall \Context{}{}{\hole} \in \piAsynProc \to \piAsynProc \logdot \Context{}{}{\EncodingSepAsyn{\nullTerm}} \in \targetTermsSepAsyn \text{ implies } \Context{}{}{T_1} \transBarbBisimSepAsyn \Context{}{}{T_2}.
	\end{align*}
	Two target terms $ T_1, T_2 \in \piAsynProc $ are \emph{translated barbed congruent} with respect to $ \encodingMixAsyn $, denoted as $ T_1 \transBarbCongMixAsynA T_2 $, if
	\begin{align*}
		\forall \Context{}{}{\hole} \in \piAsynProc \to \piAsynProc \logdot \Context{}{}{\EncodingMixAsyn{\nullTerm}} \in \targetTermsMixAsyn \text{ implies } \Context{}{}{T_1} \transBarbBisimMixAsynA \Context{}{}{T_2}.
	\end{align*}
	Two target terms $ T_1, T_2 \in \piAsynProc $ are \emph{translated barbed congruent} with respect to $ \encodingMixAsyn $, denoted as $ T_1 \transBarbCongMixAsynB T_2 $, if
	\begin{align*}
		\forall \Context{}{}{\hole} \in \piAsynProc \to \piAsynProc \logdot \Context{}{}{\EncodingMixAsyn{\nullTerm}} \in \targetTermsMixAsyn \text{ implies } \Context{}{}{T_1} \transBarbBisimMixAsynB \Context{}{}{T_2}.
	\end{align*}
\end{definition}
\noindent
Note that we again usually only consider the stricter first variant of the congruence $ \transBarbCongMixAsyn $, while silently omitting the subscript $ 1 $. Operational correspondence considers only target terms, so it would suffice to define the congruence over target terms only. However, in defining it over all terms of the target language we gain more flexibility. We will use these flexibility in the proof of operational completeness to stepwise reduce junk which in some cases leads to non target terms. Since these non target terms are behavioural equivalent to the considered target terms, they serve as connecting pieces to link the target terms modulo $ \transBarbCongSepAsyn $ or $ \transBarbCongMixAsyn $. Moreover note that the respective congruence relations are strictly weaker than their corresponding equivalences.
\begin{example}
	Let us consider the target terms $ T_1 = \EncodingSepAsyn{\Out{y}.\success} $, $ T_1' = \EncodingSepAsyn{\Out{y}.\nullTerm} $, $ T_2 = \EncodingMixAsyn{\Out{y}.\success} $, and $ T_2' = \EncodingMixAsyn{\Out{y}.\nullTerm} $. Obviously $ T_1 \transBarbBisimSepAsyn T_1' $ and $ T_2 \transBarbBisimMixAsyn T_2' $. But neither $ T_1 \transBarbCongSepAsyn T_1' $ nor $ T_2 \transBarbCongMixAsyn T_2' $, because in both cases the context has only to provide a translated input observable on $ \RenamingPolicySepAsyn{y} $ or $ \RenamingPolicyMixAsyn{y} $, respectively. So in case of $ \encodingSepAsyn $ the context $ \Context{}{}{\hole} = \hole \mid \EncodingSepAsyn{\In{x}} $ suffice to distinguish $ T_1 $ and $ T_1' $, because $ \Context{}{}{T_1}\reachSuccess $ but $ \Context{}{}{T_1'}\not\reachSuccess $. The argumentation for $ \encodingMixAsyn $ is similar, but due to the complex encoding of the parallel operator the respective distinguishing context is rather large. Because of that, intuitively, two equivalent target terms are congruent, only if the encoded continuations of their translated observables are again equivalent.
\end{example}

Of course, all presented congruences are again equivalences.

\begin{lemma} \label{lem:transBarbCongIsEquivalence}
	All presented translated barbed congruences are equivalence relations.
\end{lemma}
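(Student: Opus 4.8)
The plan is to lift the three equivalence properties from the underlying translated barbed bisimulations, established in Lemma \ref{lem:transBarbBisimIsEquivalence}, to the corresponding congruences. The decisive observation is that for each of $ \transBarbCongSepAsyn $, $ \transBarbCongMixAsynA $, and $ \transBarbCongMixAsynB $ the universal quantifier in Definition \ref{def:transBarbCong} ranges over a collection of contexts whose admissibility condition---namely $ \Context{}{}{\EncodingSepAsyn{\nullTerm}} \in \targetTermsSepAsyn $ (respectively $ \Context{}{}{\EncodingMixAsyn{\nullTerm}} \in \targetTermsMixAsyn $)---refers only to the context itself and to the fixed term $ \EncodingSepAsyn{\nullTerm} $ (respectively $ \EncodingMixAsyn{\nullTerm} $). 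In particular, this set of admissible contexts does \emph{not} depend on the pair of terms being compared. Hence, whenever we compare $ T_1 $ with $ T_2 $, $ T_2 $ with $ T_1 $, or $ T_1 $ with $ T_3 $, we quantify over exactly the same collection of contexts, and each property reduces to the matching property of the bisimulation, applied pointwise for every admissible context.

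Concretely, I would argue each case for $ \transBarbCongSepAsyn $, the arguments for $ \transBarbCongMixAsynA $ and $ \transBarbCongMixAsynB $ being identical up to the choice of the underlying bisimulation and the relevant set of target terms. For \emph{reflexivity}, fix any admissible $ \Context{}{}{\hole} $; by reflexivity of $ \transBarbBisimSepAsyn $ we have $ \Context{}{}{T} \transBarbBisimSepAsyn \Context{}{}{T} $, and since this holds for every admissible context, $ T \transBarbCongSepAsyn T $. For \emph{symmetry}, assume $ T_1 \transBarbCongSepAsyn T_2 $; then $ \Context{}{}{T_1} \transBarbBisimSepAsyn \Context{}{}{T_2} $ for every admissible $ \Context{}{}{\hole} $, so by symmetry of $ \transBarbBisimSepAsyn $ we obtain $ \Context{}{}{T_2} \transBarbBisimSepAsyn \Context{}{}{T_1} $ for every such context, i.e.\ $ T_2 \transBarbCongSepAsyn T_1 $.

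For \emph{transitivity}, assume $ T_1 \transBarbCongSepAsyn T_2 $ and $ T_2 \transBarbCongSepAsyn T_3 $, and fix an admissible $ \Context{}{}{\hole} $. Both hypotheses apply to this very context, yielding $ \Context{}{}{T_1} \transBarbBisimSepAsyn \Context{}{}{T_2} $ and $ \Context{}{}{T_2} \transBarbBisimSepAsyn \Context{}{}{T_3} $; transitivity of $ \transBarbBisimSepAsyn $ then gives $ \Context{}{}{T_1} \transBarbBisimSepAsyn \Context{}{}{T_3} $, and since the context was an arbitrary admissible one, $ T_1 \transBarbCongSepAsyn T_3 $.

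I do not expect a genuine obstacle here: essentially all the content sits in Lemma \ref{lem:transBarbBisimIsEquivalence}, and the only point that needs care is the explicit remark that the admissibility predicate on contexts is independent of the compared terms. This is what guarantees that the two hypotheses in the transitivity step, and the hypothesis and conclusion in the symmetry step, really do range over one and the same context, so that the pointwise application of the bisimulation properties is legitimate. The remaining two congruences are then handled verbatim, substituting $ \transBarbBisimMixAsynA $ and $ \targetTermsMixAsyn $ (respectively $ \transBarbBisimMixAsynB $ and $ \targetTermsMixAsyn $) for the separate-choice data.
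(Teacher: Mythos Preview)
Your proposal is correct and follows essentially the same approach as the paper: both lift reflexivity, symmetry, and transitivity pointwise from the underlying translated barbed bisimulation (Lemma~\ref{lem:transBarbBisimIsEquivalence}) through the universal quantification over admissible contexts, and then remark that the other two congruences are handled analogously. Your explicit observation that the admissibility predicate on contexts is independent of the terms being compared is a useful clarification that the paper leaves implicit.
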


\begin{proof}
	Let $ T_1, T_2, T_3 \in \piAsynProc $. Then $ \Context{}{}{T_1} \transBarbBisimSepAsyn \Context{}{}{T_1} $ for all contexts $ \Context{}{}{\hole} \in \piAsynProc \to \piAsynProc $; so $ T_1 \transBarbCongSepAsyn T_1 $. Moreover, if $ T_1 \transBarbCongSepAsyn T_2 $, then $ \Context{}{}{T_1} \transBarbBisimSepAsyn \Context{}{}{T_2} $ for all contexts $ \Context{}{}{\hole} \in \piAsynProc \to \piAsynProc $ such that $ \Context{}{}{\EncodingSepAsyn{\nullTerm}} \in \targetTermsSepAsyn $. Since $ \transBarbBisimSepAsyn $ is an equivalence, then also $ \Context{}{}{T_2} \transBarbBisimSepAsyn \Context{}{}{T_1} $ for all such contexts, i.e., $ T_2 \transBarbCongSepAsyn T_1 $. Finally, if $ T_1 \transBarbCongSepAsyn T_2 $ and $ T_2 \transBarbCongSepAsyn T_3 $, then $ \Context{}{}{T_1} \transBarbBisimSepAsyn \Context{}{}{T_2} $ for all contexts $ \Context{}{}{\hole} \in \piAsynProc \to \piAsynProc $ such that $ \Context{}{}{\EncodingSepAsyn{\nullTerm}} \in \targetTermsSepAsyn $, and $ \Context{}{}{T_2} \transBarbBisimSepAsyn \Context{}{}{T_3} $ for all such contexts. So also $ \Context{}{}{T_1} \transBarbBisimSepAsyn \Context{}{}{T_3} $ for all such contexts, i.e., $ T_1 \transBarbCongSepAsyn T_3 $. We conclude that $ \transBarbCongSepAsyn $ is an equivalence.
	
	The argumentation for $ \transBarbCongMixAsynA $ and $ \transBarbCongMixAsynB $ is similar.
	\qed
\end{proof}

Moreover, the presented congruences include the structural congruence on the target language, because it is already included in the respective bisimulations.

\begin{lemma} \label{lem:transBarbBisimIncludesSC}
	Translated barbed bisimulation includes structural congruence, i.e.,
	\begin{align*}
		\forall T_1, T_2 \in \piAsynProc \logdot T_1 \equiv T_2 \text{ implies } T_1 \transBarbBisimSepAsyn T_2 \wedge T_1 \transBarbBisimMixAsyn T_2.
	\end{align*}
\end{lemma}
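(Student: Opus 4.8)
The plan is to show that structural congruence $\equiv$ is itself a translated barbed bisimulation, both with respect to $\encodingSepAsyn$ and with respect to $\encodingMixAsyn$. Since $\transBarbBisimSepAsyn$ and $\transBarbBisimMixAsyn$ are the largest relations satisfying the four closure clauses of Definition \ref{def:transBarbBisim}, the desired inclusion $\equiv \; \subseteq \; \transBarbBisimSepAsyn \cap \transBarbBisimMixAsyn$ then follows by coinduction. Concretely, I would fix $T_1, T_2 \in \piAsynProc$ with $T_1 \equiv T_2$ and verify each of the four clauses, using $\equiv$ itself as the witnessing relation wherever the clauses refer back to the bisimilarity.

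The key enabling observation is that both $\step$ and $\steps$ are closed under $\equiv$ on both sides; this is exactly what the rule $\textsc{Cong}$ of Figure \ref{fig:concurrentReductionSemantics} provides. Indeed, from $T_2 \equiv T_1$ and $T_1 \step T_1'$ one instantiates $\textsc{Cong}$ with $Q = Q' = T_1'$ to obtain $T_2 \step T_1'$ directly, and closure of the multi-step relation $\steps$ follows by iterating this. Given this, clauses 3 and 4 are immediate: for $T_1 \step T_1'$ I would take the matching derivative $T_2' \deff T_1'$, so that $T_2 \step T_1'$ and $T_1' \equiv T_1'$ by reflexivity, i.e. the pair of derivatives lies again in $\equiv$; the symmetric clause is identical using symmetry of $\equiv$.

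For clause 1 I would use that reachability of success is defined (Definition \ref{def:success}) modulo $\steps$ and $\equiv$: if $T_1 \reachSuccess$, say $T_1 \steps T_1' \equiv T_1'' \mid \success$, then closure of $\steps$ under $\equiv$ gives $T_2 \steps T_1'$, whence $T_2 \reachSuccess$ as well, and conversely. For clause 2 I would argue analogously about the translated barbs: if $T_1 \WeakTransBarbSepAsyn{\mu}$, say $T_1 \steps T_1'$ with $T_1' \TransBarbSepAsyn{\mu}$, then $T_2 \steps T_1'$ by the same closure property, so $T_2 \WeakTransBarbSepAsyn{\mu}$, and symmetrically; the argument for $\WeakTransBarbMixAsyn{\mu}$ is word for word the same. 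Note that each translated barb in Definition \ref{def:transBarb} is specified through a witness of the form $T \equiv \RestrictedTerm{\tilde{x}}{(\ldots)}$ together with a side condition $y \notin \tilde{x}$, so even the strong-barb preservation under $\equiv$ reduces to transitivity of $\equiv$.

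Since this reasoning never inspects the concrete shape of the encoding and depends only on the defining closure properties of $\equiv$ and the rule $\textsc{Cong}$, it applies uniformly to $\transBarbBisimSepAsyn$ and to $\transBarbBisimMixAsyn$ (and, if wanted, verbatim to the weaker variant $\transBarbBisimMixAsynB$, whose clauses 3 and 4 quantify over $\steps$ rather than $\step$, which only makes them easier to discharge). I do not anticipate a genuine obstacle: the sole point requiring mild care is that the translated barbs of $\encodingSepAsyn$ in Definition \ref{def:transBarb} carry a syntactically involved witness, but because that witness is matched only up to $\equiv$, its preservation is handled entirely by transitivity of structural congruence.
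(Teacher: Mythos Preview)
Your proposal is correct and follows essentially the same route as the paper: both verify the four clauses of Definition~\ref{def:transBarbBisim} using that rule \textsc{Cong} makes $\step$ closed under $\equiv$, that reachability of success is defined modulo $\equiv$, and that the translated-barb witnesses in Definition~\ref{def:transBarb} are already taken up to $\equiv$. The only cosmetic difference is that you frame this explicitly as showing $\equiv$ is itself a translated barbed bisimulation (coinduction), whereas the paper checks the clauses directly; the paper also adds a sentence noting that alpha conversion cannot affect translated observables because those are required to be on unrestricted names, a point that is implicit in your transitivity argument.
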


\begin{proof}
	Let us assume $ T_1 \equiv T_2 $. Then, by rule \textsc{Cong} in Figure \ref{fig:concurrentReductionSemantics}, $ T_1 $ and $ T_2 $ can perform exactly the same steps such that the successors are again structural congruent. Note that this holds even in case of $ \encodingMixAsyn $ and is obviously a stronger feature than the third and fourth condition of $ \transBarbBisimMixAsynB $ (compare to Definition \ref{def:transBarbBisimB}). Since, by Definition \ref{def:success}, reachability of success is defined modulo structural congruence, $ T_1 $ and $ T_2 $ have the same chance to reach success, i.e., $ T_1 \reachSuccess $ iff $ T_2 \reachSuccess $. Similarly, translated observables are defined modulo structural congruence for both encodings in Definition \ref{def:transBarb}. Note that we do not consider translated observables on restricted names, since the corresponding in- and outputs in the source terms are no observables as well. Because of that translated observables can not be changed by alpha conversion. So $ T_1 $ and $ T_2 $ have the same set of translated observables and the same chance to reach a translated observable, i.e., $ \left( T_1\TransBarbSepAsyn{\mu} \text{ iff } T_2\TransBarbSepAsyn{\mu} \right) $, $ \left( T_1\TransBarbMixAsyn{\mu} \text{ iff } T_2\TransBarbMixAsyn{\mu} \right) $, $ \left( T_1\WeakTransBarbSepAsyn{\mu} \text{ iff } T_2\WeakTransBarbSepAsyn{\mu} \right) $, and $ \left( T_1\WeakTransBarbMixAsyn{\mu} \text{ iff } T_2\WeakTransBarbMixAsyn{\mu} \right) $ for all $ \mu \in \names \cup \coNames $. So $ T_1 \transBarbBisimSepAsyn T_2 $ and $ T_1 \transBarbBisimMixAsyn T_2 $.
	\qed
\end{proof}

\begin{lemma} \label{lem:transBarbCongIncludesSC}
	Weak translated barbed congruence includes structural congruence, i.e.,
	\begin{align*}
		\forall T_1, T_2 \in \piAsynProc \logdot T_1 \equiv T_2 \text{ implies } T_1 \transBarbCongSepAsyn T_2 \wedge T_1 \transBarbCongMixAsyn T_2.
	\end{align*}
\end{lemma}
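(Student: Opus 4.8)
The plan is to reduce the congruence statement directly to the already established bisimulation statement of Lemma~\ref{lem:transBarbBisimIncludesSC}, exploiting the fact that structural congruence is itself a congruence with respect to the operators of \piAsyn.

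First I would unfold Definition~\ref{def:transBarbCong}. To show $T_1 \transBarbCongSepAsyn T_2$ under the hypothesis $T_1 \equiv T_2$, it suffices to fix an arbitrary context $\Context{}{}{\hole} \in \piAsynProc \to \piAsynProc$ with $\Context{}{}{\EncodingSepAsyn{\nullTerm}} \in \targetTermsSepAsyn$ and to prove $\Context{}{}{T_1} \transBarbBisimSepAsyn \Context{}{}{T_2}$; the analogous unfolding applies to $\transBarbCongMixAsyn$, using $\EncodingMixAsyn{\nullTerm}$ and $\targetTermsMixAsyn$ in the side condition. Note that this side condition only restricts the set of contexts we must treat, so it can simply be carried along and plays no active role in the argument.

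The key step is the observation that a context is built solely from the term constructors of \piAsyn, and that $\equiv$, being the least congruence generated by the rules of Figure~\ref{fig:SC}, is preserved by all of these constructors. Hence plugging structurally congruent terms into one and the same context preserves structural congruence: $T_1 \equiv T_2$ implies $\Context{}{}{T_1} \equiv \Context{}{}{T_2}$ for every context $\Context{}{}{\hole}$. This is a standard, purely syntactic fact that requires no case analysis on the shape of the context.

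Finally I would chain the two facts. From $T_1 \equiv T_2$ together with the congruence property of $\equiv$ we obtain $\Context{}{}{T_1} \equiv \Context{}{}{T_2}$, and then Lemma~\ref{lem:transBarbBisimIncludesSC} yields both $\Context{}{}{T_1} \transBarbBisimSepAsyn \Context{}{}{T_2}$ and $\Context{}{}{T_1} \transBarbBisimMixAsyn \Context{}{}{T_2}$. Since the context was arbitrary among those satisfying the respective side condition, Definition~\ref{def:transBarbCong} gives $T_1 \transBarbCongSepAsyn T_2$ and $T_1 \transBarbCongMixAsyn T_2$, as required. I do not expect any genuine obstacle here: the only point deserving a word of care is the congruence property of $\equiv$, which however holds by the very definition of structural congruence, so the lemma is essentially an immediate corollary of Lemma~\ref{lem:transBarbBisimIncludesSC}.
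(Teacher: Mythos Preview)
Your proposal is correct and follows essentially the same approach as the paper: both arguments reduce to Lemma~\ref{lem:transBarbBisimIncludesSC} together with the fact that $\equiv$ is itself a congruence. The paper phrases this more abstractly---observing that $\transBarbCongSepAsyn$ and $\transBarbCongMixAsyn$ are the largest congruences (for the restricted class of contexts) contained in $\transBarbBisimSepAsyn$ and $\transBarbBisimMixAsyn$, so any congruence already contained in the bisimilarities is automatically contained in them---whereas you spell out the same reasoning by fixing an arbitrary admissible context and verifying the defining condition directly.
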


\begin{proof}
	By Definition \ref{def:transBarbCong}, $ \transBarbCongSepAsyn $ is the largest congruence on contexts restricted to target terms included in $ \transBarbBisimSepAsyn $, and $ \transBarbCongMixAsyn $ is the largest congruence on contexts restricted to target terms included in $ \transBarbBisimMixAsyn $. Note that Definition \ref{def:transBarbCong} restricts only the contexts but not the considered terms. Thus, since by Lemma \ref{lem:transBarbBisimIncludesSC} structural congruence $ \equiv $ is included in $ \transBarbBisimSepAsyn $ and $ \transBarbBisimMixAsyn $, it is included in $ \transBarbCongSepAsyn $ and $ \transBarbCongMixAsyn $.
	\qed
\end{proof}

Remember, that to our intuition \pure \admin steps are only pre- or postprocessing steps that do not influence which \simulations can be completed. To underpin that intuition, we prove that \pure \admin steps do not change the state of a target term modulo the considered equivalences and congruences.

\begin{lemma} \label{lem:pureAdminStepsTransBarbBisimSepAsyn}
	\Pure \admin steps do not influence the state of a target term modulo translated barbed bisimilarity or translated barbed congruence with respect to $ \encodingSepAsyn $, i.e.,
	\begin{align*}
		\forall T, T' \in \targetTermsSepAsyn \logdot T \pureAdminSteps T' \text{ implies } T \transBarbBisimSepAsyn T' \wedge T \transBarbCongSepAsyn T'.
	\end{align*}
\end{lemma}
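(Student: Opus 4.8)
The plan is to treat both claims uniformly by exhibiting explicit translated barbed bisimulations and to use Lemma~\ref{lem:nonConflictingStepsSepAsyn} as the single source of confluence. Since $\transBarbBisimSepAsyn$ is an equivalence (Lemma~\ref{lem:transBarbBisimIsEquivalence}) it would in principle suffice to treat one pure admin step, but it is cleaner to work with whole sequences. First I would lift Lemma~\ref{lem:nonConflictingStepsSepAsyn} to a \emph{strip lemma}: if $T \pureAdminSteps T'$ and $T \steps T''$, then there is $\hat T$ with $T' \steps \hat T$ and $T'' \pureAdminSteps \hat T$. This follows by induction on the length of $T \pureAdminSteps T'$, each inductive step closing one diamond via Lemma~\ref{lem:nonConflictingStepsSepAsyn}; all intermediate terms stay in $\targetTermsSepAsyn$ by Definition~\ref{def:targetTerm}.

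For the bisimilarity I would then show that $\mathcal{R} = \{ (T_1, T_2) \mid T_1, T_2 \in \targetTermsSepAsyn \wedge (T_1 \pureAdminSteps T_2 \vee T_2 \pureAdminSteps T_1) \}$ is a translated barbed bisimulation in the sense of Definition~\ref{def:transBarbBisim}; as $\mathcal R$ is symmetric and contains the identity (zero-length sequences), it suffices to inspect the orientation $T_1 \pureAdminSteps T_2$. The step-matching conditions 3 and 4 are then exactly the strip lemma (for condition 4, $T_1 \steps T_2 \step T_2'$ gives $T_1 \steps T_2'$ and we close with the identity). Conditions 1 and 2 rest on the observation, read off Definition~\ref{def:pureImpureAdminStep}, that a pure admin step is neither on a sum lock nor on a translated source term name; hence it can consume neither an instantiation $\Output{\sumLock}{\true}$ nor an in-/output on a translated name, so every \emph{present} translated observable of Definition~\ref{def:transBarb} and every \emph{present} top-level $\success$ (which no rule can guard or consume) survives arbitrary pure admin reductions. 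Combined with the strip lemma this yields both directions of $T_1 \reachSuccess \Leftrightarrow T_2 \reachSuccess$ and of $T_1 \WeakTransBarbSepAsyn{\mu} \Leftrightarrow T_2 \WeakTransBarbSepAsyn{\mu}$.

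For the congruence I would, following Definition~\ref{def:transBarbCong}, fix an admissible context $\Context{}{}{\hole}$, i.e.\ one with $\Context{}{}{\EncodingSepAsyn{\nullTerm}} \in \targetTermsSepAsyn$, and prove $\Context{}{}{T} \transBarbBisimSepAsyn \Context{}{}{T'}$ by exhibiting the context-closed relation $\mathcal{R}' = \{ (\Context{}{}{U}, \Context{}{}{U'}) \mid U \pureAdminSteps U' \} \cup \mathcal R \cup \mathrm{Id}$ and checking that it is a bisimulation. Each step of $\Context{}{}{U}$ splits into (a) a step inside the context, available verbatim in $\Context{}{}{U'}$ and keeping us in $\mathcal R'$; (b) a step inside the plugged term, matched by the strip lemma as before; and (c) a communication across the hole. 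The enabling point is that the names driving pure admin steps (receiver, sender and the boolean locks $t,f$) are bound inside $U$ and, by the renaming policy, fresh for the context, so $U \pureAdminSteps U'$ changes neither the in-/outputs offered on $U$'s free translated names nor the attached sum-lock instantiations; any extra interface exposed by an unguarded continuation is only weakly reachable and hence weakly matchable, which is all that conditions 3 and 4 demand.

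I expect case (c) to be the main obstacle. A sender lock is emitted as a value in the output request $\Output{\RenamingPolicySepAsyn{y}}{\sumLock,\senderLock,z}$, so an admissible context may learn it by scope extrusion and then race on it — exactly the mechanism that makes $\transBarbBisimSepAsyn$ fail to be a full congruence (cf.\ the earlier distinguishing context $\Context{}{1}{\hole}$). The plan is to argue that this causes no discrepancy: because pure admin steps leave all sum-lock instantiations untouched (so in particular, by Lemma~\ref{lem:changeInstantiationSumLock}, cannot turn a $\false$ lock into a $\true$ one) and do not alter the outputs available on free names, the set of cross-hole interactions offered to the context is the same before and after the pure admin reductions, up to the weakly reachable unguardings already accounted for. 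Making this interface-invariance precise — in particular that an extruded sender lock remains consumable on both sides after the corresponding internal reduction — is the technical heart of the congruence direction, whereas conditions 1 and 2 transfer unchanged from the bisimilarity argument since the context part is identical on the two sides.
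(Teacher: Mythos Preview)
Your bisimilarity argument is essentially the paper's, made more explicit: both rest on Lemma~\ref{lem:nonConflictingStepsSepAsyn}, you via a strip lemma and an explicit bisimulation relation~$\mathcal R$, the paper by arguing directly that confluence of pure admin steps means no path to an unguarded~$\success$ or to a translated observable can be cut off, so reachability of success and weak translated barbs coincide for $T$ and $T'$.

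For the congruence part, however, you take a substantially harder route than necessary. You propose to fix an admissible context, build a context-indexed candidate $\mathcal R'$, and split every step of $\Context{}{}{U}$ into inside-context, inside-hole, and cross-hole cases, flagging cross-hole communication (your case~(c)) as the ``main obstacle'' and sketching an interface-invariance argument for it. The paper sidesteps this entire analysis: since reduction is closed under contexts (rules \textsc{Par}, \textsc{Res}, \textsc{Cong}), $T \pureAdminSteps T'$ lifts verbatim to $\Context{}{}{T} \pureAdminSteps \Context{}{}{T'}$, and the lifted steps remain pure admin because the subject channel of each step is unchanged. The admissible-context condition is read as making both $\Context{}{}{T}$ and $\Context{}{}{T'}$ target terms, so the bisimilarity part of the lemma, already established for \emph{arbitrary} target terms related by $\pureAdminSteps$, applies directly to the pair $(\Context{}{}{T},\Context{}{}{T'})$ and gives $\Context{}{}{T} \transBarbBisimSepAsyn \Context{}{}{T'}$. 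No separate treatment of cross-hole interaction is needed, and your anticipated ``technical heart'' never has to be worked out. Your route would eventually close, but the paper's observation that the congruence claim is an instance of the bisimilarity claim (after lifting the reduction into the context) is both shorter and more robust.
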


\begin{proof}
	Translated barbed bisimilarity is some kind of weak bisimilarity that takes instead of observables the reachability of $ \success $ and the reachability of translated observables into account. Note that it is not possible to reduce $ \success $. So, in case $ T \steps T' $, the only way that leads to $ \neg\left( T\reachSuccess \text{ iff } T'\reachSuccess \right) $ is that in the sequence of steps from $ T $ to $ T' $ there is a step that rules out a former possible way to unguard some occurrence of $ \success $. Since by Definition \ref{def:transBarb} \pure \admin steps can not consume translated observables, the same holds for the consideration of translated observables. We have to show, that it not possible when using only \pure \admin steps to rule out a way to a translated observable or an unguarded occurrence of $ \success $.
	
	Obviously, in case none of the \pure \admin steps rules out any other sequence of steps, i.e. if none of the \admin steps is in conflict to any other sequence of step, this condition holds. Because of that, by Lemma \ref{lem:nonConflictingStepsSepAsyn}, $ T \pureAdminSteps T' $ implies $ \left( T\reachSuccess \text{ iff } T'\reachSuccess \right) $.
	
	For the same reason, and since \pure \admin steps do neither consumes positive instantiations of sum locks nor outputs on translated source terms, they do not influence the set of reachable translated observables, i.e., $ T \WeakTransBarbSepAsyn{\mu} $ iff $ T' \WeakTransBarbSepAsyn{\mu} $ for all $ \mu \in \names \cup \coNames $. Note, that such a step can restore a positive or negative instantiation of a sum lock by resolving a test on a negative instantiated sum lock or can unguard new requests and sum lock instantiations by a step on a sender lock, so \pure \admin steps influence the set of translated observables. But, since they do not rule out a run that leads to a translated observable, they do not influence the set of reachable translated observables. So $ T \transBarbBisimSepAsyn T' $.
	
	Since \pure \admin steps do not influence the state of arbitrary target terms modulo $ \transBarbBisimSepAsyn $ and since the congruence $ \transBarbCongSepAsyn $ does only consider target term contexts (compare to Definition \ref{def:transBarbCong}), \pure \admin steps do not influence the state of target terns modulo $ \transBarbCongSepAsyn $, i.e., $ T \transBarbCongSepAsyn T' $.
	\qed
\end{proof}

\begin{lemma} \label{lem:pureAdminStepsTransBarbBisimMixAsyn}
	\Pure \admin steps do not influence the state of a target term modulo translated barbed bisimilarity or translated barbed congruence with respect to $ \encodingMixAsyn $, i.e.,
	\begin{align*}
		\forall T, T' \in \targetTermsMixAsyn \logdot T \pureAdminSteps T' \text{ implies } T \transBarbBisimMixAsyn T' \wedge T \transBarbCongMixAsyn T'.
	\end{align*}
\end{lemma}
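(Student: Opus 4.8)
The plan is to follow the proof of Lemma~\ref{lem:pureAdminStepsTransBarbBisimSepAsyn} and exploit that $\transBarbBisimMixAsyn$ is a weak bisimulation which records only reachability of $\success$ and reachability of translated observables. By Definition~\ref{def:pureImpureAdminStep} a \pure \admin step never consumes a positive instantiation of a sum lock, and by Definition~\ref{def:transBarb} it never consumes a translated observable (a request together with its positive sum lock). Hence no individual \pure \admin step can destroy a reachable $\success$ or a reachable translated observable on its own; the whole weight of the proof lies in showing that a \emph{sequence} $T \pureAdminSteps T'$ cannot rule out a run that from $T$ would have reached an unguarded $\success$ or a translated observable. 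Once I have established $\left( T\reachSuccess \text{ iff } T'\reachSuccess \right)$ and $\left( T\WeakTransBarbMixAsyn{\mu} \text{ iff } T'\WeakTransBarbMixAsyn{\mu} \right)$ for every $\mu \in \names \cup \coNames$, conditions~$1$ and~$2$ of Definition~\ref{def:transBarbBisim} hold; condition~$4$ is immediate since $T \steps T'$, and condition~$3$ is discharged as in the separate-choice case by commuting the step issued from $T$ past the remaining \admin steps leading to $T'$.

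For the \pure \admin steps \emph{covered} by Lemma~\ref{lem:nonConflictingStepsMixAsyn} --- those on sender locks or booleans, and those that do not unguard an instantiation of a chain lock carrying a request channel --- I would argue exactly as in Lemma~\ref{lem:nonConflictingStepsSepAsyn}: such a step is non-conflicting, so it can be permuted with any other sequence of steps, and therefore it cannot rule out any run that reaches $\success$ or a translated observable. Combined with the observation above that these steps do not themselves consume positive sum locks or translated observables, this preserves both kinds of reachability.

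The main obstacle is precisely the class of \pure \admin steps that Lemma~\ref{lem:nonConflictingStepsMixAsyn} \emph{excludes}, namely those that unguard an instantiation of a chain lock carrying a request channel --- the steps linking right requests into the matching chain, and the steps on $\coordinatorRepB$ that link the encoded continuations of a replicated input. These steps are genuinely conflicting, since they fix the order of members in a chain. Here I would invoke Lemma~\ref{lem:pureAdminStepsNoDeadlock}: such a step only \emph{temporarily} blocks the competing step on the same request channel or chain lock, which is guaranteed to become enabled again, so no run is permanently lost. Moreover, the competing alternatives differ only in the single free linking name along which left requests are forwarded, and since left requests are \emph{copied} and handed on to every member of the chain, both orderings deliver the same requests to the same receivers. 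Hence neither the set of reachable translated observables nor the reachability of $\success$ depends on the order chosen, which completes conditions~$1$ and~$2$ and yields $T \transBarbBisimMixAsyn T'$.

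Finally, for the congruence statement I would argue as in Lemma~\ref{lem:pureAdminStepsTransBarbBisimSepAsyn}: Definition~\ref{def:transBarbCong} quantifies only over contexts $\Context{}{}{\hole}$ with $\Context{}{}{\EncodingMixAsyn{\nullTerm}} \in \targetTermsMixAsyn$, i.e.\ over target-term contexts. Plugging $T$ and $T'$ into any such context yields $\Context{}{}{T} \pureAdminSteps \Context{}{}{T'}$, so the bisimilarity result just established --- which holds for arbitrary target terms --- gives $\Context{}{}{T} \transBarbBisimMixAsyn \Context{}{}{T'}$ for every admissible context, that is, $T \transBarbCongMixAsyn T'$.
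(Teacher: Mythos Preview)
Your proposal is correct and follows essentially the same approach as the paper: both arguments split the \pure \admin steps into the non-conflicting ones handled by Lemma~\ref{lem:nonConflictingStepsMixAsyn} and the residual conflicting steps on request/chain channels, and for the latter both invoke Lemma~\ref{lem:pureAdminStepsNoDeadlock} together with the observation that the only effect of such steps is the \emph{order} in which right requests (resp.\ encoded continuations) are linked into their chain, which is immaterial for reachability of $\success$ or translated observables. Your lift to the congruence via target-term contexts is likewise the same as the paper's.
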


\begin{proof}
	Translated barbed bisimilarity is some kind of weak bisimilarity that takes instead of observables the reachability of $ \success $ and the reachability of translated observables into account. Note that it is not possible to reduce $ \success $. So, in case $ T \steps T' $, the only way that leads to $ \neg\left( T\reachSuccess \text{ iff } T'\reachSuccess \right) $ is that in the sequence of steps from $ T $ to $ T' $ there is a step that rules out a former possible way to unguard some occurrence of $ \success $. Since by Definition \ref{def:transBarb} \pure \admin steps can not consume translated observables, the same holds for the consideration of translated observables. We have to show, that it not possible when using only \pure \admin steps to rule out a way to a translated observable or an unguarded occurrence of $ \success $.
	
	Obviously, in case none of the \pure \admin steps rules out any other sequence of steps, i.e. if none of the \pure \admin steps is in conflict to any other sequence of steps, this condition holds. Fortunately, indeed most of the \pure \admin steps are not conflicting. By Lemma \ref{lem:nonConflictingStepsMixAsyn}, the condition $ T \transBarbBisimMixAsyn T' $ holds for all steps that are on a sender lock or do not unguard an instantiation of a chain lock carrying a request channel.
	
	Revisiting the argumentation in the proof of Lemma \ref{lem:nonConflictingStepsMixAsyn} we observe that the remaining steps either influence the order of requests in chains of right requests (compare to \processRightOutputRequests \ and \processRightInputRequests) or the order of encoded continuations in the chain build up by the encoding of a replicated input. By Lemma \ref{lem:pureAdminStepsNoDeadlock}, these steps do not introduce deadlock, moreover revisiting the argumentation of the proof of this lemma we observe, that their impact on the ordering within the chain is indeed their only impact on the behaviour of target terms. Since all encoded continuations of a replicated input are initially the same, their order does not matter for the reachability of $ \success $ or translated observables. The same holds for the order of requests, because regardless of their order eventually each combination is checked. Indeed, a different order may only lead to more or less necessary invisible steps on requests channels to combine a specific pair of requests. Because of that, even the \pure \admin steps that unguards an instantiation of a chain lock do not influence the state of the target term modulo $ \transBarbBisimMixAsyn $. So $ T \transBarbBisimMixAsyn T' $.
	
	Since \pure \admin steps do not influence the state of arbitrary target terms modulo $ \transBarbBisimMixAsyn $ and since the congruence $ \transBarbCongMixAsyn $ does only consider target term contexts (compare to Definition \ref{def:transBarbCong}), \pure \admin steps do not influence the state of target terns modulo $ \transBarbCongMixAsyn $, i.e., $ T \transBarbCongMixAsyn T' $.
	\qed
\end{proof}

Note that due to these two lemmata we can mostly ignore \pure \admin steps in the following proofs, since they are invisible modulo the considered equivalence and congruence relations. To handle the \textsc{Cong} rule in the proof of operational completeness we prove that both encodings preserve structural congruence of source terms modulo the presented equivalences and congruences.

\begin{lemma} \label{lem:preservesSCModuloTransBarbBisimSepAsyn}
	The encoding $ \encodingSepAsyn $ preserves structural congruence of source terms modulo translated barbed bisimilarity and translated barbed congruence, i.e.,
	\begin{align*}
		\forall S, S' \in \piSepProc \logdot S \equiv S' \text{ implies } \EncodingSepAsyn{S} \transBarbBisimSepAsyn \EncodingSepAsyn{S'} \wedge \EncodingSepAsyn{S} \transBarbCongSepAsyn \EncodingSepAsyn{S'}.
	\end{align*}
\end{lemma}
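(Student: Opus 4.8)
The plan is to prove the stronger congruence statement $\EncodingSepAsyn{S} \transBarbCongSepAsyn \EncodingSepAsyn{S'}$ and to derive the bisimilarity from it: the identity context is admissible, since $\Context{}{}{\hole} = \hole$ gives $\Context{}{}{\EncodingSepAsyn{\nullTerm}} = \EncodingSepAsyn{\nullTerm} \in \targetTermsSepAsyn$, so by Definition \ref{def:transBarbCong} we get $\transBarbCongSepAsyn \; \subseteq \; \transBarbBisimSepAsyn$. Since $\equiv$ is the least congruence generated by the axioms of Figure \ref{fig:SC}, every derivation $S \equiv S'$ decomposes into finitely many single rewrites $\mathcal{C}[L] \to \mathcal{C}[R]$, where $L \equiv R$ is an axiom instance and $\mathcal{C}$ is a source context. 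As $\transBarbCongSepAsyn$ is an equivalence (Lemma \ref{lem:transBarbCongIsEquivalence}), by transitivity and symmetry it suffices to treat one such rewrite, i.e.\ to show $\EncodingSepAsyn{\mathcal{C}[L]} \transBarbCongSepAsyn \EncodingSepAsyn{\mathcal{C}[R]}$. Compositionality of $\encodingSepAsyn$ turns this into $\mathcal{D}[\EncodingSepAsyn{L}] \transBarbCongSepAsyn \mathcal{D}[\EncodingSepAsyn{R}]$, where $\mathcal{D}$ is the target context obtained by encoding $\mathcal{C}$ operator by operator.

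For every axiom except $P \mid \nullTerm \equiv P$ I would first establish that the encoded redexes are already structurally congruent on the target, $\EncodingSepAsyn{L} \equiv \EncodingSepAsyn{R}$. For alpha-conversion this is immediate from the compositional clauses of Figure \ref{fig:encodingSepAsyn} and injectivity of $\renamingPolicySepAsyn$ (cf.\ name invariance, Lemma \ref{lem:nameInvarianceSepAsyn}), giving $\EncodingSepAsyn{L} \equivAlpha \EncodingSepAsyn{R}$ and hence $\equiv$. Commutativity and associativity of $\mid$ and the match axiom $\Match{a}{a}P \equiv P$ follow directly from the clean, homomorphic clauses (using $\RenamingPolicySepAsyn{a} = \RenamingPolicySepAsyn{a}$). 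For $\RestrictedTerm{n}{\nullTerm} \equiv \nullTerm$ and the two restriction axioms one uses the derived target rule $\RestrictedTerm{m}{T} \equiv T$ whenever $m \notin \FreeNames{T}$ (itself provable from the scope-extrusion axiom instantiated with $\nullTerm$), injectivity of $\renamingPolicySepAsyn$, and for scope extrusion the side condition $\RenamingPolicySepAsyn{n} \notin \FreeNames{\EncodingSepAsyn{P}}$, which holds because all names reserved by $\encodingSepAsyn$ occur bound in the encoding of a complete term, whence $\FreeNames{\EncodingSepAsyn{P}} = \RenamingPolicySepAsyn{(\FreeNames{P})}$ and $n \notin \FreeNames{P}$. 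As target $\equiv$ is a full congruence, $\EncodingSepAsyn{L} \equiv \EncodingSepAsyn{R}$ propagates through $\mathcal{D}$ to $\mathcal{D}[\EncodingSepAsyn{L}] \equiv \mathcal{D}[\EncodingSepAsyn{R}]$, and Lemma \ref{lem:transBarbCongIncludesSC} delivers the desired $\transBarbCongSepAsyn$.

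The only genuinely semantic case is $P \mid \nullTerm \equiv P$, where $\EncodingSepAsyn{P \mid \nullTerm} = \EncodingSepAsyn{P} \mid \RestrictedTerm{\sumLock}{\Output{\sumLock}{\true}}$ and $\EncodingSepAsyn{P}$ are \emph{not} structurally congruent, since the summand $\RestrictedTerm{\sumLock}{\Output{\sumLock}{\true}} = \EncodingSepAsyn{\nullTerm}$ cannot be erased by $\equiv$. The key point is that this summand is inert junk: unfolding Definition \ref{def:testBoolean} it reads $\RestrictedTerm{\sumLock}{\Input{\sumLock}{t,f}.\Out{t}}$, whose only prefix is an input on the restricted name $\sumLock$, so it performs no step, carries no translated observable (Definition \ref{def:transBarb}), cannot reach $\success$, and—$\sumLock$ being bound and fresh—can never interact with its surroundings. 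I would therefore prove the auxiliary claim that, for every target context $\Context{}{}{\hole}$ and fresh restricted $\sumLock$, $\Context{}{}{T \mid \RestrictedTerm{\sumLock}{\Output{\sumLock}{\true}}} \transBarbBisimSepAsyn \Context{}{}{T}$, by showing that the relation pairing terms that differ only by the presence of such a fresh, restricted, inert summand at an arbitrary (possibly guarded) position is a translated barbed bisimulation: every reduction of one side is matched by the identical reduction of the other with the junk left untouched, and no success or translated barb is gained or lost. Instantiating this claim with $\mathcal{D}$ yields the bisimilarity, and instantiating it with $\mathcal{D}$ pre-composed with an arbitrary admissible outer context yields $\transBarbCongSepAsyn$ (no admissibility-composition argument is needed, as the claim holds for every target context).

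The main obstacle is precisely this last case: one must verify that the $\nullTerm$-junk stays invisible not merely at top level but through all reductions of $\mathcal{D}$ and under every guard and restriction of the enclosing context, i.e.\ that the candidate relation is genuinely closed under the transition rules of Figure \ref{fig:concurrentReductionSemantics}. All remaining axioms collapse to structural congruence on the target together with Lemma \ref{lem:transBarbCongIncludesSC}, so the proof is essentially a case analysis over Figure \ref{fig:SC} plus this one bisimulation.
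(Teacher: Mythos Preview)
Your proposal is correct and follows essentially the same route as the paper: reduce all structural-congruence axioms except $P \mid \nullTerm \equiv P$ to structural congruence on the target (via the clean translation of $\mid$, restriction, and match, plus the renaming policy), invoke Lemma~\ref{lem:transBarbCongIncludesSC}, and treat the remaining axiom semantically by observing that $\EncodingSepAsyn{\nullTerm} = \RestrictedTerm{\sumLock}{\Output{\sumLock}{\true}}$ is a closed, step-free, barb-free term that cannot interact with any context.

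The one difference is methodological rather than mathematical. The paper handles the $P \mid \nullTerm$ case in a single sentence, simply asserting that the encoded $\nullTerm$ ``behaves as $\nullTerm$'' in every context; it does not spell out how this survives arbitrary source contexts $\mathcal{C}$ around the redex. You make this explicit by decomposing $\equiv$ into single rewrites under a source context, pushing the context through compositionality to a target context $\mathcal{D}$, and then proposing a concrete bisimulation relation (pairs of terms differing only by the inert junk at an arbitrary, possibly guarded position). This is more rigorous and is exactly what the paper's later machinery on junk (Definition~\ref{def:junk}, Lemmas~\ref{lem:junkEmptySum} and~\ref{lem:removeJunk}) formalises---but those lemmas appear after the present one, so your self-contained bisimulation argument is the honest way to close the gap here.
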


\begin{proof}
	The strict use of the renaming policy $ \renamingPolicySepAsyn $, i.e., the fact that source term names are translated into single names not used by the encoding function for special purposes, ensures the preservation of equality modulo alpha conversion. Since the parallel operator, the match operator, and restriction are translated \cleanly, the encoding $ \encodingSepAsyn $ preserves structural congruence of source terms for all rules except for $ P \mid \nullTerm \equiv P $, i.e., if $ S $ and $ S' $ are structural congruent without using the rule $ P \mid \nullTerm \equiv P $, then $ \EncodingSepAsyn{S} \equiv \EncodingSepAsyn{S'} $. By Lemma \ref{lem:transBarbBisimIncludesSC}, then $ \EncodingSepAsyn{S} \transBarbBisimSepAsyn \EncodingSepAsyn{S'} $ and, by Lemma \ref{lem:transBarbCongIncludesSC}, then $ \EncodingSepAsyn{S} \transBarbCongSepAsyn \EncodingSepAsyn{S'} $.
	
	The rule $ P \mid \nullTerm \equiv P $ is not preserved, because the empty sum $ \nullTerm $ is not translated \cleanly, so e.g. $ \nullTerm \mid \nullTerm \equiv \nullTerm $ but $ \EncodingSepAsyn{\nullTerm \mid \nullTerm} = \RestrictedTerm{\sumLock}{\Output{\sumLock}{\true}} \mid \RestrictedTerm{\sumLock}{\Output{\sumLock}{\true}} \not\equiv \RestrictedTerm{\sumLock}{\Output{\sumLock}{\true}} = \EncodingSepAsyn{\nullTerm} $. Note that, because of the renaming policy $ \renamingPolicySepAsyn $ and the \clean translation of restriction, the rule $ \RestrictedTerm{n}{\nullTerm} \equiv \nullTerm $ is preserved, i.e., since $ \RenamingPolicySepAsyn{n} \notin \FreeNames{\EncodingSepAsyn{\nullTerm}} $, we have $ \EncodingSepAsyn{\RestrictedTerm{n}{\nullTerm}} = \RestrictedTerm{\RenamingPolicySepAsyn{n}}{\RestrictedTerm{\sumLock}{\Output{\sumLock}{\true}}} \equiv \RestrictedTerm{\sumLock}{\Output{\sumLock}{\true}} = \EncodingSepAsyn{\nullTerm} $. However, since $ \nullTerm $ is translated into a closed term that can not perform any step, its encoding behaves as $ \nullTerm $. In particular $ \EncodingSepAsyn{\nullTerm} $ can not interact with any context and does not reach success or any translated observable on its own. So, even in this case, we have $ \EncodingSepAsyn{S} \transBarbBisimSepAsyn \EncodingSepAsyn{S'} $ and $ \EncodingSepAsyn{S} \transBarbCongSepAsyn \EncodingSepAsyn{S'} $.
	\qed
\end{proof}

Since $ \encodingMixAsyn $ does not translate the parallel operator \cleanly, it does not directly preserve structural congruence of source terms. But, since the encoding preserves the abstract behaviour of source terms, the encodings of structural congruent source terms are similar modulo equivalences measuring only these abstract behaviour. As already explained, to prove the following statement, the equivalence must not distinguish terms by their reachable intermediate states.

\begin{lemma} \label{lem:preservesSCModuloTransBarbBisimMixAsyn}
	The encoding $ \encodingMixAsyn $ preserves structural congruence of source terms modulo translated barbed bisimilarity and translated barbed congruence, i.e.,
	\begin{align*}
		\forall S, S' \in \piMixProc \logdot S \equiv S' \text{ implies } \EncodingMixAsyn{S} \transBarbBisimMixAsynB \EncodingMixAsyn{S'} \wedge \EncodingMixAsyn{S} \transBarbCongMixAsynB \EncodingMixAsyn{S'}.
	\end{align*}
\end{lemma}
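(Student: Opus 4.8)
The plan is to prove only the congruence statement $\EncodingMixAsyn{S} \transBarbCongMixAsynB \EncodingMixAsyn{S'}$; the bisimilarity $\EncodingMixAsyn{S} \transBarbBisimMixAsynB \EncodingMixAsyn{S'}$ then follows at once by instantiating Definition \ref{def:transBarbCong} with the trivial context $\Context{}{}{\hole} = \hole$, for which $\Context{}{}{\EncodingMixAsyn{\nullTerm}} = \EncodingMixAsyn{\nullTerm} \in \targetTermsMixAsyn$. I would induct on the derivation of $S \equiv S'$. The reflexivity, symmetry and transitivity cases are discharged by Lemma \ref{lem:transBarbCongIsEquivalence}. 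For the congruence-closure cases it suffices, by compositionality (Definition \ref{def:compositionality}) and the fact that $\equiv$ preserves free names (so the name-indexed operator contexts for $S$ and $S'$ coincide), to reduce an arbitrary derivation to single axiom applications taking place inside a source context $\mathcal{C}$, and to lift $\EncodingMixAsyn{L} \transBarbCongMixAsynB \EncodingMixAsyn{R}$ through the encoded context $\hat{\mathcal{C}}$, where $\EncodingMixAsyn{\mathcal{C}[L]} = \hat{\mathcal{C}}[\EncodingMixAsyn{L}]$. This lifting is exactly the statement that $\transBarbCongMixAsynB$ is, by Definition \ref{def:transBarbCong}, closed under the protocol-respecting encoded operator contexts; the only subtle instance is the encoded parallel context, which I treat together with the parallel axioms below. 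It then remains to establish $\EncodingMixAsyn{L} \transBarbCongMixAsynB \EncodingMixAsyn{R}$ for each generating axiom $L \equiv R$, in analogy to Lemma \ref{lem:preservesSCModuloTransBarbBisimSepAsyn}.

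For alpha conversion, $\Match{a}{a}P \equiv P$, $\RestrictedTerm{n}{\RestrictedTerm{m}{P}} \equiv \RestrictedTerm{m}{\RestrictedTerm{n}{P}}$ and $\RestrictedTerm{n}{\nullTerm} \equiv \nullTerm$ the claim is routine: since restriction and match are translated \cleanly, one checks directly that $\EncodingMixAsyn{L} \equiv \EncodingMixAsyn{R}$ at the target level (using that the renaming policy is injective, that $\RenamingPolicyMixAsyn{n} \neq \RenamingPolicyMixAsyn{m}$, and that $\RenamingPolicyMixAsyn{n} \notin \FreeNames{\EncodingMixAsyn{\nullTerm}} = \emptyset$). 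I would include scope extrusion $P \mid \RestrictedTerm{n}{Q} \equiv \RestrictedTerm{n}{(P \mid Q)}$ here as well: although the parallel operator is not \clean, the name $\RenamingPolicyMixAsyn{n}$ is disjoint from all reserved coordinator names and, since $n \notin \FreeNames{P}$ and by Lemma \ref{lem:nameInvarianceMixAsyn}, also $\RenamingPolicyMixAsyn{n} \notin \FreeNames{\EncodingMixAsyn{P}}$; hence the inner restriction $\RestrictedTerm{\RenamingPolicyMixAsyn{n}}{}$ can be extruded through the forwarders, the inner $\parallelChannelIn, \parallelChannelOut$ restrictions and the outer coordinator restrictions by target-level structural congruence, giving $\EncodingMixAsyn{P \mid \RestrictedTerm{n}{Q}} \equiv \EncodingMixAsyn{\RestrictedTerm{n}{(P \mid Q)}}$. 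In all these cases Lemma \ref{lem:transBarbCongIncludesSC} yields $\EncodingMixAsyn{L} \transBarbCongMixAsynA \EncodingMixAsyn{R}$, which implies $\EncodingMixAsyn{L} \transBarbCongMixAsynB \EncodingMixAsyn{R}$ because the second variant is strictly weaker than the first.

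The heart of the proof are the three genuinely behavioural axioms $P \mid \nullTerm \equiv P$, $P \mid Q \equiv Q \mid P$ and $P \mid (Q \mid R) \equiv (P \mid Q) \mid R$, where $\EncodingMixAsyn{L}$ and $\EncodingMixAsyn{R}$ are not structurally congruent. For $P \mid \nullTerm \equiv P$ the right compartment carries only the inert, closed term $\RestrictedTerm{\sumLock}{\Output{\sumLock}{\true}}$, which emits no request; by Corollary \ref{col:originRequests} every request, every sum-lock instantiation and hence every reachable translated observable and every reachable occurrence of $\success$ of $\EncodingMixAsyn{P \mid \nullTerm}$ stems from the $\EncodingMixAsyn{P}$-part, so the surrounding coordinator machinery is harmless junk, and I would exhibit a $\transBarbBisimMixAsynB$-relation driven by the request-preservation Lemma \ref{lem:encodingMixAsynPreserveRequests} and the sum-lock Lemmata \ref{lem:instantiationSumLocks}--\ref{lem:changeInstantiationSumLock}. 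For commutativity and associativity the two encodings agree on the multiset of encoded guarded summands, on their sum locks, and on the matching relation between input and output requests; they differ only in the (restricted, hence unobservable) coordinator names and in the total order imposed on the sum locks along the parallel tree. Conditions 1 and 2 of Definition \ref{def:transBarbBisimB} then follow from the symmetry of the \processLeftOutputRequests/\processRightOutputRequests protocol together with request preservation, and conditions 3 and 4 follow because the differing order only changes which \emph{intermediate states} (Definition \ref{def:intermediateState}) are traversed --- precisely the phenomenon that motivated the weaker variant $\transBarbBisimMixAsynB$ (compare Example \ref{exa:intermediateStates} and the subsequent discussion, whose witnesses differ exactly by $P \mid Q \equiv Q \mid P$), which the ``up-to-further-steps'' closure of conditions 3 and 4 quotients out.

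I expect the verification of conditions 3 and 4 for commutativity and associativity to be the main obstacle. Concretely, given a reduction $\EncodingMixAsyn{P \mid Q} \steps T'$ into an intermediate state, one must produce a common reduct: drive $T'$ forward by \admin steps and by completing any test-statement that has already consumed a sum lock to a term $T''$, and show that the reordered encoding can reach a $\transBarbBisimMixAsynB$-equivalent of $T''$ even though its sum locks are tested in the opposite order. The no-deadlock Lemmata \ref{lem:pureAdminStepsNoDeadlock} and \ref{lem:impureAdminStepsNoDeadlock} are essential here: they guarantee that a partially committed nested test-statement can always be completed, so that the testing order can never permanently rule out on one side a completion that the other side could not also either reach or avoid. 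Associativity additionally forces one to track the re-nesting of the coordinator tree, i.e. that the left-first total orders induced by $(P \mid Q) \mid R$ and by $P \mid (Q \mid R)$ select the same set of completable \simulations up to intermediate states. By contrast, the \clean axioms and the $P \mid \nullTerm$ case are comparatively routine.
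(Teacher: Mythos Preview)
Your proposal is correct and follows essentially the same route as the paper's proof: induction over the derivation of $S \equiv S'$, with the \clean\ axioms (alpha, match, restriction permutation, $\RestrictedTerm{n}{\nullTerm}$, scope extrusion) dispatched via target-level structural congruence and Lemma~\ref{lem:transBarbCongIncludesSC}, and the three parallel axioms treated behaviourally using the weaker variant~$\transBarbBisimMixAsynB$ to absorb differing intermediate states.

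Two points of comparison are worth noting. First, you are more explicit than the paper about the congruence-closure cases, correctly observing that lifting $\EncodingMixAsyn{L} \transBarbCongMixAsynB \EncodingMixAsyn{R}$ through an encoded operator context is exactly what Definition~\ref{def:transBarbCong} provides (the paper effectively elides this and treats only the axioms in its case split). Second, you slightly miscalibrate the relative difficulty of commutativity versus associativity. For $P \mid Q \equiv Q \mid P$ the left-first sum-lock orderings genuinely differ, so the two encodings reach different intermediate states and the up-to-further-steps clauses of $\transBarbBisimMixAsynB$ are essential---this is indeed the hard case. For $P \mid (Q \mid R) \equiv (P \mid Q) \mid R$, however, the paper observes that in \emph{both} bracketings $P$ is left of $Q$ and $R$, and $Q$ is left of $R$, so the induced total orders on sum locks \emph{coincide} and the two encodings reach the \emph{same} intermediate states; associativity therefore reduces to showing that the request-routing differs only by \pure\ \admin\ steps, which is strictly easier than commutativity rather than ``additionally'' harder as your final paragraph suggests. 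Once you make this observation, your verification of conditions 3 and~4 for associativity collapses.
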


\begin{proof}
	Again, the strict use of the renaming policy $ \renamingPolicyMixAsyn $, i.e., the fact that source term names are translated into single names not used by the encoding function for special purposes, ensures the preservation of equality modulo alpha conversion. So $ S \equivAlpha S' $ implies $ \EncodingMixAsyn{S} \equivAlpha \EncodingMixAsyn{S'} $. Also, the \clean translation of the match operator and restriction ensures the preservation of structural congruence modulo the rules $ \Match{a}{a}P \equiv P $, $ \RestrictedTerm{n}{\nullTerm} = \nullTerm $, $ \RestrictedTerm{n}{\RestrictedTerm{m}{P}} \equiv \RestrictedTerm{m}{\RestrictedTerm{n}{P}} $, and $ P \mid \RestrictedTerm{n}{Q} \equiv \RestrictedTerm{n}{\left( P \mid Q \right)} $ if $ n \notin \FreeNames{P} $. So, if $ S $ and $ S' $ do only differ due to one or more of these four rules, then $ \EncodingMixAsyn{S} \equiv \EncodingMixAsyn{S'} $. By Lemma \ref{lem:transBarbBisimIncludesSC}, we conclude $ \EncodingMixAsyn{S} \transBarbBisimMixAsynB \EncodingMixAsyn{S'} $ and, by Lemma \ref{lem:transBarbCongIncludesSC}, we conclude $ \EncodingMixAsyn{S} \transBarbCongMixAsynB \EncodingMixAsyn{S'} $ for both of the above cases.
	
	With the preservation of these rules in mind we show the lemma by an induction over the proof tree of $ S \equiv S' $, i.e., over the number of structural congruence rules which are applied to show $ S \equiv S' $.
	\begin{description}
		\item[Base Case:] If $ S = S' $, then $ \EncodingMixAsyn{S} = \EncodingMixAsyn{S'} $. So, by reflexivity, $ \EncodingMixAsyn{S} \transBarbBisimMixAsynB \EncodingMixAsyn{S'} $ and $ \EncodingMixAsyn{S} \transBarbCongMixAsynB \EncodingMixAsyn{S'} $.
		\item[Induction Hypothesis:] If $ S $ and $ S' $ can be proved to be structural congruent within $ n $ applications of structural congruence rules, then $ \EncodingMixAsyn{S} \transBarbBisimMixAsynB \EncodingMixAsyn{S'} $ and $ \EncodingMixAsyn{S} \transBarbCongMixAsynB \EncodingMixAsyn{S'} $.
		\item[Induction Step:] $ S $ and $ S' $ can be proved to be structural congruent within $ n + 1 $ applications of structural congruence rules. Let $ S'' \in \piMixProc $ be such that $ S $ and $ S'' $ can be proved to be structural congruent within $ n $ applications of structural congruence rules and $ S'' $ and $ S' $ can be proved to be structural congruent directly by one application of a structural congruence rule, i.e., $ S \equiv S'' \equiv S' $. By the induction hypothesis, $ \EncodingMixAsyn{S} \transBarbBisimMixAsynB \EncodingMixAsyn{S''} $ and $ \EncodingMixAsyn{S} \transBarbCongMixAsynB \EncodingMixAsyn{S''} $. We proceed with a case split over the rule necessary to prove $ S'' \equiv S' $.
		\begin{description}
			\item[Case of Rule $ P \mid \nullTerm \equiv P $:] In this case $ S'' = P \mid \nullTerm $ and $ S' = P $ for some $ P \in \piMixProc $. By Figure \ref{fig:encodingMixAsyn},
				\begin{align*}
					\EncodingMixAsyn{S''} & = \RestrictedTerm{\matchingCoordinatorOut, \matchingCoordinatorIn, \coordinatorUpOut, \coordinatorUpIn, \coordinatorMatchingOut, \coordinatorMatchingIn, \matchingUpOut, \matchingUpIn}{\big(\\
						& \hspace*{3em} \begin{aligned}[t]
								& \RestrictedTerm{\parallelChannelOut, \parallelChannelIn}{\left( \EncodingMixAsyn{P} \mid \processLeftOutputRequests \mid \processLeftInputRequests \right)}\\
								& \mid \RestrictedTerm{\parallelChannelOut, \parallelChannelIn}{\left( \RestrictedTerm{\sumLock}{\Output{\sumLock}{\true}} \mid \processRightOutputRequests \mid \processRightInputRequests \right)}\\
								& \mid \pushRequests \big)
							\end{aligned}}
				\end{align*}
				and $ \EncodingMixAsyn{S'} = \EncodingMixAsyn{P} $. Obviously $ \EncodingMixAsyn{S''} $ and $ \EncodingMixAsyn{S'} $ are not structural congruent. However, $ \EncodingMixAsyn{P} $ appears unguarded within $ \EncodingMixAsyn{S''} $, so if $ \EncodingMixAsyn{S'} $ reaches $ \success $ or a translated observable then so does $ \EncodingMixAsyn{S''} $. Moreover we observe, that, since the encoding of $ \nullTerm $ does not emit any requests, the hole right branch of $ \EncodingMixAsyn{S''} $
				\begin{align*}
					\RestrictedTerm{\parallelChannelOut, \parallelChannelIn}{\left( \RestrictedTerm{\sumLock}{\Output{\sumLock}{\true}} \mid \processRightOutputRequests \mid \processRightInputRequests \right)}
				\end{align*}
				can do nothing but two steps on chain locks. Because of that requests of $ \EncodingMixAsyn{P} $ are prepared to be transmitted to the right side by $ \processLeftOutputRequests $ and $ \processLeftInputRequests $ but they are never received at the right side. What remains is the upward pushing of all requests of $ \EncodingMixAsyn{P} $ by the interplay of $ \processLeftOutputRequests $, $ \processLeftInputRequests $, and $ \pushRequests $. Because of that, for all target term contexts $ \EncodingMixAsyn{P \mid \nullTerm} $ behaves as $ \EncodingMixAsyn{P} $, i.e., $ \EncodingMixAsyn{S''} \transBarbBisimMixAsynB \EncodingMixAsyn{S'} $ and $ \EncodingMixAsyn{S''} \transBarbCongMixAsynB \EncodingMixAsyn{S'} $. Since $ \transBarbBisimMixAsynB $ and $ \transBarbCongMixAsynB $ are equivalences (compare to Lemmata \ref{lem:transBarbBisimIsEquivalence} and \ref{lem:transBarbCongIsEquivalence}), by transitivity, we conclude $ \EncodingMixAsyn{S} \transBarbBisimMixAsynB \EncodingMixAsyn{S'} $ and $ \EncodingMixAsyn{S} \transBarbCongMixAsynB \EncodingMixAsyn{S'} $.
			\item[Case of Rule $ P \mid Q \equiv Q \mid P $:] In this case $ S'' = P \mid Q $ and $ S' = Q \mid P $ for some $ P, Q \in \piMixProc $. Their encodings are given by:
				\begin{align*}
					\EncodingMixAsyn{S''} & = \RestrictedTerm{\matchingCoordinatorOut, \matchingCoordinatorIn, \coordinatorUpOut, \coordinatorUpIn, \coordinatorMatchingOut, \coordinatorMatchingIn, \matchingUpOut, \matchingUpIn}{\big(\\
						& \hspace*{3em} \begin{aligned}[t]
								& \RestrictedTerm{\parallelChannelOut, \parallelChannelIn}{\left( \EncodingMixAsyn{P} \mid \processLeftOutputRequests \mid \processLeftInputRequests \right)}\\
								& \mid \RestrictedTerm{\parallelChannelOut, \parallelChannelIn}{\left( \EncodingMixAsyn{Q} \mid \processRightOutputRequests \mid \processRightInputRequests \right)}\\
								& \mid \pushRequests \big)
							\end{aligned}}\\
					\EncodingMixAsyn{S'} & = \RestrictedTerm{\matchingCoordinatorOut, \matchingCoordinatorIn, \coordinatorUpOut, \coordinatorUpIn, \coordinatorMatchingOut, \coordinatorMatchingIn, \matchingUpOut, \matchingUpIn}{\big(\\
						& \hspace*{3em} \begin{aligned}[t]
								& \RestrictedTerm{\parallelChannelOut, \parallelChannelIn}{\left( \EncodingMixAsyn{Q} \mid \processLeftOutputRequests \mid \processLeftInputRequests \right)}\\
								& \mid \RestrictedTerm{\parallelChannelOut, \parallelChannelIn}{\left( \EncodingMixAsyn{P} \mid \processRightOutputRequests \mid \processRightInputRequests \right)}\\
								& \mid \pushRequests \big)
							\end{aligned}}
				\end{align*}
				Since all combinations of left and right requests are checked, $ \EncodingMixAsyn{S''} $ can \simulate the same source term steps as $ \EncodingMixAsyn{S'} $. However, since $ \EncodingMixAsyn{P} $ and $ \EncodingMixAsyn{Q} $ are exchanged at the outermost parallel operator the roles of left and right requests are exchanged. As a consequence, if a combination of requests from $ \EncodingMixAsyn{P} $ and $ \EncodingMixAsyn{Q} $ leads to a test on the respective sum locks, the order in which these locks are tested is different in $ \EncodingMixAsyn{S''} $ and $ \EncodingMixAsyn{S'} $. So $ \EncodingMixAsyn{S''} $ and $ \EncodingMixAsyn{S'} $ differ in their total ordering of sum locks. The ordering in $ \EncodingMixAsyn{S''} $ is based on the structure induced by the nesting of parallel operators in $ P \mid Q $; while the ordering in $ \EncodingMixAsyn{S'} $ is based on the structure induced by the parallel operator nesting in $ Q \mid P $. Note that, since in both cases this structure is a binary tree, by Lemma \ref{lem:impureAdminStepsNoDeadlock}, the encoding does not introduce deadlock. But as explained in Example \ref{exa:intermediateStates} the different orderings may lead to different reachable intermediate states. Apart from intermediate states $ \EncodingMixAsyn{S''} $ and $ \EncodingMixAsyn{S'} $ are similar, i.e., they have the same chance to reach success or translated observables.
				
				By Definition \ref{def:transBarbBisimB}, $ \transBarbBisimMixAsynB $ explicitly allows for different reachable intermediate states. Because of that, $ \EncodingMixAsyn{S''} \transBarbBisimMixAsynB \EncodingMixAsyn{S'} $. Analysing the encoding function in Figure \ref{fig:encodingMixAsyn} we observe that any encoded source term has at most two free names that are used as channels|remember that translated source term names are never used as channels within $ \encodingMixAsyn $. Because of that, when placed within a target term context, $ \EncodingMixAsyn{S''} $ and $ \EncodingMixAsyn{S'} $ can start an interaction with the context only by transmitting their requests. Because we consider only target term contexts, i.e., contexts $ \Context{}{}{\hole} \in \piAsynProc \to \piAsynProc $ such that $ \Context{}{}{\EncodingMixAsyn{\nullTerm}} \in \targetTermsMixAsyn $, the context respects the protocol implemented by the encoding function. So, if $ \EncodingMixAsyn{S''} $ provides a translated observable and the context has the matching translated observable, then the context can interact with $ \EncodingMixAsyn{S''} $ to \simulate a source term step. Doing so, an encoded continuation, i.e., an encoded source term, is unguarded within the continuation of $ \EncodingMixAsyn{S''} $. Since $ S'' \equiv S' $, the same context can \simulate the same source term step when interacting with $ \EncodingMixAsyn{S'} $. Moreover, doing so, again an encoded source term is unguarded within the continuation of $ \EncodingMixAsyn{S'} $ and the respective source terms of these continuations in case of $ \EncodingMixAsyn{S''} $ and $ \EncodingMixAsyn{S'} $ are again structural congruent. Because of this we can prove the preservation of structural congruence of source terms is also preserved modulo $ \transBarbCongMixAsynB $ by assuming an arbitrary context and perform an induction over the number of \simulations resulting from an interaction of the context with $ \EncodingMixAsyn{S''} $. So we conclude $ \EncodingMixAsyn{S''} \transBarbCongMixAsynB \EncodingMixAsyn{S'} $.
				
				Since $ \transBarbBisimMixAsynB $ and $ \transBarbCongMixAsynB $ are equivalences (compare to Lemmata \ref{lem:transBarbBisimIsEquivalence} and \ref{lem:transBarbCongIsEquivalence}), by transitivity, we conclude $ \EncodingMixAsyn{S} \transBarbBisimMixAsynB \EncodingMixAsyn{S'} $ and $ \EncodingMixAsyn{S} \transBarbCongMixAsynB \EncodingMixAsyn{S'} $.
			\item[Case of Rule $ P \mid \left( Q \mid R \right) \equiv \left( P \mid Q \right) \mid R $:] In this $ S'' = P \left( Q \mid R \right) $ and $ S' = \left( P \mid Q \right) \mid R $ for some $ P, Q, R \in \piMixProc $. Their encodings are given by
				\begin{align*}
					\EncodingMixAsyn{S''} & = \RestrictedTerm{\matchingCoordinatorOut, \matchingCoordinatorIn, \coordinatorUpOut, \coordinatorUpIn, \coordinatorMatchingOut, \coordinatorMatchingIn, \matchingUpOut, \matchingUpIn}{\big(\\
						& \hspace*{2em} \begin{aligned}[t]
								& \RestrictedTerm{\parallelChannelOut, \parallelChannelIn}{\left( \EncodingMixAsyn{P} \mid \processLeftOutputRequests \mid \processLeftInputRequests \right)}\\
								& \mid \RestrictedTerm{\parallelChannelOut, \parallelChannelIn}{\big( \begin{aligned}[t]
										& \RestrictedTerm{\matchingCoordinatorOut, \matchingCoordinatorIn, \coordinatorUpOut, \coordinatorUpIn, \coordinatorMatchingOut, \coordinatorMatchingIn, \matchingUpOut, \matchingUpIn}{\big(\\
						& \hspace*{1em} \begin{aligned}[t]
								& \RestrictedTerm{\parallelChannelOut, \parallelChannelIn}{\left( \EncodingMixAsyn{Q} \mid \processLeftOutputRequests \mid \processLeftInputRequests \right)}\\
								& \mid \RestrictedTerm{\parallelChannelOut, \parallelChannelIn}{\left( \EncodingMixAsyn{R} \mid \processRightOutputRequests \mid \processRightInputRequests \right)}\\
								& \mid \pushRequests \big)
							\end{aligned}}\\
										& \mid \processRightOutputRequests \mid \processRightInputRequests \big)
									\end{aligned}}\\
								& \mid \pushRequests \big)
							\end{aligned}}
				\end{align*}
				and
				\begin{align*}
					\EncodingMixAsyn{S'} & = \RestrictedTerm{\matchingCoordinatorOut, \matchingCoordinatorIn, \coordinatorUpOut, \coordinatorUpIn, \coordinatorMatchingOut, \coordinatorMatchingIn, \matchingUpOut, \matchingUpIn}{\big(\\
						& \hspace*{2em} \begin{aligned}[t]
								& \RestrictedTerm{\parallelChannelOut, \parallelChannelIn}{\big( \begin{aligned}[t]
										& \RestrictedTerm{\matchingCoordinatorOut, \matchingCoordinatorIn, \coordinatorUpOut, \coordinatorUpIn, \coordinatorMatchingOut, \coordinatorMatchingIn, \matchingUpOut, \matchingUpIn}{\big(\\
						& \hspace*{1em} \begin{aligned}[t]
								& \RestrictedTerm{\parallelChannelOut, \parallelChannelIn}{\left( \EncodingMixAsyn{P} \mid \processLeftOutputRequests \mid \processLeftInputRequests \right)}\\
								& \mid \RestrictedTerm{\parallelChannelOut, \parallelChannelIn}{\left( \EncodingMixAsyn{Q} \mid \processRightOutputRequests \mid \processRightInputRequests \right)}\\
								& \mid \pushRequests \big)
							\end{aligned}}\\
										& \mid \processLeftOutputRequests \mid \processLeftInputRequests \big)
									\end{aligned}}\\
								& \mid \RestrictedTerm{\parallelChannelOut, \parallelChannelIn}{\left( \EncodingMixAsyn{R} \mid \processRightOutputRequests \mid \processRightInputRequests \right)}\\
								& \mid \pushRequests \big).
							\end{aligned}}
				\end{align*}
				In $ \EncodingMixAsyn{S''} $ the encoding of $ Q $ appears left and the encoding of $ R $ appears right within the encoding of a parallel operator. Together they form the right branch of a surrounding encoding of a parallel operator, there the left branch is filled with $ \EncodingMixAsyn{P} $. In opposite in $ \EncodingMixAsyn{S'} $ the terms $ \EncodingMixAsyn{P} $ and $ \EncodingMixAsyn{Q} $ are left and right of a parallel operator encoding which is the left branch of a surrounding parallel operator encoding, where $ \EncodingMixAsyn{R} $ appears right. However, since all requests are pushed upwards to each surrounding parallel operator encoding, again all combinations of requests among the three encoded subterms $ \EncodingMixAsyn{P} $, $ \EncodingMixAsyn{Q} $, and $ \EncodingMixAsyn{R} $ are checked in $ \EncodingMixAsyn{S''} $ as well as in $ \EncodingMixAsyn{S'} $. Moreover, we observe that in both encodings $ \EncodingMixAsyn{S''} $ and $ \EncodingMixAsyn{S'} $ the encoding of $ P $ is always left to the encodings of $ Q $ and $ R $, and the encoding of $ Q $ is always left to the encoding of $ R $. So in this case $ \EncodingMixAsyn{S''} $ and $ \EncodingMixAsyn{S'} $ do not differ by the underlying total ordering of sum locks, i.e., they reach the same intermediate states. So the behaviour of $ \EncodingMixAsyn{S''} $ and $ \EncodingMixAsyn{S'} $ does only differ by \pure \admin steps on requests but they have the same chance to reach $ \success $ and translated observables, i.e., $ \EncodingMixAsyn{S''} \transBarbBisimMixAsynB \EncodingMixAsyn{S'} $. Revisiting the argumentation of the case before we also get $ \EncodingMixAsyn{S''} \transBarbCongMixAsynB \EncodingMixAsyn{S'} $.
				
				Since $ \transBarbBisimMixAsynB $ and $ \transBarbCongMixAsynB $ are equivalences (compare to Lemmata \ref{lem:transBarbBisimIsEquivalence} and \ref{lem:transBarbCongIsEquivalence}), by transitivity, we conclude $ \EncodingMixAsyn{S} \transBarbBisimMixAsynB \EncodingMixAsyn{S'} $ and $ \EncodingMixAsyn{S} \transBarbCongMixAsynB \EncodingMixAsyn{S'} $.
			\item[Else:] For the reaming rules we can apply the above argumentation to show that $ \EncodingSepAsyn{S''} \equiv \EncodingSepAsyn{S'} $. By the Lemmata \ref{lem:transBarbBisimIncludesSC} and \ref{lem:transBarbCongIncludesSC}, we have $ \EncodingMixAsyn{S''} \transBarbBisimMixAsynB \EncodingMixAsyn{S'} $ and $ \EncodingMixAsyn{S''} \transBarbCongMixAsynB \EncodingMixAsyn{S'} $. By Lemma \ref{lem:transBarbBisimIsEquivalence} and Lemma \ref{lem:transBarbCongIsEquivalence}, $ \transBarbBisimMixAsynB $ and $ \transBarbCongMixAsynB $ are equivalences. Thus, by transitivity, we conclude $ \EncodingMixAsyn{S} \transBarbBisimMixAsynB \EncodingMixAsyn{S'} $ and $ \EncodingMixAsyn{S} \transBarbCongMixAsynB \EncodingMixAsyn{S'} $.
		\end{description}
	\end{description}
	\qed
\end{proof}

These two lemmata finally prove that the intermediate states in combination with the application of the \textsc{Cong} rule on source terms do not falsify the criterion on operational completeness modulo $ \transBarbBisimMixAsynB $.

\subsection{Junk}

We consider remains of \simulations that behave modulo $ \transBarbBisimSepAsyn $ and $ \transBarbBisimMixAsyn $ like $ \nullTerm $ and do not influence the possibility or inability to \simulate further source term steps as junk. The \simulation of source term steps may leave different kinds of junk. So, e.g. in order to show operational completeness, we have to prove that junk does no harm.

Of course we are only interested in kinds of junk that appear in target terms, i.e., that are pieces of target terms. However, to ease the argumentation on the proof of operational completeness we want to allow to stepwise reduce junk. Unfortunately, as soon as we reduce a target term by the first piece of junk it is often no target term any more. So, in order to allow for a stepwise reduction of junk, we give a recursive definition of what it means to be a piece of a target term.

\begin{definition}[Piece of a Target Term] \label{def:pieceTargetTerm}
	A term $ T \in \piAsynProc $ is a \emph{piece of a target term} of $ \encodingSepAsyn $, denoted by $ T \in \pieceTargetTermsSepAsyn $, if
	\begin{align*}
		T \in \targetTermsSepAsyn \vee \left( \exists T', J \in \piAsynProc \logdot \exists \tilde{x} \subset \names \logdot T \equiv \RestrictedTerm{\tilde{x}}{T'} \wedge T \transBarbCongSepAsyn \RestrictedTerm{\tilde{x}}{\left( T' \mid J \right)} \wedge \RestrictedTerm{\tilde{x}}{\left( T' \mid J \right)} \in \pieceTargetTermsSepAsyn \right).
	\end{align*}
	Accordingly, a term $ T \in \piAsynProc $ is a \emph{piece of a target term} of $ \encodingMixAsyn $, denoted by $ T \in \pieceTargetTermsMixAsyn $, if
	\begin{align*}
		T \in \targetTermsMixAsyn \vee \left( \exists T', J \in \piAsynProc \logdot \exists \tilde{x} \subset \names \logdot T \equiv \RestrictedTerm{\tilde{x}}{T'} \wedge T \transBarbCongMixAsyn \RestrictedTerm{\tilde{x}}{\left( T' \mid J \right)} \wedge \RestrictedTerm{\tilde{x}}{\left( T' \mid J \right)} \in \pieceTargetTermsMixAsyn \right).
	\end{align*}
\end{definition}
\noindent
Intuitively, the definition above allows for a piece of a target term to recover the corresponding target term by stepwise restoring the reduced junk. Moreover note, that, although the relations $ \transBarbCongSepAsyn $ and $ \transBarbCongMixAsyn $ are not sensitive to divergence, they are sensitive to deadlock. That is why we require $ T \transBarbCongSepAsyn \RestrictedTerm{\tilde{x}}{\left( T' \mid J \right)} $ or $ T \transBarbCongMixAsyn \RestrictedTerm{\tilde{x}}{\left( T' \mid J \right)} $ to ensure that indeed only junk is removed and so, no deadlock is introduced.

\begin{definition}[Junk] \label{def:junk}
	A term $ J \in \piAsynProc $ is called \emph{junk} of the encoding $ \encodingSepAsyn $ modulo $ \transBarbBisimSepAsyn $, if $ J $ behaves modulo $ \transBarbBisimSepAsyn $ similar to $ \nullTerm $ for all pieces of target terms, i.e.,
	\begin{align*}
		\forall \Context{}{}{\hole} \in \piAsynProc \to \piAsynProc \logdot \Context{}{}{J} \in \pieceTargetTermsSepAsyn \text{ implies } \Context{}{}{J} \transBarbBisimSepAsyn \Context{}{}{\nullTerm}.
	\end{align*}
	A term $ J \in \piAsynProc $ is called \emph{junk} of the encoding $ \encodingMixAsyn $ modulo $ \transBarbBisimMixAsyn $, if $ J $ behaves modulo $ \transBarbBisimMixAsyn $ similar to $ \nullTerm $ for all pieces of target terms, i.e.,
	\begin{align*}
		\forall \Context{}{}{\hole} \in \piAsynProc \to \piAsynProc \logdot \Context{}{}{J} \in \pieceTargetTermsMixAsyn \text{ implies } \Context{}{}{J} \transBarbBisimMixAsyn \Context{}{}{\nullTerm}.
	\end{align*}
\end{definition}
\noindent
Since we do not consider junk modulo equivalences different from $ \transBarbBisimSepAsyn $ and $ \transBarbBisimMixAsyn $, we omit the equivalence in the following. Moreover we omit the encoding function if the considered junk appears within both encodings.

Of course, whenever we reduce a piece of a target term by removing junk, the result is again a piece of a target term. Moreover, reducing junk does not change the behaviour of such a term modulo $ \transBarbCongSepAsyn $ or $ \transBarbCongMixAsyn $.

\begin{lemma} \label{lem:removeJunk}
	Let $ T $ be a piece of a target term including some junk $ J \in \piAsynProc $. Then removing this junk results a piece of a target term which is congruent to $ T $, i.e.,
	\begin{align*}
		\forall T \in \pieceTargetTermsSepAsyn \logdot \forall T' \in \piAsynProc \logdot \forall \tilde{x} \subset \names \logdot T \equiv \RestrictedTerm{\tilde{x}}{\left( T' \mid J \right)} \text{ implies } \RestrictedTerm{\tilde{x}}{T'} \in \pieceTargetTermsSepAsyn \wedge T \transBarbCongSepAsyn \RestrictedTerm{\tilde{x}}{T'}
	\end{align*}
	and
	\begin{align*}
		\forall T \in \pieceTargetTermsMixAsyn \logdot \forall T' \in \piAsynProc \logdot \forall \tilde{x} \subset \names \logdot T \equiv \RestrictedTerm{\tilde{x}}{\left( T' \mid J \right)} \text{ implies } \RestrictedTerm{\tilde{x}}{T'} \in \pieceTargetTermsMixAsyn \wedge T \transBarbCongMixAsyn \RestrictedTerm{\tilde{x}}{T'}.
	\end{align*}
\end{lemma}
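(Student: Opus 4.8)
The plan is to handle both encodings uniformly and to reduce the entire statement to a single congruence. I first record the auxiliary closure property that $\pieceTargetTermsSepAsyn$ and $\pieceTargetTermsMixAsyn$ are closed under $\equiv$: this is immediate from Definition~\ref{def:pieceTargetTerm}, since $\targetTermsSepAsyn$ (Definition~\ref{def:targetTerm}) is closed under structural congruence by rule \textsc{Cong}, and the recursive clause only refers to terms up to $\equiv$ and $\transBarbCongSepAsyn$. In particular, from $T \equiv \RestrictedTerm{\tilde{x}}{\left( T' \mid J \right)}$ and $T \in \pieceTargetTermsSepAsyn$ I obtain $\RestrictedTerm{\tilde{x}}{\left( T' \mid J \right)} \in \pieceTargetTermsSepAsyn$. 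Everything then hinges on establishing the congruence $\RestrictedTerm{\tilde{x}}{T'} \transBarbCongSepAsyn \RestrictedTerm{\tilde{x}}{\left( T' \mid J \right)}$ (and its $\encodingMixAsyn$-analogue).

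Assuming this congruence, both conjuncts of the lemma follow quickly. For membership I instantiate the recursive clause of Definition~\ref{def:pieceTargetTerm} for $\RestrictedTerm{\tilde{x}}{T'}$ with the witnesses $T'$, $J$, and $\tilde{x}$: the congruence condition is exactly the statement just named, the reduct condition $\RestrictedTerm{\tilde{x}}{\left( T' \mid J \right)} \in \pieceTargetTermsSepAsyn$ is the closure observation above, and $\RestrictedTerm{\tilde{x}}{T'} \equiv \RestrictedTerm{\tilde{x}}{T'}$ is trivial; hence $\RestrictedTerm{\tilde{x}}{T'} \in \pieceTargetTermsSepAsyn$. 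For the congruence claim of the lemma I use $T \equiv \RestrictedTerm{\tilde{x}}{\left( T' \mid J \right)}$, which gives $T \transBarbCongSepAsyn \RestrictedTerm{\tilde{x}}{\left( T' \mid J \right)}$ by Lemma~\ref{lem:transBarbCongIncludesSC}, and compose it with the named congruence using transitivity of $\transBarbCongSepAsyn$ (Lemma~\ref{lem:transBarbCongIsEquivalence}).

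To prove $\RestrictedTerm{\tilde{x}}{\left( T' \mid J \right)} \transBarbCongSepAsyn \RestrictedTerm{\tilde{x}}{T'}$ I unfold Definition~\ref{def:transBarbCong}, fix an arbitrary context $\Context{}{D}{\hole}$ with $\Context{}{D}{\EncodingSepAsyn{\nullTerm}} \in \targetTermsSepAsyn$, and form the composite $\Context{}{G}{\hole} \deff \Context{}{D}{\RestrictedTerm{\tilde{x}}{\left( T' \mid \hole \right)}}$, whose hole sits exactly where $J$ occurs. Then $\Context{}{G}{J} \equiv \Context{}{D}{T}$ (using $T \equiv \RestrictedTerm{\tilde{x}}{\left( T' \mid J \right)}$) and $\Context{}{G}{\nullTerm} \equiv \Context{}{D}{\RestrictedTerm{\tilde{x}}{T'}}$ by the rule $P \mid \nullTerm \equiv P$. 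Once I know $\Context{}{G}{J} \in \pieceTargetTermsSepAsyn$, the assumption that $J$ is junk (Definition~\ref{def:junk}) yields $\Context{}{G}{J} \transBarbBisimSepAsyn \Context{}{G}{\nullTerm}$, and Lemma~\ref{lem:transBarbBisimIncludesSC} together with transitivity (Lemma~\ref{lem:transBarbBisimIsEquivalence}) turns this into $\Context{}{D}{\RestrictedTerm{\tilde{x}}{\left( T' \mid J \right)}} \transBarbBisimSepAsyn \Context{}{D}{\RestrictedTerm{\tilde{x}}{T'}}$, which is what Definition~\ref{def:transBarbCong} demands. The very same construction with $\transBarbCongMixAsyn$, $\transBarbBisimMixAsyn$ and the $\encodingMixAsyn$-clause of Definition~\ref{def:junk} settles the second encoding.

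The main obstacle is the side condition $\Context{}{G}{J} \in \pieceTargetTermsSepAsyn$, i.e.\ $\Context{}{D}{T} \in \pieceTargetTermsSepAsyn$: I must show that plugging a piece of a target term into any context that sends $\EncodingSepAsyn{\nullTerm}$ to a target term again produces a piece of a target term. I expect to prove this by induction on the derivation of $T \in \pieceTargetTermsSepAsyn$. In the base case $T \in \targetTermsSepAsyn$ the point is that such a context respects the structure dictated by the encoding, so replacing the occurrence of $\EncodingSepAsyn{\nullTerm}$ by $T$ again yields a target term; for $\encodingSepAsyn$ this is easy because restriction, matching and parallel composition are translated \cleanly, whereas for $\encodingMixAsyn$ it rests on compositionality (Definition~\ref{def:compositionality}) together with closure of $\targetTermsMixAsyn$ under $\equiv$ and reduction. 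In the inductive step I push $\Context{}{D}{\hole}$ through the junk-restoration and apply the induction hypothesis, where the delicate point is that composing $\Context{}{D}{\hole}$ with the restoring context must preserve the property of being target-term-preserving. This closure of piece-of-target-term membership under target-term contexts is where the genuine work lies; everything else is bookkeeping with the equivalence, congruence, and structural-congruence lemmas already established.
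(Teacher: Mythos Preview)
Your proof follows essentially the same route as the paper's: both build the composite context $\RestrictedTerm{\tilde{x}}{(T' \mid \hole)}$, invoke the junk property on $\Context{}{D}{\RestrictedTerm{\tilde{x}}{(T' \mid \hole)}}$, pass through structural congruence via Lemmas~\ref{lem:transBarbBisimIncludesSC} and~\ref{lem:transBarbCongIncludesSC}, and then read off membership of $\RestrictedTerm{\tilde{x}}{T'}$ from the recursive clause of Definition~\ref{def:pieceTargetTerm}. The only organisational difference is that the paper first establishes $T \transBarbCongSepAsyn \RestrictedTerm{\tilde{x}}{T'}$ and derives membership last, whereas you reduce both conjuncts to the single congruence $\RestrictedTerm{\tilde{x}}{T'} \transBarbCongSepAsyn \RestrictedTerm{\tilde{x}}{(T' \mid J)}$ up front; these are the same argument rearranged.

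Where you go further than the paper is in isolating the side condition $\Context{}{D}{\Context{'}{}{J}} \in \pieceTargetTermsSepAsyn$ (closure of pieces of target terms under contexts $\Context{}{D}{\hole}$ with $\Context{}{D}{\EncodingSepAsyn{\nullTerm}} \in \targetTermsSepAsyn$). The paper simply asserts this implication in one line without justification; you correctly flag it as the place where the real work sits and sketch an induction on the derivation of $T \in \pieceTargetTermsSepAsyn$. Your base case for $\encodingSepAsyn$ is plausible because the parallel operator, restriction, and match are translated homomorphically, but your remark that the $\encodingMixAsyn$ base case ``rests on compositionality together with closure of $\targetTermsMixAsyn$ under $\equiv$ and reduction'' is still a promissory note rather than an argument---compositionality alone does not guarantee that an arbitrary target-term-preserving context arises from a source-level context. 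In that respect you are no worse off than the paper, which leaves the same step unproved; you are simply more explicit about where the gap is.
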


\begin{proof}
	Let $ J \in \piAsynProc $ be junk. And let $ T \in \pieceTargetTermsSepAsyn $, $ T' \in \piAsynProc $, and $ \tilde{x} \subset \names $ such that $ T \equiv \RestrictedTerm{\tilde{x}}{\left( T' \mid J \right)} $. We show the lemma for $ \encodingSepAsyn $. The argumentation for $ \encodingMixAsyn $ is then similar.
	
	Since, by Lemma \ref{lem:transBarbCongIncludesSC}, $ \transBarbCongSepAsyn $ includes structural congruence, $ T \equiv \RestrictedTerm{\tilde{x}}{\left( T' \mid J \right)} $ implies $ T \transBarbCongSepAsyn \RestrictedTerm{\tilde{x}}{\left( T' \mid J \right)} $. By Definition \ref{def:transBarbCong}, then $ \Context{}{}{T} \transBarbBisimSepAsyn \Context{}{}{\RestrictedTerm{\tilde{x}}{\left( T' \mid J \right)}} $ for all contexts $ \Context{}{}{\hole} \in \piAsynProc \to \piAsynProc $ such that $ \Context{}{}{\EncodingSepAsyn{\nullTerm}} \in \targetTermsSepAsyn $. Let $ \Context{'}{}{\hole} = \RestrictedTerm{\tilde{x}}{\left( T' \mid \hole \right)} $. Then $ \context' $ is a context, i.e., $ \Context{'}{}{\hole} \in \piAsynProc \to \piAsynProc $, and, since $ T \in \pieceTargetTermsMixAsyn $ and $ T = \Context{'}{}{J} $, we have $ \Context{'}{}{J} \in \pieceTargetTermsMixAsyn $. Thus, $ \Context{}{}{T} \transBarbBisimSepAsyn \Context{}{}{\Context{'}{}{J}} $ for all contexts $ \Context{}{}{\hole} \in \piAsynProc \to \piAsynProc $ such that $ \Context{}{}{\EncodingSepAsyn{\nullTerm}} \in \targetTermsSepAsyn $. Moreover, $ \Context{'}{}{J} \in \pieceTargetTermsMixAsyn $ implies $ \Context{}{}{\Context{'}{}{J}} \in \pieceTargetTermsMixAsyn $ for all such contexts $ \context $. By Definition \ref{def:junk} of junk, then $ \Context{}{}{\Context{'}{}{J}} \transBarbBisimSepAsyn \Context{}{}{\Context{'}{}{\nullTerm}} $ for all such contexts $ \context $. Since $ \Context{}{}{\Context{'}{}{\nullTerm}} = \Context{}{}{\RestrictedTerm{\tilde{x}}{\left( T' \mid \nullTerm \right)}} \equiv \Context{}{}{\RestrictedTerm{\tilde{x}}{T'}} $ and since, by Lemma \ref{lem:transBarbBisimIncludesSC}, $ \transBarbBisimSepAsyn $ includes structural congruence, we deduce $ \Context{}{}{\Context{'}{}{J}} \transBarbBisimSepAsyn \Context{}{}{\RestrictedTerm{\tilde{x}}{T'}} $ for all such contexts $ \context $. Because $ \transBarbBisimSepAsyn $ is, by Lemma \ref{lem:transBarbBisimIsEquivalence}, an equivalence, $ \Context{}{}{T} \transBarbBisimSepAsyn \Context{}{}{\Context{'}{}{J}} $ for all such contexts $ \context $ and $ \Context{}{}{\Context{'}{}{J}} \transBarbBisimSepAsyn \Context{}{}{\RestrictedTerm{\tilde{x}}{T'}} $ for all such contexts $ \context $ implies $ \Context{}{}{T} \transBarbBisimSepAsyn \Context{}{}{\RestrictedTerm{\tilde{x}}{T'}} $ for all such contexts $ \context $. Thus, by Definition \ref{def:transBarbCong}, we conclude $ T \transBarbCongSepAsyn \RestrictedTerm{\tilde{x}}{T'} $.
	
	Finally, since $ \transBarbCongSepAsyn $ includes structural congruence and is an equivalence, $ T \transBarbCongSepAsyn \RestrictedTerm{\tilde{x}}{T'} $ implies $ \RestrictedTerm{\tilde{x}}{T'} \transBarbCongSepAsyn \RestrictedTerm{\tilde{x}}{\left( T' \mid J \right)} $. Thus, since $ \RestrictedTerm{\tilde{x}}{\left( T' \mid J \right)} \in \pieceTargetTermsSepAsyn $, we conclude $ \RestrictedTerm{\tilde{x}}{T'} \in \pieceTargetTermsSepAsyn $.
	\qed
\end{proof}

Using this lemma we can remove junk from a target term $ T $. As result we obtain a piece of a target term $ T' $ such that $ T \transBarbCongSepAsyn T' $ or $ T \transBarbCongMixAsyn T' $, respectively. Then we can further reduce $ T' $ by removing junk such that we result in a piece of a target term $ T'' $ with $ T' \transBarbCongSepAsyn T'' $ or $ T' \transBarbCongMixAsyn T'' $ and so forth. Note that, we spend some effort in defining the notion of a piece of a target term to allow the stepwise remove of junk. This allows us to consider different kinds of junks separately. If we instead consider all possible junk of a target term at one go, then for the definition of junk it suffice to require that the context $ \context $ is such that $ \Context{}{}{J} $ is a target term. However, it seems to be more efficient and more descriptive to consider the different kinds of junk separated.

In the simplest case junk is a closed process that can not perform any step, i.e., junk is invisible and inactive. Such kind of junk is produced e.g. as remain of a test-statement. By Definition \ref{def:testBoolean}, a test-statement and the corresponding instantiations of booleans are defined as:
\begin{align*}
	\Output{\sumLock}{\true} & \deff \Input{\sumLock}{t, f}.\Out{t}\\
	\Output{\sumLock}{\false} & \deff \Input{\sumLock}{t, f}.\Out{f}\\
	\Test{\sumLock}{P}{Q} & \deff \RestrictedTerm{t, f}{\left( \Output{\sumLock}{t, f} \mid \In{t}.P \mid \In{f}.Q \right)} \quad \text{for some } t, f \notin \FreeNames{P} \cup \FreeNames{Q}
\end{align*}
Depending on whether we test a positive or a negative instantiation we result either in the then-case $ \RestrictedTerm{t, f}{\left( P \mid \In{f}.Q \right)} $ or the else-case $ \RestrictedTerm{t, f}{\left( \In{t}.P \mid Q \right)} $. Due to the renaming policy within target terms $ t $ and $ f $ are neither free in $ P $ nor in $ Q $. So we can pull out the interesting cases $ P $ or $ Q $ and $ \RestrictedTerm{t, f}{\In{f}.Q} $ or $ \RestrictedTerm{t, f}{\In{t}.P} $ remain as inobservable and inactive junk.

\begin{lemma} \label{lem:junkTestStatement}
	For any $ t, f \in \names $ and any $ P, Q \in \piAsynProc $ the terms $ \RestrictedTerm{t, f}{\In{f}.Q} $ and $ \RestrictedTerm{t, f}{\In{t}.P} $ are junk.
\end{lemma}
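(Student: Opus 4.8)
The plan is to verify the two terms directly against Definition~\ref{def:junk}, i.e.\ to show for $ J \in \Set{ \RestrictedTerm{t,f}{\In{f}.Q}, \RestrictedTerm{t,f}{\In{t}.P} } $ that $ \Context{}{}{J} \transBarbBisimSepAsyn \Context{}{}{\nullTerm} $ (resp.\ $ \Context{}{}{J} \transBarbBisimMixAsyn \Context{}{}{\nullTerm} $) for every context $ \context $ with $ \Context{}{}{J} \in \pieceTargetTermsSepAsyn $ (resp.\ $ \pieceTargetTermsMixAsyn $). The single structural fact driving everything is that the only prefix of $ J $ is an input whose subject is bound by the enclosing restriction $ \RestrictedTerm{t,f}{\cdot} $. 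After alpha-converting the bound names $ t, f $ to be fresh with respect to $ \context $, no output on this particular $ f $ (resp.\ $ t $) can occur anywhere in $ \Context{}{}{J} $, so the input can never synchronise. Hence $ J $ is \emph{inert}: it performs no step on its own ($ J \not\step $) and it can never be part of a redex of the surrounding term.

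First I would record that inertness is stable under the operations a context can apply. Any substitution $ \sigma $ arising from a communication leaves the bound names untouched, so $ \sigma(J) $ is again of the form $ \RestrictedTerm{t,f}{\In{f}.\sigma(Q)} $ with bound subject, and a replicated input carrying $ J $ in its body merely spawns further inert copies of the same shape. Consequently the continuations $ Q $ and $ P $ stay guarded by the never-firing prefix forever, so no occurrence of $ \success $ inside them is ever unguarded and no guard of theirs ever becomes a (translated) observable. Moreover $ J $ contributes no translated observable of its own: in the case of $ \encodingSepAsyn $ its input is on the restricted name $ f $, which violates the side condition $ y \notin \tilde{x} $ of Definition~\ref{def:transBarb}, and in the case of $ \encodingMixAsyn $ the term $ J $ has no unguarded output at all.

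With these observations in hand I would exhibit the bisimulation. Take the relation $ \mathcal{R} $ consisting of all pairs $ \left( \Context{}{}{J_1, \ldots, J_k}, \Context{}{}{\nullTerm, \ldots, \nullTerm} \right) $, closed under $ \equiv $, where $ \Context{}{}{\hole_1, \ldots, \hole_k} $ ranges over arbitrary $ k $-ary contexts and each $ J_i $ is an inert piece of the above shape (covering substitution instances and copies). I would then check the four clauses of Definition~\ref{def:transBarbBisim}: clauses~1 and~2 hold because the inert pieces neither add nor remove an unguarded $ \success $ nor a translated barb, so both components reach exactly the same successes and translated observables; clauses~3 and~4 hold because, since no $ J_i $ can take part in a step, every step of the left component is the very same step of the right component performed in the context alone, and the reducts again differ only by inert pieces (possibly more, via replication, possibly substituted), hence stay in $ \mathcal{R} $. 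As $ \left( \Context{}{}{J}, \Context{}{}{\nullTerm} \right) \in \mathcal{R} $, this gives $ \Context{}{}{J} \transBarbBisimSepAsyn \Context{}{}{\nullTerm} $, and the identical argument yields $ \transBarbBisimMixAsyn $, since it relies only on the shape of $ J $ and the boundedness of its subject.

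The main obstacle I expect is not conceptual but bookkeeping: making $ \mathcal{R} $ genuinely closed under reduction. One has to account for the context duplicating the hole content through replicated inputs (producing several inert pieces from one) and for communications applying substitutions to it (which keep $ t, f $ bound), and to argue that a step consuming a guard above a hole is matched by the identical step on the $ \nullTerm $-side with the residual pieces still inert. Once ``inertness is preserved under substitution and copying'' is isolated as the key invariant, verifying the four clauses of translated barbed bisimilarity is routine.
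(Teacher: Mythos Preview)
Your proposal is correct and follows essentially the same approach as the paper: both arguments rest on the observation that the sole prefix of each term is an input on a name bound by the enclosing restriction, so the term is inert (no steps, no barbs, no reachable $\success$) and hence behaves like $\nullTerm$ in any context. The paper's proof is considerably terser—it simply states that $J_1,J_2$ are closed, stuck, and devoid of success or translated observables, and concludes immediately—whereas you go further by explicitly constructing the bisimulation and flagging the replication/substitution bookkeeping; this extra care is not wrong, just more than the paper deems necessary for such obviously inert terms.
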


\begin{proof}
	Let $ J_1 = \RestrictedTerm{t, f}{\In{f}.Q} $ and $ J_2 = \RestrictedTerm{t, f}{\In{t}.P} $. $ J_1 $ as well as $ J_2 $ are closed terms, which can not perform any step. Moreover, they reach neither success nor any translated observable, i.e., $ J_1\not\reachSuccess $, $ J_2\not\reachSuccess $, $ J_1\not\WeakTransBarbSepAsyn{\mu} $, $ J_2\not\WeakTransBarbSepAsyn{\mu} $, $ J_1\not\WeakTransBarbMixAsyn{\mu} $, and $ J_2\not\WeakTransBarbMixAsyn{\mu} $ for all $ \mu \in \names \cup \coNames $. Because of that, for all contexts $ \Context{}{}{\hole} \in \piAsynProc \to \piAsynProc $ we have $ \Context{}{}{J_1} \transBarbBisimSepAsyn \Context{}{}{\nullTerm} \transBarbBisimSepAsyn \Context{}{}{J_2} $ and $ \Context{}{}{J_1} \transBarbBisimMixAsyn \Context{}{}{\nullTerm} \transBarbBisimMixAsyn \Context{}{}{J_2} $. Thus, by Definition \ref{def:junk}, $ J_1 $ and $ J_2 $ are junk.
	\qed
\end{proof}

Note that, due to this lemma, we can securely omit the remains of test-statements in the following. An other kind of inobservable and inactive junk is produced by the translation of the empty sum $ \nullTerm $. It results in a positive instantiation of a sum lock that is not used anywhere. However, let us generalise this case a little bit to an arbitrary instantiation of a sum lock (either positive or negative) that is not used anywhere.

\begin{lemma} \label{lem:junkEmptySum}
	For any name $ \sumLock $ the term $ \RestrictedTerm{\sumLock}{\left( \Output{\sumLock}{z} \right)} $, where $ z \in \bool $, is junk.
\end{lemma}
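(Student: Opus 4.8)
The plan is to show that $J \deff \RestrictedTerm{\sumLock}{\Output{\sumLock}{z}}$ is a closed, inert term that exhibits neither success nor any translated observable, and then to argue exactly as in the proof of Lemma~\ref{lem:junkTestStatement} that such a term is interchangeable with $\nullTerm$ inside every context, which by Definition~\ref{def:junk} makes it junk of both $\encodingSepAsyn$ and $\encodingMixAsyn$. First I would unfold the abbreviation from Definition~\ref{def:testBoolean}: depending on whether $z = \true$ or $z = \false$, the instantiation $\Output{\sumLock}{z}$ is in fact the \emph{input}-guarded process $\Input{\sumLock}{t,f}.\Out{t}$ or $\Input{\sumLock}{t,f}.\Out{f}$, waiting for a test-statement to provide the matching output on $\sumLock$. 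In both cases the only free channel $\sumLock$ is bound by the restriction and the object names $t,f$ are bound by the input prefix, so $\FreeNames{J} = \emptyset$; i.e.\ $J$ is closed.

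Next I would establish the three properties that make $J$ behaviourally inert. We have $J \not\step$: the sole unguarded prefix is the input $\Input{\sumLock}{t,f}$ on the restricted name $\sumLock$, and no output on $\sumLock$ occurs within the scope of the restriction, so neither $\textsc{Com}_{\indexAsyn}$ nor $\textsc{Rep}_{\indexAsyn}$ applies. Likewise $J \not\reachSuccess$, since $\success$ does not occur in $J$ and no step can create it, and $J$ has no translated observable for any $\mu$: its only action sits on the restricted channel $\sumLock$, which the side condition $y \notin \tilde{x}$ of Definitions~\ref{def:barb} and~\ref{def:transBarb} excludes, and none of the request/positive-sum-lock patterns of Definition~\ref{def:transBarb} is present. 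Hence $J \not\WeakTransBarbSepAsyn{\mu}$ and $J \not\WeakTransBarbMixAsyn{\mu}$ for all $\mu \in \names \cup \coNames$.

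The key step, and the only one requiring care, is to lift these observations to all contexts. Since $J$ is closed, the surrounding context binds none of its names, and after alpha-converting the bound $\sumLock$ to a name fresh for $\Context{}{}{\hole}$, the context can never supply an output on $\sumLock$; therefore $J$'s input prefix stays permanently blocked and no step, barb, or reachable success of $\Context{}{}{J}$ can ever involve $J$. Consequently $\Context{}{}{J}$ and $\Context{}{}{\nullTerm}$ have exactly the same reductions, the same reachable translated observables, and the same reachability of $\success$, which yields $\Context{}{}{J} \transBarbBisimSepAsyn \Context{}{}{\nullTerm}$ and $\Context{}{}{J} \transBarbBisimMixAsyn \Context{}{}{\nullTerm}$ for every context $\Context{}{}{\hole} \in \piAsynProc \to \piAsynProc$. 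By Definition~\ref{def:junk}, $J$ is junk. The main obstacle is precisely justifying this non-interaction of the bound $\sumLock$ with the context; everything else is a direct reuse of the reasoning in Lemma~\ref{lem:junkTestStatement}, generalised from the two remains of a test-statement to an arbitrary unused instantiation of a sum lock.
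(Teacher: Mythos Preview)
Your proposal is correct and follows essentially the same approach as the paper: both argue that $J$ is closed, cannot perform any step, and reaches neither success nor any translated observable, and then conclude that $\Context{}{}{J} \transBarbBisimSepAsyn \Context{}{}{\nullTerm}$ and $\Context{}{}{J} \transBarbBisimMixAsyn \Context{}{}{\nullTerm}$ for all contexts. Your version is simply more explicit, in particular by unfolding the boolean abbreviation to expose that $\Output{\sumLock}{z}$ is really an input-guarded process, which is a useful clarification the paper leaves implicit.
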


\begin{proof}
	Let $ J =  \RestrictedTerm{\sumLock}{\left( \Output{\sumLock}{z} \right)} $. $ J $ is a closed term, which can not perform any step. Moreover, this term can reach neither success nor any translated observable, i.e., $ J\not\reachSuccess $, $ J\not\WeakTransBarbSepAsyn{\mu} $, and $ J\not\WeakTransBarbMixAsyn{\mu} $ for all $ \mu \in \names \cup \coNames $. So for all contexts $ \Context{}{}{\hole} \in \piAsynProc \to \piAsynProc $ we have $ \Context{}{}{J} \transBarbBisimSepAsyn \Context{}{}{\nullTerm} $ and $ \Context{}{}{J} \transBarbBisimMixAsyn \Context{}{}{\nullTerm} $. Thus, by Definition \ref{def:junk}, $ J $ is junk.
	\qed
\end{proof}
\noindent
Note that this lemma especially shows that the translation of the empty sum is junk, i.e., we translate nothing into nothing but junk. Moreover we will use it to reduce the remains left over by \simulations. In the following lemma we prove, that requests on negative instantiations of sum locks are junk of the encoding from \piMix into \piAsyn. Note that in this case we consider potentially observable and active junk.

\begin{lemma} \label{lem:junkRequestsOnFalseSumLocks}
	Requests on negative instantiations of sum locks are junk of $ \encodingMixAsyn $, i.e.,
	\begin{align*}
		& \forall \Context{}{1}{\hole} \in \piAsynProc \to \piAsynProc \logdot \forall \parallelChannelIn, y, \sumLock, \receiverLock \in \names \logdot\\
		& \hspace*{1em} \Context{}{1}{\Output{\parallelChannelIn}{y, \sumLock, \receiverLock}} \in \pieceTargetTermsMixAsyn \wedge \left( \exists T \in \piAsynProc \logdot \exists \tilde{x} \subset \names \logdot \Context{}{1}{\nullTerm} \equiv \RestrictedTerm{\tilde{x}}{\left( T \mid \Output{\sumLock}{\false} \right)} \right)\\
		& \hspace*{1em} \text{implies } \Context{}{1}{\Output{\parallelChannelIn}{y, \sumLock, \receiverLock}} \transBarbBisimMixAsyn \Context{}{1}{\nullTerm}
	\end{align*}
	and
	\begin{align*}
		& \forall \Context{}{2}{\hole} \in \piAsynProc \to \piAsynProc \logdot \forall \parallelChannelOut, y, \sumLock, \senderLock, z \in \names \logdot\\
		& \hspace*{1em} \Context{}{2}{\Output{\parallelChannelOut}{y, \sumLock, \senderLock, z}} \in \pieceTargetTermsMixAsyn \wedge \left( \exists T \in \piAsynProc \logdot \exists \tilde{x} \subset \names \logdot \Context{}{2}{\nullTerm} \equiv \RestrictedTerm{\tilde{x}}{\left( T \mid \Output{\sumLock}{\false} \right)} \right)\\
		& \hspace*{1em} \text{implies } \Context{}{2}{\Output{\parallelChannelOut}{y, \sumLock, \senderLock, z}} \transBarbBisimMixAsyn \Context{}{2}{\nullTerm}.
	\end{align*}
\end{lemma}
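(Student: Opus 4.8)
The plan is to verify directly the three conditions defining $\transBarbBisimMixAsyn$ (Definition \ref{def:transBarbBisim}) for the pair $\Context{}{1}{\Output{\parallelChannelIn}{y, \sumLock, \receiverLock}}$ and $\Context{}{1}{\nullTerm}$, and symmetrically for the output-request pair: equal reachability of $\success$, equal reachability of translated observables, and the weak bisimulation on steps. The whole argument rests on one observation: the extra request is \emph{dead}. Its attached sum lock $\sumLock$ is instantiated negative in $\Context{}{1}{\nullTerm}$ and, by Lemma \ref{lem:changeInstantiationSumLock}, stays negative in every derivative. I would first pin down what the dead request can do at all: by Corollary \ref{col:originRequests} and Lemma \ref{lem:encodingMixAsynPreserveRequests} it is never consumed permanently but only copied and pushed through the request-processing machinery of the parallel-operator (and replicated-input) encodings, so its sole potential effect is to spawn, after matching with a partner request, an output on a receiver lock.

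The key step is to show that every test-statement triggered in this way must take an else-branch. By Lemma \ref{lem:parametersReceiverLock} the sum lock tested on account of the dead input request is exactly $\sumLock$ (it is the sum lock connected to $\receiverLock$, appearing among the first two parameters of the receiver-lock output); for the dead output request $\sumLock$ appears as the sender's sum lock tested in the inner test. In either case, since $\sumLock$ is and stays negative, the nested test either fails immediately on $\sumLock$, or first consumes a \emph{positive} partner lock and then, failing on $\sumLock$, restores it through the nested else-case $\Output{l_1}{\true} \mid \Output{l_2}{\false}$. Thus the matching never reaches a then-branch, i.e.\ it never performs the \nonAdmin step (Definition \ref{def:nonAdminStep}) that would consume the \emph{second} positive instantiation. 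By Lemma \ref{lem:nonAdminStepContinuation} no encoded continuation is ever unguarded and no sender lock is ever instantiated through the dead request.

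From this the first two conditions follow quickly. Since $\success$ can become top-level unguarded only inside an unguarded encoded continuation, and the dead request unguards none, we get $\Context{}{1}{\Output{\parallelChannelIn}{y, \sumLock, \receiverLock}} \reachSuccess$ iff $\Context{}{1}{\nullTerm} \reachSuccess$. For translated observables, Definition \ref{def:transBarb} demands a request \emph{together with} a positive instantiation of its sum lock; the dead request carries the negative $\sumLock$, so it is itself no translated observable, and as it neither unguards continuations nor turns any lock positive it creates none elsewhere. Hence $\Context{}{1}{\cdot}$ reaches the same translated observables with or without it.

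The remaining work, and the main obstacle, is the weak bisimulation game (conditions three and four). I would exhibit a relation containing all pairs $(T_1, T_2)$ where $T_2$ is a derivative of $\Context{}{1}{\nullTerm}$ and $T_1$ is $T_2$ extended by the \emph{residue} of the dead request --- its copies on various request channels, the receiver-lock outputs it spawns, and the partially evaluated failing test-statements --- with $\sumLock$ still negative. Steps of $T_2$ are matched by the identical steps of $T_1$; steps of $T_1$ acting on the residue are matched by $T_2$ idling, being either \pure \admin steps (invisible by Lemma \ref{lem:pureAdminStepsTransBarbBisimMixAsyn}) or failing-test steps that leave only junk (Lemmas \ref{lem:junkTestStatement} and \ref{lem:junkEmptySum}). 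The delicate point is closure under the failing-test computation: a live partner lock may be \emph{temporarily} consumed while the test runs, momentarily suppressing a barb. Here I would invoke the reachability-based (weak) formulation of $\WeakTransBarbMixAsyn$ together with the no-deadlock guarantees for the involved steps (Lemmas \ref{lem:pureAdminStepsNoDeadlock} and \ref{lem:impureAdminStepsNoDeadlock}) to argue that such intermediate states always reach back the same barbs and the same success behaviour, so they remain related. The output-request case is entirely symmetric, using that $\sumLock$ is tested as the sender's sum lock in the inner test.
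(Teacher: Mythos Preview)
Your proposal is correct and follows essentially the same approach as the paper: both rest on the observation that the negative instantiation of $\sumLock$ is permanent (Lemma \ref{lem:changeInstantiationSumLock}), so every test-statement triggered by the dead request reduces to a forwarder that merely restores consumed instantiations, never performs a \nonAdmin\ step, and hence never unguards an encoded continuation or creates a translated observable. The paper argues this more descriptively---tracing how the encoding machinery routes and combines requests and concluding that the resulting steps ``neither rule out a way leading to an unguarded occurrence of $\success$ nor to a translated observable''---whereas you make the bisimulation relation and its closure under the failing-test computation more explicit; but the substance is the same.
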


\begin{proof}
	Let $ J_1 = \Output{\parallelChannelIn}{y, \sumLock, \receiverLock} $ and $ J_2 = \Output{\parallelChannelOut}{y, \sumLock, \senderLock, z} $.
	
	Since we require $ \Context{}{1}{J_1}, \Context{}{2}{J_2} \in \pieceTargetTermsMixAsyn $, we consider only contexts that respect the protocol implemented by the encoding function $ \encodingMixAsyn $. By analysing the encoding function in Figure \ref{fig:encodingMixAsyn}, we observe that there are many forwarders for requests, i.e., many replicated inputs that consume requests and immediately restore a request on a different channel but with exactly the same values. Note that by such forwarders requests are multiplied. Most of these copies turn out to be junk. However, note that the encoding function restricts the request channels|except for the outermost|and provides for each such channel exactly one (replicate) input and no inputs or replicated inputs for unrestricted request channels. Because of that, the way a request may travel is completely determined, i.e., given a target term there is no choice about which way a request may take.
	
	Moreover, if the context puts the requests at the outermost position, then they can not be consumed at all, i.e., in this case the requests are observable but inactive junk. By Lemma \ref{lem:instantiationSumLocks}, the negative instantiation of the sum lock $ \sumLock $ is the only instantiation of that lock and, by Lemma \ref{lem:changeInstantiationSumLock}, it can not be changed by the context into a positive instantiation. So, by Definition \ref{def:transBarb}, the requests are not considered as translated observables, i.e., $ J_1\not\WeakTransBarbMixAsyn{\mu} $ and $ J_2\not\WeakTransBarbMixAsyn{\mu} $ for all $ \mu \in \names \cup \coNames $. Obviously, $ J_1 $ and $ J_2 $ also have no possibility to reach $ \success $, i.e., $ J_1\not\reachSuccess $ and $ J_2\not\reachSuccess $. Thus for all contexts $ \Context{}{1}{\hole}, \Context{}{2}{\hole} \in \piAsynProc \to \piAsynProc $ such that $ \Context{}{1}{J_1}, \Context{}{2}{J_2} \in \pieceTargetTermsMixAsyn $, $ \parallelChannelIn \in \FreeNames{\Context{}{1}{J_1}} $, and $ \parallelChannelOut \in \FreeNames{\Context{}{2}{J_2}} $, we have $ \Context{}{1}{J_1} \transBarbBisimMixAsyn \Context{}{1}{\nullTerm} $ and $ \Context{}{2}{J_2} \transBarbBisimMixAsyn \Context{}{2}{\nullTerm} $.
	
	Note that we consider only contexts $ \context_1 $ and $ \context_2 $, such that $ \Context{}{1}{J_1} $ and $ \Context{}{2}{J_2} $ are pieces of target terms. So all requests|including $ J_1 $ and $ J_2 $|origin from the translation of input or output guarded terms or replicated inputs. All inputs on request channels, i.e., channels that can transport either three or four values, origin from the translation of the parallel operator or a replicated input. We observe that, for each request, the encoding of a parallel operator receives from one of its encoded parameters, one copy of that request is pushed upwards. Similarly, for each request of each branch of an encoded replicated input one copy is pushed to the right. Because of that, requests never vanish. As soon as a request in a target term is unguarded for the first time, there will always be a copy of that request possibly on another channel but with the same values (compare to Lemma \ref{lem:encodingMixAsynPreserveRequests}).
	
	Note that all inputs on request channels are replicated inputs except for the first inputs in the processing of right in- or output requests in the encoding of a parallel operator or a replicated input (compare to \processRightOutputRequests \ and \processRightInputRequests). However, even in these two cases the consumption of a request enables|after one internal step on a chain lock|the processing of another request in exactly the same way modulo some forwarding processes. So the order in which requests are consumed does not matter (compare to Lemma \ref{lem:pureAdminStepsNoDeadlock}). Moreover note, that the step on the chain lock is completely determined again, i.e., there is no choice about eventually doing it or how to do it. Because of that, the fact, that a request was already pushed further or not, does neither influence the reachability of success nor the reachability of translated observables.
	
	Except from transferring them the encoding function can proceed requests only by combining them. $ \encodingMixAsyn $ ensures, that each pair of requests is combined at most once and that each pair of an input and an output requests, which do not both origin from the same leaf concerning the structure of parallel operator encodings, is eventually combined. Because of that, again the order in which those combinations are performed does not matter. However, the order in which tests on sum locks are induced does indeed matter, because a test induced on only positive instantiations of sum locks turns them into negative instantiations and so a former such test may influences later tests. Since the test is always induced on the sum locks related to the respective request, all test that are induced because of the requests $ J_1 $ or $ J_2 $ are on at least one negative instantiation of a sum lock, i.e., on $ \sumLock $ in the current case. Note that, by Lemma \ref{lem:instantiationSumLocks}, the negative instantiation of the sum lock $ \sumLock $ is the only instantiation of that lock and, by Lemma \ref{lem:changeInstantiationSumLock}, it can not be changed by the context into a positive instantiation. If we now analyse the test-statements in the encodings of an input guarded term or a replicated input, we observe that this false instantiation of $ \sumLock $ reduces the test-statements to some kind of forwarders that consumes one or two instantiations of sum locks and since there are no deadlocks on these test-statements eventually restores them on exactly the same channel and with exactly the same value. So again the processing of $ J_1 $ or $ J_2 $ do neither influence the reachability of success nor the reachability of translated observables.
	
	So for all contexts $ \Context{}{1}{\hole}, \Context{}{2}{\hole} \in \piAsynProc \to \piAsynProc $, such that $ \Context{}{1}{J_1}, \Context{}{2}{J_2} \in \pieceTargetTermsMixAsyn $, we have
	\begin{align*}
		& \left( \Context{}{1}{J_1}\reachSuccess \text{ iff } \Context{}{1}{\nullTerm}\reachSuccess \right) \wedge \left( \Context{}{2}{J_2}\reachSuccess \text{ iff } \Context{}{2}{\nullTerm}\reachSuccess \right) \text{ and}\\
		& \left( \Context{}{1}{J_1}\WeakTransBarbMixAsyn{\mu} \text{ iff } \Context{}{1}{\nullTerm}\WeakTransBarbMixAsyn{\mu} \right) \wedge \left( \Context{}{2}{J_2}\WeakTransBarbMixAsyn{\mu} \text{ iff } \Context{}{2}{\nullTerm}\WeakTransBarbMixAsyn{\mu} \right)
	\end{align*}
	for all $ \mu \in \names \cup \coNames $. Since all steps that results from an interaction with $ J_1 $ and $ J_2 $ are \admin steps, that neither rule out a way leading to an unguarded occurrence of $ \success $ nor to a translated observable, they do not influence the state of a term modulo $ \transBarbBisimMixAsyn $. Thus, we conclude $ \Context{}{1}{J_1} \transBarbBisimMixAsyn \Context{}{1}{\nullTerm} $ and $ \Context{}{2}{J_2} \transBarbBisimMixAsyn \Context{}{2}{\nullTerm} $.
	\qed
\end{proof}

Next we show, that|for both encodings|encoded guarded terms linked to negative instantiations of sum locks are junk, as well. Note that such encoded guarded terms linked to negative instantiations of sum locks result from encoded sums of which already one summand was used to \simulate a source term step.

\begin{lemma} \label{lem:junkRemainsOfSumsSepAsyn}
	For any name $ \sumLock $, any finite index set $ \indexSet $, all guards $ \pi_i $, and all processes $ P_i \in \piSepProc $ the term
	\begin{align*}
		\RestrictedTerm{\sumLock}{\left( \prod_{i \in \indexSet} \EncodingSepAsyn{\pi_i.P_i} \mid \Output{\sumLock}{\false} \right)}
	\end{align*}
	is junk of $ \encodingSepAsyn $.
\end{lemma}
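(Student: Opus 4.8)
The plan is to fix, as required by Definition \ref{def:junk}, an arbitrary context $\Context{}{}{\hole}$ with $\Context{}{}{J} \in \pieceTargetTermsSepAsyn$, where $J = \RestrictedTerm{\sumLock}{\left( \prod_{i \in \indexSet} \EncodingSepAsyn{\pi_i.P_i} \mid \Output{\sumLock}{\false} \right)}$, and to prove $\Context{}{}{J} \transBarbBisimSepAsyn \Context{}{}{\nullTerm}$. The whole argument rests on a single structural fact: in $J$ the sum lock $\sumLock$ is bound and its unique instantiation is the negative one $\Output{\sumLock}{\false}$. By Lemma \ref{lem:instantiationSumLocks} no second instantiation of $\sumLock$ can ever coexist, and since $\sumLock$ is restricted the context cannot supply one; hence by Lemma \ref{lem:changeInstantiationSumLock} every derivative of $\Context{}{}{J}$ still carries exactly the negative instantiation of $\sumLock$ and never a positive one.

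First I would derive the three consequences that make $J$ invisible. \emph{No continuation is unguarded:} each $\EncodingSepAsyn{\pi_i.P_i}$ keeps $\EncodingSepAsyn{P_i}$ behind a (possibly nested) test on $\sumLock$, and for the output guard additionally behind its sender lock, whose only instantiation is produced in the then-branch of such a test. By Lemma \ref{lem:nonAdminStepContinuation} unguarding $\EncodingSepAsyn{P_i}$ needs a non-administrative step, i.e. the consumption of a positive instantiation of $\sumLock$; as $\sumLock$ is permanently negative, this never happens. \emph{No success:} the guards $\pi_i$ are inputs, outputs or $\tau$, so $\success$ can occur only inside some $P_i$, and by the previous point no $P_i$ is ever unguarded. \emph{No translated observable:} by Definition \ref{def:transBarb} a translated observable charged to one of the summands requires an accompanying positive instantiation of its sum lock $\sumLock$, which never exists; so $J$ contributes no translated barb.

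Next I would classify the communications that can occur between $J$ and the surrounding context; here the separate-choice restriction of $\piSep$ guarantees that all summands of $J$ share $\sumLock$ and cannot pair an input with an output among themselves, so no internal communication on translated channels arises. The cases are: the context consumes the output $\Output{\RenamingPolicySepAsyn{y}}{\sumLock,\senderLock,z}$ of an output summand; the context feeds a message to an input summand $\Input{\RenamingPolicySepAsyn{y}}{\dots}$; or a $\tau$-summand resolves its own test. In every case the ensuing (nested) test-statement tests $\sumLock$, which is $\false$, so it resolves through an else-branch: it restores the negative instantiation $\Output{\sumLock}{\false}$, re-emits the partner message or re-hands the pending request, and leaves only the inert remains covered by Lemma \ref{lem:junkTestStatement}. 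Thus each such step is an \admin step that neither unguards a continuation, nor produces $\success$ or a translated observable, and returns the context to (a state equivalent to) the one it occupied before, exactly as it would behave with $\nullTerm$ in place of $J$.

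Finally I would assemble these observations into a translated barbed bisimulation by taking the relation of all pairs $\left( \Context{}{}{J'}, \Context{}{}{\nullTerm} \right)$, where $J'$ ranges over the derivatives of $J$ reachable through the interactions above (all of which still satisfy the three invisibility properties). Conditions 1 and 2 of Definition \ref{def:transBarbBisim} follow from the absence of success and of translated barbs and from the fact that $J'$ neither helps nor hinders context runs; for conditions 3 and 4, steps internal to the context are matched identically on both sides, while steps touching $J'$ are the harmless \admin steps just described and are matched by the empty move of $\Context{}{}{\nullTerm}$, staying inside the relation. The main obstacle is precisely this matching of the interaction steps: one must argue carefully that consuming and re-emitting the context's messages, and consuming the receiver-lock tokens of input summands, never enables a context computation to $\success$ or to a barb that was impossible with $\nullTerm$, and conversely never blocks one. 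The permanent negativity of $\sumLock$ (Lemma \ref{lem:changeInstantiationSumLock}), together with the else-branches always restoring what they consume (as in the proof of Lemma \ref{lem:instantiationSumLocks}), is what closes this gap.
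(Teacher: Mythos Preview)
Your proposal is correct and follows essentially the same approach as the paper: both arguments hinge on the restricted sum lock $\sumLock$ being permanently negative (via Lemmata \ref{lem:instantiationSumLocks} and \ref{lem:changeInstantiationSumLock}), deduce that $J$ contributes neither $\success$ nor translated observables, and then classify the possible interactions with the context to show each one resolves harmlessly through an else-branch.

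The presentations differ slightly. The paper first normalises $J$ by discarding the $\tau$-summands (which reduce to $\nullTerm$ against the false lock) and firing all receiver locks, obtaining a term $J''$ that cannot step on its own; it then splits the interaction analysis according to whether the unguarded test-statement lives inside $J''$ (input-summand case) or in the context (output-summand case). You skip this normalisation and treat all three guard shapes directly, which is equally valid. One point where the paper is more precise: when an output summand of $J$ is consumed by a context receiver, the else-branch of the inner test restores the context's sum lock and receiver lock but does \emph{not} re-emit $J$'s output message---the paper notes explicitly that ``each such output can be used at most once''. Your phrase ``re-emits the partner message'' is slightly loose here; what matters, and what you do state correctly, is that the \emph{context} is returned to an equivalent state, so this does not affect the bisimulation argument.
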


\begin{proof}
	Let $ J = \RestrictedTerm{\sumLock}{\left( \prod_{i \in \indexSet} \EncodingSepAsyn{\pi_i.P_i} \mid \Output{\sumLock}{\false} \right)} $. By Definition \ref{def:junk}, we have to show that for all contexts $ \Context{}{}{\hole} \in \piAsynProc \to \piAsynProc $ such that $ \Context{}{}{J} \in \pieceTargetTermsSepAsyn $ we have $ \Context{}{}{J} \transBarbBisimSepAsyn \Context{}{}{\nullTerm} $.
	
	By Lemma \ref{lem:instantiationSumLocks}, the negative instantiation of the sum lock $ \sumLock $ is the only instantiation of that lock and, by Lemma \ref{lem:changeInstantiationSumLock}, it can not be changed by the context into a positive instantiation. Analysing the encoding function in Figure \ref{fig:encodingSepAsyn} we observe, that all unguarded in- or outputs with three parameters in $ J $ are connected to $ \Output{\sumLock}{\false} $. Thus $ J $ has no translated observables, i.e., $ J\not\TransBarbSepAsyn{\mu} $ for all $ \mu \in \names \cup \coNames $. Moreover, because of the guards $ \guard_i $, $ J $ has no unguarded occurrence of $ \success $ and can not reach some on its own, i.e., $ J\not\reachSuccess $.
	
	$ J $ can perform a step on its own only if for some $ j \in \indexSet $ the guard $ \guard_j $ is equal to $ \tau $ or is an input guard. Note that in the second case $ J $ can only perform finitely many steps on receiver locks. Since $ \sumLock $ is instantiated by $ \false $, $ \EncodingMixAsyn{\tau.P_j} $ can reduce to $ \nullTerm $ only. Therefore, $ \EncodingMixAsyn{\tau.P_j} $ has to consume $ \Output{\sumLock}{\false} $ but the instantiation is always eventually restored. Because of that, we can ignore all $ \EncodingMixAsyn{\guard_i.P_i} $ for $ \guard_i = \tau $, i.e., they are junk. Let $ J' = \RestrictedTerm{\sumLock}{\left( \prod_{i \in \indexSet, \guard_i \neq \tau} \EncodingMixAsyn{\pi_i.P_i} \mid \Output{\sumLock}{\false} \right)} $. Moreover, obtain $ J'' $ from $ J' $ by reducing all free inputs on receiver locks.
	
	Then $ J'' $ can not perform any step, since either all remaining guards are output guards and so all unguarded inputs in the encoded subterms are on not instantiated sender locks, or all remaining guards are input guards and so the encoded subterms are input guarded, i.e., $ J'' \not\step $. In case the index set $ \Set{ i \mid i \in \indexSet \wedge \guard_i \neq \tau } $ is empty, we can apply Lemma \ref{lem:junkEmptySum}. Else, there are either free outputs or free inputs on translated source term names. So $ J'' $ can communicate with the context over these translated source term names. As a result of such a communication a test-statement is unguarded. In case this test-statement is within $ J'' $, the sum lock $ \sumLock $ is tested first. Since $ \sumLock $ is instantiated by $ \false $, the test-statement restores both, the consumed negative instantiation of $ \sumLock $ and the output on the translated source term name, which was consumed to unguard this test-statement. Else, if the unguarded test-statement is not within $ J'' $, then|besides the negative instantiation of the sum lock|$ J'' $ includes only encodings of output guarded source terms. In this case all these encodings of output guarded source terms provide exactly one output on a translated source term name, which sends as first value the sum lock $ \sumLock $. In the communication with the context one of these outputs was consumed and, because of that, the unguarded test-statement is either a nested test-statement testing $ \sumLock $ as second lock or it is a single test-statement testing only $ \sumLock $. In both cases all information necessary to unguard and to reduce the test-statement except for the output on the source term name is restored. Because of that, each such output can be used as most once. In both cases no step that results from an interaction with $ J'' $ influences the reachability of success or of translated observables. Moreover each such step is either a \pure \admin step, which by Lemma \ref{lem:pureAdminStepsTransBarbBisimSepAsyn} do not influence the state of a term modulo $ \transBarbBisimSepAsyn $, or it is an \impure \admin step. Since reachability of $ \success $ is not influenced at all and all translated observables of the context which are consumed to unguard the test-statement are eventually restored, i.e., reachability of translated observables is not affected, even these \impure \admin steps do not influence the state of the context modulo $ \transBarbBisimSepAsyn $ here.
	
	Thus for all contexts $ \Context{}{}{\hole} \in \piAsynProc \to \piAsynProc $, such that $ \Context{}{}{J} \in \pieceTargetTermsSepAsyn $, we have $ \left( \Context{}{}{J}\reachSuccess \text{ iff } \Context{}{}{\nullTerm}\reachSuccess \right) $ and $ \left( \Context{}{}{J}\WeakTransBarbSepAsyn{\mu} \text{ iff } \Context{}{}{\nullTerm}\WeakTransBarbSepAsyn{\mu} \right) $ for all $ \mu \in \names \cup \coNames $. Moreover no step of $ J $ on its own or that results from an interaction with $ J $ does influence the state of the context modulo $ \transBarbBisimSepAsyn $. So, $ \Context{}{}{J} \transBarbBisimSepAsyn \Context{}{}{\nullTerm} $.
	\qed
\end{proof}

\begin{lemma} \label{lem:junkRemainsOfSumsMixAsyn}
	For any name $ \sumLock $, any finite index set $ \indexSet $, all guards $ \pi_i $, and all processes $ P_i \in \piMixProc $ the term
	\begin{align*}
		\RestrictedTerm{\sumLock}{\left( \prod_{i \in \indexSet} \EncodingMixAsyn{\pi_i.P_i} \mid \Output{\sumLock}{\false} \right)}
	\end{align*}
	is junk of $ \encodingMixAsyn $.
\end{lemma}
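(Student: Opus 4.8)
Let $J = \RestrictedTerm{\sumLock}{\left( \prod_{i \in \indexSet} \EncodingMixAsyn{\pi_i.P_i} \mid \Output{\sumLock}{\false} \right)}$. By Definition \ref{def:junk} the plan is to show, for every context $\Context{}{}{\hole} \in \piAsynProc \to \piAsynProc$ with $\Context{}{}{J} \in \pieceTargetTermsMixAsyn$, that $\Context{}{}{J} \transBarbBisimMixAsyn \Context{}{}{\nullTerm}$. First I would pin down the state of the lock: by Lemma \ref{lem:instantiationSumLocks} the displayed negative instantiation of $\sumLock$ is the only instantiation of that lock in $\Context{}{}{J}$, and by Lemma \ref{lem:changeInstantiationSumLock} no sequence of steps of a piece-of-target-term context can turn it into a positive one. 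Since the restriction on $\sumLock$ forbids the context from supplying a fresh instantiation, $\sumLock$ stays negatively instantiated forever.

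The heart of the argument is that every observable emitted by the encoded summands is a request carrying $\sumLock$ as its second parameter. By Figure \ref{fig:encodingMixAsyn} each output-guarded summand $\EncodingMixAsyn{\Output{y}{z}.P_i}$ emits the output request $\Output{\parallelChannelOut}{\RenamingPolicyMixAsyn{y}, \sumLock, \senderLock, \RenamingPolicyMixAsyn{z}}$ and each input-guarded summand $\EncodingMixAsyn{\Input{y}{x}.P_i}$ emits the input request $\Output{\parallelChannelIn}{\RenamingPolicyMixAsyn{y}, \sumLock, \receiverLock}$; recall that a sum contains no replicated inputs, so these are the only request-emitting summands. Because all of these requests sit on the negative instantiation of $\sumLock$, Lemma \ref{lem:junkRequestsOnFalseSumLocks} applies to each of them, and using Lemma \ref{lem:removeJunk} I would strip the requests off one at a time, staying within $\pieceTargetTermsMixAsyn$ and preserving behaviour modulo $\transBarbCongMixAsyn$ at every step (the residual term always still contains $\Output{\sumLock}{\false}$, so the side condition of Lemma \ref{lem:junkRequestsOnFalseSumLocks} remains met).

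It then remains to account for what each summand leaves behind once its request is gone. For a $\tau$-guarded summand the term is $\Test{\sumLock}{\Output{\sumLock}{\false} \mid \EncodingMixAsyn{P_i}}{\Output{\sumLock}{\false}}$; since $\sumLock$ is $\false$, its only reduction runs through the else-branch, restoring $\Output{\sumLock}{\false}$ and leaving remains that are junk by Lemma \ref{lem:junkTestStatement}, so the continuation $\EncodingMixAsyn{P_i}$ is never unguarded. For an output-guarded summand what remains is $\RestrictedTerm{\senderLock}{\In{\senderLock}.\EncodingMixAsyn{P_i}}$ with $\senderLock$ restricted and never instantiated, since (by Lemma \ref{lem:nonAdminStepContinuation}) producing $\Out{\senderLock}$ would require a non-admin step firing the then-branch of a test on a positive sum lock, whereas the relevant lock is the permanently-$\false$ lock $\sumLock$. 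For an input-guarded summand what remains is the replicated input on the restricted receiver lock $\receiverLock$, which is likewise never instantiated once the carrying request is removed. Both fragments are invisible and inactive.

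Putting these pieces together, $J$ exhibits no translated observable and cannot reach success, i.e.\ $J \not\WeakTransBarbMixAsyn{\mu}$ for all $\mu \in \names \cup \coNames$ and $J \not\reachSuccess$. Finally I would argue that every step arising from an interaction of $J$ with the context is harmless: any request of $J$ the context picks up can only induce a test on $\sumLock$, which—being $\false$—merely forwards and restores all consumed sum-lock instantiations and translated observables of the context (no then-branch fires, and by Lemmata \ref{lem:pureAdminStepsNoDeadlock} and \ref{lem:impureAdminStepsNoDeadlock} no deadlock is introduced). Hence such steps are admin steps ruling out no path to success or to a translated observable, so $\left( \Context{}{}{J}\reachSuccess \text{ iff } \Context{}{}{\nullTerm}\reachSuccess \right)$ and $\left( \Context{}{}{J}\WeakTransBarbMixAsyn{\mu} \text{ iff } \Context{}{}{\nullTerm}\WeakTransBarbMixAsyn{\mu} \right)$ for all $\mu$, giving $\Context{}{}{J} \transBarbBisimMixAsyn \Context{}{}{\nullTerm}$. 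The main obstacle is precisely this last restoration argument: one must establish that the context, which respects the encoding protocol because $\Context{}{}{J}$ is a piece of a target term, can never resurrect the continuations hidden behind the dead sender and receiver locks, and that every induced test on $\sumLock$ genuinely restores every consumed resource without introducing deadlock.
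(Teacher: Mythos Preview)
Your proposal is correct and follows essentially the same approach as the paper: both pin down the permanence of the negative instantiation of $\sumLock$ via Lemmata \ref{lem:instantiationSumLocks} and \ref{lem:changeInstantiationSumLock}, dispose of the $\tau$-guarded summands through the else-branch, invoke Lemma \ref{lem:junkRequestsOnFalseSumLocks} for the emitted requests, and then argue that every test on $\sumLock$ induced by the context degenerates to a forwarder that restores all consumed resources. Your version is slightly more explicit in its case split over guard types and in stripping requests one by one via Lemma \ref{lem:removeJunk}, whereas the paper treats the non-$\tau$ summands more uniformly; the paper also covers the trivial empty-index-set case via Lemma \ref{lem:junkEmptySum}, which you omit but which is immediate.
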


\begin{proof}
	Let $ J = \RestrictedTerm{\sumLock}{\left( \prod_{i \in \indexSet} \EncodingMixAsyn{\pi_i.P_i} \mid \Output{\sumLock}{\false} \right)} $. By Definition \ref{def:junk}, we have to show that for all contexts $ \Context{}{}{\hole} \in \piAsynProc \to \piAsynProc $ such that $ \Context{}{}{J} \in \pieceTargetTermsMixAsyn $ we have $ \Context{}{}{J} \transBarbBisimMixAsyn \Context{}{}{\nullTerm} $.
	
	By Lemma \ref{lem:instantiationSumLocks}, the negative instantiation of the sum lock $ \sumLock $ is the only instantiation of that lock and, by Lemma \ref{lem:changeInstantiationSumLock}, it can not be changed by the context into a positive instantiation. Analysing the encoding function in Figure \ref{fig:encodingMixAsyn} we observe, that all unguarded requests in $ J $ are connected to $ \Output{\sumLock}{\false} $ (compare to Definition \ref{def:connectionSumLockSenderReceiver} and Lemma \ref{lem:sumLockOfReceiverOrSender}). Thus $ J $ has no translated observables, i.e., $ J\not\TransBarbSepAsyn{\mu} $ for all $ \mu \in \names \cup \coNames $. Moreover, because of the guards $ \guard_i $, $ J $ has no unguarded occurrence of $ \success $ and can not reach some on its own, i.e., $ J\not\reachSuccess $.
	
	$ J $ can perform a step on its own only if for some $ j \in \indexSet $ the guard $ \guard_j $ is equal to $ \tau $. Since $ \sumLock $ is instantiated by $ \false $, $ \EncodingMixAsyn{\tau.P_j} $ can reduce to $ \nullTerm $ only. Therefore, $ \EncodingMixAsyn{\tau.P_j} $ has to consume $ \Output{\sumLock}{\false} $ but the instantiation is always eventually restored. Because of that, we can ignore all $ \EncodingMixAsyn{\guard_i.P_i} $ for $ \guard_i = \tau $, i.e., they are junk. Let $ J' = \RestrictedTerm{\sumLock}{\left( \prod_{i \in \indexSet, \guard_i \neq \tau} \EncodingMixAsyn{\pi_i.P_i} \mid \Output{\sumLock}{\false} \right)} $.
	
	Then $ J' $ can not perform any step, i.e., $ J' \not\step $. There are no unguarded inputs on free names of $ J' $. In case the index set $ \Set{ i \mid i \in \indexSet \wedge \guard_i \neq \tau } $ is empty, we can apply Lemma \ref{lem:junkEmptySum}. Else, there are free requests, i.e., free outputs on requests channels. So $ J' $ can communicate with the context by transmitting its requests. By Lemma \ref{lem:junkRequestsOnFalseSumLocks} these requests are junk. Moreover by revisiting the argumentation in the proof of Lemma \ref{lem:junkRequestsOnFalseSumLocks} we observe, that the false instantiation of $ \sumLock $ reduces all tests on that lock to simple forwarders. As a consequence no such test can unguard the encoding of a guarded source term, i.e., no such test can unguard new requests or former unguarded occurrences of success. So again we observe that no interaction with $ J $ or $ J' $ can influence the ability of the context to reach success or translated observables and no step that results from an interaction with $ J $ or $ J' $ influences the state of the context modulo $ \transBarbBisimMixAsyn $.
	
	Thus for all contexts $ \Context{}{}{\hole} \in \piAsynProc \to \piAsynProc $, such that $ \Context{}{}{J} \in \pieceTargetTermsMixAsyn $, we have $ \left( \Context{}{}{J}\reachSuccess \text{ iff } \Context{}{}{\nullTerm}\reachSuccess \right) $ and $ \left( \Context{}{}{J}\WeakTransBarbMixAsyn{\mu} \text{ iff } \Context{}{}{\nullTerm}\WeakTransBarbMixAsyn{\mu} \right) $ for all $ \mu \in \names \cup \coNames $. Moreover no step of $ J $ on its own or that results from an interaction with $ J $ does influence the state of the context modulo $ \transBarbBisimMixAsyn $. So, $ \Context{}{}{J} \transBarbBisimMixAsyn \Context{}{}{\nullTerm} $.
	\qed
\end{proof}

Analysing the encoding function we observe, that the input on a receiver lock of an encoded input guarded source term, that guards a test-statement, is a replicated input. Of course, such a test-statement can only be used once to \simulate a source term step. After such an \simulation this replicated input becomes junk.

\begin{lemma} \label{lem:junkReceiverMixAsyn}
	For any $ \sumLock_1, \sumLock_2, \receiverLock, \senderLock, \RenamingPolicyMixAsyn{x} \in \names $ and any $ P \in \piMixProc $ the term
	\begin{align*}
		J = \ReplicateInput{\receiverLock}{\sumLock_1, \sumLock_2, -, \senderLock, \RenamingPolicyMixAsyn{x}}.\Test{\sumLock_1}{\Test{\sumLock_2}{\Output{\sumLock_1}{\false} \mid \Output{\sumLock_2}{\false} \mid \Out{\senderLock} \mid \EncodingMixAsyn{P}}{\Output{\sumLock_1}{\true} \mid \Output{\sumLock_2}{\false}}}{\Output{\sumLock_1}{\false}}
	\end{align*}
	in combination with a negative instantiation of the sum lock that belongs to $ \receiverLock $ is junk of $ \encodingMixAsyn $, i.e.,
	\begin{align*}
		& \forall \Context{}{}{\hole} \in \piAsynProc \to \piAsynProc \logdot \forall \sumLock_1, \sumLock_2, \receiverLock, \senderLock, \RenamingPolicyMixAsyn{x} \in \names \logdot \forall P \in \piMixProc \logdot \forall J \in \piAsynProc \logdot\\
		& \hspace*{1em} J = \ReplicateInput{\receiverLock}{\sumLock_1, \sumLock_2, -, \senderLock, \RenamingPolicyMixAsyn{x}}.\Test{\sumLock_1}{\Test{\sumLock_2}{\Output{\sumLock_1}{\false} \mid \Output{\sumLock_2}{\false} \mid \Out{\senderLock} \mid \EncodingMixAsyn{P}}{\Output{\sumLock_1}{\true} \mid \Output{\sumLock_2}{\false}}}{\Output{\sumLock_1}{\false}}\\
		& \hspace*{1em} \wedge \Context{}{}{J} \in \pieceTargetTermsMixAsyn \wedge \left( \exists T \in \piAsynProc \logdot \exists \tilde{x} \subset \names \logdot \exists \parallelChannelIn, y, \sumLock \in \names \logdot \Context{}{}{\nullTerm} \equiv \RestrictedTerm{\tilde{x}}{\left( T \mid \Output{\parallelChannelIn}{y, \sumLock, \receiverLock} \mid \Output{\sumLock}{\false} \right)} \right)\\
		& \hspace*{1em} \text{implies } \Context{}{}{J} \transBarbBisimMixAsyn \Context{}{}{\nullTerm}
	\end{align*}
\end{lemma}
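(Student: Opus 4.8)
The plan is to instantiate Definition \ref{def:junk}: fix an arbitrary context $\Context{}{}{\hole} \in \piAsynProc \to \piAsynProc$ with $\Context{}{}{J} \in \pieceTargetTermsMixAsyn$ that satisfies the hypothesis $\Context{}{}{\nullTerm} \equiv \RestrictedTerm{\tilde{x}}{\left( T \mid \Output{\parallelChannelIn}{y, \sumLock, \receiverLock} \mid \Output{\sumLock}{\false} \right)}$, and show $\Context{}{}{J} \transBarbBisimMixAsyn \Context{}{}{\nullTerm}$ by verifying the conditions of Definition \ref{def:transBarbBisim}. The starting observation is that $J$ is a single replicated input on the receiver lock $\receiverLock$: on its own it can perform no step, exhibits no translated observable (it is an input, not a request), and cannot reach $\success$. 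Hence $\Context{}{}{J}$ and $\Context{}{}{\nullTerm}$ can only differ through interactions in which the context supplies an output on $\receiverLock$, and I would reduce the whole argument to showing that every such interaction is behaviourally inert.

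Next I would analyse such an interaction. By the hypothesis the input request $\Output{\parallelChannelIn}{y, \sumLock, \receiverLock}$ connects $\receiverLock$ to the sum lock $\sumLock$ (Definition \ref{def:connectionSumLockSenderReceiver}), and by Lemma \ref{lem:parametersReceiverLock} any output on $\receiverLock$ carries as its first two parameters two sum locks, one of which belongs to $\receiverLock$, i.e. equals $\sumLock$. By Lemma \ref{lem:instantiationSumLocks} the negative instantiation $\Output{\sumLock}{\false}$ is the unique instantiation of $\sumLock$, and by Lemma \ref{lem:changeInstantiationSumLock} it can never become positive. The crux is then a case split on whether $\sumLock$ is the first tested lock $\sumLock_1$ or the second tested lock $\sumLock_2$ of the nested test-statement: in either case reaching the then-then branch --- the only branch that instantiates $\senderLock$ and unguards $\EncodingMixAsyn{P}$ --- requires both tested locks to be positive, which is impossible since $\sumLock$ is negative. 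Thus $J$ never unguards its continuation, never instantiates its sender lock, and in particular never emits a new request nor any occurrence of $\success$.

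From here I would conclude the behavioural equalities. Reducing the test-statement of $J$ on a negative $\sumLock$ only restores the consumed sum-lock instantiations (possibly after temporarily holding them), so it acts as a forwarder and creates no new barb and no new success. Consequently $\Context{}{}{J}\reachSuccess$ iff $\Context{}{}{\nullTerm}\reachSuccess$ and $\Context{}{}{J}\WeakTransBarbMixAsyn{\mu}$ iff $\Context{}{}{\nullTerm}\WeakTransBarbMixAsyn{\mu}$ for all $\mu \in \names \cup \coNames$; and every step arising from an interaction with $J$ is an admin step that rules out no path to $\success$ or to a translated observable, so the simulation conditions of Definition \ref{def:transBarbBisim} follow as for the earlier junk lemmata (compare Lemma \ref{lem:junkRequestsOnFalseSumLocks}, using Lemma \ref{lem:pureAdminStepsTransBarbBisimMixAsyn} to discard the pure admin parts).

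The hard part will be the case where $\sumLock$ is tested second, so that the first tested lock $\sumLock_1$ is the sender's sum lock and may be positive. Here the outer test consumes the positive instantiation of $\sumLock_1$ and only restores it, unchanged to $\true$, after the inner test on the negative $\sumLock$ falls through to its else-branch; this is an impure admin step that temporarily withholds a positive sum lock. The delicate point is to argue that this temporary withholding never rules out a competing simulation, which I would justify by the eventual restoration of $\sumLock_1$ to its previous value together with the absence of deadlock on these test-statements (Lemma \ref{lem:impureAdminStepsNoDeadlock}), exactly as in the treatment of the tests induced by junk requests in the proof of Lemma \ref{lem:junkRequestsOnFalseSumLocks}.
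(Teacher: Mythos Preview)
Your proposal is correct and follows essentially the same line as the paper's proof: both hinge on the fact that the input request $\Output{\parallelChannelIn}{y,\sumLock,\receiverLock}$ ties $\receiverLock$ to the sum lock $\sumLock$, so that every test-statement unguarded via $\receiverLock$ tests the permanently false $\sumLock$ and hence degenerates into a forwarder that restores all consumed information. The paper's proof is a two-line sketch deferring to Lemma~\ref{lem:sumLockOfReceiverOrSender} and the argumentation of Lemma~\ref{lem:junkRemainsOfSumsMixAsyn}; your version unpacks this with the explicit case split on whether $\sumLock$ is tested first or second and the careful treatment of the impure admin step in the latter case, which is exactly the content hidden behind the paper's ``revisit the argumentation'' clause.
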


\begin{proof}
	Analysing the encoding function in Figure \ref{fig:encodingMixAsyn} we observe that the request $ \Output{\parallelChannelIn}{y, \sumLock, \receiverLock} $ implies that each test induced on $ \receiverLock $ tests the sum lock $ \sumLock $ (compare to Lemma \ref{lem:sumLockOfReceiverOrSender}). Since that lock is already instantiated by false, we can revisit the argumentation of the proof of Lemma \ref{lem:junkRemainsOfSumsMixAsyn} to conclude that each test induced on $ \receiverLock $ reduces to a simple forwarder, which restores all consumed information. So $ J $ can be ignored, i.e., $ J $ is Junk.
	\qed
\end{proof}

Another remain of an \simulation, which we can simply consider as junk, is the preparation of a test on a negative instantiated sum lock.

\begin{lemma}  \label{lem:junkInduceTest}
	The preparation of a test on a negative instantiation of a sum lock is junk of $ \encodingMixAsyn $, i.e.,
	\begin{align*}
		& \forall \Context{}{}{\hole} \in \piAsyn \to \piAsyn \logdot \forall \matchingCoordinatorIn, \matchingUpIn, y, y', \sumLock, \sumLock_r, \receiverLock, \senderLock, z \in \names \logdot \forall J \in \piAsynProc \logdot\\
		& \hspace*{1em} J = \ReplicateInput{\matchingCoordinatorIn}{y', \sumLock_r, \receiverLock}.\left( \Match{y'}{y}\Output{\receiverLock}{\sumLock_r, \sumLock, \sumLock, \senderLock, z} \mid \Output{\matchingUpIn}{y', \sumLock_r, \receiverLock} \right) \wedge \Context{}{}{J} \in \pieceTargetTermsMixAsyn\\
		& \hspace*{1em} \wedge \left( \exists T \in \piAsynProc \logdot \exists \tilde{x} \subset \names \logdot \Context{}{}{\nullTerm} \equiv \RestrictedTerm{\tilde{x}}{\left( T \mid \Output{\sumLock}{\false} \right)} \right)\\
		& \hspace*{1em} \text{implies } \Context{}{}{J} \transBarbBisimMixAsyn \Context{}{}{\Forward{\matchingCoordinatorIn}{\matchingUpIn}}
	\end{align*}
	and
	\begin{align*}
		& \forall \Context{}{}{\hole} \in \piAsyn \to \piAsyn \logdot \forall \matchingCoordinatorOut, \matchingUpOut, y, y', \sumLock, \sumLock_s, \receiverLock, \senderLock, z \in \names \logdot \forall J \in \piAsynProc \logdot\\
		& \hspace*{1em} J = \ReplicateInput{\matchingCoordinatorOut}{y', \sumLock_s, \senderLock, z}.\left( \Match{y'}{y}\Output{\receiverLock}{\sumLock_s, \sumLock, \sumLock_s, \senderLock, z} \mid \Output{\matchingUpOut}{y', \sumLock_s, \senderLock, z} \right) \wedge \Context{}{}{J} \in \pieceTargetTermsMixAsyn\\
		& \hspace*{1em} \wedge \left( \exists T \in \piAsynProc \logdot \exists \tilde{x} \subset \names \logdot \Context{}{}{\nullTerm} \equiv \RestrictedTerm{\tilde{x}}{\left( T \mid \Output{\sumLock}{\false} \right)} \right)\\
		& \hspace*{1em} \text{implies } \Context{}{}{J} \transBarbBisimMixAsyn \Context{}{}{\Forward{\matchingCoordinatorOut}{\matchingUpOut}}
	\end{align*}
\end{lemma}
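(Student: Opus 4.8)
The plan is to compare $J$ directly with the forwarder $\Forward{\matchingCoordinatorIn}{\matchingUpIn}$ and to isolate the single way in which they differ. Both are replicated inputs on the same request channel $\matchingCoordinatorIn$, and upon consuming an input request $\Output{\matchingCoordinatorIn}{y', \sumLock_r, \receiverLock}$ both immediately re\-emit it as $\Output{\matchingUpIn}{y', \sumLock_r, \receiverLock}$; so their forwarding behaviour coincides exactly (the multiplicity three matches $\Forward{\matchingCoordinatorIn}{\matchingUpIn}$ by Definition \ref{def:forwarder}). The only additional action of $J$ is the conditional output $\Match{y'}{y}\Output{\receiverLock}{\sumLock_r, \sumLock, \sumLock, \senderLock, z}$, which, whenever the match succeeds, is an instantiation of the receiver lock $\receiverLock$. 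First I would observe that $J$ is precisely the inner replicated input occurring in $\processRightOutputRequests$ with its sender sum lock instantiated to $\sumLock$, so that $\sumLock$ plays the role of the sender's sum lock.

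Next I would analyse what this extra output can trigger. By Lemma \ref{lem:parametersReceiverLock}, consuming an output on a receiver lock unguards the nested test-statement of an encoded input guarded term, in which the decisive \emph{inner} test — the one whose success constitutes a \nonAdmin step (Definition \ref{def:nonAdminStep}) — is performed on the second parameter. Here that parameter is exactly $\sumLock$. Since the context provides the negative instantiation $\Output{\sumLock}{\false}$, Lemma \ref{lem:instantiationSumLocks} guarantees this is the only instantiation of $\sumLock$ and Lemma \ref{lem:changeInstantiationSumLock} guarantees it can never be turned positive. Hence the inner test always resolves to its else-case, no \nonAdmin step on $\sumLock$ can ever occur through this output; in particular the encoded continuation $\EncodingMixAsyn{P}$ is never unguarded and $\senderLock$ is never instantiated by $J$.

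It then remains to argue that the residual administrative activity is harmless modulo $\transBarbBisimMixAsyn$. Reusing the argument of Lemma \ref{lem:junkRemainsOfSumsMixAsyn} and its specialisation Lemma \ref{lem:junkReceiverMixAsyn}, the false instantiation of $\sumLock$ degenerates every test induced via $\receiverLock$ into a mere forwarder: the outer test may temporarily consume the positive instantiation of the receiver's sum lock $\sumLock_r$, but because the inner test on $\sumLock$ enters its else-case, $\sumLock_r$ is restored with its original positive value and $\sumLock$ is restored as $\false$. Thus no interaction with $J$'s extra output can rule out any run leading to $\success$ or to a translated observable, and all its steps are \admin steps that, by this restoration together with Lemma \ref{lem:pureAdminStepsTransBarbBisimMixAsyn} and the no-deadlock property of Lemma \ref{lem:impureAdminStepsNoDeadlock}, do not change the state modulo $\transBarbBisimMixAsyn$. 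Consequently, for every context with $\Context{}{}{J} \in \pieceTargetTermsMixAsyn$ and a negative instantiation of $\sumLock$, we obtain that $\Context{}{}{J}\reachSuccess$ iff $\Context{}{}{\Forward{\matchingCoordinatorIn}{\matchingUpIn}}\reachSuccess$ and that both reach the same translated observables, whence $\Context{}{}{J} \transBarbBisimMixAsyn \Context{}{}{\Forward{\matchingCoordinatorIn}{\matchingUpIn}}$. The second statement, concerning $\processRightInputRequests$ and $\matchingCoordinatorOut$, is entirely symmetric: the produced output $\Output{\receiverLock}{\sumLock_s, \sumLock, \sumLock_s, \senderLock, z}$ again carries $\sumLock$ in the decisive second position, so the very same reasoning applies with $\Forward{\matchingCoordinatorOut}{\matchingUpOut}$ (multiplicity four) in place of $\Forward{\matchingCoordinatorIn}{\matchingUpIn}$.

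I expect the main obstacle to be the careful handling of the outer test: the extra output makes $J$ momentarily grab the positive instantiation of the receiver's sum lock $\sumLock_r$, which could in principle block a genuine \nonAdmin step on $\sumLock_r$ elsewhere in the term. The crux is to show that, because the inner test on the false $\sumLock$ never completes, $\sumLock_r$ is invariably restored, so any such blocking is only temporary and therefore invisible to the weak equivalence $\transBarbBisimMixAsyn$; this is exactly where I would lean on the restoration and no-deadlock arguments already established for junk on false sum locks.
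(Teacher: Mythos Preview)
Your overall strategy coincides with the paper's: isolate the only difference between $J$ and the forwarder, namely the match-guarded output on the receiver lock, and argue that this output can only induce tests involving the false sum lock $\sumLock$, hence is harmless modulo $\transBarbBisimMixAsyn$. Your handling of the first statement is essentially the paper's argument, and your extra care about the temporary consumption of $\sumLock_r$ by the outer test is a welcome elaboration.

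There is, however, a genuine asymmetry in the second statement that your ``entirely symmetric'' dismisses too quickly. In the first statement the receiver-lock output is $\Output{\receiverLock}{\sumLock_r, \sumLock, \sumLock, \senderLock, z}$: both the second \emph{and} third parameters are $\sumLock$, so whether $\receiverLock$ stems from an encoded input-guarded term (nested test on parameters~1 and~2) or from an encoded replicated input (single test on parameter~3), the tested lock is $\sumLock$ and your argument goes through. In the second statement the output is $\Output{\receiverLock}{\sumLock_s, \sumLock, \sumLock_s, \senderLock, z}$: parameter~3 is $\sumLock_s$, not $\sumLock$. If $\receiverLock$ were the receiver lock of an encoded replicated input, its single test would be on $\sumLock_s$, about which you have no hypothesis, and the argument would collapse. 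The paper closes this gap by observing that in the second statement $\sumLock$ is the sum lock connected to $\receiverLock$; since a replicated input's sum lock is never instantiated negatively (it is created positive and never tested), the presence of $\Output{\sumLock}{\false}$ forces $\receiverLock$ to come from an encoded input-guarded term, so only the nested-test case arises and your ``decisive second position'' reasoning is then valid. You should add this observation to make the second case complete.
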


\begin{proof}
	First note that here not $ J $ but only $ \Match{y'}{y}\Output{\receiverLock}{\sumLock_r, \sumLock, \sumLock, \senderLock, z} $ and $ \Match{y'}{y}\Output{\receiverLock}{\sumLock_s, \sumLock, \sumLock_s, \senderLock, z} $ are considered as junk. If these terms are omitted then the respective $ J $ reduces to the forwarder $ \Forward{\matchingCoordinatorIn}{\matchingUpIn} $ or $ \Forward{\matchingCoordinatorOut}{\matchingUpOut} $.
	
	In the first case|regardless whether the receiver lock $ \receiverLock $ belongs to an encoded input guarded term or an encoded replicated input|the output $ \Output{\receiverLock}{\sumLock_s, \sumLock, \sumLock_s, \senderLock, z} $ induces a test on the sum lock $ \sumLock $. Since that lock is already instantiated by false we can revisit the argumentation of the proof of Lemma \ref{lem:junkRemainsOfSumsMixAsyn} to conclude that we can ignore this induced test. So we can also ignore the inducing output $ \Output{\receiverLock}{\sumLock_s, \sumLock, \sumLock_s, \senderLock, z} $. The remainig term $ \ReplicateInput{\matchingCoordinatorIn}{y', \sumLock_r, \receiverLock}. \Output{\matchingUpIn}{y', \sumLock_r, \receiverLock} $ is equal to the forwarder $ \Forward{\matchingCoordinatorIn}{\matchingUpIn} $.
	
	For the second case, since there is a negative instantiation of $ \sumLock $, the receiver lock $ \receiverLock $ was not created by the encoding of a replicated input. So again the output $ \Output{\receiverLock}{\sumLock_s, \sumLock, \sumLock_s, \senderLock, z} $ induces a test on the sum lock $ \sumLock $. The rest of that case is similar to the case before.
	\qed
\end{proof}

Unfortunately, we can not declare any remains of \simulations as junk, because we can not ignore the forwarding of left requests in the chains of right requests which is left over by former considered right requests. However, after extracting the junk, by Lemma \ref{lem:junkInduceTest}, there is indeed nothing more left, than a simple forwarder, which can not influence the state of the process modulo $ \transBarbBisimMixAsyn $. That suffice to prove operational completeness.

\subsection{Semantical Criteria}

Among the semantical criteria operational correspondence is the most elaborate to prove. Therefore we show the both its conditions, operational completeness and operational soundness, separately. In order to show operational completeness, we have to show how source terms steps are \simulated by the encodings.

\begin{lemma}[Operational Completeness] \label{lem:operationalCompletenessSepAsyn}
	The encoding $ \encodingSepAsyn $ fulfils operational completeness.
\end{lemma}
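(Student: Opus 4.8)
The plan is to strengthen the statement to single steps with the congruence: I would first show that for every $S \step S'$ there is a $T$ with $\EncodingSepAsyn{S} \steps T$ and $T \transBarbCongSepAsyn \EncodingSepAsyn{S'}$, and then lift this to arbitrary $S \steps S'$ by induction on the number of steps. For the lifting I would use that $\transBarbCongSepAsyn \,\subseteq\, \transBarbBisimSepAsyn$ (the identity context satisfies $\Context{}{}{\EncodingSepAsyn{\nullTerm}} = \EncodingSepAsyn{\nullTerm} \in \targetTermsSepAsyn$, so Definition~\ref{def:transBarbCong} instantiated there yields bisimilarity) together with the fact that the weak transfer property of $\transBarbBisimSepAsyn$ extends from $\step$ to $\steps$ (exactly as in the proof of Lemma~\ref{lem:transBarbBisimIsEquivalence}); combined with transitivity, a chain $\EncodingSepAsyn{S_0} \steps \asymp \EncodingSepAsyn{S_1} \steps \asymp \cdots$ collapses to a single $\EncodingSepAsyn{S} \steps \asymp \EncodingSepAsyn{S'}$.

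For the single-step statement I would induct on the derivation of $S \step S'$. The rules \textsc{Par} and \textsc{Res} are immediate, since $\encodingSepAsyn$ translates $\mid$ and restriction cleanly: the redex is wrapped in a translated context $\hole \mid \EncodingSepAsyn{R}$ or $\RestrictedTerm{\RenamingPolicySepAsyn{n}}{\hole}$, whose instance on $\EncodingSepAsyn{\nullTerm}$ is again a target term, so the induction hypothesis together with the congruence property of $\transBarbCongSepAsyn$ for target-term contexts discharges them. The rule \textsc{Cong}, i.e.\ $S \equiv S_1 \step S_1' \equiv S'$, is handled by Lemma~\ref{lem:preservesSCModuloTransBarbBisimSepAsyn}, which gives $\EncodingSepAsyn{S} \transBarbCongSepAsyn \EncodingSepAsyn{S_1}$ and $\EncodingSepAsyn{S_1'} \transBarbCongSepAsyn \EncodingSepAsyn{S'}$, so the simulation of the $S_1 \step S_1'$ redex transfers to $S$.

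The heart of the argument is the case split on the three reduction axioms. For $\textsc{Tau}_{\indexSep}$ the encoded sum carries one positive instantiation $\Output{\sumLock}{\true}$ (Corollary~\ref{col:initialSumLocksArePositive}); the single test-statement $\EncodingSepAsyn{\tau.P}$ consumes it in one \nonAdmin step and unguards $\Output{\sumLock}{\false} \mid \EncodingSepAsyn{P}$; after scope extrusion of the fresh $\sumLock$, the residual sum with its now negative lock is junk by Lemma~\ref{lem:junkRemainsOfSumsSepAsyn}, so removing it via Lemma~\ref{lem:removeJunk} yields $\transBarbCongSepAsyn \EncodingSepAsyn{P}$. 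For $\textsc{Com}_{\indexSep}$ I would spell out the \simulation: a \pure \admin step on the receiver lock unguards $\Input{\RenamingPolicySepAsyn{y}}{\sumLock', \senderLock, \RenamingPolicySepAsyn{x}}$; an \admin communication on $\RenamingPolicySepAsyn{y}$ with the output transmits the sender's sum lock, its sender lock and the value $\RenamingPolicySepAsyn{z}$; the outer test on the receiver's sum lock fires, then the inner test on the sender's sum lock fires, the latter being the \nonAdmin step (Definition~\ref{def:nonAdminStep}); finally the postprocessing step on $\senderLock$ unguards $\EncodingSepAsyn{Q}$. Using Corollary~\ref{col:encodingSubstitutions} to rewrite $\Set{ \Subst{\RenamingPolicySepAsyn{z}}{\RenamingPolicySepAsyn{x}} }\EncodingSepAsyn{P} \equivAlpha \EncodingSepAsyn{\Set{ \Subst{z}{x} }P}$, the reached term is $\EncodingSepAsyn{\Set{ \Subst{z}{x} }P} \mid \EncodingSepAsyn{Q}$ in parallel with the two sums now carrying negative locks (junk by Lemma~\ref{lem:junkRemainsOfSumsSepAsyn}) and the spent test residuals (junk by Lemma~\ref{lem:junkTestStatement}); stepwise removal by Lemma~\ref{lem:removeJunk} gives $\transBarbCongSepAsyn \EncodingSepAsyn{\Set{ \Subst{z}{x} }P \mid Q} = \EncodingSepAsyn{S'}$. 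The case $\textsc{Rep}_{\indexSep}$ is analogous, using the single (non-nested) test of $\EncodingSepAsyn{\ReplicateInput{y}{x}.P}$ while keeping the replicated input in place.

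I expect the main obstacle to be the bookkeeping in the $\textsc{Com}_{\indexSep}$ case: I must verify that the intended interleaving is actually enabled, i.e.\ that both sum locks are still positive when tested (guaranteed by Corollaries~\ref{col:initialSumLocksArePositive} and~\ref{col:instantiationsSumLocksAreRestored} and Lemma~\ref{lem:changeInstantiationSumLock}), that interleaving or reordering the other \admin steps does not invalidate the \simulation (Lemmata~\ref{lem:nonConflictingStepsSepAsyn} and~\ref{lem:pureAdminStepsTransBarbBisimSepAsyn}), and that every residual fragment is covered by one of the junk lemmas so that the final relation to $\EncodingSepAsyn{S'}$ is genuine congruence rather than a mere match of barbs and success.
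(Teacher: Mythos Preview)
Your proposal is correct and follows essentially the same route as the paper: reduce to single steps, induct on the derivation of $S \step S'$, handle \textsc{Par}/\textsc{Res} via the \clean\ translation and the congruence property of $\transBarbCongSepAsyn$, handle \textsc{Cong} via Lemma~\ref{lem:preservesSCModuloTransBarbBisimSepAsyn}, and in each axiom case exhibit the concrete \simulation\ and strip the leftovers using Corollary~\ref{col:encodingSubstitutions} together with Lemmata~\ref{lem:junkTestStatement}, \ref{lem:junkRemainsOfSumsSepAsyn}, and~\ref{lem:removeJunk}. The only residue you do not name explicitly in the $\textsc{Com}$ case is the now-idle replicated input $\RestrictedTerm{\receiverLock}{\ReplicateIn{\receiverLock}.\ldots}$, which is not covered by Lemma~\ref{lem:junkTestStatement} but is trivially junk because it is closed and stuck; the paper discharges it with exactly that observation.
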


\begin{proof}
	By Definition \ref{def:operationalCorrespondence} it suffice to show that:
	\begin{align*}
		\forall S, S' \in \piSepProc \logdot S \step S' \text{ implies } \exists T \in \piAsynProc \logdot \EncodingSepAsyn{S} \steps T \wedge T \transBarbCongSepAsyn \EncodingSepAsyn{S'}
	\end{align*}
	The lemma then holds by induction over the number of steps in $ S \steps S' $. To prove the condition above, we perform an induction over the proof tree that leads to the step $ S \step S' $.
	\begin{description}
		\item[Base Case:] By the rules in Figure \ref{fig:concurrentReductionSemantics} each step on $ S $ is based either on Rule $ \textsc{Tau}_{\indexMix, \indexSep} $, $ \textsc{Com}_{\indexMix, \indexSep}  $, or $ \textsc{Rep}_{\indexMix, \indexSep} $.
			\begin{description}
				\item[Case of Rule] $ \textsc{Tau}_{\indexMix, \indexSep} $\textbf{:} In this case $ S $ is a single sum, of which one summand is guarded by $ \tau $, and $ S' $ is the continuation of this $ \tau $ guarded summand, i.e., there are some finite index set $ I $, some guards $ \pi_i $, and some processes $ P_i \in \piSepProc $ such that $ S = \sum_{i \in \indexSet} \guard_i.P_i $ with $ \guard_j = \tau $ for some $ j \in \indexSet $ and $ S' = P_j $. The corresponding encodings are given by the following terms:
					\begin{align*}
						\EncodingSepAsyn{S} & \equiv \RestrictedTerm{\sumLock}{\left( \Output{\sumLock}{\true} \mid \prod_{i \in \indexSet, i \neq j} \EncodingSepAsyn{\guard_i.P_i} \mid \Test{\sumLock}{\Output{\sumLock}{\false} \mid \EncodingSepAsyn{P_j}}{\Output{\sumLock}{\false}} \right)}\\
						\EncodingSepAsyn{S'} & = \EncodingSepAsyn{P_j}
					\end{align*}
					We observe that $ \EncodingSepAsyn{S} $ can \simulate the step $ S \step S' $ by reducing the test-statement in the encoding of the $ j $'s summand, i.e., by
					\begin{align*}
						\EncodingSepAsyn{S} \step^2 \RestrictedTerm{\sumLock}{\left( \prod_{i \in \indexSet, i \neq j} \EncodingSepAsyn{\guard_i.P_i} \mid \Output{\sumLock}{\false} \mid \EncodingSepAsyn{P_j} \right)} = T.
					\end{align*}
					Note that, since the test-statement and the implementation of booleans are no native \piCal-terms but abbreviations for \piCal-constructs, this reduction indeed requires two steps. Moreover note, that we silently omit junk that results from the reduction of test-statements (compare to Lemma \ref{lem:junkTestStatement}) here and in the following proofs. Further, we observe that $ T \equiv \RestrictedTerm{\sumLock}{\left( \prod_{i \in \indexSet, i \neq j} \EncodingSepAsyn{\guard_i.P_i} \mid \Output{\sumLock}{\false} \right)} \mid \EncodingSepAsyn{P_j} $, since, due to the renaming policy $ \renamingPolicySepAsyn $, the name $ \sumLock $ is not free in $ \EncodingSepAsyn{P_j} $. So the \simulation leaves over the term $ \RestrictedTerm{\sumLock}{\left( \prod_{i \in \indexSet, i \neq j} \EncodingSepAsyn{\guard_i.P_i} \mid \Output{\sumLock}{\false} \right)} $, which is by Lemma \ref{lem:junkRemainsOfSumsSepAsyn} junk. By Lemma \ref{lem:removeJunk}, we conclude that $ T \transBarbCongSepAsyn \EncodingSepAsyn{S'} $.
				\item[Case of Rule] $ \textsc{Com}_{\indexMix, \indexSep} $\textbf{:} Here $ S $ is a parallel composition of two sums and $ S' $ is the parallel composition of the continuations of an input guarded summand of the first and a matching output guarded summand of the second sum, i.e., there are two finite index sets $ \indexSet_1, \indexSet_2 $, some guards $ \guard_i $, and some processes $ P_i, Q_i \in \piSepProc $ such that $ S = \sum_{i \in \indexSet_1} \guard_i.P_i \mid \sum_{i \in \indexSet_2} \guard_i.Q_i $ with $ \guard_{j_1} = \Input{y}{x} $ and $ \guard_{j_2} = \Output{y}{z} $ for some $ j_1 \in \indexSet_1 $, some $ j_2 \in \indexSet_2 $, and $ x, y, z \in \names $ and $ S' = \Set{ \Subst{z}{x} } P_{j_1} \mid Q_{j_2} $. The encodings of $ S $ and $ S' $ are given by the following terms:
					\begin{align*}
						\EncodingSepAsyn{S} & \equiv \RestrictedTerm{\sumLock}{\Big( \begin{aligned}[t]
								& \Output{\sumLock}{\true} \mid \prod_{i \in \indexSet_1, i \neq j_1} \EncodingSepAsyn{\guard_i.P_i}\\
								& \mid \RestrictedTerm{\receiverLock}{\big( \Out{\receiverLock} \mid \ReplicateIn{\receiverLock}.\Input{\RenamingPolicySepAsyn{y}}{\sumLock', \senderLock, \RenamingPolicySepAsyn{x}}.\BigTest{\sumLock}{\BigTest{\sumLock'}{\Output{\sumLock}{\false} \mid \Output{\sumLock'}{\false} \mid \Out{\senderLock} \mid \EncodingSepAsyn{P_{j_1}}}{\Output{\sumLock}{\true} \mid \Output{\sumLock'}{\false} \mid \Out{\receiverLock}}}{\Output{\sumLock}{\true} \mid \Output{\RenamingPolicySepAsyn{y}}{\sumLock', \senderLock, \RenamingPolicySepAsyn{x}} \big) \Big)}}
							\end{aligned}}\\
							& \hspace*{1em} \mid \RestrictedTerm{\sumLock}{\left( \Output{\sumLock}{\true} \mid \prod_{i \in \indexSet_1, i \neq j_1} \EncodingSepAsyn{\guard_i.Q_i} \mid \RestrictedTerm{\senderLock}{\left( \Output{\RenamingPolicySepAsyn{y}}{\sumLock, \senderLock, \RenamingPolicySepAsyn{z}} \mid \In{\senderLock}.\EncodingSepAsyn{Q_{j_2}} \right)} \right)}\\
						\EncodingSepAsyn{S'} & = \EncodingSepAsyn{\Set{ \Subst{z}{x} }\left( P_{j_1} \right)} \mid \EncodingSepAsyn{Q_{j_2}}
					\end{align*}
					To \simulate the source term step $ S \step S' $ first the receiver lock has to be reduced to enable a communication over $ \RenamingPolicySepAsyn{y} $. Then the test-statement and the sender lock are reduced to complete the \simulation of the source term step.
					\begin{align*}
						\EncodingSepAsyn{S} & \step^2 \RestrictedTerm{\sumLock_1, \sumLock_2, \senderLock}{\Big( \begin{aligned}[t]
								& \Output{\sumLock_1}{\true} \mid \Set{ \Subst{\sumLock_1}{\sumLock} }\left( \prod_{i \in \indexSet_1, i \neq j_1} \EncodingSepAsyn{\guard_i.P_i} \right)\\
								& \mid \RestrictedTerm{\receiverLock}{\big( \begin{aligned}[t]
										& \BigTest{\sumLock_1}{\BigTest{\sumLock_2}{\Output{\sumLock_1}{\false} \mid \Output{\sumLock_2}{\false} \mid \Out{\senderLock} \mid \Set{ \Subst{\RenamingPolicySepAsyn{z}}{\RenamingPolicySepAsyn{x}} }\left( \EncodingSepAsyn{P_{j_1}} \right)}{\Output{\sumLock_1}{\true} \mid \Output{\sumLock_2}{\false} \mid \Out{\receiverLock}}}{\Output{\sumLock_1}{\false} \mid \Output{\RenamingPolicySepAsyn{y}}{\sumLock_2, \senderLock, \RenamingPolicySepAsyn{z}}}\\
										& \ReplicateIn{\receiverLock}.\Input{\RenamingPolicySepAsyn{y}}{\sumLock', \senderLock, \RenamingPolicySepAsyn{x}}.\BigTest{\sumLock_1}{\BigTest{\sumLock'}{\Output{\sumLock_1}{\false} \mid \Output{\sumLock'}{\false} \mid \Out{\senderLock} \mid \EncodingSepAsyn{P_{j_1}}}{\Output{\sumLock_1}{\true} \mid \Output{\sumLock'}{\false} \mid \Out{\receiverLock}}}{\Output{\sumLock_1}{\false} \mid \Output{\RenamingPolicySepAsyn{y}}{\sumLock', \senderLock, \RenamingPolicySepAsyn{x}} \big)}
									\end{aligned}\\
								& \mid \Output{\sumLock_2}{\true} \mid \Set{ \Subst{\sumLock_2}{\sumLock} }\left( \prod_{i \in \indexSet_1, i \neq j_2} \EncodingSepAsyn{\guard_i.Q_i} \right) \mid \In{\senderLock}.\EncodingSepAsyn{Q_{j_2}} \Big)}
							\end{aligned}}
					\end{align*}
					\begin{align*}
						\hspace*{1em} & \step^5 \RestrictedTerm{\sumLock_1, \sumLock_2, \senderLock}{\Big( \begin{aligned}[t]
								& \Set{ \Subst{\sumLock_1}{\sumLock} }\left( \prod_{i \in \indexSet_1, i \neq j_1} \EncodingSepAsyn{\guard_i.P_i} \right)\\
								& \mid \RestrictedTerm{\receiverLock}{\big( \begin{aligned}[t]
										& \Output{\sumLock_1}{\false} \mid \Output{\sumLock_2}{\false} \mid \Set{ \Subst{\RenamingPolicySepAsyn{z}}{\RenamingPolicySepAsyn{x}} }\left( \EncodingSepAsyn{P_{j_1}} \right)\\
										& \ReplicateIn{\receiverLock}.\Input{\RenamingPolicySepAsyn{y}}{\sumLock', \senderLock, \RenamingPolicySepAsyn{x}}.\BigTest{\sumLock_1}{\BigTest{\sumLock'}{\Output{\sumLock_1}{\false} \mid \Output{\sumLock'}{\false} \mid \Out{\senderLock} \mid \EncodingSepAsyn{P_{j_1}}}{\Output{\sumLock_1}{\true} \mid \Output{\sumLock'}{\false} \mid \Out{\receiverLock}}}{\Output{\sumLock_1}{\false} \mid \Output{\RenamingPolicySepAsyn{y}}{\sumLock', \senderLock, \RenamingPolicySepAsyn{x}} \big)}
									\end{aligned}\\
								& \mid \Set{ \Subst{\sumLock_2}{\sumLock} }\left( \prod_{i \in \indexSet_1, i \neq j_2} \EncodingSepAsyn{\guard_i.Q_i} \right) \mid \EncodingSepAsyn{Q_{j_2}} \Big) = T}
							\end{aligned}}
					\end{align*}
					By Corollary \ref{col:encodingSubstitutions}, $ \Set{ \Subst{\RenamingPolicySepAsyn{z}}{\RenamingPolicySepAsyn{x}} } \EncodingSepAsyn{P_{j_1}} \equivAlpha \EncodingSepAsyn{\Set{ \Subst{z}{x} } P_{j_1}}  $. To show that $ T \transBarbCongSepAsyn \EncodingMixAsyn{S'} $ we stepwise reduce $ T $ by ignoring junk. Since $ \sumLock_1, \sumLock_2, \receiverLock, \senderLock \notin \FreeNames{\EncodingMixAsyn{\Set{ \Subst{z}{x} } P_{j_1}}} \cup \FreeNames{\EncodingMixAsyn{Q_{j_2}}} $, we can reorder the term according to the restrictions on $ \sumLock_1, \sumLock_2, \receiverLock $ and the restriction on $ \senderLock $ can be omitted. The term
					\begin{align*}
						\RestrictedTerm{\receiverLock}{\big( \ReplicateIn{\receiverLock}.\Input{\RenamingPolicySepAsyn{y}}{\sumLock', \senderLock, \RenamingPolicySepAsyn{x}}.\BigTest{\sumLock_1}{\BigTest{\sumLock'}{\Output{\sumLock_1}{\false} \mid \Output{\sumLock'}{\false} \mid \Out{\senderLock} \mid \EncodingSepAsyn{P_{j_1}}}{\Output{\sumLock_1}{\true} \mid \Output{\sumLock'}{\false} \mid \Out{\receiverLock}}}{\Output{\sumLock_1}{\false} \mid \Output{\RenamingPolicySepAsyn{y}}{\sumLock', \senderLock, \RenamingPolicySepAsyn{x}} \big)}}
					\end{align*}
					is obviously junk, since it is closed and can not perform any step.	Moreover, by Lemma \ref{lem:junkRemainsOfSumsSepAsyn}, the terms $ \RestrictedTerm{\sumLock}{\left( \prod_{i \in \indexSet_1, i \neq j_1} \EncodingSepAsyn{\guard_i.P_i} \mid \Output{\sumLock}{\false} \right)} $ and $ \RestrictedTerm{\sumLock}{\left( \prod_{i \in \indexSet_1, i \neq j_2} \EncodingSepAsyn{\guard_i.Q_i} \mid \Output{\sumLock}{\false} \right)} $ are junk. So, by Lemma \ref{lem:removeJunk}, we conclude $ T \transBarbCongSepAsyn \EncodingSepAsyn{S'} $.
				\item[Case of Rule] $ \textsc{Rep}_{\indexMix, \indexSep} $\textbf{:} Here $ S $ is a parallel composition of a replicated input and a sum and $ S' $ is the parallel composition of the replicated input, the continuation of the replicated input, and the continuation of a matching output guarded summand, i.e., there is a finite index sets $ \indexSet $, some guards $ \guard_i $, and some processes $ P, Q_i \in \piSepProc $ such that $ S = \ReplicateInput{y}{x}.P \mid \sum_{i \in \indexSet} \guard_i.Q_i $ with $ \guard_j = \Output{y}{z} $ for some $ j \in \indexSet $ and $ x, y, z \in \names $, and $ S' = \Set{ \Subst{z}{x} } P \mid Q_j \mid \ReplicateInput{y}{x}.P $. The encodings of $ S $ and $ S' $ are given by the following terms:
					\begin{align*}
						\EncodingSepAsyn{S} & \equiv \ReplicateInput{\RenamingPolicySepAsyn{y}}{\sumLock, \senderLock, \RenamingPolicySepAsyn{x}}.\Test{\sumLock}{\Output{\sumLock}{\false} \mid \Out{\senderLock} \mid \EncodingSepAsyn{P}}{\Output{\sumLock}{\false}}\\
							& \hspace*{1em} \mid \RestrictedTerm{\sumLock}{\left( \Output{\sumLock}{\true} \mid \prod_{i \in \indexSet_1, i \neq j} \EncodingSepAsyn{\guard_i.Q_i} \mid \RestrictedTerm{\senderLock}{\left( \Output{\RenamingPolicySepAsyn{y}}{\sumLock, \senderLock, \RenamingPolicySepAsyn{z}} \mid \In{\senderLock}.\EncodingSepAsyn{Q_j} \right)} \right)}\\
						\EncodingSepAsyn{S'} & = \EncodingSepAsyn{\Set{ \Subst{z}{x} }\left( P \right)} \mid \EncodingSepAsyn{Q_j} \mid \EncodingSepAsyn{\ReplicateInput{y}{x}.P}
					\end{align*}
					To \simulate the source term step $ S \step S' $, first the two subprocesses of $ \EncodingSepAsyn{S} $ communicate over $ \RenamingPolicySepAsyn{y} $. Then the test-statement and the sender lock are reduced to complete the \simulation of the source term step.
					\begin{align*}
						\EncodingSepAsyn{S} & \step \RestrictedTerm{\sumLock, \senderLock}{\Big( \begin{aligned}[t]
								& \Test{\sumLock}{\Output{\sumLock}{\false} \mid \Out{\senderLock} \mid \Set{ \Subst{\RenamingPolicySepAsyn{z}}{\RenamingPolicySepAsyn{x}} }\left( \EncodingSepAsyn{P} \right)}{\Output{\sumLock}{\false}}\\
								& \mid \ReplicateInput{\RenamingPolicySepAsyn{y}}{\sumLock, \senderLock, \RenamingPolicySepAsyn{x}}.\Test{\sumLock}{\Output{\sumLock}{\false} \mid \Out{\senderLock} \mid \EncodingSepAsyn{P}}{\Output{\sumLock}{\false}}\\
								& \mid \Output{\sumLock}{\true} \mid \prod_{i \in \indexSet_1, i \neq j} \EncodingSepAsyn{\guard_i.Q_i} \mid \In{\senderLock}.\EncodingSepAsyn{Q_j} \Big)
							\end{aligned}}\\
						& \step^3 \RestrictedTerm{\sumLock, \senderLock}{\Big( \begin{aligned}[t]
								& \Output{\sumLock}{\false} \mid \Set{ \Subst{\RenamingPolicySepAsyn{z}}{\RenamingPolicySepAsyn{x}} }\left( \EncodingSepAsyn{P} \right)\\
								& \mid \ReplicateInput{\RenamingPolicySepAsyn{y}}{\sumLock, \senderLock, \RenamingPolicySepAsyn{x}}.\Test{\sumLock}{\Output{\sumLock}{\false} \mid \Out{\senderLock} \mid \EncodingSepAsyn{P}}{\Output{\sumLock}{\false}}\\
								& \mid \prod_{i \in \indexSet_1, i \neq j} \EncodingSepAsyn{\guard_i.Q_i} \mid \EncodingSepAsyn{Q_j} \Big) = T
							\end{aligned}}
					\end{align*}
					By Corollary \ref{col:encodingSubstitutions}, $ \Set{ \Subst{\RenamingPolicySepAsyn{z}}{\RenamingPolicySepAsyn{x}} } \EncodingSepAsyn{P} \equivAlpha \EncodingSepAsyn{\Set{ \Subst{z}{x} } P}  $. To show that $ T \transBarbCongSepAsyn \EncodingMixAsyn{S'} $ we stepwise reduce $ T $ by ignoring junk. Since $ \sumLock, \senderLock \notin \FreeNames{\EncodingMixAsyn{\Set{ \Subst{z}{x} } P}} \cup \FreeNames{\EncodingMixAsyn{Q_j}} $, we can reorder the term according to the restriction on $ \sumLock $ and the restriction on $ \senderLock $ can be omitted. By Lemma \ref{lem:junkRemainsOfSumsSepAsyn}, then $ \RestrictedTerm{\sumLock}{\left( \prod_{i \in \indexSet, i \neq j} \EncodingSepAsyn{\guard_i.Q_i} \mid \Output{\sumLock}{\false} \right)} $ is junk. Note that, by Lemma \ref{lem:transBarbCongIncludesSC}, the relation $ \transBarbCongSepAsyn $ includes structural congruence. Thus, by Lemma \ref{lem:removeJunk}, we conclude $ T \transBarbCongSepAsyn \EncodingSepAsyn{S'} $.
			\end{description}
		\item[Induction Hypothesis:] $ S_1 \step S_1' \text{ implies } \exists T_1 \in \piAsynProc \logdot \EncodingSepAsyn{S_1} \steps T_1 \wedge T_1 \transBarbCongSepAsyn \EncodingSepAsyn{S_1'} $
		\item[Induction Step:] We have to consider the remaining three rules \textsc{Par}, \textsc{Res}, and \textsc{Cong} of Figure \ref{fig:concurrentReductionSemantics}.
			\begin{description}
				\item[Case of Rule] \textsc{Par}\textbf{:} Then $ S = S_1 \mid S_2 $ for some $ S_1, S_2 \in \piSepProc $, $ S_1 \step S_1' $, and $ S' = S_1' \mid S_2 $. By the induction hypothesis there is some $ T_1 \in \piAsynProc $ such that $ \EncodingSepAsyn{S_1} \steps T_1 $ and $ T_1 \transBarbCongSepAsyn \EncodingSepAsyn{S_1'} $. Since the encoding of the parallel operator is \clean, i.e., $ \EncodingSepAsyn{S} = \EncodingSepAsyn{S_1} \mid \EncodingSepAsyn{S_2} $ and $ \EncodingSepAsyn{S'} = \EncodingSepAsyn{S_1'} \mid \EncodingSepAsyn{S_2} $, we can apply rule \textsc{Par} to conclude from $ \EncodingSepAsyn{S_1} \steps T_1 $ to $ \EncodingSepAsyn{S} \steps T_1 \mid \EncodingSepAsyn{S_2} = T $. By Definition \ref{def:transBarbCong}, $ T_1 \transBarbCongSepAsyn \EncodingSepAsyn{S_1'} $ implies $ \Context{}{}{T_1} \transBarbBisimSepAsyn \Context{}{}{\EncodingSepAsyn{S_1'}} $ for all contexts $ \Context{}{}{\hole} \in \piAsynProc \to \piAsynProc $ such that $ \Context{}{}{\EncodingSepAsyn{\nullTerm}} \in \targetTermsSepAsyn $. Since $ \EncodingSepAsyn{\nullTerm} \mid \EncodingSepAsyn{S_2} \in \targetTermsSepAsyn $, the quantification over $ \context $ includes all contexts $ \context $ such that $ \Context{}{}{\hole} = \Context{'}{}{\hole \mid \EncodingSepAsyn{S_2}} $. Because of that, we have $ \Context{'}{}{T_1 \mid \EncodingSepAsyn{S_2}} \transBarbBisimSepAsyn \Context{'}{}{\EncodingSepAsyn{S_1'} \mid \EncodingSepAsyn{S_2}} $ for all contexts $ \Context{'}{}{\hole} \in \piAsynProc \to \piAsynProc $ such that $ \Context{'}{}{\EncodingSepAsyn{\nullTerm}} \in \targetTermsSepAsyn $. By Definition \ref{def:transBarbCong}, we conclude $ T \transBarbCongSepAsyn \EncodingSepAsyn{S'} $.
				\item[Case of Rule] \textsc{Res}\textbf{:} Then $ S = \RestrictedTerm{x}{S_1} $ for some $ x \in \names $ and some $ S_1 \in \piSepProc $, $ S_1 \step S_1' $, and $ S' = \RestrictedTerm{x}{S_1'} $. By the induction hypothesis there is some $ T_1 \in \piAsynProc $ such that $ \EncodingSepAsyn{S_1} \steps T_1 $ and $ T_1 \transBarbCongSepAsyn \EncodingSepAsyn{S_1'} $. Since the encoding of restriction is \clean, i.e., $ \EncodingSepAsyn{S} = \RestrictedTerm{\RenamingPolicySepAsyn{x}}{\EncodingSepAsyn{S_1}} $ and $ \EncodingSepAsyn{S'} = \RestrictedTerm{\RenamingPolicySepAsyn{x}}{\EncodingSepAsyn{S_1'}} $, we can apply rule \textsc{Res} to conclude from $ \EncodingSepAsyn{S_1} \steps T_1 $ to $ \EncodingSepAsyn{S} \steps \RestrictedTerm{\RenamingPolicySepAsyn{x}}{T_1} = T $. By Definition \ref{def:transBarbCong}, $ T_1 \transBarbCongSepAsyn \EncodingSepAsyn{S_1'} $ implies $ \Context{}{}{T_1} \transBarbBisimSepAsyn \Context{}{}{\EncodingSepAsyn{S_1'}} $ for all contexts $ \Context{}{}{\hole} \in \piAsynProc \to \piAsynProc $ such that $ \Context{}{}{\EncodingSepAsyn{\nullTerm}} \in \targetTermsSepAsyn $. Since $ \RestrictedTerm{\RenamingPolicySepAsyn{x}}{\EncodingSepAsyn{\nullTerm}} \in \targetTermsSepAsyn $, the quantification over $ \context $ includes all contexts $ \context $ such that $ \Context{}{}{\hole} = \Context{'}{}{\RestrictedTerm{\RenamingPolicySepAsyn{x}}{\hole}} $. Because of that, we have $ \Context{'}{}{\RestrictedTerm{\RenamingPolicySepAsyn{x}}{T_1}} \transBarbBisimSepAsyn \Context{'}{}{\RestrictedTerm{\RenamingPolicySepAsyn{x}}{\EncodingSepAsyn{S_1'}}} $ for all contexts $ \Context{'}{}{\hole} \in \piAsynProc \to \piAsynProc $ such that $ \Context{'}{}{\EncodingSepAsyn{\nullTerm}} \in \targetTermsSepAsyn $. By Definition \ref{def:transBarbCong}, we conclude $ T \transBarbCongSepAsyn \EncodingSepAsyn{S'} $.
				\item[Case of Rule] \textsc{Cong}\textbf{:} Then $ S \equiv S_1 $ for some $ S_1 \in \piSepProc $, $ S_1 \step S_1' $, and $ S_1' \equiv S' $. By Lemma \ref{lem:preservesSCModuloTransBarbBisimSepAsyn}, the encoding $ \encodingSepAsyn $ preserves structural congruence of source terms modulo $ \transBarbCongSepAsyn $. So $ S \equiv S_1 $ and $ S_1' \equiv S' $ implies $ \EncodingSepAsyn{S} \transBarbCongSepAsyn \EncodingSepAsyn{S_1} $ and $ \EncodingSepAsyn{S_1'} \transBarbCongSepAsyn \EncodingSepAsyn{S'} $. By Definition \ref{def:transBarbCong}, for all contexts $ \Context{}{}{\hole} \in \piAsynProc \to \piAsynProc $ such that $ \Context{}{}{\EncodingSepAsyn{\nullTerm}} \in \targetTermsSepAsyn $ we have $ \Context{}{}{\EncodingSepAsyn{S}} \transBarbBisimSepAsyn \Context{}{}{\EncodingSepAsyn{S_1}} $, i.e., especially $ \EncodingSepAsyn{S} \transBarbBisimSepAsyn \EncodingSepAsyn{S_1} $. Thus, by Definition \ref{def:transBarbBisim}, for each sequence $ \EncodingSepAsyn{S} \steps T $ there is a sequence $ \EncodingSepAsyn{S_1} \steps T_1 $ for some $ T_1 \in \piAsynProc $ such that $ T \transBarbBisimSepAsyn T_1 $. The same holds for all Contexts $ \context $, i.e., since $ \Context{}{}{\EncodingSepAsyn{S}} \transBarbBisimSepAsyn \Context{}{}{\EncodingSepAsyn{S_1}} $, for each sequence $ \Context{}{}{\EncodingSepAsyn{S}} \steps \Context{}{}{T} $ there is a sequence $ \Context{}{}{\EncodingSepAsyn{S_1}} \steps \Context{}{}{T_1} $ for some $ T_1 \in \piAsynProc $ such that $ \Context{}{}{T} \transBarbBisimSepAsyn \Context{}{}{T_1} $. So, by Definition \ref{def:transBarbCong}, $ T \transBarbCongSepAsyn T_1 $. By the induction hypothesis $ T_1 \transBarbCongSepAsyn \EncodingSepAsyn{S_1'} $. Since, by Lemma \ref{lem:transBarbCongIsEquivalence}, $ \transBarbCongSepAsyn $ is an equivalence, $ T \transBarbCongSepAsyn T_1 $, $ T_1 \transBarbCongSepAsyn \EncodingSepAsyn{S_1'} $, and $ \EncodingSepAsyn{S_1'} \transBarbCongSepAsyn \EncodingSepAsyn{S'} $ implies $ T \transBarbCongSepAsyn \EncodingSepAsyn{S'} $.
			\end{description}
	\end{description}
	\qed
\end{proof}

\begin{lemma}[Operational Completeness] \label{lem:operationalCompletenessMixAsyn}
	The encoding $ \encodingMixAsyn $ fulfils operational completeness.
\end{lemma}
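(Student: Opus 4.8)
The plan is to mirror the structure of the proof of Lemma~\ref{lem:operationalCompletenessSepAsyn}. By Definition~\ref{def:operationalCorrespondence} it suffices to establish the single-step version
\[
	\forall S, S' \in \piMixProc \logdot S \step S' \text{ implies } \exists T \in \piAsynProc \logdot \EncodingMixAsyn{S} \steps T \wedge T \transBarbCongMixAsynB \EncodingMixAsyn{S'},
\]
from which the general statement follows by induction on the length of $ S \steps S' $. I would prove the non-congruence cases modulo the stricter relation $ \transBarbCongMixAsyn $ and then weaken to $ \transBarbCongMixAsynB $ only where the \textsc{Cong} rule forces it, exploiting that $ \transBarbCongMixAsyn $ is included in $ \transBarbCongMixAsynB $. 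The single-step claim I would attack by induction over the proof tree of $ S \step S' $, with base cases $ \textsc{Tau}_{\indexMix, \indexSep} $, $ \textsc{Com}_{\indexMix, \indexSep} $, $ \textsc{Rep}_{\indexMix, \indexSep} $ and induction step cases \textsc{Par}, \textsc{Res}, \textsc{Cong}.

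For the base cases I would exhibit the \simulation explicitly. In the $ \tau $-case, $ S = \sum_{i \in \indexSet} \guard_i.P_i $ with $ \guard_j = \tau $ and $ S' = P_j $; the encoding reduces the single test-statement of $ \EncodingMixAsyn{\tau.P_j} $ in two steps, the \nonAdmin step consuming the positive instantiation of $ \sumLock $ and leaving $ \Output{\sumLock}{\false} $ together with $ \EncodingMixAsyn{P_j} $ and the encodings of the other summands. Since $ \sumLock \notin \FreeNames{\EncodingMixAsyn{P_j}} $, the residue $ \RestrictedTerm{\sumLock}{\left( \prod_{i \neq j} \EncodingMixAsyn{\guard_i.P_i} \mid \Output{\sumLock}{\false} \right)} $ is junk by Lemma~\ref{lem:junkRemainsOfSumsMixAsyn}, so Lemma~\ref{lem:removeJunk} yields $ T \transBarbCongMixAsyn \EncodingMixAsyn{S'} $. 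The $ \textsc{Com} $-case is the central and most laborious one: here the communication is not direct, so I would trace the protocol of the parallel-operator context, in which the output summand emits an output request on $ \parallelChannelOut $ and the input summand an input request on $ \parallelChannelIn $; these requests are pushed upward and combined by the matching coordinators until an output on the relevant receiver lock unguards the nested test-statement, whose \emph{second} positive sum-lock consumption is the \nonAdmin step, after which the sender lock is consumed as a postprocessing step to unguard $ \EncodingMixAsyn{Q_{j_2}} $. Corollary~\ref{col:encodingSubstitutions} supplies $ \Set{ \Subst{\RenamingPolicyMixAsyn{z}}{\RenamingPolicyMixAsyn{x}} }\EncodingMixAsyn{P_{j_1}} \equivAlpha \EncodingMixAsyn{\Set{ \Subst{z}{x} }P_{j_1}} $, and the many remains are discharged by the junk lemmata in sequence: leftover requests on the now-negative sum locks via Lemma~\ref{lem:junkRequestsOnFalseSumLocks}, the blocked summands via Lemma~\ref{lem:junkRemainsOfSumsMixAsyn}, the replicated receiver guard via Lemma~\ref{lem:junkReceiverMixAsyn}, and the prepared-but-dead tests via Lemma~\ref{lem:junkInduceTest}, with \pure \admin steps absorbed by Lemma~\ref{lem:pureAdminStepsTransBarbBisimMixAsyn}; Lemma~\ref{lem:removeJunk} then gives $ T \transBarbCongMixAsyn \EncodingMixAsyn{S'} $. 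The $ \textsc{Rep} $-case is analogous, using instead the replicated-input machinery and the chain locks $ \coordinatorRepA, \coordinatorRepB $ that link the freshly unguarded encoded continuation into the existing chain.

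For the induction step, \textsc{Res} is identical to the separate-choice proof, since restriction is translated \cleanly: apply rule \textsc{Res} to the reduction furnished by the induction hypothesis and close under the congruence $ \transBarbCongMixAsynB $. The \textsc{Par} case differs crucially because $ \encodingMixAsyn $ does \emph{not} translate the parallel operator \cleanly. Here I would first argue that the \simulation $ \EncodingMixAsyn{S_1} \steps T_1 $ given by the induction hypothesis can be lifted into the left hole of the parallel-operator context: every step of that \simulation is internal to $ \EncodingMixAsyn{S_1} $ and uses only request channels bound inside $ \EncodingMixAsyn{S_1} $, whereas the topmost free requests on $ \parallelChannelOut, \parallelChannelIn $ are never consumed by it, so the whole sequence remains derivable under the surrounding restrictions and forwarders. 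Writing $ \context_{\mid}[\hole_1,\hole_2] $ for the parallel context, compositionality gives $ \context_{\mid}[\EncodingMixAsyn{\nullTerm}, \EncodingMixAsyn{S_2}] = \EncodingMixAsyn{\nullTerm \mid S_2} \in \targetTermsMixAsyn $, so contexts of the form $ \Context{'}{}{\context_{\mid}[\hole,\EncodingMixAsyn{S_2}]} $ are admissible in Definition~\ref{def:transBarbCong}, and $ T_1 \transBarbCongMixAsynB \EncodingMixAsyn{S_1'} $ transfers to $ T = \context_{\mid}[T_1, \EncodingMixAsyn{S_2}] \transBarbCongMixAsynB \context_{\mid}[\EncodingMixAsyn{S_1'}, \EncodingMixAsyn{S_2}] = \EncodingMixAsyn{S'} $. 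Finally, the \textsc{Cong} case is exactly where the weaker relation is needed: from $ S \equiv S_1 \step S_1' \equiv S' $, Lemma~\ref{lem:preservesSCModuloTransBarbBisimMixAsyn} gives $ \EncodingMixAsyn{S} \transBarbBisimMixAsynB \EncodingMixAsyn{S_1} $ and $ \EncodingMixAsyn{S_1'} \transBarbCongMixAsynB \EncodingMixAsyn{S'} $, and I would use the weak transport conditions~3 and~4 of Definition~\ref{def:transBarbBisimB} to turn the reduction $ \EncodingMixAsyn{S} \steps T $ into a matching reduction of $ \EncodingMixAsyn{S_1} $, then chain $ T \transBarbCongMixAsynB T_1 \transBarbCongMixAsynB \EncodingMixAsyn{S_1'} \transBarbCongMixAsynB \EncodingMixAsyn{S'} $ by transitivity of the equivalence.

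I expect the main obstacle to be the $ \textsc{Com} $ (and the analogous $ \textsc{Rep} $) base case: unlike in $ \encodingSepAsyn $, a single source communication is \simulated by a long sequence through the request-matching protocol, and the bookkeeping of which step is the \nonAdmin step versus pre- and postprocessing \admin steps, together with the careful identification and removal of every kind of residual junk, is where almost all the work lies. Two secondary difficulties are the non-\clean \textsc{Par} case, whose correctness rests on the lifting argument that the surrounding forwarders neither consume nor block the internal \simulation, and the \textsc{Cong} case, which cannot be carried out modulo $ \transBarbCongMixAsyn $ at all because structurally congruent source terms may reach different intermediate states (Example~\ref{exa:intermediateStates}) and hence must be compared modulo the intermediate-state-insensitive relation $ \transBarbCongMixAsynB $.
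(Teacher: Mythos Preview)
Your plan matches the paper's proof almost exactly: same induction over the derivation of $S \step S'$, same base cases with the same junk lemmata (\ref{lem:junkRemainsOfSumsMixAsyn}, \ref{lem:junkRequestsOnFalseSumLocks}, \ref{lem:junkReceiverMixAsyn}, \ref{lem:junkInduceTest}), same lifting argument for \textsc{Par} via the admissibility of $\Context{'}{}{\context_{\mid}[\hole,\EncodingMixAsyn{S_2}]}$, and the same appeal to Lemma~\ref{lem:preservesSCModuloTransBarbBisimMixAsyn} for \textsc{Cong}.

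There is one point where your expectation is off. You write that you will carry out all non-\textsc{Cong} cases modulo the stricter $\transBarbCongMixAsyn$ and only weaken for \textsc{Cong}. This fails for $\textsc{Rep}_{\indexMix,\indexSep}$. After the \simulation the encoded continuation $\EncodingMixAsyn{\Set{\Subst{z}{x}}P}$ sits as a \emph{branch inside the replicated-input encoding}, whereas in $\EncodingMixAsyn{S'} = \EncodingMixAsyn{\Set{\Subst{z}{x}}P \mid Q_j \mid \ReplicateInput{y}{x}.P}$ it is the left argument of a fresh parallel-operator encoding, and the three subterms appear in a different bracketing and order. The paper bridges this by passing through $S'' = (\ReplicateInput{y}{x}.P \mid \Set{\Subst{z}{x}}P) \mid Q_j \equiv S'$ and invoking Lemma~\ref{lem:preservesSCModuloTransBarbBisimMixAsyn}; but that lemma yields only $\transBarbCongMixAsynB$, not $\transBarbCongMixAsyn$, precisely because reordering under ``$\mid$'' changes the induced sum-lock ordering and hence the reachable intermediate states. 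So the $\textsc{Rep}$ base case already forces the weaker relation, not just \textsc{Cong}. Apart from this, your sketch is faithful to the paper's argument.
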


\begin{proof}
	By Definition \ref{def:operationalCorrespondence} it suffice to show that:
	\begin{align*}
		\forall S, S' \in \piMixProc \logdot S \step S' \text{ implies } \exists T \in \piAsynProc \logdot \EncodingMixAsyn{S} \steps T \wedge T \transBarbCongMixAsynB \EncodingMixAsyn{S'}
	\end{align*}
	The lemma then holds by induction over the number of steps in $ S \steps S' $. To prove the condition above, we perform an induction over the proof tree that leads to the step $ S \step S' $.
	\begin{description}
		\item[Base Case:] By the rules in Figure \ref{fig:concurrentReductionSemantics} each step on $ S $ is based either on Rule $ \textsc{Tau}_{\indexMix, \indexSep} $, $ \textsc{Com}_{\indexMix, \indexSep}  $, or $ \textsc{Rep}_{\indexMix, \indexSep}  $.
			\begin{description}
				\item[Case of Rule] $ \textsc{Tau}_{\indexMix, \indexSep} $\textbf{:} In this case $ S $ is a single sum, of which one summand is guarded by $ \tau $, and $ S' $ is the continuation of this $ \tau $ guarded summand, i.e., there are some finite index set $ I $, some guards $ \pi_i $, and some processes $ P_i \in \piMixProc $ such that $ S = \sum_{i \in \indexSet} \guard_i.P_i $ with $ \guard_j = \tau $ for some $ j \in \indexSet $ and $ S' = P_j $. The corresponding encodings are given by the following terms:
					\begin{align*}
						\EncodingMixAsyn{S} & \equiv \RestrictedTerm{\sumLock}{\left( \Output{\sumLock}{\true} \mid \prod_{i \in \indexSet, i \neq j} \EncodingMixAsyn{\guard_i.P_i} \mid \Test{\sumLock}{\Output{\sumLock}{\false} \mid \EncodingMixAsyn{P_j}}{\Output{\sumLock}{\false}} \right)}\\
						\EncodingMixAsyn{S'} & = \EncodingMixAsyn{P_j}
					\end{align*}
					We observe that $ \EncodingMixAsyn{S} $ can \simulate the step $ S \step S' $ by reducing the test-statement in the encoding of the $ j $'s summand, i.e., by
					\begin{align*}
						\EncodingMixAsyn{S} \step^2 \RestrictedTerm{\sumLock}{\left( \prod_{i \in \indexSet, i \neq j} \EncodingMixAsyn{\guard_i.P_i} \mid \Output{\sumLock}{\false} \mid \EncodingMixAsyn{P_j} \right)} = T.
					\end{align*}
					We observe that $ T \equiv \RestrictedTerm{\sumLock}{\left( \prod_{i \in \indexSet, i \neq j} \EncodingMixAsyn{\guard_i.P_i} \mid \Output{\sumLock}{\false} \right)} \mid \EncodingMixAsyn{P_j} $, since, due to the renaming policy $ \renamingPolicyMixAsyn $, the name $ \sumLock $ is not free in $ \EncodingMixAsyn{P_j} $. So the term $ \RestrictedTerm{\sumLock}{\left( \prod_{i \in \indexSet, i \neq j} \EncodingMixAsyn{\guard_i.P_i} \mid \Output{\sumLock}{\false} \right)} $ is leftover, which is by Lemma \ref{lem:junkRemainsOfSumsMixAsyn} junk. By Lemma \ref{lem:removeJunk}, we conclude that $ T \transBarbCongMixAsynB \EncodingMixAsyn{S'} $.
				\item[Case of Rule] $ \textsc{Com}_{\indexMix, \indexSep} $\textbf{:} Here $ S $ is a parallel composition of two sums and $ S' $ is the parallel composition of the continuations of an input guarded summand of the first and a matching output guarded summand of the second sum, i.e., there are two finite index sets $ \indexSet_1, \indexSet_2 $, some guards $ \guard_i $, and some processes $ P_i, Q_i \in \piMixProc $ such that $ S = \sum_{i \in \indexSet_1} \guard_i.P_i \mid \sum_{i \in \indexSet_2} \guard_i.Q_i $ with $ \guard_{j_1} = \Input{y}{x} $ and $ \guard_{j_2} = \Output{y}{z} $ for some $ j_1 \in \indexSet_1 $, some $ j_2 \in \indexSet_2 $, and $ x, y, z \in \names $, and $ S' = \Set{ \Subst{z}{x} } P_{j_1} \mid Q_{j_2} $. Unfortunately the encodings of these terms are rather long:
					\begin{align*}
						\EncodingMixAsyn{S} & \equiv \begin{aligned}[t]
								& \RestrictedTerm{\matchingCoordinatorOut, \matchingCoordinatorIn, \coordinatorUpOut, \coordinatorUpIn, \coordinatorMatchingOut, \coordinatorMatchingIn, \matchingUpOut, \matchingUpIn}{\big(\\
								& \hspace*{1em} \RestrictedTerm{\parallelChannelOut, \parallelChannelIn}{\big( \begin{aligned}[t]
										& \RestrictedTerm{\sumLock}{\big( \begin{aligned}[t]
												& \Output{\sumLock}{\true} \mid \prod_{i \in \indexSet_1, i \neq j_1} \EncodingMixAsyn{\guard_i.P_i}\\
												& \mid \RestrictedTerm{\receiverLock}{\big( \begin{aligned}[t]
														& \Output{\parallelChannelIn}{\RenamingPolicyMixAsyn{y}, \sumLock, \receiverLock} \mid \ReplicateInput{\receiverLock}{\sumLock_1, \sumLock_2, -, \senderLock, \RenamingPolicyMixAsyn{x}}.\\
														& \hspace*{1em} \BigTest{\sumLock_1}{\BigTest{\sumLock_2}{\Output{\sumLock_1}{\false} \mid \Output{\sumLock_2}{\false} \mid \Out{\senderLock} \mid \EncodingMixAsyn{P_{j_1}}}{\Output{\sumLock_1}{\true} \mid \Output{\sumLock_2}{\false}}}{\Output{\sumLock_1}{\false} \big) \big)}
													\end{aligned}}
											\end{aligned}}\\
										& \mid \processLeftOutputRequests \mid \processLeftInputRequests \big)
									\end{aligned}}\\
								& \hspace*{1em} \mid \RestrictedTerm{\parallelChannelOut, \parallelChannelIn}{\big( \begin{aligned}[t]
										& \RestrictedTerm{\sumLock}{\big( \begin{aligned}[t]
												& \Output{\sumLock}{\true} \mid \prod_{i \in \indexSet_2, i \neq j_2} \EncodingMixAsyn{\guard_i.Q_i} \mid \RestrictedTerm{\senderLock}{\left( \Output{\parallelChannelOut}{\RenamingPolicyMixAsyn{y}, \sumLock, \senderLock, \RenamingPolicyMixAsyn{z}} \mid \In{\senderLock}.\EncodingMixAsyn{Q_{j_2}} \right)} \big)
											\end{aligned}}\\
										& \mid \processRightOutputRequests \mid \processRightInputRequests \big)
									\end{aligned}}\\
								& \hspace*{1em} \mid \pushRequests \big)}
							\end{aligned}\\
						\EncodingMixAsyn{S'} & = \begin{aligned}[t]
								& \RestrictedTerm{\matchingCoordinatorOut, \matchingCoordinatorIn, \coordinatorUpOut, \coordinatorUpIn, \coordinatorMatchingOut, \coordinatorMatchingIn, \matchingUpOut, \matchingUpIn}{\big(\\
								& \hspace*{1em} \RestrictedTerm{\parallelChannelOut, \parallelChannelIn}{\left( \EncodingMixAsyn{\Set{ \Subst{z}{x} } P_{j_1}} \mid \processLeftOutputRequests \mid \processLeftInputRequests \right)}\\
								& \hspace*{1em} \mid \RestrictedTerm{\parallelChannelOut, \parallelChannelIn}{\left( \EncodingMixAsyn{Q_{j_2}} \mid \processRightOutputRequests \mid \processRightInputRequests \right)}\\
								& \hspace*{1em} \mid \pushRequests \big)}
							\end{aligned}
					\end{align*}
					To \simulate the source term step $ S \step S' $, the endings of the two sums in $ S $ have to interact with the encoding of the parallel operator between them. At first the input and output register themselves to the encoding of the parallel operator by pushing requests. These requests are then combined and a test on the respective sum locks\footnote{In order to avoid a deadlock caused by multiple simultaneous such tests on sum locks, the sum locks are ordered by ensuring that always the left one is checked first.} is induced by providing an output on the receiver lock. At least the test-statement is reduced to complete the \simulation of the source term step.
					\begin{align*}
						\EncodingMixAsyn{S} & \step^3 \begin{aligned}[t]
								& \RestrictedTerm{\matchingCoordinatorOut, \matchingCoordinatorIn, \coordinatorUpOut, \coordinatorUpIn, \coordinatorMatchingOut, \coordinatorMatchingIn, \matchingUpOut, \matchingUpIn, \sumLock_a, \sumLock_b, \receiverLock, \senderLock}{\big(\\
								& \hspace*{1em} \RestrictedTerm{\parallelChannelOut, \parallelChannelIn}{\big( \begin{aligned}[t]
										& \Output{\sumLock_a}{\true} \mid \Set{ \Subst{\sumLock_a}{\sumLock} }\left( \prod_{i \in \indexSet_1, i \neq j_1} \EncodingMixAsyn{\guard_i.P_i} \right) \mid \ReplicateInput{\receiverLock}{\sumLock_1, \sumLock_2, -, \senderLock, \RenamingPolicyMixAsyn{x}}.\\
										& \hspace*{1em} \Test{\sumLock_1}{\Test{\sumLock_2}{\Output{\sumLock_1}{\false} \mid \Output{\sumLock_2}{\false} \mid \Out{\senderLock} \mid \EncodingMixAsyn{P_{j_1}}}{\Output{\sumLock_1}{\true} \mid \Output{\sumLock_2}{\false}}}{\Output{\sumLock_1}{\false}}\\
										& \mid \processLeftOutputRequests \mid \processLeftInputRequests  \mid \Output{\matchingCoordinatorIn}{\RenamingPolicyMixAsyn{y}, \sumLock_a, \receiverLock} \mid \Output{\coordinatorUpIn}{\RenamingPolicyMixAsyn{y}, \sumLock_a, \receiverLock} \big)
									\end{aligned}}\\
								& \hspace*{1em} \mid \RestrictedTerm{\parallelChannelOut, \parallelChannelIn}{\big( \begin{aligned}[t]
										& \Output{\sumLock_b}{\true} \mid \Set{ \Subst{\sumLock_b}{\sumLock} }\left(\prod_{i \in \indexSet_2, i \neq j_2} \EncodingMixAsyn{\guard_i.Q_i} \right) \mid \In{\senderLock}.\EncodingMixAsyn{Q_{j_2}}\\
										& \mid \RestrictedTerm{\matchingUpIn}{\big( \begin{aligned}[t]
												& \ReplicateInput{\matchingCoordinatorIn}{y', \sumLock_r, \receiverLock}.\left( \Match{y'}{\RenamingPolicyMixAsyn{y}}\Output{\receiverLock}{\sumLock_r, \sumLock_b, \sumLock_b, \senderLock, \RenamingPolicyMixAsyn{z}} \mid \Output{\matchingUpIn}{y', \sumLock_r, \receiverLock} \right)\\
												& \mid \RestrictedTerm{\matchingCoordinatorIn}{\left( \Forward{\matchingUpIn}{\matchingCoordinatorIn} \mid \processRightOutputRequests \right)} \big)
											\end{aligned}}\\
										 & \mid \Output{\coordinatorUpOut}{\RenamingPolicyMixAsyn{y}, \sumLock_b, \senderLock, \RenamingPolicyMixAsyn{z}} \mid \processRightInputRequests \big)
									\end{aligned}}\\
								& \hspace*{1em} \mid \pushRequests \big)}
							\end{aligned}
					\end{align*}
					\begin{align*}
						\hspace*{1em} & \step \begin{aligned}[t]
								& \RestrictedTerm{\matchingCoordinatorOut, \matchingCoordinatorIn, \coordinatorUpOut, \coordinatorUpIn, \coordinatorMatchingOut, \coordinatorMatchingIn, \matchingUpOut, \matchingUpIn, \sumLock_a, \sumLock_b, \receiverLock, \senderLock}{\big(\\
								& \hspace*{1em} \RestrictedTerm{\parallelChannelOut, \parallelChannelIn}{\big( \begin{aligned}[t]
										& \Output{\sumLock_a}{\true} \mid \Set{ \Subst{\sumLock_a}{\sumLock} }\left( \prod_{i \in \indexSet_1, i \neq j_1} \EncodingMixAsyn{\guard_i.P_i} \right) \mid \ReplicateInput{\receiverLock}{\sumLock_1, \sumLock_2, -, \senderLock, \RenamingPolicyMixAsyn{x}}.\\
										& \hspace*{1em} \Test{\sumLock_1}{\Test{\sumLock_2}{\Output{\sumLock_1}{\false} \mid \Output{\sumLock_2}{\false} \mid \Out{\senderLock} \mid \EncodingMixAsyn{P_{j_1}}}{\Output{\sumLock_1}{\true} \mid \Output{\sumLock_2}{\false}}}{\Output{\sumLock_1}{\false}}\\
										& \mid \processLeftOutputRequests \mid \processLeftInputRequests \mid \Output{\coordinatorUpIn}{\RenamingPolicyMixAsyn{y}, \sumLock_a, \receiverLock} \big)
									\end{aligned}}\\
								& \hspace*{1em} \mid \RestrictedTerm{\parallelChannelOut, \parallelChannelIn}{\big( \begin{aligned}[t]
										& \Output{\sumLock_b}{\true} \mid \Set{ \Subst{\sumLock_b}{\sumLock} }\left( \prod_{i \in \indexSet_2, i \neq j_2} \EncodingMixAsyn{\guard_i.Q_i} \right) \mid \In{\senderLock}.\EncodingMixAsyn{Q_{j_2}}\\
										& \mid \RestrictedTerm{\matchingUpIn}{\big( \begin{aligned}[t]
												& \ReplicateInput{\matchingCoordinatorIn}{y', \sumLock_r, \receiverLock}.\left( \Match{y'}{\RenamingPolicyMixAsyn{y}}\Output{\receiverLock}{\sumLock_r, \sumLock_b, \sumLock_b, \senderLock, \RenamingPolicyMixAsyn{z}} \mid \Output{\matchingUpIn}{y', \sumLock_r, \receiverLock} \right)\\
												& \mid \Output{\receiverLock}{\sumLock_a, \sumLock_b, \sumLock_b, \senderLock, \RenamingPolicyMixAsyn{z}} \mid \Output{\matchingUpIn}{\RenamingPolicyMixAsyn{y}, \sumLock_a, \receiverLock}\\
												& \mid \RestrictedTerm{\matchingCoordinatorIn}{\left( \Forward{\matchingUpIn}{\matchingCoordinatorIn} \mid \processRightOutputRequests \right)} \big)
											\end{aligned}}\\
										 & \mid \Output{\coordinatorUpOut}{\RenamingPolicyMixAsyn{y}, \sumLock_b, \senderLock, \RenamingPolicyMixAsyn{z}} \mid \processRightInputRequests \big)
									\end{aligned}}\\
								& \hspace*{1em} \mid \pushRequests \big)}
							\end{aligned}\\
						& \step^6 \begin{aligned}[t]
								& \RestrictedTerm{\matchingCoordinatorOut, \matchingCoordinatorIn, \coordinatorUpOut, \coordinatorUpIn, \coordinatorMatchingOut, \coordinatorMatchingIn, \matchingUpOut, \matchingUpIn, \sumLock_a, \sumLock_b, \receiverLock, \senderLock}{\big(\\
								& \hspace*{1em} \RestrictedTerm{\parallelChannelOut, \parallelChannelIn}{\big( \begin{aligned}[t]
										& \Set{ \Subst{\sumLock_a}{\sumLock} }\left( \prod_{i \in \indexSet_1, i \neq j_1} \EncodingMixAsyn{\guard_i.P_i} \right) \mid \Output{\sumLock_a}{\false} \mid \Output{\sumLock_b}{\false} \mid \Set{ \Subst{\RenamingPolicyMixAsyn{z}}{\RenamingPolicyMixAsyn{x}} } \EncodingMixAsyn{P_{j_1}}\\
										& \mid \ReplicateInput{\receiverLock}{\sumLock_1, \sumLock_2, -, \senderLock, \RenamingPolicyMixAsyn{x}}.\BigTest{\sumLock_1}{\BigTest{\sumLock_2}{\Output{\sumLock_1}{\false} \mid \Output{\sumLock_2}{\false} \mid \Out{\senderLock} \mid \EncodingMixAsyn{P_{j_1}}}{\Output{\sumLock_1}{\true} \mid \Output{\sumLock_2}{\false}}}{\Output{\sumLock_1}{\false}}\\
										& \mid \processLeftOutputRequests \mid \processLeftInputRequests \mid \Output{\coordinatorUpIn}{\RenamingPolicyMixAsyn{y}, \sumLock_a, \receiverLock} \big)
									\end{aligned}}\\
								& \hspace*{1em} \mid \RestrictedTerm{\parallelChannelOut, \parallelChannelIn}{\big( \begin{aligned}[t]
										& \Set{ \Subst{\sumLock_b}{\sumLock} }\left( \prod_{i \in \indexSet_2, i \neq j_2} \EncodingMixAsyn{\guard_i.Q_i} \right) \mid \EncodingMixAsyn{Q_{j_2}}\\
										& \mid \RestrictedTerm{\matchingUpIn}{\big( \begin{aligned}[t]
												& \ReplicateInput{\matchingCoordinatorIn}{y', \sumLock_r, \receiverLock}.\left( \Match{y'}{\RenamingPolicyMixAsyn{y}}\Output{\receiverLock}{\sumLock_r, \sumLock_b, \sumLock_b, \senderLock, \RenamingPolicyMixAsyn{z}} \mid \Output{\matchingUpIn}{y', \sumLock_r, \receiverLock} \right)\\
												& \mid \Output{\matchingUpIn}{\RenamingPolicyMixAsyn{y}, \sumLock_a, \receiverLock} \mid \RestrictedTerm{\matchingCoordinatorIn}{\left( \Forward{\matchingUpIn}{\matchingCoordinatorIn} \mid \processRightOutputRequests \right)} \big)
											\end{aligned}}\\
										 & \mid \Output{\coordinatorUpOut}{\RenamingPolicyMixAsyn{y}, \sumLock_b, \senderLock, \RenamingPolicyMixAsyn{z}} \mid \processRightInputRequests \big)
									\end{aligned}}\\
								& \hspace*{1em} \mid \pushRequests \big) = T}
							\end{aligned}
					\end{align*}
					By Corollary \ref{col:encodingSubstitutions}, $ \Set{ \Subst{\RenamingPolicyMixAsyn{z}}{\RenamingPolicyMixAsyn{x}} } \EncodingMixAsyn{P_{j_1}} \equivAlpha \EncodingMixAsyn{\Set{ \Subst{z}{x} } P_{j_1}}  $. To show that $ T \transBarbCongMixAsynB \EncodingMixAsyn{S'} $, we stepwise reduce $ T $ by ignoring junk. By Lemma \ref{lem:junkRequestsOnFalseSumLocks}, we can ignore the requests $ \Output{\coordinatorUpIn}{\RenamingPolicyMixAsyn{y}, \sumLock_a, \receiverLock} $, $ \Output{\matchingUpIn}{\RenamingPolicyMixAsyn{y}, \sumLock_a, \receiverLock} $, and $ \Output{\coordinatorUpOut}{\RenamingPolicyMixAsyn{y}, \sumLock_b, \senderLock, \RenamingPolicyMixAsyn{z}} $. Next, by Lemma \ref{lem:junkReceiverMixAsyn}, we can ignore the term
					\begin{align*}
						\ReplicateInput{\receiverLock}{\sumLock_1, \sumLock_2, -, \senderLock, \RenamingPolicyMixAsyn{x}}.\BigTest{\sumLock_1}{\Test{\sumLock_2}{\Output{\sumLock_1}{\false} \mid \Output{\sumLock_2}{\false} \mid \Out{\senderLock} \mid \EncodingMixAsyn{P_{j_1}}}{\Output{\sumLock_1}{\true} \mid \Output{\sumLock_2}{\false}}}{\Output{\sumLock_1}{\false}.}
					\end{align*}
					And, by Lemma \ref{lem:junkInduceTest}, we can also ignore $ \Match{y'}{\RenamingPolicyMixAsyn{y}}\Output{\receiverLock}{\sumLock_r, \sumLock_b, \sumLock_b, \senderLock, \RenamingPolicyMixAsyn{z}} $, so
					\begin{align*}
						\ReplicateInput{\matchingCoordinatorIn}{y', \sumLock_r, \receiverLock}.\left( \Match{y'}{\RenamingPolicyMixAsyn{y}}\Output{\receiverLock}{\sumLock_r, \sumLock_b, \sumLock_b, \senderLock, \RenamingPolicyMixAsyn{z}} \mid \Output{\matchingUpIn}{y', \sumLock_r, \receiverLock} \right)
					\end{align*}
					becomes $ \Forward{\matchingCoordinatorIn}{\matchingUpIn} $. Note that this forwarder and the following forwarder $ \Forward{\matchingUpIn}{\matchingCoordinatorIn} $ for an other instance of $ \matchingCoordinatorIn $ may be necessary to \simulate further source term steps, but, since they perform only invisible steps, they do not influence the state of $ T $ modulo $ \transBarbCongMixAsynB $ in comparison to a fresh chain of right requests as in $ \EncodingMixAsyn{S'} $. At least, since $ \sumLock_a, \sumLock_b, \receiverLock, \senderLock \notin \FreeNames{\EncodingMixAsyn{P_{j_1}}} \cup \FreeNames{\EncodingMixAsyn{P_{j_2}}} $, we can reorder the term according to the restrictions on $ \sumLock_a, \sumLock_b, \receiverLock $ and the restriction on $ \senderLock $ can be omitted. By Lemma \ref{lem:junkRemainsOfSumsMixAsyn}, then $ \RestrictedTerm{\sumLock}{\left( \prod_{i \in \indexSet_1, i \neq j_1} \EncodingMixAsyn{\guard_i.P_i} \mid \Output{\sumLock}{\false} \right)} $ and $ \RestrictedTerm{\sumLock}{\left( \prod_{i \in \indexSet_1, i \neq j_2} \EncodingMixAsyn{\guard_i.Q_i} \mid \Output{\sumLock}{\false} \right)} $ are junk. So, by Lemma \ref{lem:removeJunk}, we conclude $ T \transBarbCongMixAsynB \EncodingMixAsyn{S'} $.
				\item[Case of Rule] $ \textsc{Rep}_{\indexMix, \indexSep} $\textbf{:} Here $ S $ is a parallel composition of a replicated input and a sum and $ S' $ is the parallel composition of the replicated input, the continuation of the replicated input, and the continuation of a matching output guarded summand, i.e., there is a finite index sets $ \indexSet $, some guards $ \guard_i $, and some processes $ P, Q_i \in \piMixProc $ such that $ S = \ReplicateInput{y}{x}.P \mid \sum_{i \in \indexSet} \guard_i.Q_i $ with $ \guard_j = \Output{y}{z} $ for some $ j \in \indexSet $ and $ x, y, z \in \names $, and $ S' = \Set{ \Subst{z}{x} } P \mid Q_j \mid \ReplicateInput{y}{x}.P $. Unfortunately, the encodings of $ S $ and $ S' $ are again long:
					\begin{align*}
						\EncodingMixAsyn{S} & \equiv \begin{aligned}[t]
								& \RestrictedTerm{\matchingCoordinatorOut, \matchingCoordinatorIn, \coordinatorUpOut, \coordinatorUpIn, \coordinatorMatchingOut, \coordinatorMatchingIn, \matchingUpOut, \matchingUpIn}{\big(\\
								& \hspace*{1em} \RestrictedTerm{\parallelChannelOut, \parallelChannelIn}{\big( \begin{aligned}[t]
										& \RestrictedTerm{\sumLock, \receiverLock, \coordinatorRepA, \coordinatorRepB, \matchingReceiverOut, \matchingReceiverIn}{\big(\\
										& \hspace*{1em} \ReplicateInput{\receiverLock}{-, -, \sumLock_s, \senderLock, \RenamingPolicyMixAsyn{x}}.\Test{\sumLock_s}{\Output{\sumLock_s}{\false} \mid \Out{\senderLock} \mid \Output{\coordinatorRepA}{\RenamingPolicyMixAsyn{x}}}{\Output{\sumLock_s}{\false}}\\
										& \hspace*{1em} \Output{\parallelChannelIn}{\RenamingPolicyMixAsyn{y}, \sumLock, \receiverLock} \mid \Output{\matchingReceiverIn}{\RenamingPolicyMixAsyn{y}, \sumLock, \receiverLock} \mid \Output{\sumLock}{\true} \Output{\coordinatorRepB}{\matchingReceiverOut, \matchingReceiverIn}\\
										& \hspace*{1em} \mid \ReplicateInput{\coordinatorRepA}{\RenamingPolicyMixAsyn{x}}.\Input{\coordinatorRepB}{\matchingReceiverOut, \matchingReceiverIn}.\RestrictedTerm{\matchingCoordinatorOut, \matchingCoordinatorIn, \coordinatorUpOut, \coordinatorUpIn, \matchingReceiverUpOut, \matchingReceiverUpIn, \coordinatorMatchingOut, \coordinatorMatchingIn, \matchingUpOut, \matchingUpIn}{\big(\\
										& \hspace*{2em} \pushRequestsIn\\
										& \hspace*{2em} \mid \RestrictedTerm{\parallelChannelOut, \parallelChannelIn}{\left( \EncodingMixAsyn{P} \mid \processRightOutputRequests \mid \processRightInputRequests \right)}\\
										& \hspace*{2em} \mid \RestrictedTerm{\matchingReceiverOut, \matchingReceiverIn}{\left( \Output{\coordinatorRepB}{\matchingReceiverOut \mid \matchingReceiverIn} \mid \pushRequestsOut \right)} \big)} \big)}\\
										& \mid \processLeftOutputRequests \mid \processLeftInputRequests \big)
									\end{aligned}}\\
								& \hspace*{1em} \RestrictedTerm{\parallelChannelOut, \parallelChannelIn}{\big( \begin{aligned}[t]
										& \RestrictedTerm{\sumLock}{\left( \Output{\sumLock}{\true} \mid \prod_{i \in \indexSet_2, i \neq j} \EncodingMixAsyn{\guard_i.Q_i} \mid \RestrictedTerm{\senderLock}{\left( \Output{\parallelChannelOut}{\RenamingPolicyMixAsyn{y}, \sumLock, \senderLock, \RenamingPolicyMixAsyn{z}} \mid \In{\senderLock}.\EncodingMixAsyn{Q_j} \right)} \right)}\\
										& \mid \processRightOutputRequests \mid \processRightInputRequests \big)
									\end{aligned}}\\
								& \hspace*{1em} \pushRequests \big)}
							\end{aligned}\\
						\EncodingMixAsyn{S'} & = \begin{aligned}[t]
								& \RestrictedTerm{\matchingCoordinatorOut, \matchingCoordinatorIn, \coordinatorUpOut, \coordinatorUpIn, \coordinatorMatchingOut, \coordinatorMatchingIn, \matchingUpOut, \matchingUpIn}{\big(\\
								& \hspace*{1em} \RestrictedTerm{\parallelChannelOut, \parallelChannelIn}{\big( \begin{aligned}[t]
										& \RestrictedTerm{\matchingCoordinatorOut, \matchingCoordinatorIn, \coordinatorUpOut, \coordinatorUpIn, \coordinatorMatchingOut, \coordinatorMatchingIn, \matchingUpOut, \matchingUpIn}{\big(\\
										& \hspace*{1em} \RestrictedTerm{\parallelChannelOut, \parallelChannelIn}{\left( \EncodingMixAsyn{\Set{ \Subst{z}{x} }\left( P \right)} \mid \processLeftOutputRequests \mid \processLeftInputRequests \right)}\\
										& \hspace*{1em} \RestrictedTerm{\parallelChannelOut, \parallelChannelIn}{\left( \EncodingMixAsyn{Q_j} \mid \processRightOutputRequests \mid \processRightInputRequests \right)}\\
										& \hspace*{1em} \pushRequests \big)\\
										& \mid \processLeftOutputRequests \mid \processLeftInputRequests \big)
									\end{aligned}}}\\
								& \hspace*{1em} \RestrictedTerm{\parallelChannelOut, \parallelChannelIn}{\left( \EncodingMixAsyn{\ReplicateInput{y}{x}.P} \mid \processRightOutputRequests \mid \processRightInputRequests \right)}\\
								& \hspace*{1em} \pushRequests \big)}
							\end{aligned}
					\end{align*}
					To \simulate the source term step $ S \step S' $, the two subterms of $ \EncodingMixAsyn{S} $ have to interact with the encoding of the parallel operator between them. At first the replicated input and the output register themselves to the encoding of the parallel operator by pushing requests. There the requests are combined and a test on the sum lock of the sender is induced by providing an output on the receiver lock. Next the test-statement is reduced. To complete the \simulation of the source term step, at least the continuation of the replicated input encoding is unguarded and placed within an adoption of the parallel operator encoding.
					{\allowdisplaybreaks
					\begin{align*}
						\EncodingMixAsyn{S} & \step^3 \begin{aligned}[t]
								& \RestrictedTerm{\matchingCoordinatorOut, \matchingCoordinatorIn, \coordinatorUpOut, \coordinatorUpIn, \coordinatorMatchingOut, \coordinatorMatchingIn, \matchingUpOut, \matchingUpIn, \sumLock_a, \sumLock_b, \receiverLock, \senderLock}{\big(\\
								& \hspace*{1em} \RestrictedTerm{\parallelChannelOut, \parallelChannelIn}{\big( \begin{aligned}[t]
										& \RestrictedTerm{\coordinatorRepA, \coordinatorRepB, \matchingReceiverOut, \matchingReceiverIn}{\big(\\
										& \hspace*{1em} \ReplicateInput{\receiverLock}{-, -, \sumLock_s, \senderLock, \RenamingPolicyMixAsyn{x}}.\Test{\sumLock_s}{\Output{\sumLock_s}{\false} \mid \Out{\senderLock} \mid \Output{\coordinatorRepA}{\RenamingPolicyMixAsyn{x}}}{\Output{\sumLock_s}{\false}}\\
										& \hspace*{1em} \mid \Output{\matchingReceiverIn}{\RenamingPolicyMixAsyn{y}, \sumLock_a, \receiverLock} \mid \Output{\sumLock_a}{\true} \mid \Output{\coordinatorRepB}{\matchingReceiverOut, \matchingReceiverIn} \mid \ReplicateInput{\coordinatorRepA}{\RenamingPolicyMixAsyn{x}}.\Input{\coordinatorRepB}{\matchingReceiverOut, \matchingReceiverIn}.\\
										& \hspace*{1.7em} \RestrictedTerm{\matchingCoordinatorOut, \matchingCoordinatorIn, \coordinatorUpOut, \coordinatorUpIn, \matchingReceiverUpOut, \matchingReceiverUpIn, \coordinatorMatchingOut, \coordinatorMatchingIn, \matchingUpOut, \matchingUpIn}{\big( \pushRequestsIn\\
										& \hspace*{2em} \mid \RestrictedTerm{\parallelChannelOut, \parallelChannelIn}{\left( \EncodingMixAsyn{P} \mid \processRightOutputRequests \mid \processRightInputRequests \right)}\\
										& \hspace*{2em} \mid \RestrictedTerm{\matchingReceiverOut, \matchingReceiverIn}{\left( \Output{\coordinatorRepB}{\matchingReceiverOut \mid \matchingReceiverIn} \mid \pushRequestsOut \right)} \big)} \big)}\\
										& \mid \processLeftOutputRequests \mid \processLeftInputRequests \mid \Output{\matchingCoordinatorIn}{\RenamingPolicyMixAsyn{y}, \sumLock_a, \receiverLock} \mid \Output{\coordinatorUpIn}{\RenamingPolicyMixAsyn{y}, \sumLock_a, \receiverLock} \big)
									\end{aligned}}\\
								& \hspace*{1em} \mid \RestrictedTerm{\parallelChannelOut, \parallelChannelIn}{\big( \begin{aligned}[t]
										& \Output{\sumLock_b}{\true} \mid \Set{ \Subst{\sumLock_b}{\sumLock} }\left( \prod_{i \in \indexSet_2, i \neq j} \EncodingMixAsyn{\guard_i.Q_i} \right) \mid \In{\senderLock}.\EncodingMixAsyn{Q_j}\\
										& \mid \RestrictedTerm{\matchingUpIn}{\big( \begin{aligned}[t]
												& \ReplicateInput{\matchingCoordinatorIn}{y', \sumLock_r, \receiverLock}.\left( \Match{y'}{\RenamingPolicyMixAsyn{y}}\Output{\receiverLock}{\sumLock_r, \sumLock_b, \sumLock_b, \senderLock, \RenamingPolicyMixAsyn{z}} \mid \Output{\matchingUpIn}{y', \sumLock_r, \receiverLock} \right)\\
												& \mid \RestrictedTerm{\matchingCoordinatorIn}{\left( \Forward{\matchingUpIn}{\matchingCoordinatorIn} \mid \processRightOutputRequests \right)} \big)
											\end{aligned}}\\
										 & \mid \Output{\coordinatorUpOut}{\RenamingPolicyMixAsyn{y}, \sumLock_b, \senderLock, \RenamingPolicyMixAsyn{z}} \mid \processRightInputRequests \big) \mid \pushRequests \big)
									\end{aligned}}}
							\end{aligned}\\
						& \step \begin{aligned}[t]
								& \RestrictedTerm{\matchingCoordinatorOut, \matchingCoordinatorIn, \coordinatorUpOut, \coordinatorUpIn, \coordinatorMatchingOut, \coordinatorMatchingIn, \matchingUpOut, \matchingUpIn, \sumLock_a, \sumLock_b, \receiverLock, \senderLock}{\big(\\
								& \hspace*{1em} \RestrictedTerm{\parallelChannelOut, \parallelChannelIn}{\big( \begin{aligned}[t]
										& \RestrictedTerm{\coordinatorRepA, \coordinatorRepB, \matchingReceiverOut, \matchingReceiverIn}{\big(\\
										& \hspace*{1em} \ReplicateInput{\receiverLock}{-, -, \sumLock_s, \senderLock, \RenamingPolicyMixAsyn{x}}.\Test{\sumLock_s}{\Output{\sumLock_s}{\false} \mid \Out{\senderLock} \mid \Output{\coordinatorRepA}{\RenamingPolicyMixAsyn{x}}}{\Output{\sumLock_s}{\false}}\\
										& \hspace*{1em} \mid \Output{\matchingReceiverIn}{\RenamingPolicyMixAsyn{y}, \sumLock_a, \receiverLock} \mid \Output{\sumLock_a}{\true} \mid \Output{\coordinatorRepB}{\matchingReceiverOut, \matchingReceiverIn} \mid \ReplicateInput{\coordinatorRepA}{\RenamingPolicyMixAsyn{x}}.\Input{\coordinatorRepB}{\matchingReceiverOut, \matchingReceiverIn}.\\
										& \hspace*{1.7em} \RestrictedTerm{\matchingCoordinatorOut, \matchingCoordinatorIn, \coordinatorUpOut, \coordinatorUpIn, \matchingReceiverUpOut, \matchingReceiverUpIn, \coordinatorMatchingOut, \coordinatorMatchingIn, \matchingUpOut, \matchingUpIn}{\big( \pushRequestsIn\\
										& \hspace*{2em} \mid \RestrictedTerm{\parallelChannelOut, \parallelChannelIn}{\left( \EncodingMixAsyn{P} \mid \processRightOutputRequests \mid \processRightInputRequests \right)}\\
										& \hspace*{2em} \mid \RestrictedTerm{\matchingReceiverOut, \matchingReceiverIn}{\left( \Output{\coordinatorRepB}{\matchingReceiverOut \mid \matchingReceiverIn} \mid \pushRequestsOut \right)} \big)} \big)}\\
										& \mid \processLeftOutputRequests \mid \processLeftInputRequests \mid \Output{\coordinatorUpIn}{\RenamingPolicyMixAsyn{y}, \sumLock_a, \receiverLock} \big)
									\end{aligned}}\\
								& \hspace*{1em} \mid \RestrictedTerm{\parallelChannelOut, \parallelChannelIn}{\big( \begin{aligned}[t]
										& \Output{\sumLock_b}{\true} \mid \Set{ \Subst{\sumLock_b}{\sumLock} }\left( \prod_{i \in \indexSet_2, i \neq j} \EncodingMixAsyn{\guard_i.Q_i} \right) \mid \In{\senderLock}.\EncodingMixAsyn{Q_j}\\
										& \mid \RestrictedTerm{\matchingUpIn}{\big( \begin{aligned}[t]
												& \ReplicateInput{\matchingCoordinatorIn}{y', \sumLock_r, \receiverLock}.\left( \Match{y'}{\RenamingPolicyMixAsyn{y}}\Output{\receiverLock}{\sumLock_r, \sumLock_b, \sumLock_b, \senderLock, \RenamingPolicyMixAsyn{z}} \mid \Output{\matchingUpIn}{y', \sumLock_r, \receiverLock} \right)\\
												& \mid \Output{\receiverLock}{\sumLock_a, \sumLock_b, \sumLock_b, \senderLock, \RenamingPolicyMixAsyn{z}} \mid \Output{\matchingUpIn}{\RenamingPolicyMixAsyn{y}, \sumLock_a, \receiverLock}\\
												& \mid \RestrictedTerm{\matchingCoordinatorIn}{\left( \Forward{\matchingUpIn}{\matchingCoordinatorIn} \mid \processRightOutputRequests \right)} \big)
											\end{aligned}}\\
										 & \mid \Output{\coordinatorUpOut}{\RenamingPolicyMixAsyn{y}, \sumLock_b, \senderLock, \RenamingPolicyMixAsyn{z}} \mid \processRightInputRequests \big) \mid \pushRequests \big)
									\end{aligned}}}
							\end{aligned}\\
						& \step^4 \begin{aligned}[t]
								& \RestrictedTerm{\matchingCoordinatorOut, \matchingCoordinatorIn, \coordinatorUpOut, \coordinatorUpIn, \coordinatorMatchingOut, \coordinatorMatchingIn, \matchingUpOut, \matchingUpIn, \sumLock_a, \sumLock_b, \receiverLock, \senderLock}{\big(\\
								& \hspace*{1em} \RestrictedTerm{\parallelChannelOut, \parallelChannelIn}{\big( \begin{aligned}[t]
										& \RestrictedTerm{\coordinatorRepA, \coordinatorRepB, \matchingReceiverOut, \matchingReceiverIn}{\big( \Output{\sumLock_b}{\false} \mid \Output{\coordinatorRepA}{\RenamingPolicyMixAsyn{z}}\\
										& \hspace*{1em} \mid \ReplicateInput{\receiverLock}{-, -, \sumLock_s, \senderLock, \RenamingPolicyMixAsyn{x}}.\Test{\sumLock_s}{\Output{\sumLock_s}{\false} \mid \Out{\senderLock} \mid \Output{\coordinatorRepA}{\RenamingPolicyMixAsyn{x}}}{\Output{\sumLock_s}{\false}}\\
										& \hspace*{1em} \mid \Output{\matchingReceiverIn}{\RenamingPolicyMixAsyn{y}, \sumLock_a, \receiverLock} \mid \Output{\sumLock_a}{\true} \mid \Output{\coordinatorRepB}{\matchingReceiverOut, \matchingReceiverIn} \mid \ReplicateInput{\coordinatorRepA}{\RenamingPolicyMixAsyn{x}}.\Input{\coordinatorRepB}{\matchingReceiverOut, \matchingReceiverIn}.\\
										& \hspace*{1.7em} \RestrictedTerm{\matchingCoordinatorOut, \matchingCoordinatorIn, \coordinatorUpOut, \coordinatorUpIn, \matchingReceiverUpOut, \matchingReceiverUpIn, \coordinatorMatchingOut, \coordinatorMatchingIn, \matchingUpOut, \matchingUpIn}{\big( \pushRequestsIn\\
										& \hspace*{2em} \mid \RestrictedTerm{\parallelChannelOut, \parallelChannelIn}{\left( \EncodingMixAsyn{P} \mid \processRightOutputRequests \mid \processRightInputRequests \right)}\\
										& \hspace*{2em} \mid \RestrictedTerm{\matchingReceiverOut, \matchingReceiverIn}{\left( \Output{\coordinatorRepB}{\matchingReceiverOut \mid \matchingReceiverIn} \mid \pushRequestsOut \right)} \big)} \big)}\\
										& \mid \processLeftOutputRequests \mid \processLeftInputRequests \mid \Output{\coordinatorUpIn}{\RenamingPolicyMixAsyn{y}, \sumLock_a, \receiverLock} \big)
									\end{aligned}}\\
								& \hspace*{1em} \mid \RestrictedTerm{\parallelChannelOut, \parallelChannelIn}{\big( \begin{aligned}[t]
										& \Set{ \Subst{\sumLock_b}{\sumLock} }\left( \prod_{i \in \indexSet_2, i \neq j} \EncodingMixAsyn{\guard_i.Q_i} \right) \mid \EncodingMixAsyn{Q_j}\\
										& \mid \RestrictedTerm{\matchingUpIn}{\big( \begin{aligned}[t]
												& \ReplicateInput{\matchingCoordinatorIn}{y', \sumLock_r, \receiverLock}.\left( \Match{y'}{\RenamingPolicyMixAsyn{y}}\Output{\receiverLock}{\sumLock_r, \sumLock_b, \sumLock_b, \senderLock, \RenamingPolicyMixAsyn{z}} \mid \Output{\matchingUpIn}{y', \sumLock_r, \receiverLock} \right)\\
												& \mid \Output{\matchingUpIn}{\RenamingPolicyMixAsyn{y}, \sumLock_a, \receiverLock} \mid \RestrictedTerm{\matchingCoordinatorIn}{\left( \Forward{\matchingUpIn}{\matchingCoordinatorIn} \mid \processRightOutputRequests \right)} \big)
											\end{aligned}}\\
										 & \mid \Output{\coordinatorUpOut}{\RenamingPolicyMixAsyn{y}, \sumLock_b, \senderLock, \RenamingPolicyMixAsyn{z}} \mid \processRightInputRequests \big) \mid \pushRequests \big)
									\end{aligned}}}
							\end{aligned}\\
						& \step^2 \begin{aligned}[t]
								& \RestrictedTerm{\matchingCoordinatorOut, \matchingCoordinatorIn, \coordinatorUpOut, \coordinatorUpIn, \coordinatorMatchingOut, \coordinatorMatchingIn, \matchingUpOut, \matchingUpIn, \sumLock_a, \sumLock_b, \receiverLock, \senderLock}{\big(\\
								& \hspace*{1em} \RestrictedTerm{\parallelChannelOut, \parallelChannelIn}{\big( \begin{aligned}[t]
										& \RestrictedTerm{\coordinatorRepA, \coordinatorRepB, \matchingReceiverOut, \matchingReceiverIn}{\big(\\
										& \hspace*{1em} \Output{\sumLock_b}{\false} \mid \ReplicateInput{\receiverLock}{-, -, \sumLock_s, \senderLock, \RenamingPolicyMixAsyn{x}}.\Test{\sumLock_s}{\Output{\sumLock_s}{\false} \mid \Out{\senderLock} \mid \Output{\coordinatorRepA}{\RenamingPolicyMixAsyn{x}}}{\Output{\sumLock_s}{\false}}\\
										& \hspace*{1em} \mid \Output{\matchingReceiverIn}{\RenamingPolicyMixAsyn{y}, \sumLock_a, \receiverLock} \mid \Output{\sumLock_a}{\true} \mid \ReplicateInput{\coordinatorRepA}{\RenamingPolicyMixAsyn{x}}.\Input{\coordinatorRepB}{\matchingReceiverOut, \matchingReceiverIn}.\\
										& \hspace*{1.7em} \RestrictedTerm{\matchingCoordinatorOut, \matchingCoordinatorIn, \coordinatorUpOut, \coordinatorUpIn, \matchingReceiverUpOut, \matchingReceiverUpIn, \coordinatorMatchingOut, \coordinatorMatchingIn, \matchingUpOut, \matchingUpIn}{\big( \pushRequestsIn\\
										& \hspace*{2em} \mid \RestrictedTerm{\parallelChannelOut, \parallelChannelIn}{\left( \EncodingMixAsyn{P} \mid \processRightOutputRequests \mid \processRightInputRequests \right)}\\
										& \hspace*{2em} \mid \RestrictedTerm{\matchingReceiverOut, \matchingReceiverIn}{\left( \Output{\coordinatorRepB}{\matchingReceiverOut \mid \matchingReceiverIn} \mid \pushRequestsOut \right)} \big)}\\
										& \hspace*{1em} \mid \RestrictedTerm{\matchingCoordinatorOut, \matchingCoordinatorIn, \coordinatorUpOut, \coordinatorUpIn, \matchingReceiverUpOut, \matchingReceiverUpIn, \coordinatorMatchingOut, \coordinatorMatchingIn, \matchingUpOut, \matchingUpIn}{\big( \pushRequestsIn\\
										& \hspace*{2em} \mid \RestrictedTerm{\parallelChannelOut, \parallelChannelIn}{\big( \begin{aligned}[t]
												& \Set{ \Subst{\RenamingPolicyMixAsyn{z}}{\RenamingPolicyMixAsyn{x}} }\left( \EncodingMixAsyn{P} \right)\\
												& \mid \processRightOutputRequests \mid \processRightInputRequests \big)
											\end{aligned}}\\
										& \hspace*{2em} \mid \RestrictedTerm{\matchingReceiverOut, \matchingReceiverIn}{\left( \Output{\coordinatorRepB}{\matchingReceiverOut \mid \matchingReceiverIn} \mid \pushRequestsOut \right)} \big)} \big)}\\
										& \mid \processLeftOutputRequests \mid \processLeftInputRequests \mid \Output{\coordinatorUpIn}{\RenamingPolicyMixAsyn{y}, \sumLock_a, \receiverLock} \big)
									\end{aligned}}\\
								& \hspace*{1em} \mid \RestrictedTerm{\parallelChannelOut, \parallelChannelIn}{\big( \begin{aligned}[t]
										& \Set{ \Subst{\sumLock_b}{\sumLock} }\left( \prod_{i \in \indexSet_2, i \neq j} \EncodingMixAsyn{\guard_i.Q_i} \right) \mid \EncodingMixAsyn{Q_j}\\
										& \mid \RestrictedTerm{\matchingUpIn}{\big( \begin{aligned}[t]
												& \ReplicateInput{\matchingCoordinatorIn}{y', \sumLock_r, \receiverLock}.\left( \Match{y'}{\RenamingPolicyMixAsyn{y}}\Output{\receiverLock}{\sumLock_r, \sumLock_b, \sumLock_b, \senderLock, \RenamingPolicyMixAsyn{z}} \mid \Output{\matchingUpIn}{y', \sumLock_r, \receiverLock} \right)\\
												& \mid \Output{\matchingUpIn}{\RenamingPolicyMixAsyn{y}, \sumLock_a, \receiverLock} \mid \RestrictedTerm{\matchingCoordinatorIn}{\left( \Forward{\matchingUpIn}{\matchingCoordinatorIn} \mid \processRightOutputRequests \right)} \big)
											\end{aligned}}\\
										 & \mid \Output{\coordinatorUpOut}{\RenamingPolicyMixAsyn{y}, \sumLock_b, \senderLock, \RenamingPolicyMixAsyn{z}} \mid \processRightInputRequests \big)
									\end{aligned}}\\
								& \hspace*{1em} \mid \pushRequests \big)} = T
							\end{aligned}
					\end{align*}}
					By Corollary \ref{col:encodingSubstitutions}, $ \Set{ \Subst{\RenamingPolicyMixAsyn{z}}{\RenamingPolicyMixAsyn{x}} } \EncodingMixAsyn{P} \equivAlpha \EncodingMixAsyn{\Set{ \Subst{z}{x} } P}  $. Unfortunately, this time it does not suffer to ignore junk to prove that $ T \transBarbCongMixAsynB \EncodingMixAsyn{S'} $, because in $ \EncodingMixAsyn{S'} $ there are two encoded parallel operators whereas in $ T $ there is only one. Nevertheless, we start reducing $ T $ by omitting junk. Since the sum lock $ \sumLock_b $ is instantiated by false, by Lemma \ref{lem:junkRequestsOnFalseSumLocks}, the request $ \Output{\coordinatorUpOut}{\RenamingPolicyMixAsyn{y}, \sumLock_b, \senderLock, \RenamingPolicyMixAsyn{z}} $ is junk. Moreover, by Lemma \ref{lem:junkInduceTest}, the term
					\begin{align*}
						\ReplicateInput{\matchingCoordinatorIn}{y', \sumLock_r, \receiverLock}.\left( \Match{y'}{\RenamingPolicyMixAsyn{y}}\Output{\receiverLock}{\sumLock_r, \sumLock_b, \sumLock_b, \senderLock, \RenamingPolicyMixAsyn{z}} \mid \Output{\matchingUpIn}{y', \sumLock_r, \receiverLock} \right)
					\end{align*}
					reduces to the forwarder $ \Forward{\matchingCoordinatorIn}{\matchingUpIn} $. Since $ \sumLock_a, \sumLock_b, \receiverLock, \senderLock \notin \FreeNames{\EncodingMixAsyn{\Set{ \Subst{z}{x} } P}} \cup \FreeNames{\EncodingMixAsyn{Q_j}} $, we can reorder the term according to the restrictions on $ \sumLock_a $, $ \sumLock_b $, and $ \receiverLock $ and the restriction on $ \senderLock $ can be omitted. By Lemma \ref{lem:junkRemainsOfSumsMixAsyn}, then $ \RestrictedTerm{\sumLock}{\left( \prod_{i \in \indexSet, i \neq j} \EncodingMixAsyn{\guard_i.Q_i} \mid \Output{\sumLock}{\false} \right)} $ is junk. By Lemma \ref{lem:removeJunk}, we deduce $ T \transBarbCongMixAsynB T' $, where $ T' $ is:
					\begin{align*}
						T & \transBarbCongMixAsynB \begin{aligned}[t]
								& \RestrictedTerm{\matchingCoordinatorOut, \matchingCoordinatorIn, \coordinatorUpOut, \coordinatorUpIn, \coordinatorMatchingOut, \coordinatorMatchingIn, \matchingUpOut, \matchingUpIn}{\big(\\
								& \hspace*{1em} \RestrictedTerm{\parallelChannelOut, \parallelChannelIn}{\big( \begin{aligned}[t]
										& \RestrictedTerm{\sumLock, \receiverLock, \coordinatorRepA, \coordinatorRepB, \matchingReceiverOut, \matchingReceiverIn}{\big(\\
										& \hspace*{1em} \ReplicateInput{\receiverLock}{-, -, \sumLock_s, \senderLock, \RenamingPolicyMixAsyn{x}}.\Test{\sumLock_s}{\Output{\sumLock_s}{\false} \mid \Out{\senderLock} \mid \Output{\coordinatorRepA}{\RenamingPolicyMixAsyn{x}}}{\Output{\sumLock_s}{\false}}\\
										& \hspace*{1em} \mid \Output{\matchingReceiverIn}{\RenamingPolicyMixAsyn{y}, \sumLock, \receiverLock} \mid \Output{\sumLock}{\true}\\
										& \hspace*{1em} \mid \ReplicateInput{\coordinatorRepA}{\RenamingPolicyMixAsyn{x}}.\Input{\coordinatorRepB}{\matchingReceiverOut, \matchingReceiverIn}.\RestrictedTerm{\matchingCoordinatorOut, \matchingCoordinatorIn, \coordinatorUpOut, \coordinatorUpIn, \matchingReceiverUpOut, \matchingReceiverUpIn, \coordinatorMatchingOut, \coordinatorMatchingIn, \matchingUpOut, \matchingUpIn}{\big(\\
										& \hspace*{2em} \pushRequestsIn\\
										& \hspace*{2em} \mid \RestrictedTerm{\parallelChannelOut, \parallelChannelIn}{\left( \EncodingMixAsyn{P} \mid \processRightOutputRequests \mid \processRightInputRequests \right)}\\
										& \hspace*{2em} \mid \RestrictedTerm{\matchingReceiverOut, \matchingReceiverIn}{\left( \Output{\coordinatorRepB}{\matchingReceiverOut \mid \matchingReceiverIn} \mid \pushRequestsOut \right)} \big)}\\
										& \hspace*{1em} \mid \RestrictedTerm{\matchingCoordinatorOut, \matchingCoordinatorIn, \coordinatorUpOut, \coordinatorUpIn, \matchingReceiverUpOut, \matchingReceiverUpIn, \coordinatorMatchingOut, \coordinatorMatchingIn, \matchingUpOut, \matchingUpIn}{\big( \pushRequestsIn\\
										& \hspace*{2em} \mid \RestrictedTerm{\parallelChannelOut, \parallelChannelIn}{\big( \begin{aligned}[t]
												& \EncodingMixAsyn{\Set{ \Subst{z}{x} } P} \mid \processRightOutputRequests \mid \processRightInputRequests \big)
											\end{aligned}}\\
										& \hspace*{2em} \mid \RestrictedTerm{\matchingReceiverOut, \matchingReceiverIn}{\left( \Output{\coordinatorRepB}{\matchingReceiverOut \mid \matchingReceiverIn} \mid \pushRequestsOut \right)} \big)} \big)}\\
										& \mid \processLeftOutputRequests \mid \processLeftInputRequests \mid \Output{\coordinatorUpIn}{\RenamingPolicyMixAsyn{y}, \sumLock, \receiverLock} \big)
									\end{aligned}}\\
								& \hspace*{1em} \mid \RestrictedTerm{\parallelChannelOut, \parallelChannelIn}{\big( \begin{aligned}[t]
										& \EncodingMixAsyn{Q_j} \mid \processRightInputRequests\\
										& \hspace*{-1em} \mid \RestrictedTerm{\matchingUpIn}{\left(  \Forward{\matchingCoordinatorIn}{\matchingUpIn} \mid \Output{\matchingUpIn}{\RenamingPolicyMixAsyn{y}, \sumLock, \receiverLock} \mid \RestrictedTerm{\matchingCoordinatorIn}{\left( \Forward{\matchingUpIn}{\matchingCoordinatorIn} \mid \processRightOutputRequests \right)} \right)} \big)
									\end{aligned}}\\
								& \hspace*{1em} \mid \pushRequests \big)} = T'
							\end{aligned}
					\end{align*}
					Analysing $ T' $ we observe that in comparison to $ \EncodingMixAsyn{S'} $ the encoded subterms $ \EncodingMixAsyn{\Set{ \Subst{z}{x} } P} $, $ \EncodingMixAsyn{Q_j} $, and the term representing $ \EncodingMixAsyn{\ReplicateInput{y}{x}.P} $ appear in the wrong order. However, since $ S' \equiv S'' = \left( \ReplicateInput{y}{x}.P \mid \Set{ \Subst{z}{x} } P \right) \mid Q_j $ and $ \transBarbCongMixAsynB $, by Lemma \ref{lem:preservesSCModuloTransBarbBisimMixAsyn}, preserves structural congruence of source terms, we have $ \EncodingMixAsyn{S'} \transBarbCongMixAsynB \EncodingMixAsyn{S''} $, i.e., the order of these subterms does not matter. As in the case before, on the right side of the parallel operator encoding there are the two forwarders $ \Forward{\matchingCoordinatorIn}{\matchingUpIn} $ and $ \Forward{\matchingUpIn}{\matchingCoordinatorIn} $ (for different instances of $ \matchingCoordinatorIn $). Again they are necessary to \simulate further source term steps on the continuation $ \EncodingMixAsyn{Q_j} $, but, since they perform only invisible steps, they do not influence the state of $ T' $ modulo $ \transBarbCongMixAsynB $.
					
					Moreover there is the request $ \Output{\matchingUpIn}{\RenamingPolicyMixAsyn{y}, \sumLock, \receiverLock} $, to enable an \simulation of a communication of $ Q_j $ and $ \ReplicateInput{y}{x}.P $. Note that there is also the request $ \Output{\coordinatorUpIn}{\RenamingPolicyMixAsyn{y}, \sumLock, \receiverLock} $ at the right side of the parallel operator encoding, but the request $ \Output{\parallelChannelIn}{\RenamingPolicyMixAsyn{y}, \sumLock, \receiverLock} $, which belongs to $ \EncodingMixAsyn{\ReplicateInput{y}{x}.P} $, is missing. However, since by $ \Forward{\matchingUpIn}{\matchingCoordinatorIn} $ the request $ \Output{\coordinatorUpIn}{\RenamingPolicyMixAsyn{y}, \sumLock, \receiverLock} $ is forwarded to $ \Output{\matchingCoordinatorIn}{\RenamingPolicyMixAsyn{y}, \sumLock, \receiverLock} $ within a \pure \admin step and since this configuration is equal to one application of \processLeftInputRequests \ on $ \Output{\parallelChannelIn}{\RenamingPolicyMixAsyn{y}, \sumLock, \receiverLock} $, which is again a \pure \admin step, these two requests in comparison to $ \Output{\parallelChannelIn}{\RenamingPolicyMixAsyn{y}, \sumLock, \receiverLock} $ do not influence the state of $ T' $ modulo $ \transBarbCongMixAsynB $.
					
					What remains as difference of $ T' $ and $ \EncodingMixAsyn{S''} $ is the fact, that in $ T' $ the encoding of $ \Set{ \Subst{z}{x} } P $ appears within a branch of the replicated input encoding whereas in $ \EncodingMixAsyn{S''} $ it appears as right branch of a parallel operator encoding, i.e., it remains to show that $ T'' \transBarbCongMixAsynB \EncodingMixAsyn{\ReplicateInput{y}{x}.P \mid \Set{ \Subst{z}{x} } P} $, where
					\begin{align*}
						T'' & = \RestrictedTerm{\sumLock, \receiverLock, \coordinatorRepA, \coordinatorRepB, \matchingReceiverOut, \matchingReceiverIn}{\big(\\
								& \hspace*{2em} \ReplicateInput{\receiverLock}{-, -, \sumLock_s, \senderLock, \RenamingPolicyMixAsyn{x}}.\Test{\sumLock_s}{\Output{\sumLock_s}{\false} \mid \Out{\senderLock} \mid \Output{\coordinatorRepA}{\RenamingPolicyMixAsyn{x}}}{\Output{\sumLock_s}{\false}}\\
								& \hspace*{2em} \mid \Output{\matchingReceiverIn}{\RenamingPolicyMixAsyn{y}, \sumLock, \receiverLock} \mid \Output{\sumLock}{\true}\\
								& \hspace*{2em} \mid \ReplicateInput{\coordinatorRepA}{\RenamingPolicyMixAsyn{x}}.\Input{\coordinatorRepB}{\matchingReceiverOut, \matchingReceiverIn}.\RestrictedTerm{\matchingCoordinatorOut, \matchingCoordinatorIn, \coordinatorUpOut, \coordinatorUpIn, \matchingReceiverUpOut, \matchingReceiverUpIn, \coordinatorMatchingOut, \coordinatorMatchingIn, \matchingUpOut, \matchingUpIn}{\big( \pushRequestsIn\\
								& \hspace*{4em} \mid \RestrictedTerm{\parallelChannelOut, \parallelChannelIn}{\left( \EncodingMixAsyn{P} \mid \processRightOutputRequests \mid \processRightInputRequests \right)}\\
								& \hspace*{4em} \mid \RestrictedTerm{\matchingReceiverOut, \matchingReceiverIn}{\left( \Output{\coordinatorRepB}{\matchingReceiverOut \mid \matchingReceiverIn} \mid \pushRequestsOut \right)} \big)}\\
								& \hspace*{2em} \mid \RestrictedTerm{\matchingCoordinatorOut, \matchingCoordinatorIn, \coordinatorUpOut, \coordinatorUpIn, \matchingReceiverUpOut, \matchingReceiverUpIn, \coordinatorMatchingOut, \coordinatorMatchingIn, \matchingUpOut, \matchingUpIn}{\big( \pushRequestsIn\\
								& \hspace*{4em} \mid \RestrictedTerm{\parallelChannelOut, \parallelChannelIn}{\left( \EncodingMixAsyn{\Set{ \Subst{z}{x} } P} \mid \processRightOutputRequests \mid \processRightInputRequests \right)}\\
								& \hspace*{4em} \mid \RestrictedTerm{\matchingReceiverOut, \matchingReceiverIn}{\left( \Output{\coordinatorRepB}{\matchingReceiverOut \mid \matchingReceiverIn} \mid \pushRequestsOut \right)} \big)} \big)}
					\end{align*}
					First, note that the term
					\begin{align*}
						\RestrictedTerm{\parallelChannelOut, \parallelChannelIn}{\left( \EncodingMixAsyn{\Set{ \Subst{z}{x} } P} \mid \processRightOutputRequests \mid \processRightInputRequests \right)}
					\end{align*}
					exactly corresponds to the right branch of $ \EncodingMixAsyn{\ReplicateInput{y}{x}.P \mid \Set{ \Subst{z}{x} } P} $. If we compare \pushRequestsIn \ with \processLeftOutputRequests \ and \processLeftInputRequests, then we observe that the former includes exactly the same forwarders as the later but also some additional forwarders. The same holds for \pushRequestsOut \ and \pushRequests. Note that the additional forwarders ensures that each requests of each branch of the replicated input encoding is forwarded to each next right branch, and so these additional forwarders are necessary in case there is more than one branch. Also note that the given forwarders guarantee that each pair of requests, such that one is an input and the other one an output request and both requests do not origin from the same sum, can be combined. Moreover, note that the only request from the left side, i.e., of the encoding of the replicated input, is transmitted to the right side, i.e., the only branch of the replicated input, by the request $ \Output{\matchingReceiverIn}{\RenamingPolicyMixAsyn{y}, \sumLock, \receiverLock} $ and \pushRequestsIn. So these forwarders do not distinguish $ T' $ and $ \EncodingMixAsyn{S''} $ modulo $ \transBarbCongMixAsynB $.
					
					Since $ T'' $ and $ \EncodingMixAsyn{\ReplicateInput{y}{x}.P \mid \Set{ \Subst{z}{x} } P} $ do only differ by the forwarding of requests but nevertheless allow for the same combinations, we deduce that $ T'' \transBarbCongMixAsynB \EncodingMixAsyn{\ReplicateInput{y}{x}.P \mid \Set{ \Subst{z}{x} } P} $. Thus, by Lemma \ref{lem:transBarbCongIsEquivalence}, we conclude $ T \transBarbCongMixAsynB \EncodingMixAsyn{S'} $.
			\end{description}
		\item[Induction Hypothesis:] $ S_1 \step S_1' $ implies $ \exists T_1 \in \piAsynProc \logdot \EncodingMixAsyn{S_1} \steps T_1 \wedge T_1 \transBarbCongMixAsynB \EncodingMixAsyn{S_1'} $
		\item[Induction Step:] We have to consider the remaining three rules \textsc{Par}, \textsc{Res}, and \textsc{Cong} of Figure \ref{fig:concurrentReductionSemantics}.
			\begin{description}
				\item[Case of Rule] \textsc{Par}\textbf{:} Then $ S = S_1 \mid S_2 $ for some $ S_1, S_2 \in \piMixProc $, $ S_1 \step S_1' $, and $ S' = S_1' \mid S_2 $. By the induction hypothesis there is some $ T_1 \in \piAsynProc $ such that $ \EncodingMixAsyn{S_1} \steps T_1 $ and $ T_1 \transBarbCongMixAsynB \EncodingMixAsyn{S_1'} $. The corresponding encodings are given by the following terms:
					\begin{align*}
						\EncodingMixAsyn{S} & = \begin{aligned}[t]
								& \RestrictedTerm{\matchingCoordinatorOut, \matchingCoordinatorIn, \coordinatorUpOut, \coordinatorUpIn, \coordinatorMatchingOut, \coordinatorMatchingIn, \matchingUpOut, \matchingUpIn}{\big(\\
								& \hspace*{1em} \RestrictedTerm{\parallelChannelOut, \parallelChannelIn}{\left( \EncodingMixAsyn{S_1} \mid \processLeftOutputRequests \mid \processLeftInputRequests \right)}\\
								& \hspace*{1em} \mid \RestrictedTerm{\parallelChannelOut, \parallelChannelIn}{\left( \EncodingMixAsyn{S_2} \mid \processRightOutputRequests \mid \processRightInputRequests \right)}\\
								& \hspace*{1em} \mid \pushRequests \big)}
							\end{aligned}\\
						\EncodingMixAsyn{S'} & = \begin{aligned}[t]
								& \RestrictedTerm{\matchingCoordinatorOut, \matchingCoordinatorIn, \coordinatorUpOut, \coordinatorUpIn, \coordinatorMatchingOut, \coordinatorMatchingIn, \matchingUpOut, \matchingUpIn}{\big(\\
								& \hspace*{1em} \RestrictedTerm{\parallelChannelOut, \parallelChannelIn}{\left( \EncodingMixAsyn{S_1'} \mid \processLeftOutputRequests \mid \processLeftInputRequests \right)}\\
								& \hspace*{1em} \mid \RestrictedTerm{\parallelChannelOut, \parallelChannelIn}{\left( \EncodingMixAsyn{S_2} \mid \processRightOutputRequests \mid \processRightInputRequests \right)}\\
								& \hspace*{1em} \mid \pushRequests \big)}
							\end{aligned}
					\end{align*}
					Since $ \EncodingMixAsyn{S_1} \steps T_1 $ and since $ \EncodingMixAsyn{S_1} $ is not guarded in $ \EncodingMixAsyn{S} $, we can use the rules \textsc{Par}, \textsc{Res}, and \textsc{Cong} in the asynchronous calculus to show that:
					\begin{align*}
						\EncodingMixAsyn{S} \steps \begin{aligned}[t]
								& \RestrictedTerm{\matchingCoordinatorOut, \matchingCoordinatorIn, \coordinatorUpOut, \coordinatorUpIn, \coordinatorMatchingOut, \coordinatorMatchingIn, \matchingUpOut, \matchingUpIn}{\big(\\
								& \hspace*{1em} \RestrictedTerm{\parallelChannelOut, \parallelChannelIn}{\left( T_1 \mid \processLeftOutputRequests \mid \processLeftInputRequests \right)}\\
								& \hspace*{1em} \mid \RestrictedTerm{\parallelChannelOut, \parallelChannelIn}{\left( \EncodingMixAsyn{S_2} \mid \processRightOutputRequests \mid \processRightInputRequests \right)}\\
								& \hspace*{1em} \mid \pushRequests \big) = T}
							\end{aligned}
					\end{align*}
					By Definition \ref{def:transBarbCong}, $ T_1 \transBarbCongMixAsynB \EncodingMixAsyn{S_1'} $ implies $ \Context{}{}{T_1} \transBarbBisimMixAsynB \Context{}{}{\EncodingMixAsyn{S_1'}} $ for all contexts $ \Context{}{}{\hole} \in \piAsynProc \to \piAsynProc $ such that $ \Context{}{}{\EncodingMixAsyn{\nullTerm}} \in \targetTermsMixAsyn $. Since $ \EncodingMixAsyn{\nullTerm \mid S_2 } \in \targetTermsMixAsyn $, the quantification over $ \context $ includes all contexts $ \context $ such that:
					\begin{align*}
						\Context{}{}{\hole} & = \context'\big( \begin{aligned}[t]
								& \RestrictedTerm{\matchingCoordinatorOut, \matchingCoordinatorIn, \coordinatorUpOut, \coordinatorUpIn, \coordinatorMatchingOut, \coordinatorMatchingIn, \matchingUpOut, \matchingUpIn}{\big(\\
								& \hspace*{1em} \RestrictedTerm{\parallelChannelOut, \parallelChannelIn}{\left( \hole \mid \processLeftOutputRequests \mid \processLeftInputRequests \right)}\\
								& \hspace*{1em} \mid \RestrictedTerm{\parallelChannelOut, \parallelChannelIn}{\left( \EncodingMixAsyn{S_2} \mid \processRightOutputRequests \mid \processRightInputRequests \right)}\\
								& \hspace*{1em} \mid \pushRequests \big) \big)}
							\end{aligned}\\
							& = \Context{'}{}{\Context{''}{}{\hole}}
					\end{align*}
					Because of that, we have $ \Context{'}{}{\Context{''}{}{T_1}} \transBarbBisimMixAsynB \Context{'}{}{\Context{''}{}{\EncodingMixAsyn{S_1'}}} $ for all contexts $ \Context{'}{}{\hole} \in \piAsynProc \to \piAsynProc $ such that $ \Context{'}{}{\EncodingMixAsyn{\nullTerm}} \in \targetTermsMixAsyn $. By Definition \ref{def:transBarbCong}, we conclude $ T \transBarbCongMixAsynB \EncodingMixAsyn{S'} $.
				\item[Case of Rule] \textsc{Res}\textbf{:} Then $ S = \RestrictedTerm{x}{S_1} $ for some $ x \in \names $ and some $ S_1 \in \piMixProc $, $ S_1 \step S_1' $, and $ S' = \RestrictedTerm{x}{S_1'} $. By the induction hypothesis there is some $ T_1 \in \piAsynProc $ such that $ \EncodingMixAsyn{S_1} \steps T_1 $ and $ T_1 \transBarbCongMixAsynB \EncodingMixAsyn{S_1'} $. Since the encoding of restriction is \clean, i.e., $ \EncodingMixAsyn{S} = \RestrictedTerm{\RenamingPolicyMixAsyn{x}}{\EncodingMixAsyn{S_1}} $ and $ \EncodingMixAsyn{S'} = \RestrictedTerm{\RenamingPolicyMixAsyn{x}}{\EncodingMixAsyn{S_1'}} $, we can apply rule \textsc{Res} to conclude from $ \EncodingMixAsyn{S_1} \steps T_1 $ to $ \EncodingMixAsyn{S} \steps \RestrictedTerm{\RenamingPolicySepAsyn{x}}{T_1} = T $. By Definition \ref{def:transBarbCong}, $ T_1 \transBarbCongMixAsynB \EncodingMixAsyn{S_1'} $ implies $ \Context{}{}{T_1} \transBarbBisimMixAsynB \Context{}{}{\EncodingMixAsyn{S_1'}} $ for all contexts $ \Context{}{}{\hole} \in \piAsynProc \to \piAsynProc $ such that $ \Context{}{}{\EncodingMixAsyn{\nullTerm}} \in \targetTermsMixAsyn $. Since $ \RestrictedTerm{\RenamingPolicyMixAsyn{x}}{\EncodingMixAsyn{\nullTerm}} \in \targetTermsMixAsyn $, the quantification over $ \context $ includes all contexts $ \context $ such that $ \Context{}{}{\hole} = \Context{'}{}{\RestrictedTerm{\RenamingPolicyMixAsyn{x}}{\hole}} $. Because of that, we have $ \Context{'}{}{\RestrictedTerm{\RenamingPolicyMixAsyn{x}}{T_1}} \transBarbBisimMixAsynB \Context{'}{}{\RestrictedTerm{\RenamingPolicyMixAsyn{x}}{\EncodingMixAsyn{S_1'}}} $ for all contexts $ \Context{'}{}{\hole} \in \piAsynProc \to \piAsynProc $ such that $ \Context{'}{}{\EncodingMixAsyn{\nullTerm}} \in \targetTermsMixAsyn $. By Definition \ref{def:transBarbCong}, we conclude $ T \transBarbCongMixAsynB \EncodingMixAsyn{S'} $.
				\item[Case of Rule] \textsc{Cong}\textbf{:} Then $ S \equiv S_1 $ for some $ S_1 \in \piMixProc $, $ S_1 \step S_1' $, and $ S_1' \equiv S' $. By Lemma \ref{lem:preservesSCModuloTransBarbBisimMixAsyn}, the encoding $ \encodingMixAsyn $ preserves structural congruence of source terms modulo $ \transBarbCongMixAsynB $. So $ S \equiv S_1 $ and $ S_1' \equiv S' $ implies $ \EncodingMixAsyn{S} \transBarbCongMixAsynB \EncodingMixAsyn{S_1} $ and $ \EncodingMixAsyn{S_1'} \transBarbCongMixAsynB \EncodingMixAsyn{S'} $. By Definition \ref{def:transBarbCong}, for all contexts $ \Context{}{}{\hole} \in \piAsynProc \to \piAsynProc $ such that $ \Context{}{}{\EncodingMixAsyn{\nullTerm}} \in \targetTermsMixAsyn $ we have $ \Context{}{}{\EncodingMixAsyn{S}} \transBarbBisimMixAsynB \Context{}{}{\EncodingMixAsyn{S_1}} $, i.e., especially $ \EncodingMixAsyn{S} \transBarbBisimMixAsynB \EncodingMixAsyn{S_1} $. Thus, by Definition \ref{def:transBarbBisim}, for each sequence $ \EncodingMixAsyn{S} \steps T $ there is a sequence $ \EncodingMixAsyn{S_1} \steps T_1 $ for some $ T_1 \in \piAsynProc $ such that $ T \transBarbBisimMixAsynB T_1 $. The same holds for all Contexts $ \context $, i.e., since $ \Context{}{}{\EncodingMixAsyn{S}} \transBarbBisimMixAsynB \Context{}{}{\EncodingMixAsyn{S_1}} $, for each sequence $ \Context{}{}{\EncodingMixAsyn{S}} \steps \Context{}{}{T} $ there is a sequence $ \Context{}{}{\EncodingMixAsyn{S_1}} \steps \Context{}{}{T_1} $ for some $ T_1 \in \piAsynProc $ such that $ \Context{}{}{T} \transBarbBisimMixAsynB \Context{}{}{T_1} $. So, by Definition \ref{def:transBarbCong}, $ T \transBarbCongMixAsynB T_1 $. By the induction hypothesis $ T_1 \transBarbCongMixAsynB \EncodingMixAsyn{S_1'} $. Since, by Lemma \ref{lem:transBarbCongIsEquivalence}, $ \transBarbCongMixAsynB $ is an equivalence, $ T \transBarbCongMixAsynB T_1 $, $ T_1 \transBarbCongMixAsynB \EncodingMixAsyn{S_1'} $, and $ \EncodingMixAsyn{S_1'} \transBarbCongMixAsynB \EncodingMixAsyn{S'} $ implies $ T \transBarbCongMixAsynB \EncodingMixAsyn{S'} $.
			\end{description}
	\end{description}
	\qed
\end{proof}

If we analyse the proofs of the Lemmata \ref{lem:operationalCompletenessSepAsyn} and \ref{lem:operationalCompletenessMixAsyn}, we observe that for each \simulation there is exactly one \nonAdmin step, i.e., there is exactly one \nonAdmin step for each of the rules $ \textsc{Tau}_{\indexMix, \indexSep} $, $ \textsc{Com}_{\indexMix, \indexSep}  $, and $ \textsc{Rep}_{\indexMix, \indexSep}  $ and the \simulation of the remaining rules do not introduce additional \nonAdmin steps. This underpins our intuition of \nonAdmin steps (compare to Section \ref{sec:stepsSimulation}) that any \simulation of a source term step is connected to exactly one \nonAdmin step. Moreover, any \nonAdmin step marks exactly one \simulation of a source term step by steering the \simulation to the ``point of no return'', i.e., to a point, from where no other sequence of steps can disable the completion of that \simulation and from where any possibility to \simulate a conflicting source term step is ultimately withdrawn.

\begin{lemma} \label{lem:simulationVSNonAdminStep}
	Any \simulation of a source term step includes exactly one \nonAdmin step and any \nonAdmin step steers the \simulation of a source term step to a point, from where it eventually has to be completed.
\end{lemma}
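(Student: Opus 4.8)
The statement has two parts, and the plan is to prove each separately, leaning heavily on the detailed step-by-step simulations already exhibited in the proofs of Lemma~\ref{lem:operationalCompletenessSepAsyn} and Lemma~\ref{lem:operationalCompletenessMixAsyn}.

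First I would establish the existence and uniqueness claim---that any \simulation of a source term step includes exactly one \nonAdmin step. For the \emph{existence} direction, I would argue by appeal to Lemma~\ref{lem:nonAdminStepContinuation}: a \simulation of a source term step $S \step S'$ must unguard the encoded continuation(s) occurring in $\EncodingMixAsyn{S'}$ (or $\EncodingSepAsyn{S'}$), since these continuations are guarded in $\EncodingMixAsyn{S}$ and Lemma~\ref{lem:nonAdminStepContinuation} states that only a \nonAdmin step can unguard them. Hence at least one \nonAdmin step occurs. For \emph{uniqueness}, I would proceed by the case split on the reduction rule ($\textsc{Tau}_{\indexMix, \indexSep}$, $\textsc{Com}_{\indexMix, \indexSep}$, $\textsc{Rep}_{\indexMix, \indexSep}$) used in the base cases of operational completeness, and inspect the explicit step sequences given there. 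In each case exactly one step consumes a positive instantiation of a sum lock in the critical position required by Definition~\ref{def:nonAdminStep} (the single test in the $\tau$ and replication cases, the second test of the nested test-statement in the communication case), while every other step in the displayed \simulation is either on a request channel, a sender/receiver lock, a chain lock, or a boolean---all of which are \admin steps by Definition~\ref{def:pureImpureAdminStep} and~\ref{def:adminStep}. The induction step (rules \textsc{Par}, \textsc{Res}, \textsc{Cong}) introduces no new \nonAdmin steps because it merely relocates the same \simulation into a larger context, as the completeness proofs show.

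Second I would prove the ``point of no return'' claim---that once a \nonAdmin step is taken, the \simulation must eventually be completed and no conflicting \simulation can thereafter succeed. The key observation, drawn from Example~\ref{exa:intermediateStates} and the discussion preceding it, is that a \nonAdmin step consumes the \emph{second} positive instantiation of a sum lock (or the sole one, in the $\tau$/replication cases), so that after this step all sum locks involved in the current \simulation have been set to \false. By Lemma~\ref{lem:changeInstantiationSumLock}, a negative instantiation can never revert to positive, so any conflicting \simulation---which would require a positive instantiation of one of these now-negative locks---is permanently ruled out. That the \simulation must eventually \emph{complete} rather than deadlock follows from Lemma~\ref{lem:impureAdminStepsNoDeadlock} together with Lemma~\ref{lem:pureAdminStepsNoDeadlock}: the only remaining steps are \admin steps (the postprocessing on the sender lock and the forwarding of requests), and these introduce no deadlock, so the encoded continuation is inevitably unguarded.

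The main obstacle I anticipate is the uniqueness argument in the communication case for $\encodingMixAsyn$, where the nested test-statement first consumes a positive instantiation that is \emph{not} a \nonAdmin step (the first test) before the genuine \nonAdmin step (the second test). Here the care required by Definition~\ref{def:nonAdminStep}---distinguishing the first from the second consumption within a nested test---must be matched precisely against the explicit reductions in Lemma~\ref{lem:operationalCompletenessMixAsyn}, and one must verify that the intermediate state reached after the first consumption still permits an alternative conflicting \simulation (as Example~\ref{exa:intermediateStates} makes explicit), so that only the second consumption qualifies as the decisive, irreversible step. Establishing that this accounting is consistent across all three reduction rules, and that the \textsc{Cong} case does not secretly inject an extra \nonAdmin step via the structural-congruence handling of Lemma~\ref{lem:preservesSCModuloTransBarbBisimMixAsyn}, is where the real work lies; the deadlock-freedom half is essentially immediate from the cited lemmata.
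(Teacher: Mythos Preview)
Your proposal is correct and follows essentially the same route as the paper: a case split on the three base rules $\textsc{Tau}$, $\textsc{Com}$, $\textsc{Rep}$, reading off the single \nonAdmin step from the explicit \simulation sequences of the completeness lemmata, and then combining Lemma~\ref{lem:changeInstantiationSumLock} (irreversibility of negative sum-lock instantiations) with the deadlock-freedom results and Lemmata~\ref{lem:nonConflictingStepsSepAsyn}/\ref{lem:nonConflictingStepsMixAsyn} to obtain both the ``ruled out'' and the ``eventually completed'' halves of the point-of-no-return claim. The only cosmetic difference is that you separate the existence/uniqueness argument from the point-of-no-return argument, whereas the paper treats both together inside each case of the rule split; your use of Lemma~\ref{lem:nonAdminStepContinuation} for existence is a legitimate shortcut the paper does not invoke explicitly but which amounts to the same observation.
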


\begin{proof}
	If we analyse the reduction rules in Figure \ref{fig:concurrentReductionSemantics}, we observe that the result of a source term step is the unguarding of one or two former guarded subterms. If we rely on this fact to mark the essence of a step, then an \simulation is characterised by the unguarding of the respective encoded continuations. Analysing the encoding functions in Figure \ref{fig:encodingSepAsyn} and Figure \ref{fig:encodingMixAsyn} we observe that such encoded continuations are guarded either by a sender lock or a test-statement. Note that, as proved in \cite{nestmann00} for $ \encodingSepAsyn $ and by the Lemma \ref{lem:pureAdminStepsNoDeadlock} and Lemma \ref{lem:impureAdminStepsNoDeadlock}, neither \pure nor \impure \admin steps can introduce deadlock.
	
	The reduction of a summand guarded by $ \tau $ is the only case of a reduction step that does not require the interaction of a receiver and a sender (compare to rule $ \textsc{Tau}_{\indexMix, \indexSep} $). Both encodings introduce a sum lock to encode a sum and translate summands guarded by $ \tau $ into a single test-statement, that tests the corresponding sum lock and in case of success unguards the encoded continuation and provides a negative instantiation of the sum lock. Note that to \simulate such a source term for both encodings only two steps are necessary to reduce the test-statement. The first consumes the instantiation of the sum lock. If this instantiation is positive we call that step a \nonAdmin step. Note that before this step, an \simulation of a conflicting source term step may change the instantiation of the sum lock into a negative instantiation and thus withdraw the possibility to \simulate the step. On the other side as soon as the positive instantiation of the sum lock is consumed, there is, by Lemma \ref{lem:nonConflictingStepsSepAsyn} and Lemma \ref{lem:nonConflictingStepsMixAsyn}, no possibility to prevent that the second step necessary to complete the \simulation eventually happens. So, whenever such \nonAdmin step is performed, eventually the encoded continuation is unguarded. Moreover, the only way to instantiate the consumed sum lock is to complete the \simulation, which leads to a negative instantiation of that sum lock. By Lemma \ref{lem:changeInstantiationSumLock}, there is no chance to unguard a positive instantiation of that sum lock afterwards. Note that any source term step, that is in conflict to the considered step on the $ \tau $ guarded summand, have to reduce another summand of the same sum. Remember that, by Lemma \ref{lem:junkRemainsOfSumsSepAsyn} and \ref{lem:junkRemainsOfSumsMixAsyn}, encoded summands connected to a negative instantiation of a sum lock are junk. So the consumption of the positive instantiation of the sum lock immediately rules out any \simulation of a conflicting source term step. Moreover, since there is no possibility to reach a positive instantiation of that sum lock again, there is no possibility to \simulate this source term step twice and any translated observable connected to this sum lock, i.e., to this sum, is immediately withdrawn by the consumption of the positive instantiation. Because of that, for any \simulation of a source term step based on rule $ \textsc{Tau}_{\indexMix, \indexSep} $ there is exactly one \nonAdmin step and for any such \nonAdmin step exactly one source term step is \simulated.
	
	In case of rule $ \textsc{Com}_{\indexMix, \indexSep} $ the source term step is on an interaction of an input guarded and an output guarded summand. The encoding of the continuation of the receiver is guarded by the second of a nested test-statement, while the encoded continuation of the sender is guarded by a sender lock, which in turn is guarded by the nested test-statement. In Example \ref{exa:intermediateStates} we explain that reducing the first test-statement does not ensure, that the \simulation can be completed. However, as soon as the positive instantiation of the second sum lock is consumed we can repeat the argumentation of the case above to show that the point of no return of that \simulation is reached. We observe that the consumption of the second positive sum lock instantiation is indeed the only \nonAdmin step necessary to \simulate this source term step. It immediately withdraw any possibility to complete the \simulation of a conflicting source term step, because it ensures that both consumed positive instantiations of sum lock are never restored. With that it also immediately withdraws all translated observables on the summands of these to sums. Moreover, by Lemma \ref{lem:nonConflictingStepsSepAsyn} and Lemma \ref{lem:nonConflictingStepsMixAsyn}, the encoded continuations of the respective sender and receiver have eventually to become unguarded. Thus again for any \simulation of a source term step based on rule $ \textsc{Com}_{\indexMix, \indexSep} $ there is exactly one \nonAdmin step and for any such \nonAdmin step exactly one source term step is \simulated.
	
	In case of rule $ \textsc{Rep}_{\indexMix, \indexSep} $ the source term step is on an interaction of a replicated input and an output guarded summand. The encoding of the continuation of the receiver is guarded by a single test-statement, while the encoded continuation of the sender is guarded by a sender lock, which in turn is guarded by the single test-statement. As soon as the positive instantiation of the second sum lock is consumed we can repeat the argumentation of the first case to show that the point of no return of that \simulation is reached. We observe that the consumption of the positive sum lock instantiation is indeed the only \nonAdmin step necessary to \simulate this source term step. It immediately withdraw any possibility to complete the \simulation of a conflicting source term step, because it ensures that the consumed positive instantiation of the sum lock is never restored. With that it also immediately withdraws all translated observables on the summands of that sum. Moreover, by Lemma \ref{lem:nonConflictingStepsSepAsyn} and Lemma \ref{lem:nonConflictingStepsMixAsyn}, the encoded continuations of the respective sender and receiver have eventually to become unguarded. Thus again for any \simulation of a source term step based on rule $ \textsc{Rep}_{\indexMix, \indexSep} $ there is exactly one \nonAdmin step and for any such \nonAdmin step exactly one source term step is \simulated.
	\qed
\end{proof}

Based on this Lemma we prove operational soundness, by showing that each target term is part of an \simulation of some source term step.

\begin{lemma}[Operational Soundness] \label{lem:operationalSoundness}
	The encodings $ \encodingSepAsyn $ and $ \encodingMixAsyn $ fulfil operational soundness.
\end{lemma}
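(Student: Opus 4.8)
The plan is to unfold Definition \ref{def:operationalCorrespondence}: for every target reduction $\EncodingMixAsyn{S} \steps T$ (and symmetrically $\EncodingSepAsyn{S} \steps T$) I must exhibit a source term $S'$ with $S \steps S'$ and $T \steps \transBarbCongMixAsyn \EncodingMixAsyn{S'}$. I would argue by induction on the number of \nonAdmin steps occurring in the given sequence, using Lemma \ref{lem:simulationVSNonAdminStep} as the backbone: each \nonAdmin step marks exactly one completed \simulation and hence corresponds to exactly one source term step, and it drives that \simulation past its point of no return. The two classifications of \admin steps (Definitions \ref{def:adminStep} and \ref{def:pureImpureAdminStep}) then tell me which remaining steps I may simply ignore and which I must actively resolve.

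In the base case (no \nonAdmin step) the sequence $\EncodingMixAsyn{S} \steps T$ consists solely of \admin steps, and I take $S' = S$. \Pure \admin steps are invisible modulo the considered relations by Lemma \ref{lem:pureAdminStepsTransBarbBisimMixAsyn} (resp. \ref{lem:pureAdminStepsTransBarbBisimSepAsyn}), so they may be discarded. The delicate steps are the \impure \admin steps, i.e. the consumption of a positive sum lock in the first test of a nested test-statement, which may have driven $T$ into an intermediate state (compare Example \ref{exa:intermediateStates}); such a step has started, but not committed, a \simulation. I would show that from $T$ every half-started \simulation can be resolved: since by Lemmas \ref{lem:pureAdminStepsNoDeadlock} and \ref{lem:impureAdminStepsNoDeadlock} no deadlock is introduced, the pending nested test-statement can always be completed, either to its \nonAdmin step or, once its second sum lock is (or is driven) negative, to its else-branch, which restores the consumed positive instantiation. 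Choosing for each pending test the restoring resolution whenever possible yields $T \steps T^{*}$ with no source step committed and $T^{*} \transBarbCongMixAsyn \EncodingMixAsyn{S}$.

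In the inductive step I isolate the first \nonAdmin step $T_0 \nonAdminStep T_1$ inside $\EncodingMixAsyn{S} \steps T$. By Lemma \ref{lem:simulationVSNonAdminStep} it realises a single source step $S \step S_0$, and by Lemma \ref{lem:changeInstantiationSumLock} the sum locks it consumes can never turn positive again, so no conflicting \simulation can be completed afterwards. Resolving the remaining half-started \simulations as in the base case and then removing the remains of the consumed sums and requests as junk — by Lemmas \ref{lem:junkRemainsOfSumsMixAsyn}, \ref{lem:junkRequestsOnFalseSumLocks}, \ref{lem:junkReceiverMixAsyn} and \ref{lem:junkInduceTest}, discarded via Lemma \ref{lem:removeJunk} — I obtain a term congruent to $\EncodingMixAsyn{S_0}$ carrying a strictly shorter target reduction with one fewer \nonAdmin step, at which point the induction hypothesis supplies $S_0 \steps S'$ with the required matching. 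This step essentially re-uses the \simulation analysis of Lemma \ref{lem:operationalCompletenessMixAsyn} (resp. \ref{lem:operationalCompletenessSepAsyn}) read backwards, and relies on the congruences being preserved along further reductions so that the junk-free witness inherits the continuation of $T$.

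The main obstacle will be precisely these intermediate states produced by \impure \admin steps: I must rule out that a half-started \simulation ever strands the term in a state matching no source reduct modulo $\transBarbCongMixAsyn$. The two facts I would lean on are that (i) by Lemma \ref{lem:simulationVSNonAdminStep} a \simulation is irrevocably fixed only at its \nonAdmin step, so before that every pending test can still be steered either to commitment or to restoration, and (ii) whatever such a test leaves behind on a negatively instantiated sum lock is junk by the junk lemmas above. Carrying out this resolution uniformly for \emph{all} simultaneously pending tests, and verifying that committing one never blocks the consistent resolution of another — which depends on the total lock ordering underlying Lemma \ref{lem:impureAdminStepsNoDeadlock} — is the technically heavy part. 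It is also where the argument for $\encodingSepAsyn$ diverges, since there the \impure \admin steps are communications on translated source term names rather than tests on sum locks, so the corresponding resolution and junk extraction must be phrased via Lemma \ref{lem:junkRemainsOfSumsSepAsyn} instead.
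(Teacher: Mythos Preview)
Your overall strategy --- pivoting on Lemma~\ref{lem:simulationVSNonAdminStep}, discarding \pure\ \admin steps via Lemmas~\ref{lem:pureAdminStepsTransBarbBisimSepAsyn}/\ref{lem:pureAdminStepsTransBarbBisimMixAsyn}, and resolving half-started tests using the no-deadlock Lemmas~\ref{lem:pureAdminStepsNoDeadlock}/\ref{lem:impureAdminStepsNoDeadlock} --- is the same as the paper's. The gap is in your base case. You fix $S'=S$ and claim that by ``choosing for each pending test the restoring resolution whenever possible'' you reach $T^{*}\transBarbCongMixAsyn\EncodingMixAsyn{S}$ with no source step committed. This fails precisely on Example~\ref{exa:intermediateStates}: after the \impure\ \admin step consuming the positive instantiation of $\sumLock_1$, the restoring branch of the nested test (the else-case of the inner test) is reachable only if $\sumLock_2$ is negative; but $\sumLock_2$ is still positive, and by Lemma~\ref{lem:changeInstantiationSumLock} making it negative requires \emph{committing} some simulation touching $\sumLock_2$, i.e.\ a \nonAdmin step in $T\steps T^{*}$. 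Hence from that intermediate state no continuation lands in anything congruent to $\EncodingMixAsyn{S}$; you are forced forward to some $\EncodingMixAsyn{S'}$ with $S\step S'$. The number of \nonAdmin steps in the \emph{given} prefix is therefore not a sound induction measure: a prefix with zero \nonAdmin steps may already exclude the choice $S'=S$.

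The paper sidesteps this by never attempting to restore. From an arbitrary $T$ it resolves \emph{all} pending test-statements forward, accepting whatever \nonAdmin steps this produces, and reads off $S\steps S'$ from the set of simulations so completed. When several simulations were started concurrently it decomposes $S$ into parallel components $S_1\mid\ldots\mid S_n$ (one per completed simulation), treats each by the single-\nonAdmin-step analysis, and reassembles via the \textsc{Par}-argument from the completeness proof; the general case is then an induction over the resulting tree of completed simulations rather than over the prefix. Your inductive step is salvageable if you replace the base case by this forward-resolution argument, but once you do that the induction on \nonAdmin steps in the prefix becomes superfluous.
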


\begin{proof}
	We start with $ \encodingSepAsyn $. By Definition \ref{def:operationalCorrespondence}, we have to show that:
	\begin{align*}
		\forall S \in \piSepProc \logdot \forall T \in \piAsynProc \logdot \EncodingSepAsyn{S} \steps T \text{ implies } \exists S' \in \piSepProc \logdot \exists T' \in \piAsynProc \logdot S \steps S' \wedge T \steps T' \wedge T' \transBarbCongSepAsyn \EncodingSepAsyn{S'}
	\end{align*}
	Note that $ T $ is a target term, i.e., $ T \in \targetTermsSepAsyn $. By Lemma \ref{lem:pureAdminStepsTransBarbBisimSepAsyn}, \pure \admin steps do not influence the state of a target term modulo $ \transBarbCongSepAsyn $, i.e., $ \forall T, T' \in \targetTermsSepAsyn \logdot T \pureAdminSteps T' $ implies $ T \transBarbCongSepAsyn T' $. Because of that, it suffice to consider \impure \admin and \nonAdmin steps, i.e., steps on translated source term names or sum locks. Moreover note, that steps on negative instantiations of sum locks reduce the corresponding test-statements to simple forwarders, that immediately restore the information consumed to resolve this test-statement. Thus \impure \admin steps on negative instantiations of sum locks do not change the state of a target term modulo $ \transBarbCongSepAsyn $.
	
	By Lemma \ref{lem:simulationVSNonAdminStep}, \nonAdmin steps indicate the border between the encoding of one source term and the encoding of its reduction. While, \pure \admin steps do not influence the state of a source term modulo $ \transBarbCongSepAsyn $, the \nonAdmin steps do. \NonAdmin steps finally rule out each way to reach one of the consumed translated observables and all translated observables connected to the same sums. With that, they also rule out each \simulation of conflicting source term steps and accordingly the reachability of the corresponding translated observables and occurrences of $ \success $. On the other hand, since all post-processing steps of all \simulations are \pure \admin steps, they ensure that eventually the respective encoded continuations are unguarded. Because of that immediately after the \nonAdmin step all translated observables and all reachable occurrences of $ \success $ of the encoded continuations are reachable. Because of that, source term steps and their \simulations handle source term observables and their translated observables in exactly the same way, i.e., they remove old and unguard new in the same way. So, if we consider only \pure \admin steps and a single \nonAdmin step, i.e., the sequence $ \EncodingSepAsyn{S} \pureAdminSteps T \nonAdminStep T' \pureAdminSteps T'' $ for a source term $ S \in \piSepProc $, then each target term in the sequence $ \EncodingSepAsyn{S} \pureAdminSteps T $, including $ T $, is congruent to $ \EncodingSepAsyn{S} $ modulo $ \transBarbCongSepAsyn $ and each target term in the sequence $ T' \pureAdminSteps T'' $, including $ T' $, is congruent to $ T'' $ modulo $ \transBarbCongSepAsyn $. By Lemma \ref{lem:simulationVSNonAdminStep}, the step $ T \nonAdminStep T' $ marks the \simulation of a source term step. Thus, $ S $ must be able to perform a step, i.e., $ \exists S' \in \piSepProc $ such that $ S \step S' $, and the sequence $ \EncodingSepAsyn{S} \pureAdminSteps T \nonAdminStep T' \pureAdminSteps T'' $ \simulates this step by unguarding the encodings of the subterms of $ S' $ and removing the translated observables that refer to the observables removed from $ S $ during the step to $ S' $. Moreover, note that, since all post-processing steps are \pure \admin steps, it is always possible to complete the \simulation after the \nonAdmin step. Thus $ \exists T''' \in \piAsynProc $ such that $ T''' $ is the result of the completion of the \simulation of $ S \step S' $, i.e., $ T' \pureAdminSteps T''' $ and $ T''' \transBarbCongSepAsyn \EncodingSepAsyn{S'} $.
	
	Unfortunately, \impure \admin steps complicate the situation. As described in Example \ref{exa:intermediateStates} they can already rule out some \simulations on conflicting source term steps by disabling the reachability of some translated observables (e.g. by consuming their respective positive instantiations of sum locks). If they rule out all \simulations on conflicting source term steps, then they behave like \nonAdmin steps and we are back to the situation descripted above. But, if they only rule out some \simulations on conflicting source term steps, we result in what we denote as intermediate state (compare to Definition \ref{def:intermediateState}). Intermediate states are in general not congruent to any of the surrounding encoded source terms modulo $ \transBarbCongSepAsyn $; neither to the encoded source term there we start our \simulation attempt nor to any encoding of the possible reductions. Note that the Definition \ref{def:operationalCorrespondence} of operational soundness explicitly allows for the presence of such intermediate states as long as we can ensure, that from each intermediate state a state, that is congruent modulo $ \transBarbCongSepAsyn $ to an encoded source term, is reachable. So let us have a closer look on these \impure \admin steps. By Definition \ref{def:pureImpureAdminStep}, \impure \admin steps are steps on translated source term names or on sum locks. They prepare an \simulation by unguarding and/or partially reducing a test-statement. Since for all sum locks eventually an instantiation is restored, each of these test-stements is eventually resolved (compare to the proof in \cite{nestmann00} that the encoding $ \encodingSepAsyn $ does not introduce deadlock).
	
	So let us consider a sequence $ \EncodingSepAsyn{S} \adminSteps T \nonAdminStep T' \pureAdminSteps T'' $. If the \nonAdmin step $ T \nonAdminStep T' $ rules out any \simulation attempt except for the one it steers to the point of no return, then we can use the same argumentation as in the case without \impure \admin steps to show that $ \exists S' \in \piSepProc $ and $ \exists T''' \in \piAsynProc $ such that $ S \step S' $, $ T' \pureAdminSteps T''' $, i.e., $ T \steps T''' $, and $ T''' \transBarbCongSepAsyn \EncodingSepAsyn{S'} $.
	
	However, by performing several \impure \admin steps in $ \EncodingSepAsyn{S} \adminSteps T $, several \simulations can be started by unguarding and/or partially reducing test-statements before a single of them is resolved. In this case, the resolution of one of these test-statements may not directly lead to a term congruent modulo $ \transBarbCongSepAsyn $ to an encoded source term. This can happen, if the source term can perform several sets of conflicting steps, i.e., there are at least two non conflicting but parallel steps and for each of these steps there can be some conflicting alternatives. To reach again a term, that is congruent modulo $ \transBarbCongSepAsyn $ to an encoded source term, $ T $ has to finally decide on which of the started \simulations are completed and rule out all conflicting \simulations. Note that, therefore in general it is not necessary to resolve all unguarded or partially reduced test-statements. However, since the $ \encodingSepAsyn $ does not introduce deadlock, all unguarded and partially reduced test-statements in $ T $ can be resolved, which definitely completes some of the started \simulations by \nonAdmin steps and rules out the completion of any remaining started \simulation\footnote{Note that the completion of \simulations, that are up to now only started by performing some \pure \admin steps, possibly might not be ruled out by resolving these test-statements. However, since \pure \admin steps do not influence the state of a target term modulo $ \transBarbCongSepAsyn $, such \simulations can be ignored.}. Thus $ \exists T' \in \piAsynProc $ such that $ T' $ is the target term that is reached after all unguarded or started test-statements of $ T $ are resolved and all necessary post-processing steps to complete the respective \simulations are performed, i.e., $ T \steps T' $. Then for each \nonAdmin step in the sequence $ T \steps T' $ there is one source term step of $ S $ or a reduction of $ S $. Note that in $ \EncodingSepAsyn{S} \adminSteps T $ several \simulations are started, but, since there are no \nonAdmin steps, none of these \simulations is completed. So $ S $ must be able to perform all source term step, that correspond to an \simulation completed in $ T \steps T' $, in parallel. Because of that, we can split up $ S $ into $ n $ parallel subterms $ S_1 $, \ldots, $ S_n $, where $ n $ is the number of \simulations completed in $ T \steps T' $. Then we can prove the lemma for the subterms using the cases above, i.e., we have $ \EncodingSepAsyn{S_i} \adminSteps T_i $ implies $ \exists S_i' \in \piSepProc $ and $ \exists T_i' \in \piAsynProc $ such that $ S_i \step S_i' $, $ T_i \steps T_i' $, and $ T_i' \transBarbCongSepAsyn \EncodingSepAsyn{S_i'} $. Since $ S $ can perform all these source term steps in parallel, it can also perform them arbitrarily ordered in a sequence, i.e., $ \exists S' \in \piSepProc $ such that $ S \steps S' $ and $ S' $ is structural congruent to the parallel composition of the reductions $ S_1' $, \ldots, $ S_n' $. Then we use an argumentation similar to the case of the \textsc{Par} rule in the proof of Lemma \ref{lem:operationalCompletenessSepAsyn} to conclude that $ \exists T' \in \piAsynProc $ such that $ T \steps T' $ and $ T_i' \transBarbCongSepAsyn \EncodingSepAsyn{S_i'} $ for all $ i $ with $ 1 \leq i \leq n $ implies $ T' \transBarbCongSepAsyn \EncodingSepAsyn{S'} $.
	
	Finally, let us consider an arbitrary sequence $ \EncodingSepAsyn{S} \steps T $, which may even include \nonAdmin steps. Note that, in opposite two the cases before, this sequence covers the case there already some \simulations are completed while for other \simulations\!\!|possibly of parallel source term steps|up to now only test-statements are unguarded and/or partially reduced. However, revisiting the argumentation of the case before, $ \exists T' \in \piAsynProc $ such that $ T \steps T' $ and $ T' $ is the result from resolving all unguarded and partially reduced test-statements in $ T $ and performing all post-precessing steps necessary to complete all \simulations in $ \EncodingSepAsyn{S} \steps T \steps T' $, that are already driven beyond their point of no return by the respective \nonAdmin step. By concentrating again only on the \impure \admin and \nonAdmin steps|and ignoring \impure \admin steps that do not lead to a completed \simulation in $ \EncodingSepAsyn{S} \steps T' $|we can split up this sequence into subsequences of subsequent bundles of completed \simulations, such that each such bundle can not be further subdivided into such bundles. Then, if necessary, we split up these bundles into the parallel branches of the corresponding source terms as in the case before. Repeating this, we result in a tree of completed \simulation of subsequently and parallel source term steps, where for each parallel source term step there is a branch in the respective subtree. Now each line between a parent and its direct child node within this tree represents one of the above considered cases. So we can conclude the proof by an induction over this tree. Thus, $ \exists S' \in \piSepProc $ such that $ S \steps S' $ and $ T \transBarbCongSepAsyn \EncodingSepAsyn{S'} $.
	
	\vspace*{1em}
	\noindent The argumentation for $ \encodingMixAsyn $ is similar to the argumentation for $ \encodingSepAsyn $ above. Of course, the Lemmata \ref{lem:pureAdminStepsTransBarbBisimSepAsyn} and \ref{lem:operationalCompletenessSepAsyn} have to be replaced in the argumentation by \ref{lem:pureAdminStepsTransBarbBisimMixAsyn} and \ref{lem:operationalCompletenessMixAsyn}, respectively. Moreover note, that the combination of the Lemmata \ref{lem:pureAdminStepsNoDeadlock}, \ref{lem:impureAdminStepsNoDeadlock}, and \ref{lem:simulationVSNonAdminStep} proves that $ \encodingMixAsyn $ does not introduce deadlock. Finally, in $ \encodingMixAsyn $ there are no steps on translated source term names, so there are less \impure \admin steps.
	\qed
\end{proof}

\begin{lemma}[Divergence Reflection] \label{lem:divergenceReflectionSepAsyn}
	The encoding $ \encodingSepAsyn $ reflects divergence.
\end{lemma}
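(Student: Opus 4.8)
The plan is to prove divergence reflection by \emph{counting} the steps that genuinely advance a simulation. By Definition~\ref{def:divergenceReflection} I must show that $\EncodingSepAsyn{S}\step^{\omega}$ implies $S\step^{\omega}$, and the whole argument rests on the classification of target steps into \admin and \nonAdmin steps together with Lemma~\ref{lem:simulationVSNonAdminStep}, which ties every \nonAdmin step to exactly one \simulated source term step. First I would observe that if an infinite target computation $\EncodingSepAsyn{S}\step^{\omega}$ contains infinitely many \nonAdmin steps, then, since by Lemma~\ref{lem:simulationVSNonAdminStep} each \nonAdmin step is the point of no return of the \simulation of a single source term step, the source term itself performs infinitely many steps; an appeal to operational soundness (Lemma~\ref{lem:operationalSoundness}) to reconstruct the corresponding source reductions then yields $S\step^{\omega}$.

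The real work is therefore to rule out the remaining case, namely an infinite computation containing only finitely many \nonAdmin steps. Such a computation would end in an infinite suffix $T\adminStep T_1 \adminStep T_2 \adminStep \cdots$ consisting solely of \admin steps, and I would derive a contradiction by exhibiting a well-founded measure on (pieces of) target terms that strictly decreases along every \admin step in this suffix. The key structural input is Lemma~\ref{lem:nonAdminStepContinuation}: only \nonAdmin steps unguard the encoding of a continuation, so along an \admin-only suffix no fresh copy of an encoded guarded term---and in particular no new output on a translated source term channel, no new token $\Out{\receiverLock}$, and no new test-statement---is ever produced. Consequently the administrative ``resources'' present in $T$ can only be consumed, never replenished. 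I would take the measure to be, lexicographically, the number of unguarded outputs on translated channels, then the number of unresolved (unguarded but not yet completed) test-statements, then the number of available tokens $\Out{\receiverLock}$ and pending reductions on sender locks.

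The main obstacle will be showing that the replicated inputs---on receiver locks $\ReplicateIn{\receiverLock}$ and in the encoding $\EncodingSepAsyn{\ReplicateInput{y}{x}.P}$---cannot sustain an infinite \admin loop, since a replicated input is the only construct that could a priori fire unboundedly. Here I would analyse the nested test-statement of $\EncodingSepAsyn{\Input{y}{x}.P}$ branch by branch: firing the receiver lock consumes one output on $\RenamingPolicySepAsyn{y}$, and in the branch that restores the token $\Out{\receiverLock}$ (the inner else-branch, reached when the partner's sum lock is already negative) the consumed output on $\RenamingPolicySepAsyn{y}$ is \emph{not} restored, whereas the only branch that restores that output (the outer else-branch) does not restore $\Out{\receiverLock}$; the remaining branch consumes a second positive sum lock instantiation and is by definition a \nonAdmin step. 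Using Lemma~\ref{lem:changeInstantiationSumLock} (negative instantiations never become positive again) the same reasoning shows that the else-cases of the $\tau$- and replicated-input test-statements merely restore negative instantiations while permanently consuming their triggering output. Hence every \admin step in the suffix strictly decreases the measure above, contradicting its infinite length; Lemma~\ref{lem:nonConflictingStepsSepAsyn} can additionally be invoked to normalise the order of \pure \admin steps so that the bookkeeping stays clean. This contradiction forces infinitely many \nonAdmin steps, completing the proof.
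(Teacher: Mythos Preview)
Your high-level strategy matches the paper's exactly: invoke Lemma~\ref{lem:simulationVSNonAdminStep} so that infinitely many \nonAdmin steps force $S\step^{\omega}$, and then argue that any \admin-only segment is finite by resource counting, with Lemma~\ref{lem:nonAdminStepContinuation} guaranteeing that no new encoded continuations---hence no fresh supply of translated-channel outputs, receiver-lock tokens, or test-statements---appear along such a segment. Your branch-by-branch analysis of the nested test-statement is also correct and is precisely the observation the paper relies on.

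The difference is only in how the finiteness argument is packaged. The paper does not build a measure that decreases at every step; it performs a case split on the channel of the step ($\senderLock$; translated source-term names, output and input side; $\receiverLock$; $\sumLock,\sumLock'$; $t,f$) and bounds the \emph{total} number of steps of each kind in the segment, with each bound feeding into the next. Your lexicographic measure, with the components in the order you propose, does not strictly decrease at every individual \admin micro-step: when the outer else-branch of the nested test is unguarded (the $t/f$ step after testing a negative $\sumLock$), the output $\Output{\RenamingPolicySepAsyn{y}}{\sumLock',\senderLock,\RenamingPolicySepAsyn{x}}$ is restored, so your first component goes up while only the second goes down---a lexicographic increase. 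The underlying counting is sound (each full cycle through the test permanently consumes either a receiver token or a translated-channel output, exactly as you argue), but turning it into a per-step decreasing measure would require a different ordering or weighting of the components. The paper's per-channel absolute bounds sidestep this bookkeeping entirely while using the very same branch analysis you give.
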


\begin{proof}
	By Lemma \ref{lem:simulationVSNonAdminStep}, \simulations and \nonAdmin steps corresponds. Because of that, an infinite sequence $ \EncodingSepAsyn{S} \step^{\omega} $ with infinite many \nonAdmin steps implies that infinitely many source term steps are \simulated, i.e., that $ S \step^{\omega} $. Because of that, it suffice to show that each sequence $ T \adminSteps T' $ on target terms $ T, T' \in \targetTermsSepAsyn $ between two \nonAdmin steps is finite. The argumentation for the sequence of \admin steps $ \EncodingSepAsyn{S} \adminSteps T' $ before the first \nonAdmin step and for the sequence of \admin steps after the last \nonAdmin step|in case of a terminating process|is then similar.
	
	A look at the definition of the corresponding renaming policy in Figure \ref{fig:encodingSepAsyn} suggests the following case split\footnote{Note that in most cases the considered names are restricted, so a simple alpha conversion may change them. Because of that, as already in the proof of Lemma \ref{lem:nonConflictingStepsSepAsyn}, the use of concrete names in the following case split should not imply that we consider steps on these specific names. Instead the names refer to the meaning which is related to them by the encoding function.} on the links of the steps in $ T \adminSteps T' $.
	\begin{description}
		\item[Case of $ \senderLock $:] The name $ \senderLock $ is used by the encoding function to denote sender locks. By Definition \ref{def:senderLock}, sender locks are channels of multiplicity one, that are used as input but not as replicated input channels and within the continuation of such an input there is no unguarded and unrestricted instantiation of a sum lock. Their purpose is to guard the encoded continuations within the encoding of output guarded source terms. Since we consider only terms with finite representation, there are only finitely many unguarded instantiations on sender locks in $ T $. Analysing the encoding function in Figure \ref{fig:encodingMixAsyn} we observe, that new instantiations on sender locks appear only in the then-case of the (nested) test-statement in the encoding of input guarded source terms and in the encoding of a replicated input. Because of that, new instantiations on sender locks can be unguarded only by a \nonAdmin step followed by a \pure \admin step to reduce the test-statement to the then-case. Some of these \nonAdmin steps may have happened before $ T $, but of course only finitely many. So within the sequence $ T \adminSteps T' $ only finitely many new instantiations on sender locks can be unguarded. We conclude that within the sequence $ T \adminSteps T' $ there are only finitely many steps on sender locks.
			
			Note that the encoded continuation of input guarded source terms or a replicated input appears in parallel to such an instantiation of a sender lock within the then-case of a (nested) test-statement. Because of that, within the sequence $ T \adminSteps T' $ only finitely many encoded source terms, i.e., continuations, can be unguarded.
		\item[Case of Outputs on Translated Source Term Names:] Since we consider only terms with finite representations, there are only finitely many unguarded outputs on translated source term names in $ T $. Analysing the encoding function in Figure \ref{fig:encodingSepAsyn} we observer that there are only two ways to unguard a new output on a translated source term name. First, they can be unguarded by unguarding an encoded source term as continuation of an \simulation. By the case before, within the sequence $ T \adminSteps T' $ only finitely many encoded source terms can be unguarded, so there can only finitely many new outputs on translated source term names can be unguarded that way. Second, in the else-case of the first test-statement in the encoding of an input guarded source term there is such an unguarded output. To unguard this test-statement an instantiation of a receiver lock is consumed, but no new is unguarded by the else-case of the first test-statement. Analysing the encoding function we observer that all receiver locks are generated under restriction and initially there is exactly one instantiation of each receiver lock. Moreover, receiver locks are never transmitted over these restriction and to unguard a new instantiation (in the else-case of the second test-statement) a former instantiation of that receiver lock has to be consumed. Because of that, in each target term there is at most one instantiation of each receiver lock. Thus, reducing the nested test-statement to the else-case of the first test-statements, prevents any further unguarding of this test-statement. Since in $ T $ there are only finitely many different guarded or unguarded replicated inputs on receiver locks guarding such a nested test-statement, it is not possible to unguard initially many new outputs on translated source term names that way. We conclude that within the sequence $ T \adminSteps T' $ only finitely many outputs on translated source term names can be unguarded.
		\item[Case of $ \receiverLock $:] The name $ \receiverLock $ is used by $ \encodingSepAsyn $ to introduce receiver locks. Since we consider only terms with finite representations, there are only finitely many unguarded instantiations on receiver locks in $ T $. Analysing the encoding in Figure \ref{fig:encodingSepAsyn} we observer that there are only two ways to unguard new instantiations on receiver locks. First, they can be unguarded by unguarding an encoded source term as continuation of an \simulation. By the first case, within the sequence $ T \adminSteps T' $ only finitely many encoded source terms can be unguarded, so there can only finitely many new instantiations on receiver locks be unguarded that way. Second, in the else-case of the second test-statement in the encoding of an input guarded source term there is such an unguarded instantiation on a receiver lock. However, to unguard such a nested test-statement an output on a translated source term name has to be consumed. Since by the case before there are only finitely many such outputs, there is no possibility to unguard infinity many instantiations on receiver lock this way. We conclude that within the sequence $ T \adminSteps T' $ there are only finitely many steps on receiver locks.
			
			Note that in $ T $ there are only finitely many unguarded test-statements and new test-statements can only be unguarded by unguarding an encoded source term or by a step on a receiver lock. Thus, since by the first two cases only finitely many encoded source terms can be unguarded and by the current case there are only finitely many steps on receiver locks, within the sequence $ T \adminSteps T' $ only finitely many new test-statements can be unguarded.
		\item[Case of Inputs on Translated Source Term Names:] Since we consider only terms with finite representations, there are only finitely many unguarded inputs on translated source term names in $ T $. New inputs on source terms names can only be unguarded by a step on a receiver lock. Since by the case before the number of steps on receiver locks is finite, within the sequence $ T \adminSteps T' $ only finitely many inputs on translated source term names can be unguarded. Moreover, since by the second case the number of outputs on translated source term names is finite as well, there are only finitely many steps on translated source term names within the sequence $ T \adminSteps T' $.
		\item[Case of $ \sumLock, \sumLock' $:] Both names are used by the encoding function to refer to sum locks. The only steps on sum locks are to reduce a test-statement. Note that in case of a nested test-statement two sum locks are reduced, i.e., there are two steps on sum locks. However, since by the third case there are only finitely many test-statements, within the sequence $ T \adminSteps T' $ there are only finitely many steps on sum locks.
		\item[Case of $ t, f $:] $ t $ and $ f $ are used by the encoding function to implement booleans. As in the case of sum locks, all steps on these names are used to reduce a test-statement. So again, since there are only finitely many test-statements, within the sequence $ T \adminSteps T' $ there are only finitely many steps on $ t $ and $ f $.
	\end{description}
	\qed
\end{proof}

\begin{lemma}[Divergence Reflection] \label{lem:divergenceReflectionMixAsyn}
	The encoding $ \encodingMixAsyn $ reflects divergence.
\end{lemma}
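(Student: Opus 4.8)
The plan is to reuse the architecture of the proof of Lemma~\ref{lem:divergenceReflectionSepAsyn} and reduce divergence reflection to a termination property of \admin steps. By Lemma~\ref{lem:simulationVSNonAdminStep} every \simulation of a source term step contains exactly one \nonAdmin step and every \nonAdmin step drives one \simulation past its point of no return, so an infinite target computation with infinitely many \nonAdmin steps corresponds to infinitely many \simulated source term steps and hence witnesses $S \step^{\omega}$. Thus it suffices to prove that $\EncodingMixAsyn{S}$ admits no infinite sequence of \admin steps; equivalently, that every maximal \admin-step segment $T \adminSteps T'$ lying between two \nonAdmin steps (as well as the initial and final segments) is finite. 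For then an infinite computation of $\EncodingMixAsyn{S}$ must contain infinitely many \nonAdmin steps, giving $S \step^{\omega}$. The structural fact that makes this tractable is Lemma~\ref{lem:nonAdminStepContinuation}: the unguarding of an encoded continuation can only be \emph{led to} by a \nonAdmin step. Since the triggers for such unguardings (instantiations of sender locks, outputs on the chain lock $\coordinatorRepA$, and the then-cases of induced test-statements) present in $T$ are finite and no \admin step creates a new one, only finitely many encoded continuations become unguarded along $T \adminSteps T'$. Consequently the supply of \emph{initial} requests, sender locks, receiver locks, chain members and test-statements relevant to this segment is finite.

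I would then perform a case split on the subject of each step along the reserved names of $\encodingMixAsyn$. The genuinely new difficulty, absent in $\encodingSepAsyn$, is the request-forwarding machinery, so the main obstacle is the case of the \textbf{request channels} ($\parallelChannelIn,\parallelChannelOut,\coordinatorUpIn,\coordinatorUpOut,\matchingUpIn,\matchingUpOut,\matchingReceiverIn,\matchingReceiverOut,\matchingReceiverUpIn,\matchingReceiverUpOut$) and \textbf{chain locks} ($\coordinatorMatchingOut,\coordinatorMatchingIn,\coordinatorRepA,\coordinatorRepB$). Here I would use Lemma~\ref{lem:encodingMixAsynPreserveRequests} and Corollary~\ref{col:originRequests}: every request originates from one of the finitely many relevant encoded guarded terms, each request channel is restricted with exactly one (replicated) input, and forwarders only relay requests along a fixed acyclic route---upwards through the finite nesting of parallel-operator encodings and right-to-left within the chains. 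Since this structure does not grow during \admin steps and the chains of right requests are extended at most once per right request, each initial request produces only boundedly many copies; hence only finitely many forwarding and chain-extension steps occur. Lemma~\ref{lem:nonConflictingStepsMixAsyn} and Lemma~\ref{lem:pureAdminStepsNoDeadlock} confirm that this processing merely reorders and copies requests within the chains rather than regenerating them.

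The remaining cases I would settle by the $\encodingSepAsyn$ argument, now anchored on the bounded number of request combinations. By the design of the right-request processors, each matching input/output request pair is combined at most once and emits a single output on a \textbf{receiver lock}; as both the stored and the passing requests are finite in number, only finitely many receiver-lock outputs, and hence finitely many \emph{induced} test-statements, arise. Each induced test consumes (and, in the negative case, restores) instantiations of the \textbf{sum locks} $\sumLock,\sumLock_s,\sumLock_r,\sumLock_1,\sumLock_2$ without re-emitting its triggering receiver-lock output, so---using Lemma~\ref{lem:changeInstantiationSumLock} to rule out revival of a consumed positive instantiation and Lemma~\ref{lem:impureAdminStepsNoDeadlock} for resolvability---there are only finitely many steps on sum locks and on the booleans $t,f$. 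New instantiations of \textbf{sender locks} $\senderLock$ appear only in then-cases, which require a \nonAdmin step, so only finitely many are reduced in the segment; and the value names $y,y',z$ carry no steps at all. Summing the finitely many steps of every kind yields the finiteness of $T \adminSteps T'$. The step I expect to demand the most care is precisely the request-channel case: turning the informal ``acyclic route, bounded copies'' observation into an explicit numeric bound on the number of forwarding steps as a function of the finite request multiset, so that no subtle interplay between chain extension and upward pushing can reintroduce an unbounded cascade.
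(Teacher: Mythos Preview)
Your proposal is correct and follows essentially the same approach as the paper: reduce to finiteness of \admin segments via Lemma~\ref{lem:simulationVSNonAdminStep}, then perform a case split on the reserved names, with the key new case being the request-forwarding machinery handled by the acyclic tree/chain structure argument. The paper's proof organizes the case split slightly differently (treating $\coordinatorRepA,\coordinatorRepB$ separately before the request channels, and separating $\coordinatorMatchingOut,\coordinatorMatchingIn$ into their own case) and does not explicitly invoke Lemmata~\ref{lem:nonConflictingStepsMixAsyn}, \ref{lem:pureAdminStepsNoDeadlock}, \ref{lem:changeInstantiationSumLock}, or \ref{lem:impureAdminStepsNoDeadlock}, but the substance is the same; note also that $\matchingCoordinatorOut,\matchingCoordinatorIn$ are request channels you omitted from your list.
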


\begin{proof}
	By Lemma \ref{lem:simulationVSNonAdminStep}, \simulations and \nonAdmin steps corresponds. Because of that, an infinite sequence of target term steps $ \EncodingMixAsyn{S} \step^{\omega} $ with infinite many \nonAdmin steps implies that infinitely many source term steps are \simulated, i.e., that $ S \step^{\omega} $. Because of that, it suffice to show that each sequence $ T \adminSteps T' $ on target terms $ T, T' \in \targetTermsMixAsyn $ between two \nonAdmin steps is finite. The argumentation for the sequence of \admin steps $ \EncodingMixAsyn{S} \adminSteps T' $ before the first \nonAdmin step and for the sequence of \admin steps after the last \nonAdmin step|in case of a terminating process|is then similar.
	
	Since source term names are translated into values, never used as links, it suffice to consider steps on names introduced by the encoding function. A look at the definition of the corresponding renaming policy in Figure \ref{fig:encodingMixAsyn} suggests the following case split\footnote{Note that in most cases the considered names are restricted, so a simple alpha conversion may change them. Because of that, as already in the proof of Lemma \ref{lem:nonConflictingStepsMixAsyn}, the use of concrete names in the following case split should not imply that we consider steps on these specific names. Instead the names refer to the meaning which is related to them by the encoding function.} on the links of the steps in $ T \adminSteps T' $.
	\begin{description}
		\item[Case of $ \senderLock $:] The name $ \senderLock $ is used by the encoding function to denote sender locks. By Definition \ref{def:senderLock}, sender locks are carried as third value of output requests. Their purpose is to guard the encoded continuations within the encoding of output guarded source terms. Since we consider only terms with finite representation, there are only finitely many unguarded instantiations on sender locks in $ T $. Analysing the encoding function in Figure \ref{fig:encodingMixAsyn} we observe, that new instantiations on sender locks appear only in the then-case of the (nested) test-statement in the encoding of input guarded source terms and in the encoding of a replicated input. Because of that, new instantiations on sender locks can be unguarded only by a \nonAdmin step followed by a \pure \admin step to reduce the test-statement to the then-case. Some of these \nonAdmin steps may have happened before $ T $, but of course only finitely many. So within the sequence $ T \adminSteps T' $ only finitely many new instantiations on sender locks can be unguarded. We conclude that within the sequence $ T \adminSteps T' $ there are only finitely many steps on sender locks.
			
			Note that the encoded continuation of input guarded source terms appears in parallel to such an instantiation of a sender lock within the then-case of a nested test-statement. Because of that, within the sequence $ T \adminSteps T' $ only finitely many encoded continuations of encodings of input or output guarded source terms can be unguarded.
		\item[Case of $ \coordinatorRepA, \coordinatorRepB $:] These two names are used to link the encoded continuations of several reductions of the same replicated input within a chain. Note that the first name denotes a chain lock carrying a translated source term name (compare to Definition \ref{def:chainLock}). Again, since we consider only terms with finite representations, there are only finitely many instances, i.e., outputs, on these names. For each step on the chain lock a new output on an instance of $ \coordinatorRepB $ is unguarded. So the number of steps on instances of $ \coordinatorRepB $  is bounded by the number of steps on chain lock carrying a translated source term name. New instantiations of such chain locks are unguarded as instantiations of sender locks only in the then-case of the test-statement in the encoding of a replicated input. Because of that, new instantiations on such chain locks can be unguarded only by a \nonAdmin step followed by a \pure \admin step to reduce the test-statement to the then-case. Some of these \nonAdmin steps may have happened before $ T $, but of course only finitely many. So within the sequence $ T \adminSteps T' $ only finitely many new instantiations on such chain locks can be unguarded. We conclude that within the sequence $ T \adminSteps T' $ there are only finitely many steps on such chain locks and instances of $ \coordinatorRepB $.
			
			Because of that, within the sequence $ T \adminSteps T' $ only finitely many encoded continuations of encodings of replicated inputs can be unguarded.
		\item[Case of $ \parallelChannelOut, \parallelChannelIn, \coordinatorUpOut, \coordinatorUpIn, \matchingCoordinatorOut, \matchingCoordinatorIn, \matchingUpOut, \matchingUpIn, \matchingReceiverIn, \matchingReceiverOut, \matchingReceiverUpIn, \matchingReceiverUpOut $:] All these names are request channels, i.e., there are introduced by $ \encodingMixAsyn $ to transport requests. Since we consider only terms with finite representations, there are only finitely many unguarded requests in $ T $. Moreover, by the two cases above, only finitely new requests can be unguarded within the sequence $ T \adminSteps T' $ by unguarding source term encodings. Thus there are only finitely many different unguarded requests within the sequence $ T \adminSteps T' $. We have already argued that the encoding $ \encodingMixAsyn $ puts much effort in restricting the way a request may take. We underpin this fact here, by proving that these ways do not include cycles. Note that inputs on requests channels do only appear within the encoding of the parallel operator or a replicated input (compare to Figure \ref{fig:encodingMixAsyn}).
			
			Let us consider the encoding of a parallel operator first. The terms \processLeftOutputRequests \ and \processLeftInputRequests \ transmit all requests of the left side of a parallel operator encoding to the right side and upwards over the restriction on the request channels. In \processRightOutputRequests \ and \processRightInputRequests \ the requests of the right side of the parallel operator encoding are linked in two chains, one for input and one for output requests. Therefore the requests of the right side are consumed, but a single copy of them is pushed upwards over the restriction on the request channels again. Then all left requests can be received by the respective first member of the chain of requests of opposite kind to the left request. Within these first members all left requests are consumed to combine them with the respective request behind this member and a copy of this left requests is further pushed along the chain. Now each member of that chain subsequently receives the left requests from its predecessor and sends a copy to its successor. Since the chains have no cycles and can not be infinitely long, each left request is only finitely often transmitted within each of these chains. In \pushRequests for each left and right requests, that was pushed over the restriction on request channels at the left and right side of a parallel operator, a single copy either under the restriction of a surrounding parallel operator encoding or on free source term names is generated. Note that the requests on free source term names can never be consumed. Moreover, for each parallel operator encoding there is a parallel operator in the corresponding source term. Thus the parallel operator encodings cause the structure of a finite binary tree, i.e., there are only finitely many instances of the terms \processLeftOutputRequests, \processLeftInputRequests, \processRightOutputRequests, \processRightInputRequests, and \pushRequests. Note that because of the two first cases only finitely many new encodings of parallel operators can be unguarded. Because of that and since a tree is free of cycles, within the sequence $ T \adminSteps T' $ there are only finitely many steps on requests, that are induced by a parallel operator encoding.
			
			In the last case we prove, that only within $ T \adminSteps T' $ only finitely many encoded continuations of encodings of replicated inputs can be unguarded. Because of that, there are only finitely many instances of \encodedContinuation. Within each of these instances all requests originate from the respective replicated input or a more left instance of \encodedContinuation of the same replicated input encoding are copied twice by \pushRequestsIn; one copy is generated under a restriction of this instance of \encodedContinuation \ to be combined with the requests within this branch and one copy is prepared to be transmitted to the respective next right branch of that encoded replicated input. All these first copies are proceed by \processRightOutputRequests \ and \processRightInputRequests \ as a request of the left side of a parallel operator encoding, while each requests of the respective encoded continuation is proceed by these terms as a right request. For each such right request in \pushRequestsOut a single copy is generated and the restriction of the surrounding parallel operator encoding. Moreover for each such left and each such right request a single copy is generated under the binding of the \encodedContinuation, that is the next one in the chain. Since again there are no cycles, there are only finitely many steps on requests, that are induced by a \encodedContinuation. We conclude, that within the sequence $ T \adminSteps T' $ there are only finitely many steps on requests.
		\item[Case of $ \coordinatorMatchingOut, \coordinatorMatchingIn $:] These names implement chain locks carrying a request channel. They are used to allow for a new request at the right side of a parallel operator encoding to be linked in the corresponding chain. By the third case there are only finitely many such right requests and chains, so within the sequence $ T \adminSteps T' $ there are only finitely many steps on such chain locks.
		\item[Case of $ \receiverLock $:] The name $ \receiverLock $ is used by $ \encodingMixAsyn $ to introduce receiver locks. Since we consider only terms with finite representations, there are only finitely many unguarded instantiations on receiver locks in $ T $. Anything the encoding in Figure \ref{fig:encodingMixAsyn} we observer that new instantiations on receiver locks are only unguarded due to a a matching pair of requests. By the third case, there are only finitely many different requests and steps on request channels within the sequence $ T \adminSteps T' $. Moreover we observe that no request is ever combined with itself and no pair is combined twice. Because of that, within the sequence $ T \adminSteps T' $ only finitely many new instantiations on receiver locks can be unguarded, i.e., there are only finitely many steps on receiver locks.
			
			Note that in $ T $ there are only finitely many unguarded test-statements and new test-statements can only be unguarded by unguarding an encoded source term or by a step on a receiver lock. Thus, since by the first two cases only finitely many encoded source terms can be unguarded and by the current case there are only finitely many steps on receiver locks, within the sequence $ T \adminSteps T' $ only finitely many new test-statements can be unguarded.
		\item[Case of $ \sumLock, \sumLock_s, \sumLock_r, \sumLock_1, \sumLock_2 $:] All these names are used by the encoding function to refer to sum locks. The only steps on sum locks are to reduce a test-statement. Note that in case of a nested test-statement two sum locks are reduced, i.e., there are two steps on sum locks. However, since by the case before there are only finitely many test-statements, within the sequence $ T \adminSteps T' $ there are only finitely many steps on sum locks.
		\item[Case of $ t, f $:] $ t $ and $ f $ are used by the encoding function to implement booleans. As in the case of sum locks, all steps on these names are used to reduce a test-statement. So again, since there are only finitely many test-statements, within the sequence $ T \adminSteps T' $ there are only finitely many steps on $ t $ and $ f $.
		\item[Case of $ y, y', z $:] These names are used by the encoding function as values only, but never as links. So there are no (\admin) steps on these names.
	\end{description}
	\qed
\end{proof}

\begin{lemma}[Success Sensitiveness] \label{lem:successSensitiveness}
	The encodings $ \encodingSepAsyn $ and $ \encodingMixAsyn $ are success sensitive.
\end{lemma}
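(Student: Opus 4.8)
The plan is to prove, for both $ \arbitraryEncoding \in \Set{ \encodingSepAsyn, \encodingMixAsyn } $, the biconditional $ S\reachSuccess $ iff $ \ArbitraryEncoding{S}\reachSuccess $ by combining operational correspondence with the fact that the chosen equivalence $ \asymp $ (the respective translated barbed congruences, included in the corresponding bisimulations via the empty context) is success respecting in the sense of Definition \ref{def:successRespecting} — concretely, the first clause of Definition \ref{def:transBarbBisim} directly yields $ P\reachSuccess $ iff $ Q\reachSuccess $ whenever $ P \asymp Q $. The cornerstone is an auxiliary claim on \emph{immediate} success: a term has a top-level unguarded occurrence of $ \success $ (is structurally congruent to something of shape $ P'' \mid \success $) on the source side exactly when its encoding does. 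This holds because $ \EncodingSepAsyn{\success} = \success = \EncodingMixAsyn{\success} $ and, since $ \Names{\success} = \emptyset $, the rule $ P \mid \RestrictedTerm{n}{Q} \equiv \RestrictedTerm{n}{\left( P \mid Q \right)} $ lets an unguarded $ \success $ float to the top past every restriction the encoding introduces — in particular past the coordinator context of $ \EncodingMixAsyn{P \mid Q} $, which itself never mentions $ \success $. Conversely, a $ \success $ sitting below a guard or as a summand in $ S $ is mapped into a guarded subterm, so the encoding creates no spurious top-level $ \success $.

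For the forward direction I would assume $ S\reachSuccess $, i.e.\ $ S \steps S' $ with $ S' \equiv S'' \mid \success $. Operational completeness (Lemma \ref{lem:operationalCompletenessSepAsyn}, resp.\ Lemma \ref{lem:operationalCompletenessMixAsyn}) yields a $ T $ with $ \ArbitraryEncoding{S} \steps T $ and $ T \asymp \ArbitraryEncoding{S'} $. By the immediate-success claim $ \ArbitraryEncoding{S'} $ has a top-level unguarded $ \success $, so $ \ArbitraryEncoding{S'}\reachSuccess $; success respecting then gives $ T\reachSuccess $, and chaining with $ \ArbitraryEncoding{S} \steps T $ yields $ \ArbitraryEncoding{S}\reachSuccess $. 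For $ \encodingMixAsyn $ the match of $ S' \equiv S'' \mid \success $ only passes through $ \asymp $, so I would additionally invoke Lemma \ref{lem:preservesSCModuloTransBarbBisimMixAsyn} to move from $ \ArbitraryEncoding{S'} $ to an encoding that manifestly exhibits the floated top-level $ \success $.

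The converse $ \ArbitraryEncoding{S}\reachSuccess \Rightarrow S\reachSuccess $ is the delicate part, and I would phrase it contrapositively. Assume $ S\not\reachSuccess $, so no reduct of $ S $ has a top-level $ \success $, and suppose $ \ArbitraryEncoding{S} \steps T $ with $ T $ exhibiting a top-level unguarded $ \success $. If no \nonAdmin step occurs in this sequence, then by Lemma \ref{lem:nonAdminStepContinuation} no encoded continuation is unguarded, so (the \admin steps, by Lemmata \ref{lem:pureAdminStepsTransBarbBisimSepAsyn}/\ref{lem:pureAdminStepsTransBarbBisimMixAsyn}, not disturbing the state modulo $ \asymp $) the observed $ \success $ was already present in $ \ArbitraryEncoding{S} $, whence $ S $ has a top-level $ \success $ by the immediate-success claim, a contradiction. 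Otherwise the $ \success $ became unguarded inside some encoded continuation; by Lemma \ref{lem:nonAdminStepContinuation} this required a \nonAdmin step, which by Lemma \ref{lem:simulationVSNonAdminStep} completes the \simulation of a genuine source step, and operational soundness (Lemma \ref{lem:operationalSoundness}) identifies a reduct $ S' $ with $ S \steps S' $ whose encoding exposes exactly this continuation; hence the corresponding source $ \success $ is unguarded in $ S' $, contradicting $ S\not\reachSuccess $. Formally this is an induction on the number of \nonAdmin steps needed to first expose the $ \success $.

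The main obstacle is precisely this reverse direction: pulling $ \ArbitraryEncoding{S}\reachSuccess $ back naively through soundness only re-derives $ \ArbitraryEncoding{S'}\reachSuccess $ for a reduct $ S' $, which is circular. Two facts break the circle and need explicit verification. First, $ \success $ is \emph{persistent}: being inert with $ \Names{\success} = \emptyset $, it can never be consumed or re-guarded, so once a success barb is present it survives along every matched reduction — this is what lets soundness transport it. Second, a \emph{fresh} top-level $ \success $ in the target can only arise by unguarding an encoded continuation, which by Lemma \ref{lem:nonAdminStepContinuation} costs a \nonAdmin step, giving the well-founded measure (number of \nonAdmin steps, equivalently source steps via Lemma \ref{lem:simulationVSNonAdminStep}) on which the induction rests. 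I expect the bookkeeping for $ \encodingMixAsyn $ — matching the exposed continuation against the right branch of the parallel/replication coordinator machinery and discharging the \textsc{Cong}-rule mismatch through Lemma \ref{lem:preservesSCModuloTransBarbBisimMixAsyn} — to be where most of the detailed work lies, whereas the $ \encodingSepAsyn $ case follows the same skeleton with the simpler \clean contexts.
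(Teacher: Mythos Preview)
Your proposal is correct and follows essentially the same route as the paper: the forward direction is identical (operational completeness plus the success-respecting first clause of the bisimulation, together with the observation that $\ArbitraryEncoding{\success}=\success$ and scope extrusion float an unguarded $\success$ to top level), and for the backward direction both you and the paper rely on operational soundness together with the fact that every occurrence of $\success$ in a target term originates from $\ArbitraryEncoding{\success}=\success$. Your contrapositive phrasing with an explicit induction on \nonAdmin steps via Lemmata~\ref{lem:nonAdminStepContinuation} and~\ref{lem:simulationVSNonAdminStep} is a more careful unpacking of what the paper states informally in one sentence (``the sequence $\ArbitraryEncoding{S}\steps T'$ simulates a sequence $S\steps S'$ unguarding an occurrence of $\success$''); one small slip is that your parenthetical appeal to Lemmata~\ref{lem:pureAdminStepsTransBarbBisimSepAsyn}/\ref{lem:pureAdminStepsTransBarbBisimMixAsyn} covers only \emph{pure} \admin steps, but since your actual argument at that point rests solely on Lemma~\ref{lem:nonAdminStepContinuation} this does not affect correctness.
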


\begin{proof}
	We start with $ \encodingSepAsyn $. By Definition \ref{def:succesSensitiveness}, we have to show that $ \forall S \in \piSepProc \logdot S\reachSuccess \text{ iff } \EncodingSepAsyn{S}\reachSuccess $. Let $ S \in \piSepProc $ be an arbitrary source term. We prove both directions separately.
	\begin{description}
		\item[Case of] $ S\reachSuccess $ \textbf{implies} $ \EncodingSepAsyn{S}\reachSuccess $\textbf{:} By Definition \ref{def:success}, $ S\reachSuccess $ implies that there are some $ S', S'' \in \piSepProc $ such that $ S \steps S' $ and $ S' \equiv S'' \mid \success $. By operational completeness (Lemma \ref{lem:operationalCompletenessSepAsyn}), the sequence $ S \steps S' $ can be \simulated by $ \EncodingSepAsyn{S} $, i.e., there exists some $ T \in \targetTermsSepAsyn $ such that $ \EncodingSepAsyn{S} \steps T $ and $ T \transBarbCongSepAsyn \EncodingSepAsyn{S'} $. By Figure \ref{fig:encodingSepAsyn}, $ \EncodingSepAsyn{S'' \mid \success} = \EncodingSepAsyn{S''} \mid \success $. Thus $ \EncodingSepAsyn{S'' \mid \success}\reachSuccess $. Since by Lemma \ref{lem:preservesSCModuloTransBarbBisimSepAsyn} the encoding $ \encodingSepAsyn $ preserves structural congruence of source terms modulo $ \transBarbCongSepAsyn $, $ S' \equiv S'' \mid \success $ implies $ \EncodingSepAsyn{S'} \transBarbCongSepAsyn \EncodingSepAsyn{S'' \mid \success} $. By Definition \ref{def:transBarbCong}, then for all contexts $ \Context{}{}{\hole} \in \piAsynProc \to \piAsynProc $ such that $ \Context{}{}{\EncodingSepAsyn{\nullTerm}} \in \targetTermsSepAsyn $ we have $ \Context{}{}{\EncodingSepAsyn{S'}} \transBarbBisimSepAsyn \Context{}{}{\EncodingSepAsyn{S'' \mid \success}} $. Then, by the first condition of Definition \ref{def:transBarbBisim}, $ \Context{}{}{\EncodingSepAsyn{S'}}\reachSuccess $ iff $ \Context{}{}{\EncodingSepAsyn{S'' \mid \success}}\reachSuccess $ for all such contexts $ \context $. Thus $ \EncodingSepAsyn{S'' \mid \success}\reachSuccess $ implies $ \EncodingSepAsyn{S'}\reachSuccess $. By the same argumentation we conclude that $ \EncodingSepAsyn{S'}\reachSuccess $ implies $ T\reachSuccess $. Since $ \EncodingSepAsyn{S} \steps T $, we conclude $ \EncodingSepAsyn{S}\reachSuccess $.
		\item[Case of] $ \EncodingSepAsyn{S}\reachSuccess $ \textbf{implies} $ S\reachSuccess $\textbf{:} By Definition \ref{def:success}, $ \EncodingSepAsyn{S}\reachSuccess $ implies that either there is an unguarded occurrence of $ \success $ in $ \EncodingSepAsyn{S} $ or there are some $ T, T'' \in \piAsynProc $ such that $ \EncodingSepAsyn{S} \steps T $ and $ T \equiv T'' \mid \success $. In the first case, since the only way to obtain an occurrence of success in a target term is by $ \EncodingSepAsyn{\success} = \success $, we conclude that there is an unguarded occurrence of success in $ \EncodingSepAsyn{S} $. In the other case, by operational soundness (Lemma \ref{lem:operationalSoundness}), then there are some $ S' \in \piSepProc $ and $ T' \in \piAsynProc $ such that $ S \steps S' $, $ T \steps T' $, and $ T' \transBarbCongSepAsyn \EncodingSepAsyn{S'} $. Since $ T\reachSuccess $ and $ \success $ can not be reduced, we have $ T'\reachSuccess $. By Definition \ref{def:transBarbCong}, $ T' \transBarbCongSepAsyn \EncodingSepAsyn{S'} $ implies $ \Context{}{}{T'} \transBarbBisimSepAsyn \Context{}{}{\EncodingSepAsyn{S'}} $ for all contexts $ \Context{}{}{\hole} \in \piAsynProc \to \piAsynProc $ such that $ \Context{}{}{\EncodingSepAsyn{\nullTerm}} \in \targetTermsSepAsyn $. By the first condition of Definition \ref{def:transBarbBisim}, $ \Context{}{}{T}\reachSuccess $ iff $ \Context{}{}{\EncodingSepAsyn{S'}}\reachSuccess $ for all such contexts $ \context $. Thus $ T'\reachSuccess $ implies $ \EncodingSepAsyn{S'}\reachSuccess $. Moreover, since $ T' $ includes an unguarded occurrence of success and since all occurrences of $ \success $ in target terms are due to the encoding $ \EncodingSepAsyn{\success} = \success $, the sequence $ \EncodingSepAsyn{S} \steps T' $ \simulates a sequence $ S \steps S' $ of target terms unguarding an occurrence of $ \success $. So $ S' $ has an unguarded occurrence of $ \success $, i.e., $ S'\reachSuccess $. We conclude $ S\reachSuccess $.
	\end{description}
	The argumentation in case of $ \encodingMixAsyn $ is similar to the argumentation of $ \encodingSepAsyn $ above. For the first case note that, by Figure \ref{fig:encodingMixAsyn}, the parallel operator is not translated \cleanly but since $ \EncodingMixAsyn{\success} = \success $ and the encodings of the parameters of a parallel operator appear unguarded within the encoding of the parallel operator, the unguarded occurrence of $ \success $ in $ S'' \mid \success $ implies that there is an unguarded occurrence of $ \success $ in $ \EncodingMixAsyn{S'' \mid \success} $.
	\qed
\end{proof}

\begin{theorem}
	The encodings $ \encodingSepAsyn $ and $ \encodingMixAsyn $ are good.
\end{theorem}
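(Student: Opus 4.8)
The plan is to discharge the theorem by verifying each of Gorla's five criteria in turn, since nearly all of them have already been established by the preceding lemmata; the only genuine work left is to fix the behavioural equivalence $ \asymp $ and to check that it meets the side conditions Gorla imposes on it. I would instantiate $ \asymp $ by the translated barbed congruence: $ \transBarbCongSepAsyn $ for $ \encodingSepAsyn $ and the weaker variant $ \transBarbCongMixAsynB $ for $ \encodingMixAsyn $. The two facts to record about this choice are that $ \asymp $ is an equivalence (Lemma \ref{lem:transBarbCongIsEquivalence}) and that it is success respecting in the sense of Definition \ref{def:successRespecting}: whenever $ P \reachSuccess $ and $ Q \not\reachSuccess $ the first clause of the underlying bisimulation (Definition \ref{def:transBarbBisim}) is violated, so $ P \not\asymp Q $. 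Moreover, by construction each translated barbed congruence relates two terms only if all target-term contexts—in particular all parallel compositions with target terms—keep them bisimilar, which supplies the congruence-with-respect-to-parallel-composition property Gorla expects of $ \asymp $.

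With $ \asymp $ fixed I would collect the five criteria. Compositionality (Definition \ref{def:compositionality}) holds by inspection of Figure \ref{fig:encodingSepAsyn} and Figure \ref{fig:encodingMixAsyn}: each $ k $-ary source operator is translated by a single fixed context with $ k $ holes, and these contexts do not depend on the free names of the parameters, as already observed in Section \ref{sec:structuralCriteria}. Name invariance (Definition \ref{def:nameInvariance}) is exactly Lemma \ref{lem:nameInvarianceSepAsyn} and Lemma \ref{lem:nameInvarianceMixAsyn}; note that these lemmata actually establish the stronger statement $ \EncodingSepAsyn{\sigma\left( S \right)} \equivAlpha \sigma'\left( \EncodingSepAsyn{S} \right) $ (and its analogue for $ \encodingMixAsyn $, see also Corollary \ref{col:encodingSubstitutions}) for every $ \sigma $, so both the injective case ($ \equivAlpha $) and the non-injective case ($ \asymp $) of Definition \ref{def:nameInvariance} follow—the latter because, by Lemma \ref{lem:transBarbCongIncludesSC}, $ \asymp $ includes structural congruence and hence alpha conversion.

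Operational correspondence (Definition \ref{def:operationalCorrespondence}) is assembled from completeness and soundness. Completeness is Lemma \ref{lem:operationalCompletenessSepAsyn} and Lemma \ref{lem:operationalCompletenessMixAsyn}, lifted from single steps to $ \steps $ by the induction on step count indicated there; soundness is Lemma \ref{lem:operationalSoundness}. The one point needing care is the coexistence of the two variants $ \transBarbCongMixAsynA $ and $ \transBarbCongMixAsynB $: completeness for $ \encodingMixAsyn $ is proved only modulo the weaker $ \transBarbCongMixAsynB $ (to absorb the \textsc{Cong} rule and the reachable intermediate states, cf.\ Lemma \ref{lem:preservesSCModuloTransBarbBisimMixAsyn}), whereas soundness yields the stronger $ \transBarbCongMixAsynA $. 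Since $ \transBarbCongMixAsynA $ implies $ \transBarbCongMixAsynB $, taking $ \asymp \deff \transBarbCongMixAsynB $ makes both halves of operational correspondence hold simultaneously, which is why this is the correct global choice for the mixed-choice encoding. Divergence reflection (Definition \ref{def:divergenceReflection}) is Lemma \ref{lem:divergenceReflectionSepAsyn} and Lemma \ref{lem:divergenceReflectionMixAsyn}, and success sensitiveness (Definition \ref{def:succesSensitiveness}) is Lemma \ref{lem:successSensitiveness}.

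The main obstacle is not any single calculation but the bookkeeping around $ \asymp $: one must confirm that a single success-respecting equivalence serves all three semantic criteria at once, and in particular that the weakening to $ \transBarbCongMixAsynB $ needed for completeness does not break soundness or success sensitiveness. Since the latter two were proved with the stricter relation and the strict relation refines the weak one, this reconciliation goes through; the remaining criteria are structural and follow by direct inspection or citation of the earlier lemmata. Hence both $ \encodingSepAsyn $ and $ \encodingMixAsyn $ satisfy all five criteria and are good. \qed
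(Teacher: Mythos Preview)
Your proposal is correct and takes essentially the same approach as the paper: both discharge the theorem by citing the earlier lemmata for each of Gorla's five criteria (Figures~\ref{fig:encodingSepAsyn}--\ref{fig:encodingMixAsyn} for compositionality, Lemmata~\ref{lem:nameInvarianceSepAsyn}--\ref{lem:nameInvarianceMixAsyn} for name invariance, Lemmata~\ref{lem:operationalCompletenessSepAsyn}, \ref{lem:operationalCompletenessMixAsyn}, \ref{lem:operationalSoundness} for operational correspondence, Lemmata~\ref{lem:divergenceReflectionSepAsyn}--\ref{lem:divergenceReflectionMixAsyn} for divergence reflection, and Lemma~\ref{lem:successSensitiveness} for success sensitiveness). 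You add explicit bookkeeping the paper leaves implicit---fixing $ \asymp $ as $ \transBarbCongMixAsynB $ for the mixed-choice encoding and checking it is success respecting and reconciles the two congruence variants---which is a welcome elaboration but not a different argument.
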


\begin{proof}
	By Figure \ref{fig:encodingSepAsyn} and Figure \ref{fig:encodingMixAsyn}, both encodings are compositional. Name invariance is proved in the Lemmata \ref{lem:nameInvarianceSepAsyn} and \ref{lem:nameInvarianceMixAsyn}. Operational correspondence follows by the Lemmata \ref{lem:operationalCompletenessSepAsyn}, \ref{lem:operationalCompletenessMixAsyn}, and \ref{lem:operationalSoundness}. Lemma \ref{lem:divergenceReflectionSepAsyn} proves that $ \encodingSepAsyn $ do not introduce divergence, while Lemma \ref{lem:divergenceReflectionMixAsyn} proves the same condition for $ \encodingMixAsyn $. Finally, by Lemma \ref{lem:successSensitiveness}, both encodings are success sensitive. Thus both encodings are good with respect to the five criteria introduced by Gorla.
	\qed
\end{proof}

\section{Degree of Distribution} \label{sec:proofsDistributability}

In the following we prove that $ \encodingSepAsyn $ preserves the degree of distribution of its source terms, while $ \encodingMixAsyn $ does not. For a definition what it means to preserve the degree of distribution have a look at Definition 5 in \cite{petersNestmann12}. Moreover, we present an encoding function from \piMix (without replicated input) into \piAsynTwo, the asynchronous \piCal-calculus augmented with a two-level polyadic synchronisation.

Note that for simplicity we omit in Figure \ref{fig:SC} a structural congruence rule for replicated input. However, of course a replicated input represents a not determined number of instances of the corresponding input guarded term in parallel. In order to allow for the distribution of replicated inputs, we augment structural congruence by an additional rule, i.e., $ \equivRep $ is structural congruence $ \equiv $ augmented with the additional rule $ \ReplicateInput{y}{x}.P \equivRep \ReplicateInput{y}{x}.P \mid \ReplicateInput{y}{x}.P $ to split up replicated inputs.

\begin{lemma}
	The encoding $ \encodingSepAsyn $ preserves the degree of distribution.
\end{lemma}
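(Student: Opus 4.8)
The plan is to show that $\encodingSepAsyn$ preserves the degree of distribution by exploiting the fact that, as observed in Section \ref{sec:structuralCriteria}, this encoding is \clean on exactly the operators that matter for distributability: restriction, parallel composition and matching. The notion of ``preserving the degree of distribution'' from Definition 5 of \cite{petersNestmann12} is, informally, that whenever a source term $S$ can be split into $n$ independently-running components $S_1 \mid \ldots \mid S_n$ (modulo $\equivRep$, to account for the splitting of replicated inputs), the encoding $\EncodingSepAsyn{S}$ can be split into $n$ corresponding components so that each source component is simulated by its own target component, without the encoding introducing a global coordinator that forces the pieces back together. So the core of the argument is structural: trace how the encoding distributes over the parallel structure of the source term.

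First I would unfold the definition of distributability: $S$ is distributable into $S_1, \ldots, S_n$ if $S \equivRep S_1 \mid \ldots \mid S_n$ and these components can reduce independently. The key structural observation is that because $\EncodingSepAsyn{P \mid Q} = \EncodingSepAsyn{P} \mid \EncodingSepAsyn{Q}$ (clean translation of ``$\mid$''), and likewise restriction and match are clean, the encoding commutes with the parallel decomposition: from $S \equiv S_1 \mid \ldots \mid S_n$ (ignoring $\equivRep$ for the moment) we get $\EncodingSepAsyn{S} \equiv \EncodingSepAsyn{S_1} \mid \ldots \mid \EncodingSepAsyn{S_n}$ up to pushing the clean restrictions outward, using the structural-congruence rule $P \mid \RestrictedTerm{n}{Q} \equiv \RestrictedTerm{n}{(P \mid Q)}$ and the renaming policy that keeps the reserved lock names disjoint from translated source names. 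This hands us a distribution of the target term into $n$ pieces that stand in one-to-one correspondence with the source components. I would then argue, appealing to operational completeness (Lemma \ref{lem:operationalCompletenessSepAsyn}) and the analysis of \nonAdmin steps (Lemma \ref{lem:simulationVSNonAdminStep}), that each target component $\EncodingSepAsyn{S_i}$ simulates precisely the steps of $S_i$, so the simulation respects the distribution.

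The two subtleties I would flag are the replicated-input splitting rule $\equivRep$ and the sum locks. For $\equivRep$, I need that a single replicated input $\ReplicateInput{y}{x}.P$ and its split $\ReplicateInput{y}{x}.P \mid \ReplicateInput{y}{x}.P$ have encodings that can themselves be distributed accordingly; since in $\encodingSepAsyn$ a replicated input is translated into a replicated input on $\RenamingPolicySepAsyn{y}$ (again cleanly, modulo the guarding test-statement), splitting the source replicated input corresponds to duplicating its target replicated input, which is sound because both copies share the same request channel $\RenamingPolicySepAsyn{y}$ and the same behaviour. For the sum locks, the point is that each sum gets its own restricted sum lock local to the encoding of that sum; so a sum lock never spans two distinct parallel components, and therefore the locking discipline introduces no cross-component coordination that would collapse the distribution. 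This contrasts sharply with $\encodingMixAsyn$, whose encoding of parallel composition is \emph{not} clean and introduces shared coordinators ($\matchingCoordinatorOut$, $\coordinatorUpOut$, etc.) spanning both branches --- which is exactly why the companion lemma fails for $\encodingMixAsyn$.

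The main obstacle I expect is not the structural commutation, which is essentially bookkeeping with the clean-translation equations, but rather matching the formal definition of distributability-preservation from \cite{petersNestmann12} precisely: I must verify that the correspondence between source and target components is faithful in \emph{both} directions, i.e.\ that no target component secretly needs steps of another target component to make progress (which would hidden-couple the distribution). Here the earlier results about \pure \admin steps being non-conflicting (Lemma \ref{lem:nonConflictingStepsSepAsyn}) and not introducing deadlock (and Nestmann's original deadlock-freedom for $\encodingSepAsyn$) do the real work: they guarantee that the administrative machinery inside each $\EncodingSepAsyn{S_i}$ can complete locally. I would therefore structure the proof as (1) a syntactic lemma giving the clean distribution of $\EncodingSepAsyn{S}$ over the parallel/restriction/match structure of $S$ modulo $\equivRep$, followed by (2) an appeal to operational correspondence plus the locality of sum locks and the non-conflicting/deadlock-free character of administrative steps to conclude that this syntactic distribution is genuinely behaviour-preserving in the sense of Definition 5 of \cite{petersNestmann12}.
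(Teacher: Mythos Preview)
Your approach is essentially the paper's: use the \clean\ translation of parallel, restriction and match to push $S \equivRep S_1 \mid \ldots \mid S_n$ through to $\EncodingSepAsyn{S} \equivRep \EncodingSepAsyn{S_1} \mid \ldots \mid \EncodingSepAsyn{S_n}$, take $T_i = \EncodingSepAsyn{S_i}$ with the context being plain parallel composition, and invoke operational completeness (Lemma~\ref{lem:operationalCompletenessSepAsyn}) on each component. Your treatment of $\equivRep$ is also right: since $\EncodingSepAsyn{\ReplicateInput{y}{x}.P}$ is itself a replicated input, the target-side splitting rule applies directly.

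Where you overshoot is the ``main obstacle'' you anticipate. Definition~5 of \cite{petersNestmann12}, as unfolded in the paper's proof, only asks for the existence of $T_1,\ldots,T_n$ and a context $\context$ with $\EncodingSepAsyn{S} \equivRep \Context{}{}{T_1,\ldots,T_n}$ and $T_i \steps \transBarbCongSepAsyn \EncodingSepAsyn{S_i'}$ for each $i$. There is no bidirectional faithfulness clause and no requirement that $T_i$ make progress without help from the other $T_j$ inside the context; the condition $T_i \steps \transBarbCongSepAsyn \EncodingSepAsyn{S_i'}$ is about $T_i$ \emph{in isolation}, and that is exactly operational completeness applied to $S_i \steps S_i'$. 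Consequently the appeals to Lemma~\ref{lem:nonConflictingStepsSepAsyn}, Lemma~\ref{lem:simulationVSNonAdminStep}, deadlock-freedom, and locality of sum locks are all unnecessary here --- the paper's proof is three lines once the syntactic distribution is in hand.
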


\begin{proof}
	Let $ S, S_1, \ldots, S_n, S_1', \ldots, S_n' \in \piSepProc $ such that $ S \equivRep S_1 \mid \ldots \mid S_n $ and $ S_i \steps S_i' $ for all $ i $ with $ 1 \leq i \leq n $. Then, by Definition 5 in \cite{petersNestmann12}, we have to show that:
	\begin{align*}
		\exists T_1, \ldots, T_n \in \piAsynProc \logdot \exists \Context{}{}{\hole_1, \ldots, \hole_n} \in \piAsynProc^n \to \piAsynProc \logdot \EncodingSepAsyn{S} \equivRep \Context{}{}{T_1, \ldots, T_n} \wedge \left( \forall i \in \left[ 1, \ldots, n \right] \logdot T_i \steps \transBarbCongSepAsyn \EncodingSepAsyn{S_i'} \right)
	\end{align*}
	Revisiting the argumentation in the proof of Lemma \ref{lem:preservesSCModuloTransBarbBisimSepAsyn}, the encoding $ \encodingSepAsyn $ preserves structural congruence of source terms except for the rule $ P \mid \nullTerm \equiv P $, because it translates restriction, the parallel operator, and the match operator \cleanly. Moreover, because of the \clean translation of the parallel operator, even the additional rule $ \ReplicateInput{y}{x}.P \equivRep \ReplicateInput{y}{x}.P \mid \ReplicateInput{y}{x}.P $ is preserved. Thus $ S \equivRep S_1 \mid \ldots \mid S_n $ implies $ \EncodingSepAsyn{S} \equivRep \EncodingSepAsyn{S_1 \mid \ldots \mid S_n} = \EncodingSepAsyn{S_1} \mid \ldots \mid \EncodingSepAsyn{S_n} $, i.e., we can choose $ T_i = \EncodingSepAsyn{S_i} $ for all $ i $ with $ 1 \leq i \leq n $ and $ \Context{}{}{\hole_1, \ldots, \hole_n} = \hole_1 \mid \ldots \mid \hole_n $. By operational completeness (compare to Lemma \ref{lem:operationalCompletenessSepAsyn}), then $ S_i \steps S_i' $ implies $ \EncodingSepAsyn{S_i} \steps \transBarbCongSepAsyn \EncodingSepAsyn{S_i'} $ for all $ i $ with $ 1 \leq i \leq n $, i.e., $ T_i \steps \transBarbCongSepAsyn \EncodingSepAsyn{S_i'} $.
	\qed
\end{proof}

\begin{lemma}
	Any good encoding, that translates the parallel operator \cleanly\ and preserves enough of the structural congruence on source terms to ensure that $ S \equiv_1 S_1 \mid \ldots \mid S_n $ implies $ \ArbitraryEncoding{S} \equiv_2 \ArbitraryEncoding{S_1 \mid \ldots \mid S_n} $, preserves the degree of distribution.
\end{lemma}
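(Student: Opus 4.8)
The plan is to replay, at the level of generality of the hypotheses, exactly the argument used in the previous lemma for $\encodingSepAsyn$, replacing each appeal to a concrete property of $\encodingSepAsyn$ by the matching abstract assumption. By Definition~5 in \cite{petersNestmann12} the proof obligation is the following: whenever $S \equiv_1 S_1 \mid \ldots \mid S_n$ and $S_i \steps S_i'$ for all $i$ with $1 \leq i \leq n$, there must exist target terms $T_1, \ldots, T_n \in \piAsynProc$ and an $n$-ary context $\Context{}{}{\hole_1, \ldots, \hole_n} \in \piAsynProc^n \to \piAsynProc$ such that $\ArbitraryEncoding{S} \equiv_2 \Context{}{}{T_1, \ldots, T_n}$ and $T_i \steps \asymp \ArbitraryEncoding{S_i'}$ for every $i$, where $\asymp$ is the target equivalence fixed for operational correspondence.

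First I would exhibit the witnesses. From $S \equiv_1 S_1 \mid \ldots \mid S_n$ the structural-congruence preservation hypothesis yields $\ArbitraryEncoding{S} \equiv_2 \ArbitraryEncoding{S_1 \mid \ldots \mid S_n}$. Since the encoding translates the parallel operator \cleanly, iterating the binary clean translation over the (associative) parallel decomposition gives $\ArbitraryEncoding{S_1 \mid \ldots \mid S_n} = \ArbitraryEncoding{S_1} \mid \ldots \mid \ArbitraryEncoding{S_n}$. I then set $T_i \deff \ArbitraryEncoding{S_i}$ and $\Context{}{}{\hole_1, \ldots, \hole_n} \deff \hole_1 \mid \ldots \mid \hole_n$; this is a genuine $n$-ary context of the target language that binds no names of its parameters, and it immediately satisfies $\ArbitraryEncoding{S} \equiv_2 \Context{}{}{T_1, \ldots, T_n}$.

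Second, the required componentwise simulation is precisely operational completeness. Because the encoding is good, it satisfies operational correspondence (Definition~\ref{def:operationalCorrespondence}), whose completeness part states that $S_i \steps S_i'$ implies $\ArbitraryEncoding{S_i} \steps \asymp \ArbitraryEncoding{S_i'}$, that is, $T_i \steps \asymp \ArbitraryEncoding{S_i'}$ for each $i$. Combining this with the decomposition of $\ArbitraryEncoding{S}$ from the previous paragraph discharges the obligation.

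I do not expect a genuine obstacle here: the lemma is a distillation of the concrete $\encodingSepAsyn$ argument, isolating the two ingredients that made it work — a clean parallel operator and a structural congruence that is preserved across the decomposition into parallel components — while operational completeness, which supplies the per-component simulation, is available for free from goodness. The one point deserving care, and the place where the generalisation would fail for a weaker encoding, is the interplay between the replication-splitting rule inside $\equiv_1$ (namely $\ReplicateInput{y}{x}.P \equivRep \ReplicateInput{y}{x}.P \mid \ReplicateInput{y}{x}.P$) and the clean translation: the structural-congruence hypothesis is stated exactly so that a decomposition $S \equiv_1 S_1 \mid \ldots \mid S_n$ obtained using that rule still maps to $\equiv_2$ on the target side. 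Once that hypothesis is granted, no further structural reasoning is needed.
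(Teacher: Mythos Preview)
Your proposal is correct and follows essentially the same argument as the paper: use the structural-congruence preservation hypothesis together with the clean translation of the parallel operator to obtain $\ArbitraryEncoding{S} \equiv_2 \ArbitraryEncoding{S_1} \mid \ldots \mid \ArbitraryEncoding{S_n}$, choose $T_i = \ArbitraryEncoding{S_i}$ and the context $\hole_1 \mid \ldots \mid \hole_n$, and then invoke operational completeness from goodness for the per-component simulation. Your additional remark about the replication-splitting rule is a helpful clarification of why the hypothesis is phrased as it is, but the core proof matches the paper's exactly.
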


\begin{proof}
	Let $ S, S_1, \ldots, S_n, S_1', \ldots, S_n' $ be terms of the source language such that $ S \equiv_1 S_1 \mid \ldots \mid S_n $ and $ S_i \steps S_i' $ for all $ i $ with $ 1 \leq i \leq n $. Then, by Definition 5 in \cite{petersNestmann12}, we have to show that:
	\begin{align*}
		\exists T_1, \ldots, T_n \in \piAsynProc \logdot \exists \Context{}{}{\hole_1, \ldots, \hole_n} \in \piAsynProc^n \to \piAsynProc \logdot \ArbitraryEncoding{S} \equiv_2 \Context{}{}{T_1, \ldots, T_n} \wedge \left( \forall i \in \left[ 1, \ldots, n \right] \logdot T_i \steps \asymp \ArbitraryEncoding{S_i'} \right)
	\end{align*}
	Since the encoding translates the parallel operator \cleanly\ and preserves enough of the structural congruence on source terms to ensure that $ S \equiv_1 S_1 \mid \ldots \mid S_n $ implies $ \ArbitraryEncoding{S} \equiv_2 \ArbitraryEncoding{S_1 \mid \ldots \mid S_n} $, we have $ \ArbitraryEncoding{S} \equiv_2 \ArbitraryEncoding{S_1 \mid \ldots \mid S_n} = \ArbitraryEncoding{S_1} \mid \ldots \mid \ArbitraryEncoding{S_n} $, i.e., we can choose $ T_i = \ArbitraryEncoding{S_i} $ for all $ i $ with $ 1 \leq i \leq n $ and $ \Context{}{}{\hole_1, \ldots, \hole_n} = \hole_1 \mid \ldots \mid \hole_n $. By operational completeness (compare to Definition \ref{def:operationalCorrespondence}), then $ S_i \steps S_i' $ implies $ \ArbitraryEncoding{S_i} \steps \transBarbCongSepAsyn \ArbitraryEncoding{S_i'} $ for all $ i $ with $ 1 \leq i \leq n $, i.e., $ T_i \steps \asymp \ArbitraryEncoding{S_i'} $.
	\qed
\end{proof}

\begin{lemma}
\label{lem:EncodingMixAsynNotPreservesDegreeOfDistribution}
	The encoding $ \encodingMixAsyn $ does \emph{not} preserve the degree of distribution.
\end{lemma}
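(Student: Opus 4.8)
The plan is to refute the universally quantified criterion of Definition~5 in \cite{petersNestmann12} by exhibiting one source term, together with a distribution of it into parallel components that reduce independently, whose two component steps cannot be matched by any distributed decomposition of the encoding. Concretely I would take
\begin{align*}
	S = \left( \In{a}.\success \mid \In{b}.\success \right) \mid \left( \Out{a} \mid \Out{b} \right) \in \piMixProc
\end{align*}
and use the decomposition $ S \equivRep S_1 \mid S_2 $ with $ S_1 = \In{a}.\success \mid \Out{a} $ and $ S_2 = \In{b}.\success \mid \Out{b} $, obtained from the commutativity and associativity rules of Figure~\ref{fig:SC}. Each component reduces on its own, $ S_1 \step \success $ and $ S_2 \step \success $, so $ S_1' = S_2' = \success $ and hence $ \EncodingMixAsyn{S_1'} = \EncodingMixAsyn{S_2'} = \success $. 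By Definition~5 in \cite{petersNestmann12}, preservation would require $ T_1, T_2 \in \piAsynProc $ and a context $ \Context{}{}{\hole_1, \hole_2} $ with $ \EncodingMixAsyn{S} \equivRep \Context{}{}{T_1, T_2} $ and $ T_i \steps \transBarbCongMixAsyn \EncodingMixAsyn{S_i'} = \success $ for $ i \in \{ 1, 2 \} $, where each reduction sequence $ T_i \steps \ldots $ is internal to the component $ T_i $. Since $ \transBarbCongMixAsyn \subseteq \transBarbBisimMixAsyn $ relates $ \success $ only to processes that may lead to success, this in particular forces $ T_1 \reachSuccess $ and $ T_2 \reachSuccess $ to be witnessed by steps inside $ T_1 $ and $ T_2 $, respectively.

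Next I would analyse the syntactic shape of $ \EncodingMixAsyn{S} $ according to the parallel-operator clause of Figure~\ref{fig:encodingMixAsyn}. The outermost parallel operator of $ S $ separates the two inputs (its left branch $ \In{a}.\success \mid \In{b}.\success $) from the two outputs (its right branch $ \Out{a} \mid \Out{b} $), so that both the $ a $- and the $ b $-communication are redexes crossing this outermost operator. Consequently, for either communication the matching input and output request|by Corollary~\ref{col:originRequests} exactly the requests emitted by the encoded guards|can be brought together only at the top-level coordinator, i.e.\ by the request-processing parts \processLeftInputRequests\ and \processRightOutputRequests\ together with \pushRequests\ acting on the reserved names $ \matchingCoordinatorIn, \matchingCoordinatorOut, \coordinatorUpIn, \coordinatorUpOut $, which Figure~\ref{fig:encodingMixAsyn} restricts at the top of $ \EncodingMixAsyn{S} $. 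Using request preservation and the request-flow analysis (Lemma~\ref{lem:encodingMixAsynPreserveRequests} and Lemma~\ref{lem:sumLockOfReceiverOrSender}), I would show that any simulation unguarding the $ \success $ behind $ \In{a}.\success $, and likewise the one behind $ \In{b}.\success $, must perform steps on these shared restricted coordinator channels. Thus the single top-level coordinator is a common resource of both redex simulations.

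The decisive step is to conclude that this shared coordinator cannot be split between two distributed components. Because its reserved channels are bound by a single restriction at the top of $ \EncodingMixAsyn{S} $ and occur freely in the machinery used for both the $ a $- and the $ b $-redex, scope extrusion (the rule $ P \mid \RestrictedTerm{n}{Q} \equiv \RestrictedTerm{n}{(P \mid Q)} $ for $ n \notin \FreeNames{P} $) cannot separate those occurrences into two parallel holes. Hence in any decomposition $ \EncodingMixAsyn{S} \equivRep \Context{}{}{T_1, T_2} $ the coordinator lies either entirely in the context $ \context $ or entirely in one hole, say $ T_1 $. In the first case neither $ T_1 $ nor $ T_2 $ can reach $ \success $ by internal steps alone, since success reachability needs the coordinator that now sits in $ \context $; in the second case $ T_2 $ has been deprived of the coordinator and cannot reach $ \success $ internally. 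Either way some $ T_i \not\steps \transBarbCongMixAsyn \success $, contradicting the required simulation and refuting the criterion.

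The main obstacle, and where most of the work lies, is this last paragraph: making fully rigorous the claim that the success reachability of each redex simulation genuinely requires the one shared coordinator, which is structurally inseparable. I would have to rule out alternative decompositions|for instance duplicating coordinator code via the replication rule underlying $ \equivRep $, or choosing $ T_1, T_2 $ that do not align with the $ S_1 \mid S_2 $ split|and argue uniformly that no such rearrangement yields two components each independently reaching success. This is precisely the causal dependency introduced by the conflicting \pure \admin steps that link right requests into a chain, as noted in the discussion following Lemma~\ref{lem:nonConflictingStepsMixAsyn}; turning that informal observation into a watertight impossibility argument, ideally through the causality and distributability machinery of \cite{petersSchickeNestmann11}, is the crux of the proof.
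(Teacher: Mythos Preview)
Your overall strategy matches the paper's: exhibit a source term with two independent communications that both cross the same outermost parallel operator, and argue that the top-level coordinator cannot be split between the two components $T_1,T_2$. The paper uses $S=(\Out{a}\mid\Out{b})\mid(\In{a}\mid\In{b})$ reducing to $\nullTerm$; your variant with $\success$ continuations and the sides swapped works equally well and even makes the success-reachability requirement more direct.

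However, your mechanism for the non-distributability is off. You argue that the coordinator cannot be split because its reserved channels are bound by a single restriction, and scope extrusion cannot separate the occurrences. This is not the obstruction: Definition~5 allows the context $\context$ to contain restrictions, so nothing prevents $\EncodingMixAsyn{S}\equivRep\RestrictedTerm{\tilde{x}}{(T_1\mid T_2\mid R)}$ with all restricted coordinator names shared between $T_1$ and $T_2$ through $\context$. In fact the paper explicitly shows that almost everything \emph{can} be distributed: the forwarders in \processLeftOutputRequests\ and the bulk of \processRightInputRequests\ are replicated inputs, hence splittable via the $\equivRep$ rule $\ReplicateInput{y}{x}.P\equivRep\ReplicateInput{y}{x}.P\mid\ReplicateInput{y}{x}.P$.

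The real bottleneck the paper pins down is a single \emph{non-replicated output}: the chain-lock instantiation $\Output{\coordinatorMatchingIn}{\matchingCoordinatorOut}$ (respectively $\Output{\coordinatorMatchingOut}{\matchingCoordinatorIn}$ in your orientation) that sits unguarded in \processRightInputRequests\ (resp.\ \processRightOutputRequests). Without this one message the replicated input $\ReplicateInput{\coordinatorMatchingIn}{m_o}.\Input{\parallelChannelIn}{\ldots}.\ldots$ is inert, so whichever of $T_1,T_2$ does not receive it cannot process its right request and hence cannot complete its simulation. Your last paragraph gestures at exactly this (``the conflicting \pure\ \admin\ steps that link right requests into a chain''), but you should replace the scope-extrusion argument with this concrete identification of the single-use chain-lock output as the indivisible resource.
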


\begin{proof}
	Consider the counter example $ S = \left( \Out{a} \mid \Out{b} \right) \mid \left( \In{a} \mid \In{b} \right) $. $ S $ can perform two different reductions, which are independent of each other. So $ S \equiv S_1 \mid S_2 $ with $ S_1 = \left( \Out{a} \mid \In{a} \right) $ and $ S_2 = \left( \Out{b} \mid \In{b} \right) $ such that $ S_1 \step \nullTerm $ and $ S_2 \step \nullTerm $. Of course, since $ \encodingMixAsyn $ is good, $ \EncodingMixAsyn{S} $ can \simulate both steps in either order. But it can not \simulate both truly in parallel.
	
	To prove this fact, let us assume the contrary, i.e., let us assume the encoding $ \encodingMixAsyn $ preserves the degree of distribution of $ S $. Then, by Definition 5 in \cite{petersNestmann12}, we have:
	\begin{align*}
		\exists T_1, T_2 \in \piAsynProc \logdot \exists \Context{}{}{\hole_1, \hole_2} \in \piAsynProc \times \piAsynProc \to \piAsynProc \logdot \EncodingMixAsyn{S} \equivRep \Context{}{}{T_1, T_2} \wedge T_1 \steps \transBarbCongMixAsyn \EncodingMixAsyn{S_1'} \wedge T_2 \steps \transBarbCongMixAsyn \EncodingMixAsyn{S_2'}
	\end{align*}
	The encoding of $ S $ is given by:
	\begin{align*}
		\EncodingMixAsyn{S} & = \begin{aligned}[t]
				& \RestrictedTerm{\matchingCoordinatorOut, \matchingCoordinatorIn, \coordinatorUpOut, \coordinatorUpIn, \coordinatorMatchingOut, \coordinatorMatchingIn, \matchingUpOut, \matchingUpIn}{\big(\\
				& \hspace*{1em} \RestrictedTerm{\parallelChannelOut, \parallelChannelIn}{\left( \EncodingMixAsyn{\Out{a} \mid \Out{b}} \mid \processLeftOutputRequests \mid \processLeftInputRequests \right)}\\
				& \hspace*{1em} \mid \RestrictedTerm{\parallelChannelOut, \parallelChannelIn}{\left( \EncodingMixAsyn{\In{a} \mid \In{b}} \mid \processRightOutputRequests \mid \processRightInputRequests \right)}\\
				& \hspace*{1em} \mid \pushRequests \big)}
			\end{aligned}
	\end{align*}
	We observe that for both source term steps their \simulations require that the corresponding requests are combined at the outermost parallel operator encoding, i.e., the one given above. Moreover, in both \simulations the output requests arrive at the left and the input requests arrive at the right side of this parallel operator encoding. Thus, for both \simulations the requests are proceed by \processRightInputRequests.
	\begin{align*}
		\processRightInputRequests & = \Output{\coordinatorMatchingIn}{\matchingCoordinatorOut} \mid \ReplicateInput{\coordinatorMatchingIn}{\matchingCoordinatorOut}.\Input{\parallelChannelIn}{y, \sumLock_r, \receiverLock}.\big(\\
			& \hspace*{2em} \RestrictedTerm{\matchingUpOut}{\big( \begin{aligned}[t]
					& \ReplicateInput{\matchingCoordinatorOut}{y', \sumLock_s, \senderLock, z}.\left( \Match{y'}{y}\Output{\receiverLock}{\sumLock_s, \sumLock_r, \sumLock_s, \senderLock, z} \mid \Output{\matchingUpOut}{y', \sumLock_s, \senderLock, z} \right)\\
					& \mid \RestrictedTerm{\matchingCoordinatorOut}{\left( \Forward{\matchingUpOut}{\matchingCoordinatorOut} \mid \Output{\coordinatorMatchingIn}{\matchingCoordinatorOut} \right)} \big)
				\end{aligned}\\
			& \hspace*{2em} \mid \Output{\coordinatorUpIn}{y, \sumLock_r, \receiverLock}} \big)
	\end{align*}
	Note that, since $ \EncodingMixAsyn{\Out{a}} $ and $ \EncodingMixAsyn{\Out{b}} $ appear unguarded in $ \EncodingMixAsyn{\Out{a} \mid \Out{b}} $, it is not difficult to distribute $ \EncodingMixAsyn{\Out{a} \mid \Out{b}} $ over $ T_1 $ and $ T_2 $, because the respective context introduced by the corresponding parallel operator encoding is not used within the considered \simulations, i.e., they can be proved to be junk. The same holds for $ \EncodingMixAsyn{\In{a} \mid \In{b}} $. Moreover, since none of the terms \processLeftInputRequests, \processRightOutputRequests, and \pushRequests \ of the outermost parallel operator encoding is of any importance for the considered \simulations, it does not matter whether we put them within $ T_1 $ or $ T_2 $. Note that for each of the considered \simulations it is necessary to transmit one left output request to the right side of the outermost parallel operator encoding by the forwarder in \processLeftOutputRequests. However, since this forwarder is guarded by a replicated input we can use the rule $ \ReplicateInput{y}{x}.P \equivRep \Input{y}{x}.P \mid \ReplicateInput{y}{x}.P $ to distribute this forwarder, e.g. by placing $ \Input{\parallelChannelOut}{y, \sumLock, \senderLock, z}.\left( \Output{\matchingCoordinatorOut}{y, \sumLock, \senderLock, z} \mid \Output{\coordinatorUpOut}{y, \sumLock, \senderLock, z} \right) $ within $ T_1 $ and the rest of the forwarder within $ T_2 $. The only remaining term to distribute is \processRightInputRequests.
	
	Since $ T_1 \steps \transBarbCongMixAsyn \EncodingMixAsyn{S_1'} $ and $ T_2 \steps \transBarbCongMixAsyn \EncodingMixAsyn{S_2'} $, both $ T_1 $ and $ T_2 $ need the ability to proceed a request from $ \EncodingMixAsyn{\In{a} \mid \In{b}} $. Since the respective right requests are restricted the only possibility to process them is \processRightInputRequests. We observe that the main part of \processRightInputRequests \ is guarded by a replicated input and can thus be distributed as \processLeftOutputRequests \ before. But there is only one unguarded instantiation of the corresponding chain lock $ \Output{\coordinatorMatchingIn}{\matchingCoordinatorOut} $ and without it the remaining term guarded by the replicated input on this chain lock is useless. Because of that \processRightInputRequests \ can not be distributed, i.e., it is not possible to distribute the encoding of $ S $ such that $ \EncodingMixAsyn{S} \equivRep \RestrictedTerm{\tilde{x}}{\left( T_1 \mid T_2 \right)} $, $ T_1 \steps \transBarbCongMixAsyn \EncodingMixAsyn{S_1'} $, and $ T_2 \steps \transBarbCongMixAsyn \EncodingMixAsyn{S_2'} $.
	
	Because of that $ \EncodingMixAsyn{S} $ can not \simulate both source term steps without sequentialising the linking of the respective right requests within the required chain. So $ \EncodingMixAsyn{S} $ can not completely \simulate the independent source terms steps in parallel.
	\qed
\end{proof}

\begin{lemma}
	Any good encoding $ \arbitraryEncoding $ preserves the criterion: For every $ S $ such that $ S = S_1 \mid \ldots \mid S_n $ and $ S_i \steps S_i' $ for all $ i $ with $ 1 \leq i \leq n $, there exists $ T_1, \ldots, T_n $ and a context $ \context $ with $ n $ holes such that $ \ArbitraryEncoding{S} \equiv_2 \Context{}{}{T_1, \ldots, T_n} $ and $ T_i \steps \asymp \ArbitraryEncoding{S_i'} $ for all $ i $ with $ 1 \leq i \leq n $.
\end{lemma}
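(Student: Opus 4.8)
The plan is to obtain the statement directly from the two structural and semantic criteria that \emph{every} good encoding satisfies, namely compositionality (Definition \ref{def:compositionality}) and operational completeness (Definition \ref{def:operationalCorrespondence}). The crucial observation is that, in contrast to the preceding lemma, the source decomposition here is given as a \emph{literal} parallel composition $S = S_1 \mid \ldots \mid S_n$ rather than only up to structural congruence. Because of this, I would not need the encoding to translate the parallel operator \cleanly\ nor to preserve any structural congruence rules; compositionality alone suffices to expose the parallel structure of $S$ inside $\ArbitraryEncoding{S}$. This is precisely what makes the criterion provable for all good encodings without extra assumptions.

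First I would build the context $\context$ by iterating compositionality over the parallel structure of $S$. Since the parallel operator is binary, $S_1 \mid \ldots \mid S_n$ is a fixed nesting of binary parallel compositions. At each such node $P \mid Q$, Definition \ref{def:compositionality} provides a binary context $\Context{N}{\mid}{\hole_1, \hole_2}$, depending only on $N = \FreeNames{P} \cup \FreeNames{Q}$, with $\ArbitraryEncoding{P \mid Q} = \Context{N}{\mid}{\ArbitraryEncoding{P}, \ArbitraryEncoding{Q}}$. Composing these binary contexts along the nesting yields an $n$-ary context $\Context{}{}{\hole_1, \ldots, \hole_n}$ such that $\ArbitraryEncoding{S} = \Context{}{}{\ArbitraryEncoding{S_1}, \ldots, \ArbitraryEncoding{S_n}}$. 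I would then choose $T_i = \ArbitraryEncoding{S_i}$. Since equality is a special case of $\equiv_2$, this already gives $\ArbitraryEncoding{S} \equiv_2 \Context{}{}{T_1, \ldots, T_n}$, as required.

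It then remains to establish the reduction condition on each hole. Here I would invoke operational completeness (Definition \ref{def:operationalCorrespondence}): from $S_i \steps S_i'$ one obtains $\ArbitraryEncoding{S_i} \steps \asymp \ArbitraryEncoding{S_i'}$, i.e., $T_i \steps \asymp \ArbitraryEncoding{S_i'}$, for every $i$ with $1 \leq i \leq n$. Together with the decomposition above this completes the argument.

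The only point requiring genuine care — and the part I expect to be the main, though still routine, obstacle — is the inductive nesting of the binary parallel contexts into a single $n$-ary context $\context$. One has to track at each level that the free-name parameter $N$ of the instantiated parallel context matches the free names of the corresponding subterm, so that the separate instances of compositionality compose cleanly into one context applied to $\ArbitraryEncoding{S_1}, \ldots, \ArbitraryEncoding{S_n}$. Apart from this bookkeeping the proof is immediate, and its value lies exactly in making explicit that, once the source decomposition is demanded to be literal rather than only up to $\equiv_1$, no structural assumptions beyond Gorla's criteria are needed.
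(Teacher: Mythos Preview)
Your proposal is correct and follows essentially the same approach as the paper: iterate compositionality over the binary parallel structure to obtain the $n$-ary context, set $T_i = \ArbitraryEncoding{S_i}$, and apply operational completeness to each $S_i \steps S_i'$. If anything, you are more careful than the paper, which writes a single context $\Context{}{\mid}{\hole_1,\hole_2}$ uniformly for all $P \mid Q$ and thereby glosses over the free-name parameter $N$ in Definition~\ref{def:compositionality} that you explicitly flag as the only real bookkeeping.
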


\begin{proof}
	Assume $ n \in \nat $ and $ S, S_1, \ldots, S_n, S_1', \ldots, S_n' $ are terms of the source language such that $ S = S_1 \mid \ldots \mid S_n $ and $ \left( \forall i \in \left[ 1, \ldots, n \right] \logdot S_i \steps S_i' \right) $. Moreover, for simplicity, let us assume, that $ S_1 \mid \ldots \mid S_n = ( S_1 \mid ( \ldots ( S_{n - 1} \mid S_n ) \ldots ) $, but note that the actually orientation of the brackets is not import for the rest of the proof. Let $ \Context{}{\mid}{\hole_1, \hole_2} $ be the context introduced by the encoding of the parallel operator, i.e., $ \ArbitraryEncoding{P \mid Q } = \Context{}{\mid}{\ArbitraryEncoding{P}, \ArbitraryEncoding{Q}} $ for all $ P, Q $ in the source language. Then $ \ArbitraryEncoding{S} = \Context{}{\mid}{\ArbitraryEncoding{S_1}, \Context{}{\mid}{\ldots \Context{}{\mid}{\ArbitraryEncoding{S_{n - 1}}, \ArbitraryEncoding{S_n}}} \ldots } $. So we can choose $ T_i = \ArbitraryEncoding{S_i} $ for all $ i \in \left[ 1, \ldots, n \right] $ and $ \Context{}{}{T_1, \ldots, T_n} = \ArbitraryEncoding{S} $. By operational completeness (compare to Definition \ref{def:operationalCorrespondence}), then $ T_i \steps \asymp \ArbitraryEncoding{S_i'} $ for all $ i $ with $ 1 \leq i \leq n $.
	\qed
\end{proof}

\begin{theorem}
	There is no good encoding from \piMix into \piAsyn that preserves the degree of distribution.
\end{theorem}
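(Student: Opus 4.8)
The plan is to argue by contradiction, reusing the witness of Lemma~\ref{lem:EncodingMixAsynNotPreservesDegreeOfDistribution} but turning its encoding-specific analysis into a statement about \emph{every} good encoding. So I assume that some good encoding $\arbitraryEncoding$ from \piMix into \piAsyn preserves the degree of distribution, and I instantiate it on $S = (\Out{a} \mid \Out{b}) \mid (\In{a} \mid \In{b})$. As in the earlier lemma, $S \equivRep S_1 \mid S_2$ with $S_1 = \Out{a} \mid \In{a}$ and $S_2 = \Out{b} \mid \In{b}$, where $S_1 \step \nullTerm$ and $S_2 \step \nullTerm$ describe two \emph{independent} source steps, one a rendezvous on $a$ and one a rendezvous on $b$. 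By the assumed preservation (Definition~5 of \cite{petersNestmann12}) there are $T_1, T_2$ and a two-hole context with $\ArbitraryEncoding{S} \equivRep \Context{}{}{T_1, T_2}$, $T_1 \steps \asymp \ArbitraryEncoding{\nullTerm}$ and $T_2 \steps \asymp \ArbitraryEncoding{\nullTerm}$, where the two reduction sequences are forced to take place in the two separate holes, i.e.\ share no step.

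Next I would extract the structural constraints that goodness imposes on this split. By compositionality the parallel operator is translated by a single fixed context $\Context{}{\mid}{\hole_1, \hole_2}$, so $\ArbitraryEncoding{S} = \Context{}{\mid}{\ArbitraryEncoding{\Out{a} \mid \Out{b}}, \ArbitraryEncoding{\In{a} \mid \In{b}}}$: the encodings of \emph{both} outputs sit in the left argument and the encodings of \emph{both} inputs in the right argument. The component $T_1$ that completes the rendezvous on $a$ must therefore gather (descendants of) the encoded $\Out{a}$ from the left group and of the encoded $\In{a}$ from the right group, while $T_2$ must combine the encoded $\Out{b}$ with $\In{b}$; by operational soundness the match-based name comparison that keeps $a$-traffic apart from $b$-traffic must travel inside the respective component as well. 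Since $\equivRep$ only reorders parallel composition and restriction and unfolds replicated inputs, realising this split forces the coordination built into $\Context{}{\mid}$ — the machinery that lets a left output meet a right input — to \emph{factor} into two independent instances, one per rendezvous.

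The crux, and the main obstacle, is to show that no good encoding admits such a factorisation. I would establish this in two steps. First, that a good encoding of \piMix must route every cross-boundary synchronisation through a shared, non-replicated resource inside $\Context{}{\mid}$: this is forced because the very same context (by compositionality) must also \emph{soundly} realise mixed choice across the boundary, as for $(\In{a} + \Out{b}) \mid (\Out{a} \mid \In{b})$, where the two candidate synchronisations are mutually exclusive and the encoding must atomically commit to at most one; by operational soundness this demands a single point of decision in the coordination provided by $\Context{}{\mid}$, which for $\encodingMixAsyn$ is precisely the unique chain-lock instantiation $\Output{\coordinatorMatchingIn}{\matchingCoordinatorOut}$ of \processRightInputRequests\ isolated in Lemma~\ref{lem:EncodingMixAsynNotPreservesDegreeOfDistribution}. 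Second, that $\equivRep$ can never separate one such non-replicated restricted resource, shared between the two arguments of $\Context{}{\mid}$, into two independent copies. Together these refute the split and hence the existence of $\arbitraryEncoding$.

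I expect the decomposition bookkeeping and the appeal to compositionality and the junk lemmata to be routine, whereas the genuinely hard part is the first step above: lifting the encoding-specific ``single chain lock'' obstruction to a calculus-level statement about every good encoding. I would carry this out by appealing to the distributability analysis of \piAsyn in \cite{petersSchickeNestmann11}: the synchronisation pattern exhibited by $S$ has both of its communications crossing the single outermost parallel boundary, and a sound, divergence-reflecting asynchronous simulation cannot implement both in a genuinely distributed manner. The delicate point is to make this precise using only the abstract goodness criteria (operational correspondence, name invariance, compositionality) together with the asynchronous, choice-free nature of the target, rather than the concrete shape of $\encodingMixAsyn$.
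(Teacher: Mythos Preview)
Your approach has a genuine gap in the choice of counterexample. The term $S = (\Out{a} \mid \Out{b}) \mid (\In{a} \mid \In{b})$ contains no mixed choice; it already lies in \piSep, and the encoding $\encodingSepAsyn$ (which \emph{does} preserve the degree of distribution) handles it without any bottleneck. So $S$ cannot be a universal obstruction for every good encoding of \piMix. The fact that $\encodingMixAsyn$ happens to sequentialise the two simulations on $S$ is an artefact of that particular encoding's design (the chain-lock mechanism), not a consequence of goodness. Your bridging argument---that compositionality forces $\Context{}{\mid}$ to route \emph{every} cross-boundary synchronisation through a single non-replicated resource because the same context must also handle $(\In{a} + \Out{b}) \mid (\Out{a} \mid \In{b})$---does not go through: nothing in the goodness criteria prevents the mutual-exclusion machinery from living in the encoding of the sum operator rather than in $\Context{}{\mid}$, or from being activated only when genuine conflict is present. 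Hence you cannot conclude that the bottleneck is exercised on your choice-free $S$.

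The paper's proof is much shorter and sidesteps this entirely: it invokes the theorem of \cite{petersSchickeNestmann11} directly, which already establishes that \emph{every} good encoding from \piMix into \piAsyn introduces additional causal dependencies between the simulations of some pair of independent source steps (the witness there does involve mixed choice). One then simply observes that such a causal dependency is precisely a failure of the distribution-preservation criterion. There is no need to fix a single $S$ in advance or to analyse the structure of $\Context{}{\mid}$; the cited result supplies the right (encoding-dependent) witness. Your final paragraph gestures at \cite{petersSchickeNestmann11}, but you appeal to it for the wrong statement: that paper gives you causal dependencies for a suitable source term, not a structural claim about shared resources in $\Context{}{\mid}$ that would transfer to your $S$.
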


\begin{proof}
	By the Theorem in \cite{petersSchickeNestmann11}, any good encoding from \piMix into \piAsyn introduces additional causal dependencies. Note that in \cite{petersSchickeNestmann11} slightly different definitions of \piMix and \piAsyn are used, i.e., they use $ !P $ instead of replicated input, have no match in the source language, and no $ \tau $-prefix. However, revisiting the argumentation on the proof of the Theorem in \cite{petersSchickeNestmann11} we observe that these details are not crucial.
	
	A closer look at the definition of additional causal dependencies used in \cite{petersSchickeNestmann11} reveals, that they show that any good encoding from \piMix into \piAsyn introduces causal dependencies between some \simulations of independent source term steps. As already stated in the discussion of \cite{petersSchickeNestmann11}, because of this causal dependency some \simulations of independent source terms steps have to be sequentialised, i.e., the \simulations can still be performed in either order and even somehow overlapping but not completely independent. Note that this is exactly what we observe in Lemma \ref{lem:EncodingMixAsynNotPreservesDegreeOfDistribution}. Because of that, there is no good encoding from \piMix into \piAsyn that preserves the degree of distribution.
	\qed
\end{proof}

Let $ \piAsyn^2 $ be the asynchronous \piCal-culus augmented with poliadic synchronisation on channels composed by up to two names as introduced by \cite{carboneMaffeis03}.

\begin{definition}[$ \piAsyn^2 $]
  The set of process terms of the \emph{asynchronous \piCal-calculus with two-level poliadic synchronisation}, denoted by $ \piAsynTwoProc $, is given by
	\begin{align*}
		P & \;\mathop{::=}\;
				\nullTerm
                \sep \RestrictedTerm{n}{P}
                \sep P_1 \mid P_2
                \sep \Match{a}{b}P
                \sep \Output{y}{\tilde{z}}.P
                \sep \Input{y}{\tilde{x}}.P
                \sep \ReplicateInput{y}{\tilde{x}}.P
	\end{align*}
	for some names $ n, a, b \in \names $, some finite sequences of names $ \tilde{x}, \tilde{z} \subset \names $, and either $ y \in \names $ or $ y = y_1 \cdot y_2 $ for some $ y_1, y_2 \in \names $.
\end{definition}

\begin{figure}[htp]
	\begin{align*}
		\EncodingMixAsynTwo{P \mid Q} & \deff \RestrictedTerm{\coordinatorUpOut, \coordinatorUpIn, \itag, \otag}{\big(\\
			& \hspace*{2em} \begin{aligned}[t]
					&  \RestrictedTerm{\parallelChannelOut, \parallelChannelIn}{\big( \begin{aligned}[t] 
							& \EncodingMixAsynTwo{P}\\
							& \mid \ReplicateInput{\parallelChannelOut}{y, \sumLock, \senderLock_1, \senderLock_2, z}.\left( \Output{y \cdot \otag}{\sumLock, \senderLock_1, \senderLock_2, z} \mid \Output{\coordinatorUpOut}{y, \sumLock, \senderLock_1, \senderLock_2, z} \right)\\
							& \mid \ReplicateInput{\parallelChannelIn}{y, \sumLock, \receiverLock}.\left( \Output{y \cdot \itag}{\sumLock, \receiverLock} \mid \Output{\coordinatorUpIn}{y, \sumLock, \receiverLock} \right) \big)
						\end{aligned}}\\
					& \mid \RestrictedTerm{\parallelChannelOut, \parallelChannelIn}{\big( \begin{aligned}[t]
							& \EncodingMixAsynTwo{Q}\\
							& \mid \ReplicateInput{\parallelChannelOut}{y, \sumLock_s, \senderLock_1, \senderLock_2, z}.\left( \Input{y \cdot \itag}{\sumLock_r, \receiverLock}.\Output{\receiverLock}{\sumLock_r, \sumLock_s, \sumLock_s, \senderLock_1, \senderLock_2, z, \true} \mid \Output{\coordinatorUpOut}{y, \sumLock_s, \senderLock_1, \senderLock_2, z} \right)\\
							& \mid \ReplicateInput{\parallelChannelIn}{y, \sumLock_r, \receiverLock}.\left( \Input{y \cdot \otag}{\sumLock_s, \senderLock_1, \senderLock_2, z}.\Output{\receiverLock}{\sumLock_s, \sumLock_r, \sumLock_s, \senderLock_1, \senderLock_2, z, \false} \mid \Output{\coordinatorUpIn}{y, \sumLock_r, \receiverLock} \right) \big)
						\end{aligned}}\\
					& \mid \Forward{\coordinatorUpOut}{\parallelChannelOut} \mid \Forward{\coordinatorUpIn}{\parallelChannelIn} \big)
				\end{aligned}}\\
		\EncodingMixAsynTwo{\Output{y}{z}.P} & \deff \RestrictedTerm{\senderLock_1, \senderLock_2}{\left( \Out{\senderLock_1} \mid \ReplicateIn{\senderLock_1}.\Output{\parallelChannelOut}{y, \sumLock, \senderLock_1, \senderLock_2, z} \mid \In{\senderLock_2}.\EncodingMixAsynTwo{P} \right)}\\
		\EncodingMixAsynTwo{\Input{y}{x}.P} & \deff \RestrictedTerm{\receiverLock}{\big( \begin{aligned}[t]
					& \Output{\parallelChannelIn}{y, \sumLock, \receiverLock} \mid \ReplicateInput{\receiverLock}{\sumLock_1, \sumLock_2, -, \senderLock_1, \senderLock_2, x, b}.\\
					& \hspace*{2em} \BigTest{\sumLock_1}{\BigTest{\sumLock_2}{\Output{\sumLock_1}{\false} \mid \Output{\sumLock_2}{\false} \mid \Out{\senderLock_2} \mid \EncodingMixAsynTwo{P}}{\Output{\sumLock_1}{\true} \mid \Output{\sumLock_2}{\false} \mid \Test{b}{\Output{\parallelChannelIn}{y, \sumLock, \receiverLock}}{\Out{\senderLock_1}}}}{\Output{\sumLock_1}{\false} \mid \Test{b}{\Out{\senderLock_1}}{\Output{\parallelChannelIn}{y, \sumLock, \receiverLock}} \big)}
				\end{aligned}}
	\end{align*}
	\begin{center}
		Here $ \renamingPolicyMixAsynTwo $ is some arbitrary injective substitution such that $ \forall n \in \names \logdot \RenamingPolicyMixAsynTwo{n} \notin N $, where $ N = \Set{ \parallelChannelOut, \parallelChannelIn, \coordinatorUpOut, \coordinatorUpIn, \sumLock, \sumLock_s, \sumLock_r, \sumLock_1, \sumLock_2, \senderLock_1, \senderLock_2, \receiverLock, \otag, \itag, y, y, z, t, f, b } $. The remaining operators restriction, sum, $ \tau $ guarded terms, and $ \success $ are translated as in $ \encodingSepAsyn $ or $ \encodingMixAsyn $ again.
	\end{center}
	\caption{Encoding $ \encodingMixAsynTwo $ from \piMix without replicated input into $ \piAsyn^2 $.} \label{fig:encodingMixAsynTwo}
\end{figure}

For simplicity, we do not consider replicated input on the source language. Then the encoding $ \encodingMixAsynTwo $, given in Figure \ref{fig:encodingMixAsynTwo}, is a good encoding from \piMix without replicated input into \piAsynTwo. Note that, to prove that $ \encodingMixAsynTwo $ is correct with respect to the criteria presented in Section \ref{sec:qualityCriteria}, we can argument very similar as for $ \encodingMixAsyn $, because the main feature, i.e., the way in which sum locks are ordered, is the same for both encodings. On the other side, as already stated by \cite{carboneMaffeis03}, there is no good encoding from \piMix into \piAsynTwo, that translates the parallel operator \cleanly. Note that the separation result in \cite{carboneMaffeis03} does not rely on replication, i.e., it also implies that there is no such encoding from \piMix without replicated input into \piAsynTwo.

\addcontentsline{toc}{section}{References}
\bibliographystyle{alpha}
\bibliography{encodingSynPi}

\end{document}